\documentclass{amsart}
\usepackage[utf8]{inputenc}
\usepackage{amsthm}
\usepackage{amssymb}
\usepackage{bbold}
\usepackage{thmtools}
\usepackage{amsmath}
\usepackage{graphicx}

\usepackage{algorithm}
\usepackage{algpseudocode} 

\theoremstyle{definition}
\newtheorem{remark}{Remark}

\usepackage{longtable}

\usepackage{cases}

\usepackage{natbib}

\newtheorem{lemma}{Lemma}

\declaretheorem[style=definition]{example}

\newtheorem{theorem}{Theorem}
\newtheorem{proposition}{Proposition}
\newtheorem{definition}{Definition}
\newtheorem{assumption}{Assumption}

\usepackage{enumitem}

\usepackage[mathscr]{euscript}
\usepackage{mathrsfs}
\usepackage{amssymb}
\usepackage{pifont}
\usepackage{array}
\usepackage{tcolorbox}
\usepackage{multirow}

\usepackage{mathrsfs}

\newtheorem{result}{Result}

\usepackage{prodint}

\def\Rplus{{\mathbb R_+}}
\def\F{{\mathcal F}}

\newcommand{\indep}{\perp \!\!\! \perp}

\def\Nastop{{\tau^a}}

\newcommand{\mbbNa}[1]{\mathbb N^a_{#1}}
\newcommand{\mbbMa}[1]{\mathbb M^a_{#1}}
\newcommand{\mbbLa}[1]{\mathbb \Lambda^a_{#1}}

\newcommand{\na}[1]{\mathfrak{n}^a_{#1}}

\newcommand{\Np}[1]{\mathcal N^p_{#1}}
\newcommand{\Hp}[1]{\mathcal H^p_{#1}}
\newcommand{\Nd}[1]{\mathcal N^d_{#1}}
\newcommand{\Hd}[1]{\mathcal H^d_{#1}}

\newcommand{\cfproc}[1]{\tilde N^{#1}}

\def\A{{\mathcal A}}
\def\B{{\mathcal B}}
\def\N{{\mathcal N}}
\def\H{{\mathcal H}}

\def\S{{\mathcal S}}

\def\G{{\mathcal G}}

\def\I{{\mathcal I}}

\def\X{{\mathcal X}}
\def\E{{\mathcal E}}
\def\Z{{\mathcal Z}}
\def\L{{\mathcal L}}

\def\KK{{\mathbb K^a}}

\def\T{{[0,T]}}

\def\Idminaset{\I_d \setminus \{ a \}}
\def\Idmina{ \setminus a }

\renewcommand\thmcontinues[1]{Continued}

\title{On causal inference with marked point process data}
\date{\today}

\author{Pål Christie Ryalen$^{1,2,}$}
\author{Mats Julius Stensrud$^{3}$}
\author{Kjetil Røysland$^{2}$}

\begin{document}

\maketitle

\begin{center}
$^{1}$Research Support Services, Oslo University Hospital,  Norway \\
$^{2}$Department of Biostatistics, University of Oslo,  Norway \\
$^{3}$Department of Mathematics, École Polytechnique Fédérale de Lausanne, Switzerland

\vspace{0.2cm}
\texttt{Email: p.c.ryalen@medisin.uio.no}
\end{center}

\keywords{}
\begin{abstract}
   We define dynamic treatment regimes and associated potential outcomes for data described by marked point processes (MPPs). These definitions motivate MPP analogues of the commonly used consistency, exchangeability, and positivity conditions that are sufficient for identifying effects in MPP data structures. The conditions are formulated based on martingale theory, which allows us to derive explicit identifying assumptions for data described by stochastic processes. The definitions and conditions align with well-established discrete-time results in important special cases. Thus, this work bridges the large literatures on survival (event history) analysis with counting processes in continuous time and causal inference with variables in discrete-time. After formulating a set of identification conditions, we derive and characterize marginal $g$-formulas. The $g$-formulas are generally different from those studied in related works, though they coincide in important special cases. We relate our findings to previous work on causal inference with (counting) processes, the classical survival literature, and the discrete-time causal inference literature.
\end{abstract}


\section{Introduction}
\label{section: introduction}

The literature on causal inference is vast. In fields like medicine and economics, causal frameworks are now routinely applied, also in longitudinal settings with (time-varying) treatment regimes  \cite{rubin1974estimating, robins1986parametric_g, robins1987addendum,robins1997complex,Hernanrobins2021causal,richardson_single_2013}. Many causal models can be represented graphically \cite{pearl, richardson_single_2013}, providing formal tools for reasoning about conditional independencies, which are important for identification. Broadly, these models require that random variables satisfy a particular ordering, thereby defining a discrete-time stochastic process.

However, the foundational 'discrete-time' causal inference literature does not immediately apply to data structures that are often considered in the classical statistical literature on (continuous time) processes. For example, there is a large literature on survival and event history analysis formulated with continuous-time counting processes, a special case of marked point processes (MPPs), which is commonly applied in similar fields \cite{aalen2008survivalandevent,Andersen,martinussen2006dynamic,FlemingHarrington2005,CookLawless2007}. 
Many works have attempted to unify these literatures, e.g., by articulating counting process analogues of the discrete-time conditions for identifying effects of time-varying treatment regimes. Gill and Robins \cite{gill2004continuous}, henceforth referred to as GR04, conjectured that a $g$-formula (aka $g$-computation formula) could be derived under analogues of these conditions in an MPP setting. However, after GR04 stated their conjectured conditions, the authors expressed uncertainty when attempting to prove the formula, writing that
\begin{center}
    "How to proceed from here, is not so clear." \cite[p. 5]{gill2004continuous}.
\end{center}
To the best of our knowledge, the gap identified by GR04 has not since been rigorously addressed. As far as we are aware, analogues of the discrete-time identifying conditions of consistency, exchangeability, and positivity  \cite{Hernanrobins2021causal,robins1986parametric_g,robins1987addendum,robins1997complex}, have not been established for MPP data. There are no proofs of identification results in the literature that validate a set of conditions as sufficient.\footnote{Although there are conjectured identification conditions, and applications based on these conjectures, see Section \ref{section: previous work on continuous-time causal inference} for a comparison of this work with previous work on continuous-time causal inference.} 
Because MPP analogues of these conditions have not been operationalized, theories for causal identification that connect these statistical literatures are lacking. In particular, it is unclear how $g$-formulas for MPP data can be derived.

The gap between results in continuous and discrete time is unsatisfactory and inconvenient. It complicates the comparison of different approaches, and constrains the knowledge transfer across statistical disciplines. Discrete-time researchers seeking to compare their methods with classical survival analysis are forced to work across data structures that do not immediately align. This misalignment can lead to imprecision or errors in methodological comparisons. Moreover, in longitudinal studies where participants are tracked over time, the state of being under follow-up, i.e., not being censored, can be formalized as a time-varying exposure \cite{robins1986parametric_g,Hernanrobins2021causal}. Although the counting process methods were designed to tackle the problem of censoring, they are arguably ill-equipped to cover identification strategies of effects of practical interest, because conditions for identifying effects of time-varying treatment regimes are lacking.

Discrete-time data structures can be subsumed by MPP theory; adding certain restrictions to an MPP's compensator gives discrete-time data structures as a special case. Conceptually, we would therefore expect that identifying conditions in the MPP theory reduce to canonical discrete conditions when the compensator is suitably constrained. This inclusion is also strict. In particular, the continuous-time theory contains a richer collection of optional times (aka stopping times), such as \textit{totally inaccessible optional times} describing events that, informally, cannot be foreseen based on earlier recorded events. In practice, many event times may be of this type, e.g. the time of death or recurrence of a disease. Totally inaccessible times are also essential to articulate statistical estimands in the survival literature, as the existence of hazards and intensities relies on such times. Because any optional time can be uniquely decomposed into the minimum of an \textit{accessible time} and a totally inaccessible time, the discrete-time theory excludes some structure of event times of interest. This exclusion is rarely discussed, and the potential impact of the exclusion is not well understood. On the other hand, the MPP theory is well suited for the statistical analysis of event times because it accommodates a rich collection of optional times.


This paper introduces theory inspired by existing work on potential outcome variables \cite{rubin1974estimating, robins1986parametric_g, robins1987addendum, robins1997complex}, designed for MPP data structures. We give a comprehensive description of deterministic time-varying treatment regimes in the point process setting, and formally define the associated potential outcomes. Building on this foundation, we develop MPP analogues of existing established identifying conditions \cite{Hernanrobins2021causal} that are sufficient to identify associated effects. We illustrate our theory with several examples to highlight its practical consequences.

To establish identification formulas, we develop a new identifying likelihood-ratio process derived directly from the intervention rules. 
Using this likelihood-ratio process, we derive and characterize identification formulas, in particular the $g$-formula \cite{robins1986parametric_g}, whose discrete time analogues are foundational in causal inference. Our $g$-formula representations are new, as our identifying likelihood ratio process has not, to our knowledge, been emphasized in related causal inference works. To ensure backward compatibility, we demonstrate that our conditions reduce to existing discrete-time conditions \cite{robins1986parametric_g,robins1997complex,richardson_single_2013} when compensator processes induce discrete-time data structures.



When developing identification conditions and proving identification results in the MPP setting, we find that stochastic process concepts such as optional times, filtrations, and compensators (aka dual predictable projections) play central roles. As a result, some tools from stochastic processes, and specifically point processes, are required to make precise statements and to appropriately deal with measurability issues. For a comprehensive study of these concepts and results, we refer readers to, e.g., \cite{JacodShiryaev, protter, cohen2015stochastic,He1992Semimartingale}. For literature that focuses specifically on MPPs, we recommend \cite{jacobsen2006point, LastBrandt1995marked}. 

The article is organized as follows. In Section \ref{section: previous work on continuous-time causal inference}, we relate this work to existing work on continuous-time causal inference. In Section \ref{section: set-up and notation}, we introduce basic notation. In Section \ref{section: interventions and potential outcomes}, we define interventions, introduce potential outcome processes, and describe our identification criteria. In Sections \ref{section: the main identification result} and \ref{section: construction of potential outcomes}, we present our main identification result and then provide an explicit construction demonstrating that observed data, potential outcomes, and the identifying conditions (consistency and exchangeability) can be realized simultaneously on a single probability space. In Section \ref{section: g-formula for MPP data} we express classes of identifying functionals, characterize the identifying likelihood-ratio process, and propose solutions to open unsolved problems posed by GR04. In Section \ref{section: relation to discrete-time theories}, we relate our results to identifying conditions in the discrete-time literature. In Section \ref{section: identification with multiple interventions in the general MPP setting}, we extend the results to general MPP settings, before we consider estimation in Section \ref{section: estimation}. A discussion is found in Section \ref{section: discussion}. The appendices contain basic definitions, equivalent representations of MPPs central to articulate our results, proofs of main results, supporting lemmas, and calculations not found in the main text. An overview of the notation is given in the final part of the appendix. 

\section{Previous work on continuous-time causal inference}
\label{section: previous work on continuous-time causal inference}
Most of the work on causal inference concerns events in discrete time. An early contribution to continuous time causal inference is the work of Lok \cite{Lok2001statistical,Lok2004estimating,Lok2008statistical}, who, e.g., developed structural nested models \cite{robins1989analysis,Robins2004optimal} in continuous time. Work by Røysland and colleagues introduced methods based on change of measures and Girsanov's theorem \cite{roysland2011,roysland2012counterfactual}, with related contributions \cite{ryalen2018pcancer,ryalen2019additive} further developing these ideas. More recently, Røysland, Ryalen, Nygaard, and Didelez \cite{roysland2022graphical} introduced graphical criteria to aid in identification for this theory.  We will refer to these works collectively as RRND. 

However, these existing works consider structural models that are different from those we will consider. For example, structural nested models are designed with the motivation of imposing parametric restrictions on the effect of a sequence of treatments on, e.g., the conditional mean of an outcome. In contrast, we focus on the identification of marginal estimands with minimal parametric assumptions. RRND study a structural model based on changing treatment counting process intensities anchored in the more abstract invariance principle of "causal validity,"  which does not generally correspond to specifying explicit interventions. 
We, in contrast, specify explicit interventions that can be implemented by decision-makers based on patients' history.

Others, such as \cite{commenges2009dynamical, sun2022causal}, assume an underlying dynamical system, for example, a stochastic differential equation (SDE), where the path space of complex (e.g. infinite variation) processes is functionally constrained by a system that is indexed by a low-dimensional parameter. Such approaches may be too restrictive, e.g., in medical applications, where we often do not know how data are generated. Our work, in contrast, concerns causal identification of explicit interventions on specific outcomes of interest while operating under minimal assumptions regarding how the data is generated other than that it satisfies an MPP data structure.




Others have targeted estimands that are similar ours, including  Rytgaard, Gerds, and van der Laan \cite{Rytgaard2022}, hereafter referred to as RGvdL. They developed targeted minimum loss-based estimation tailored to a specific point process setting. We discuss the relationship between our work and RGvdL in Remark \ref{remark: rgvdl} and compare our examples with theirs in Section \ref{subsubsection: evaluating identification claims by RGvdL}. Another related contribution is Ying's recent framework for identifying causal effects in functional longitudinal data settings \cite{ying2024functional,ying2024functionaldynamic}. Ying aims  to impose minimal restrictions on the path space of the stochastic processes involved, and he obtains general results that can be applied in many different (functional) data settings. 

In contrast to both these related approaches, and inspired by \cite{robins1986parametric_g,richardson_single_2013,Hernanrobins2021causal}, our work is centered around single-world exchangeability conditions. In particular, we aim to specify independence conditions for identification, whose logical implications, in principle, can be tested in ideal experiments. 

Other relevant counting process applications in the literature include \cite{Hu2019causal,johnson2005semiparametric}, which explore certain dynamic regimes in similar contexts to ours, focusing  on CD4-based dynamic treatment regimes for HIV-infected adolescents and treatment duration policies where treatment can be discontinued at any point in time, respectively. 
These works specify identifying assumptions that resemble those conjectured by GR04. Similar exchangeability conditions have also been described in works on structural nested models, see e.g. 
\cite{Zhang2011,Lok2008statistical, Yang2021, Robins2000marginalvsstructural}. Our work complements existing studies that target 'estimand-based' parameters, by detailing the identification arguments, thereby clarifying how identifying conditions can effectively be applied. Some relevant details and caveats are contained in Sections \ref{section: interventions and potential outcomes}-\ref{section: g-formula for MPP data}.

In brief, our work offers the following contributions. We give definitions of dynamic treatment regimes and the associated potential outcomes, establish MPP analogues of common identifying conditions, and derive and characterize marginal $g$-formulas for MPP data. In particular, we prove and give precise meaning to GR04's conjectured $g$-formula in the MPP setting, where we also address GR04's concerns regarding the so-called validity of $g$-formulas, which has been an open problem in the literature. 
We bridge strands of the statistical literature by specifying the precise restrictions on compensators, identifying conditions, and data-generating laws in the MPP setting to align with those commonly studied in discrete-time theories. Furthermore, we develop identification results for the effects of interventions on a wide range of outcomes that include both variables and processes, rather than on a single outcome variable measured at the end of follow-up. Throughout the text, we relate our work and existing work on causal inference with (counting) processes, the classical survival and event history analysis literature, and the discrete causal inference literature. 



\begin{remark}[RGvdL's identification claims and the gap in the literature]
\label{remark: rgvdl}
    RGvdL develop results on estimation starting from postulated identification formulas. Their focus is on developing estimators; they assume that certain observed data functionals equal causal effects, based on existing identification results by Gill and Robins \cite{gill2001complex} (hereafter referred to as GR01, not to be confused with GR04).

Specifically, RGvdL assert that "the 'traditional' causal assumptions as stated by Gill and Robins (2001) can be applied at the random times" \cite[p. 2476]{Rytgaard2022}.

However, a formal argument supporting this claim is not provided by RGvdL or GR01. Furthermore, RGvdL do not address Gill and Robins' statement that extending the discrete-time identification theory to counting process data is an open problem. In GR01, Gill and Robins write: "Lok (2001) develops a counting process framework, within which she is able to formalize parts of the theory and prove many of the key results. \emph{It is an open problem to complete this project with a continuous time version of the $g$-computation formula and the theorems centered around it}" \cite[p. 1791]{gill2001complex} (emphasis added). In GR04, Gill and Robins reiterate: "It is an open problem to complete that project with a continuous time version of the $g$-computation formula and the theorems centered around it. ... Below we do not succeed in proving the formula, nor establishing the wished-for results which should follow from it" \cite[p. 1-2]{gill2004continuous}. As Gill and Robins further note, without a properly established $g$-formula, "the statistical methodology lacks motivation" \cite[p. 1]{gill2004continuous}.

In Section \ref{subsec: illustrative examples}, we examine some of RGvdL's identification formulas, based on our derived results. In particular, we identify a decision-making narrative implicit in RGvdL's examples. Based on substantive considerations, we propose an alternative decision-making narrative under which we develop our examples. Our approach aligns with causal inference works that ground assumptions in substantive stories \cite{richardson_single_2013,robins1986parametric_g,robins2011alternative, Young2024story}.
\end{remark}

\section{Set-up and notation}
\label{section: set-up and notation}

To simplify the presentation, we initially consider the special case of a $d$-dimensional multivariate counting process $N=(N^1, \dots, N^d)$ representing recordings of observed data. This aligns with data structures that are commonly studied in survival and event history analysis \cite{aalen2008survivalandevent,Andersen,martinussen2006dynamic,FlemingHarrington2005,CookLawless2007}. After establishing the main results in this setting, the generalization to general MPP data, given in Section \ref{section: identification with multiple interventions in the general MPP setting}, follows without much effort. 

A given component $N_t^i$ counts the number of times an event of type $i$, such as hospitalization or death, has occurred from time zero until $t$. In particular, there is a treatment counting process $N^a$ on which we want to intervene.

$N$ is defined on a measurable space $(\Omega, \F)$. Throughout the paper, we focus on processes on a fixed time interval $\T$, and write $\F_\T$ for the filtration generated by $N$, i.e. the natural filtration.


\subsection{Representations of point process trajectories}
\label{subsection: representation of point process trajectories}

Point processes can be represented in various ways, and we will leverage these different representations throughout the paper. For instance, a $d$-dimensional counting process $N = (N^1, \dots, \allowbreak N^d)$ can also be represented as
\begin{itemize}
    \item A double sequence $N = (T_k, X_k)_{k \geq 1}$ of event times $T_k \in \T \cup \{\infty\}$ and event types, or 'marks', $X_k$, where $T_k < T_{k+1}$ and $X_k \in \I_d := \{1,\dots,d\}$ whenever $T_k \leq T$, and $T_{k} = T_{k+1}$ and $X_k = \nabla$ whenever $T_k = \infty$, where $\nabla$ is the 'irrelevant mark' signifying 'no event'. 
    \item A random counting measure $N(dt \times dx) = \sum_{k \geq 1} \delta_{(T_k, X_k)}(dt \times dx)$ on $\T \times \I_d$, where $\delta_{(t,x)}$ denotes the Dirac measure at $(t,x)$.
\end{itemize}
These different representations are related via $N^i_t = N((0,t] \times \{i\}) = \sum_{k \geq 1} I(T_k \leq t, X_k = i)$. A more precise description of the relationship between the representations can be found in \cite{LastBrandt1995marked,jacobsen2006point} or Appendix \ref{appendix: the canonical space of point process realizations}. 

For each $\omega \in \Omega$, $N(\omega)$ is a point process trajectory. Following \cite{LastBrandt1995marked}, we reserve the notation $\varphi$ for a generic point process trajectory and move freely between its representations as needed. When treating $\varphi$ as a double sequence of ordered event times $t_k$ and associated marks $x_k$, we write $\varphi = (t_k, x_k)_{k \geq 1}$. Alternatively, when treating $\varphi$ as a $d$-dimensional counting process trajectory, we write $\varphi_t = (\varphi^1_t, \dots, \varphi^d_t)$. When treating $\varphi$ as a counting measure, we write $\varphi(dt \times dx)$, where these representations are related via $\varphi^i_t = \varphi((0,t] \times \{ i \}) = \sum_{k \geq 1} I(t_k \leq t, x_k = i)$.

We let $\N_T^{d}$ denote the set of all such trajectories $\varphi$ (represented as double sequences) on $\T$ with mark set $\I_d$. Equipped with the projection $\sigma$-algebra $\Hd{T}$, $(\Nd{T}, \Hd{T})$ becomes a measurable space, often called the canonical space of point process realizations (see Appendix \ref{appendix: the canonical space of point process realizations}, or \cite{LastBrandt1995marked,jacobsen2006point}).

Under the above identifications, the observed data process $N$ takes values in $\N_T^d$, and the treatment process $N^a$ on which we intervene takes values in $\N_T^1$. This will be referenced when defining interventions in Section \ref{subsection: interventions}.

\subsection{Compensators}

There exists a non-decreasing predictable cadlag process $\Lambda = (\Lambda^1, \dots, \Lambda^d)$ that satisfies
\begin{align}
    E_P[N^i_\tau] = E_P[\Lambda^i_\tau] \text{ for each } \F_\T\text{-optional time } \tau \text{ and each }i  ,\label{eq: compensator introduction}
\end{align}
see, e.g., \cite{jacobsen2006point,LastBrandt1995marked,Bremaud1981point}. This is called the compensator of $N$, and it is uniquely specified (up to indistinguishability) by \eqref{eq: compensator introduction}. The role of compensators in our MPP theory is similar to the role of conditional distributions in discrete causal inference theories \cite{pearl,robins1986parametric_g,Dawid2021decision}; they parametrize statistical models \cite{jacobsen2006point,LastBrandt1995marked, jacod1975}. The compensator depends on both the probability measure and the filtration. Because we will consider different probability measures and filtrations, we will emphasize this dependence; for example, $\Lambda$ is the $(P, \F_\T)$-compensator of $N$.

Similarly to the point process trajectories, we will exploit equivalences of process and random measure representations of the compensator $\Lambda$ \cite{LastBrandt1995marked,jacobsen2006point}, and move freely between these representations as needed.

For technical reasons we assume throughout that
\begin{assumption}\label{assumption: no explosions}
    $N$ is without explosions, i.e. $P\big(\sum_{i=1}^d N_T^i = \infty \big) = 0$.
\end{assumption}
Assumption \ref{assumption: no explosions} says that an infinite number of events (or jumps) occur with probability zero in the interval $\T$. In particular, Assumption \ref{assumption: no explosions} ensures that $M := N - \Lambda$ is a local martingale with respect to $P$ and $\F_\T$ \cite{Andersen}.

\section{Interventions, Potential outcome processes, and identifying conditions}
\label{section: interventions and potential outcomes}

\subsection{Interventions} 
\label{subsection: interventions}
To derive the consequences of interventions we must be clear about what these interventions --- also called actions, treatment strategies, plans, policies, protocols, or regimes \cite{Hernanrobins2021causal,pearl,Dawid2021decision} --- are. In particular, our results are sensitive to the kind of information the interventions depend upon. 
To motivate our intervention definition, we begin with an example to illustrate what this dependence looks like in practice.

Consider a patient visiting a medical clinic for a consultation with a physician. Before deciding on treatment, the physician first reviews test results and consults guidelines. Thus, the clinical information recorded at assessment is available before the time treatment is subsequently administered. 
Finally, just before acting, doctors have, in principle, access to the timing of their action (they could simply check what time it is).

In this simple illustration, the decision-relevant information at each time $t$ (clinical measurements and action timing) is available to the doctor strictly prior to $t$. 
Assuming that all data for a given patient is described by an MPP trajectory $\varphi$, we write $\varphi|_{t-}$ for the restriction of $\varphi$ to the interval $[0, t)$, i.e., the history up to right before $t$. In this example, any hypothetical intervention specifying assigned treatment for the given patient can, at each time $t$, depend only on $\varphi|_{t-}$.

The conclusion from our motivating example invites a more general question: can any real-world decision, whether made by a clinician, an automated system, or any other agent, depend on truly temporally simultaneous events? We believe the answer to this question is 'no'. Any plausible decision-making scenario involves a delay between information receipt and action; during that interval, the decision-maker has access to the information that will guide the action, including (in principle) the timing of the action itself. Because this property appears fundamental to how decisions are actually made, we encode it directly into our definition of an intervention.

To formally state our definition, we must introduce some notation. Following \cite{LastBrandt1995marked}, we say that a function $f$ on $\T \times \Nd{T}$ is \emph{predictable} if $f(t, \varphi) = f(t, \varphi|_{t-})$ for each $t$ and point process trajectory $\varphi \in \Nd{T}$. This definition of predictability coincides with the conventional meaning of predictability when understood with respect to the filtration generated by the identity on $(\Nd{T}, \Hd{T})$ \cite[Theorem 2.2.6, 2.2.8]{LastBrandt1995marked}.
\begin{definition}[Intervention]\label{definition: intervention}
    We say that a map $\na{}: \Nd{T} \rightarrow \N_T^1$ is an \emph{intervention}, or a \emph{predictable intervention}, if the mapping $(t, \varphi) \mapsto \na{t}(\varphi) := \na{}(\varphi)_t$ is predictable, i.e., if $\na{t}(\varphi) = \na{t}(\varphi|_{t-})$.
\end{definition}
That is, an intervention $\mathfrak n^a$ is a predictable one-dimensional counting process on the canonical space: for each $\varphi \in \N_T^d$, $\na{t}(\varphi)$ is the value at time $t$ of the one-dimensional counting process trajectory $\na{}(\varphi) \in \N_T^1$ (recall the notational conventions in Section \ref{subsection: representation of point process trajectories}). Since the observed treatment process $N^a$ takes values in $\N_T^1$, Definition \ref{definition: intervention} ensures that $\na{}$ produces feasible treatment trajectories. Thus, $\na{}$ formalizes a hypothetical decision-making process, where at any time $t$, the assigned treatment may depend adaptively on the history of events strictly prior to $t$.

The perspective on interventions encoded in Definition \ref{definition: intervention} is not new: foundational causal inference works define dynamic treatment regimes as functions of variables strictly earlier in the observed data ordering \cite{Hernanrobins2021causal, richardson_single_2013, robins1986parametric_g, pearl, Dawid2010identifying}, an ordering that is typically understood, implicitly and explicitly, to be temporal. These definitions align with our observation that real-world interventions --- actions that can actually be carried out --- depend only on information available strictly before the action. Consequently, our definition captures the interventions that can be specified as part of an ideal experiment or protocol.

Definition \ref{definition: intervention} focuses on deterministic interventions on a one-dimensional counting process component. The extension to interventions on multiple components in the general MPP setting is given in Section \ref{section: identification with multiple interventions in the general MPP setting}.

Deterministic interventions are often of main interest in practice; for example, medical doctors rarely give treatments to patients by flipping coins (or by drawing random numbers). The generalization to random treatment strategies is nevertheless not difficult in principle if we include in our filtration a collection of 'randomizers' that can figure as input in $\na{}$; see, e.g., \cite[equation (57)]{richardson_single_2013} for an analogous definition in discrete time. 

In our examples (Sections \ref{subsection: examples of interventions and potential optcomes processes} and \ref{section: identification with multiple interventions in the general MPP setting}), we formalize stories about how decisions are made in practice. The interventions described in these examples are captured by Definition \ref{definition: intervention}.

\subsection{Potential outcome processes}
\label{subsection: potential outcome processes}
We introduce processes representing the outcomes that would have been realized if an intervention as in Section \ref{subsection: interventions} had been carried out with perfect compliance. This takes the form of another $d$-dimensional counting process $\cfproc{}$, which plays a role similar to counterfactual variables, aka potential outcomes, in existing theories \cite{rubin1974estimating,robins1986parametric_g,robins1997complex}. We will also consider the filtration $\tilde \F_\T$ generated by $\cfproc{}$, which gives a formalization of the history that emerges when the intervention is imposed. Though not explicit in the notation, $\cfproc{}$ is defined in terms of an intervention as given in Section \ref{subsection: interventions}.

In contrast to $N$, the process $\cfproc{}$ is not fully 'observed.' A central part of this work is to present conditions that ensure that marginal expectations of potential outcomes of interest are identified from the information in $\F_T$. The outcomes we consider, with associated identifying conditions, are detailed in Sections \ref{subsection: outcomes of interest}–\ref{subsection: identifying conditions}.

We characterize $\cfproc{}$ via its compensator. Because the distribution of an MPP is determined by the compensator, \cite[Section 8]{LastBrandt1995marked}, the upcoming definition also determines $\cfproc{}$'s law, $P(\tilde N \in \cdot)$. Consequently, $P$-expectations of measurable functionals of $\tilde N$ are unambiguously defined. 

Our definition makes use of the unique "canonical compensator," $\alpha = (\alpha^1, \dots, \allowbreak \alpha^d)$, of $N$ with respect to $P$ and the natural filtration $\F_\T$. $\alpha$ has the property that when evaluated in the observed trajectories, it defines a compensator of the observed process, i.e.
\begin{align}
    \Lambda^j = \alpha^j(N)
    \quad P\text{-a.s.} \label{eq: alpha j compensator} 
\end{align}
for each $j$. See Appendix \ref{appendix: canonical compensators and optional times} or \cite[Theorem 4.2.2]{LastBrandt1995marked} for a precise result on canonical compensators. 

To ensure a well-defined canonical compensator (see \cite[Theorem 4.2.2 and Section 4.3]{LastBrandt1995marked}) suitable for constructing potential outcomes, we impose the following regularity conditions:
\begin{align}
    \big\{ \sum_{j \in \I_d \setminus \{ a \}} d\alpha^j_t(\varphi) +  d\mathfrak{n}^a_t(\varphi)  \big\} I(\pi_\infty'(\varphi) < t) &= 0, \label{eq: compensator explosion regularity} \\
    \sum_{j \in \I_d \setminus \{a\}} \Delta \alpha_t^j(\varphi) + \Delta \mathfrak{n}_t^a(\varphi) &\leq 1, \label{eq: compensator jump regularity}
\end{align}
for each $\varphi \in \N_T^d$ and $t \in \T$, where $\pi_\infty'(\varphi) := \inf\{ s > 0 | \sum_{j \in \I_d \setminus \{ a \}} \alpha^j_s(\varphi) +  \mathfrak{n}^a_s(\varphi) = \infty \}$ and $\Delta Z_t := Z_t - Z_{t-}$ for a cadlag process $Z$. The condition 
\eqref{eq: compensator explosion regularity} can be made to hold by redefining $\mathfrak n^a$ and the $\alpha^j$'s if necessary. \eqref{eq: compensator jump regularity} is a substantive condition which depends on the intervention under study. The condition is specific to the MPP setting, and it is connected to the assumption of strict monotonicity of MPP jump times. The condition may be violated if $\na{}$ imposes treatment times which have a positive probability of coinciding with event times of the other processes.

The conditions \eqref{eq: compensator explosion regularity}-\eqref{eq: compensator jump regularity} hold for many of the interventions considered in this article, such as the treatment-prevention intervention in Example \ref{exa:determ} later in this section. To simplify the presentation, we will assume that \eqref{eq: compensator explosion regularity}-\eqref{eq: compensator jump regularity} hold throughout, recognizing that these conditions must be verified on a case-by-case basis in practice.

\begin{definition}[Potential outcome process]\label{definition: potential outcome process}
    Fix a version of the canonical compensator $\alpha = (\alpha^1, \dots, \alpha^d)$ in \eqref{eq: alpha j compensator} and an intervention $\mathfrak n^a$ satisfying \eqref{eq: compensator explosion regularity}-\eqref{eq: compensator jump regularity}. We say that $\cfproc{}$ is a \emph{potential outcome process} with respect to the intervention $\na{}$ if
\begin{enumerate}[label=\textnormal{(\alph*)}]
        \item \label{item: invariance a} $\mathfrak{n}^a(\cfproc{})$ defines a $(P,\tilde \F_\T)$-compensator of $\cfproc{a}$, 
        \item \label{item: invariance} $\alpha^j ( \cfproc{} )$ defines a $(P,\tilde \F_\T)$-compensator of $\cfproc{j}$ for $j \neq a$.
\end{enumerate}
\end{definition}
Definition \ref{definition: potential outcome process} \ref{item: invariance a} is equivalent to
\begin{align}
    \cfproc{a} &= \na{}(\cfproc{}) \quad P\text{-a.s.} \label{eq: predictable intervention tilde a compensator}
\end{align}
see Appendix \ref{subsection: potential outcome process for predictable regime} for details. Thus, Definition \ref{definition: potential outcome process} states that $\cfproc{a}$ is fixed to $\na{}(\cfproc{})$, while the canonical compensators of all processes other than the treatment process are the same as for the observed processes. In particular, $\alpha^j ( \cfproc{} )$ then coincides with
\begin{align}
    \alpha^j \big( (\cfproc{1}, \dots, \cfproc{a-1},\na{}(\cfproc{}),\cfproc{a+1}, \dots, \cfproc{d}) \big). \label{eq: predictable intervention tilde j compensator}
\end{align}
Definition \ref{definition: potential outcome process} and \eqref{eq: predictable intervention tilde a compensator}-\eqref{eq: predictable intervention tilde j compensator} are analogous to definitions of potential outcomes starting from structural equations \cite{pearl,richardson_single_2013}. In particular, Definition \ref{definition: potential outcome process} becomes equivalent to common definitions of data-generating laws found in the discrete-time literature \cite{robins1986parametric_g,richardson_single_2013}, if we impose the restrictions on compensators that induce discrete-time data structures, see Proposition \ref{proposition: relation to existing assumptions} in Section \ref{section: relation to discrete-time theories}. 

To limit the scope of this work we have not included the natural value process in Definition \ref{definition: potential outcome process}, that is, the treatments a subject would take under the regime if we had not (adaptively) implemented the regime \cite{robins2004effects,Young2014natural,richardson_single_2013,sarvet2025natural}. We do therefore not study interventions that depend on the natural value process, nor do we identify effects on the natural value process. Such extensions will be considered in future research.

\subsection{Examples of interventions and associated potential outcome processes}
\label{subsection: examples of interventions and potential optcomes processes}

The following examples illustrate interventions and their associated potential outcome processes. We give further examples in Section \ref{section: identification with multiple interventions in the general MPP setting}.

\begin{example}[label=exa:determ][Atomic intervention]  \label{example: atomic}
Consider an intervention that fixes each trajectory to some $\hat \varphi \in \N^1_T$, i.e. where the map $\varphi \mapsto \na{}(\varphi) = \hat \varphi$ is constant. An example of such a $\hat \varphi$ is the trajectory that is identically zero on $\T$, corresponding to the intervention which prevents treatment. Because constant processes are predictable, we know from \eqref{eq: predictable intervention tilde a compensator} that
 $$ \cfproc{a} =  \na{}(\cfproc{}) = \hat \varphi \quad P \text{-a.s.} $$
 Furthermore, \eqref{eq: predictable intervention tilde j compensator} states that the compensator of $\cfproc{j}$ coincides with 
 $$\alpha^j\big( (\tilde N^1, \dots, \tilde N^{a-1}, \hat \varphi,\tilde N^{a+1}, \dots, \tilde N^{d} ) \big),$$ 
 i.e. with the $a$'th component substituted with the intervened trajectory $\hat \varphi$. Interventions that fix treatment trajectories are often called "deterministic static regimes" \cite{Hernanrobins2021causal}, or "actions" \cite{pearl, Dawid2021decision}. 
\end{example}

\begin{example}[label=exa:determ_dynaimc][Starting treatment after a covariate spike] \label{example: deterministic dynamic}
    Consider a setting where individuals are followed starting from diagnosis (at $t=0$), with the possibility of joining a treatment program at a later point in time. Such treatment decisions are often influenced by changes in certain covariates. Suppose there is a covariate that can 'spike' during the study period, and let $\varphi^l$ in $\varphi = (\varphi^1, \dots, \varphi^d)$ denote the component that counts the occurrences of these covariate spikes.
    
    The question of when to start a treatment program is often of clinical interest. Consider, for example, an investigator who is interested in initiating the program a time increment $\delta > 0$ after a covariate spike is reported. This action corresponds to
    \begin{align*}
        \na{t}(\varphi) = \varphi_{t - \delta}^l,
    \end{align*}
    using the convention $Z_t := 0$ for $t < 0$ when $Z$ is a function or process on $\T$. The delay $\delta$ gives clinicians time to review information about the spike and decide whether to initiate treatment. Here we assume $\delta$ is deterministic for simplicity, although our results allow for random delays. \footnote{We briefly discuss random delays in Section \ref{subsec: illustrative examples}.} 
    
    Because $\delta > 0$, the intervention is predictable, i.e. $ \na{t}(\varphi) = \na{t}(\varphi|_{t-})$, and from \eqref{eq: predictable intervention tilde a compensator} it follows that
    \begin{align*}
        \cfproc{a}_t = \na{t}(\cfproc{} ) = \cfproc{l}_{t - \delta} \quad P \text{-a.s.},
    \end{align*}
    which can further be substituted into the compensator of $\cfproc{j}$ as in \eqref{eq: predictable intervention tilde j compensator}. This $\na{}$ is an example of a "deterministic dynamic regime" \cite{Hernanrobins2021causal}, or a "conditional action" \cite{pearl}.
\end{example}

\begin{example}[label=exa:determ_dynaimc_review_period][Treatment assignment after review period] \label{example: deterministic dynamic predictable refinement}
    Consider a doctor who makes a binary treatment decision (treat or not treat) when a patient visits. More specifically, after a patient arrives at a clinic, the doctor reviews available information (e.g., medical history), and \textit{then} makes a treatment decision. This review takes time, inevitably creating a delay between information gathering and treatment allocation.
    
    In the observed data, there is a short delay $\delta > 0$ between the time information is gathered and the treatment is allocated. As in Example \ref{example: deterministic dynamic}, we take $\delta$ to be deterministic for simplicity; the specific nature of the delay is not important for this story. The delay reflects both the processing time needed for clinical decisions and that the doctor is aware of the decision timing before acting, consistent with our discussion in Section \ref{subsection: interventions}. 

    Suppose that the doctor's decision may be informed by whether the patient experienced a covariate spike in the previous week. Let $\varphi^l$ count covariate spikes, $\varphi^v$ count when information is gathered (assessment), and $\varphi^a$ count recorded treatments. The intervention "administer treatment if a covariate spike occurred in the previous week" assigns treatment at $t$ if a spike occurred in $[t-1, t)$ (assuming time is measured in weeks). Formally, this intervention is
    \begin{align}
        \mathfrak n_t^a(\varphi) = \int_0^t I(\varphi_{s-}^l - \varphi_{s-1}^l > 0) d\varphi_s^{v,\delta}, \label{eq: predictable dynamic with review}
    \end{align}
    where the counting process trajectory $\varphi^{v,\delta}_t := \varphi^v_{t - \delta}$ shifts the assessment times forward by $\delta$, so that treatment decisions occur $\delta$ time units after the corresponding assessment. Because $\delta > 0$, the decision at time $t$ depends only on information available strictly before $t$, making $\mathfrak n^a$ predictable. By \eqref{eq: predictable intervention tilde a compensator} we get that
    \begin{align*}
        \tilde N^a_t = \int_0^t I(\tilde N^l_{s-} - \tilde N^l_{s-1} > 0) d\tilde N_s^{v,\delta} \quad P\text{-a.s.},
    \end{align*}
    where $\tilde N_s^{v,\delta} := \tilde N^v_{s-\delta}$, and $\tilde N^a$ can then be substituted into the compensator of $\cfproc{j}$ as in \eqref{eq: predictable intervention tilde j compensator}.
\end{example}

\subsection{Outcome functionals of interest, observed and potential outcomes of interest}
\label{subsection: outcomes of interest}
We will identify estimands that are marginal expectations of specific functionals of $\cfproc{}$. We study functionals $\dot Y = \{ \dot Y_t \}_{t \in \T}$ which are optional stochastic processes on  $(\N_T^{d},\H_T^{d})$; that is, $\dot Y$ satisfies
\begin{align*}
    \dot Y_t(\varphi) &= \dot Y_t(\varphi|_t).
\end{align*}
From $\dot Y$ we define the potential outcome of interest as  
\begin{align*}
    \tilde{Y} &:= \dot{Y}( \tilde N ),
\end{align*}
while the associated observed outcome is 
\begin{align*}
    Y &:= \dot{Y}( N ).
\end{align*}
In plain English, these definitions state that at any time $t$, the observed (or potential) outcome of interest is a functional of events occurring up to time $t$ in the observed process $N$ (or the potential outcome process $\tilde{N}$, respectively).

An outcome functional $\dot Y$ as above can e.g. be a process with piecewise constant cadlag or caglad paths. A simple example is survival: $\dot Y_t(\varphi) = I(\varphi^d_t = 0)$ defines the survival indicator at each time $t$, where the $d$-component counts the occurrence of death. The corresponding observed and potential outcomes are $Y_t = I(N_t^d = 0)$ and $\tilde Y_t = I(\tilde N_t^d = 0)$.  Outcomes of interest that are random variables at specific time points, such as survival at the end of follow-up $T$, can easily be desribed by such a functional. 

\subsection{Identifying conditions}
\label{subsection: identifying conditions}
Having defined the outcomes of interest, we turn to the conditions under which expectations of such outcomes can be identified. The upcoming definition focuses on an irrelevance condition which ensures that a wide range of outcomes are identified.  

To formulate our identification conditions, consider the optional time
\begin{align}
    \Nastop := \inf\{ s > 0 | N^a_s \neq \na{s}(N) \}, \label{eq: tau a}
\end{align}
which plays a central role in this work; this is the first time the observed treatment process differs from the treatment assigned by the regime based on the observed trajectories. We define the process $\mbbNa{} := I(\Nastop \leq \cdot)$, which counts whether $\Nastop$ has occurred up to a given point in time. Consider also $\mbbLa{}$, the $(P,\F_\T)$-compensator of $\mbbNa{}$, and define $\mbbMa{} := \mbbNa{} - \mbbLa{}$, which is a martingale with respect to $P$ and $\F_\T$.

Throughout we adopt the assumption
\begin{align}
    P\Big( \int_0^T \frac{d\mbbLa{s}}{1 - \Delta \mbbLa{s}} < \infty \Big) = 1. \quad\quad\quad \text{(Positivity)} \label{eq: K jump to zero condition}
\end{align}
We refer to \eqref{eq: K jump to zero condition} as \textit{positivity}, as it is equivalent to positivity conditions described in the discrete-time literature \cite{Hernanrobins2021causal} when compensators are restricted. Further details on this connection are given in Section \ref{section: relation to discrete-time theories}. In our work, \eqref{eq: K jump to zero condition} ensures that the upcoming key equations are well-defined. However, we will suppose additional integrability conditions beyond \eqref{eq: K jump to zero condition} to establish identification formulas in our setting. These additional integrability conditions hold automatically in discrete-time settings provided that \eqref{eq: K jump to zero condition} holds, see Proposition \ref{proposition: relation to existing assumptions} in Section \ref{section: relation to discrete-time theories}.

Under \eqref{eq: K jump to zero condition}, the process
\begin{align}
    \KK := -\int_0^\cdot \frac{d\mbbMa{s}}{1 - \Delta \mbbLa{s}} \label{eq: K semimart}
\end{align}
is well-defined, and also a local martingale with respect to $P$ and $\F_\T$. Consequently, the SDE 
\begin{align}
    W &= 1 + \int_0^\cdot W_{s-} d\mathbb K^a_s, 
 \label{eq: W dolean}
\end{align}
i.e. the stochastic exponential (aka Doléans-Dade exponential) of $\KK$, is also well-defined, and admits a unique solution $W = \E(\KK)$ \cite[II Theorem 37]{protter}.

We require that $W$ is a likelihood ratio process, i.e. that $W$ is a nonnegative, mean one uniformly integrable martingale on $\T$. The stochastic exponential provides the canonical method to generate likelihood ratio candidates and verifying integrability conditions; see, e.g., \cite{cohen2015stochastic,protter,JacodShiryaev,Sokol2015,shiryaevkallsen2002cumulant} and the references therein. In particular, it follows from the preceding construction that $W$ is a nonnegative local martingale (see Lemma \ref{lemma: W finite variation} in Appendix \ref{appendix: supporting calculations} for details). Thus, $W$ is a valid likelihood ratio process provided that the integrability conditions in the following definition are met.

\begin{definition}[Identifying conditions]\label{def:strong prime}
For an intervention $\na{}$ and outcome functional $\dot Y$ (Section \ref{subsection: outcomes of interest}), the following conditions are said to be the \emph{identifying conditions} for the potential outcome process $\tilde Y$ of interest:
    \begin{enumerate}[label=\textnormal{(\roman*)}]
        \item \label{item: strong cf consistency} Consistency:
        \begin{align}
            Y_t I(\tau^a > t) = \tilde Y_t I(\tau^a > t) \quad P  \text{-a.s. for each } t \in \T.
            \label{eq: consistency}
        \end{align}
        \item \label{item: strong cf exch} Exchangeability:
        \footnote{A filtration $ \big\{ \F_t \vee \sigma(\tilde Y) \big\}_{t \in \T}$, which treats $\tilde Y$ as a random element that is realized at baseline is sometimes called the \textit{initial enlargement} of $\F_\T$ by the random element $\tilde Y$. We could equivalently have phrased \eqref{eq: exchan disc} in terms of the filtration $\big\{ \mathcal F_{\tau^a \wedge t} \vee \sigma(\tilde Y) \big\}_{t \in \T}$, where  $\mathcal F_{\tau^a \wedge t}$ is the $\sigma$-algebra associated with the optional time $\Nastop \wedge t$. } 
        \begin{align}
        \begin{split}
            \mbbLa{} \text{ is the compensator of } \mbbNa{} \text{ with respect to both } \F_\T \text{ and } \F_\T^{\tilde Y} \text{ under } P,
        \end{split}
            \label{eq: exchan disc}
        \end{align}
        where $\F_\T^{\tilde Y} := \{ \F_t \vee \sigma(\tilde Y) \}_{t \in \T}$.
        \item \label{item: likelihood ratio regularity} Likelihood ratio regularity:
            \begin{align}
            E_P[W_t]=1 & \text{ for each } t \in \T. \label{eq: W likelihood ratio regularity} 
            \end{align}
    \end{enumerate}
\end{definition}

The consistency condition \eqref{eq: consistency} states that almost all trajectories of $Y$ and $\tilde Y$ coincide as long as the trajectories of $\na{}(N)$ are equal to the trajectories of the observed treatment process $N^a$, and is an extension of the discrete time consistency condition for time-varying treatment regimes. The condition connects observed outcomes with potential outcomes through the treatment actually received; the realizations coincide so long as an individual's observed treatment aligns with the treatment regime of interest.

The exchangeability condition \eqref{eq: exchan disc} states informally that, for an individual who, in the observed data, has followed the regime up to any point in time, knowledge of their future potential outcome $\tilde Y$ does not offer any additional information to improve predictions about whether they will continue following the regime in the future. The condition is often violated when unmeasured covariates, not recorded in $\F_\T$, are predictive of subsequent treatments and outcomes of interest; conceptually, this can be interpreted as unmeasured confounding. In the case of a sequentially randomized experiment with perfect compliance, we have that  \eqref{eq: exchan disc} holds by design for $\tilde Y = \tilde N$. We discuss the testability of our type of exchangeability assumptions in Appendix \ref{subsection: testability of exchangeability}.

The exchangeability condition \eqref{eq: exchan disc} shares similarities with conditions previously described in the causal inference literature \cite{Zhang2011,Lok2008statistical,Yang2021,Robins2000marginalvsstructural,gill2004continuous}. Yet, to our knowledge, and as explicitly highlighted in \cite[p. 1791]{gill2001complex} and \cite[p. 1]{gill2004continuous}, such conditions have not before been successfully adapted to derive a $g$-formula for MPP data.  
To establish the formula we develop the new likelihood ratio construction in \eqref{eq: K semimart}-\eqref{eq: W dolean}, along with integrability conditions for valid measure changes which ensure that our $g$-formulas are well-behaved.

Discrete-time theories express exchangeability as conditional independencies, \allowbreak which are symmetric. Symmetry is not generally defined for our exchangeability condition because $\tilde Y$ need not have compensators with respect to, nor even be adapted to, the filtrations $\F_\T$ and $\{\F_t \vee \sigma(\mbbNa{}) \}_{t \in \T}$. Still, even if $\tilde Y$ has compensators with respect to both these filtrations, it does not follow from \eqref{eq: exchan disc} that these compensators coincide.

The condition \eqref{eq: exchan disc} is also closely related to conditions found in the literature on missing data with stochastic processes \cite{robins1992recovery,Sattenrobins2001estimating,Robins2000correcting}. A more detailed study of the connection between our conditions and those in the missing data literature will be addressed in subsequent work.

Since \eqref{eq: K jump to zero condition} is assumed to hold, the likelihood ratio regularity condition \eqref{eq: W likelihood ratio regularity} is equivalent to stating that $W$ defines a likelihood ratio process on $\T$ with respect to $P$ and the filtration $\F_\T$. This is shown in Lemma \ref{lemma: positivity and likelihood ratio equivalence} in Appendix \ref{appendix: general MPP potential outcomes}. \footnote{Given \eqref{eq: K jump to zero condition}, \eqref{eq: W likelihood ratio regularity} is equivalent to that $W$ is a martingale on $\T$. $W$ is then automatically uniformly integrable, as shown in the proof of Lemma \ref{lemma: positivity and likelihood ratio equivalence}. For non-compact time domains the matter is more delicate; see e.g. \cite[I 2]{protter}.} The same lemma shows that, under the exchangeability condition \eqref{eq: exchan disc}, $W$ is also a likelihood ratio process with respect to the filtration $\F_\T^{\tilde Y}$, and in particular a martingale with respect to this filtration. This innocent-looking implication is the driving force behind our main identification result, Theorem \ref{theorem: ipw} in Section \ref{section: the main identification result}.

The process $W$ has further representations with alternative interpretations. For instance, equations \eqref{eq: W unstabilized weight}-\eqref{eq: W discrete and cont expression} in Section \ref{subsection: characterizing identifying likelihood ratio} represent $W$ as a regime adherence indicator divided by a cumulative product of conditional (infinitesimal) regime adherence probabilities. These identities establish $W$ as a continuous-time analogue of the "unstabilized weights" involved in discrete inverse probability weighting (IPW) identification formulas for time-varying treatment regimes \cite{hernan2000marginal, Hernanrobins2021causal}. The representations also highlight how IPW processes can be thought of as "censoring weights" by choosing a suitable reparametrization. We discuss these aspects further in Section \ref{subsection: characterizing identifying likelihood ratio}.

Section \ref{section: relation to discrete-time theories} establishes a detailed connection between our conditions and conditions which are widely used in the discrete-time causal inference literature. We refer readers to \cite{Hernanrobins2021causal} for a comprehensive and practically oriented discussion of these conditions.

\subsection{Differences from identification conditions in related work}
\label{subsection: differences between positivity conditions}
In contrast to likelihood ratios that have been posed in related work on causal inference with stochastic processes \cite{ryalen2019additive,roysland2011,roysland2012counterfactual,roysland2022graphical,Rytgaard2022, ying2024functional,ying2024functionaldynamic}, we do not \textit{define} $W$ as a likelihood ratio between a specified 'target' distribution and the observational distribution, where the target distribution is obtained by replacing one component of the observational distribution with an 'interventional' component. In particular, our definition of $W$ does not reference a parametrization of the model, and interpreting $W$ in terms of how the model is parametrized does not immediately seem helpful. Nonetheless, our results are derived from formal definitions of interventions and align with existing discrete-time work \cite{robins1986parametric_g,robins1997complex,Hernanrobins2021causal,richardson_single_2013}, where $g$-formulas are derived under single world exchangeability conditions.

The commonly used informal description of positivity, that there must be some subjects who follow the hypothetical intervention within all levels of the past, \cite{Hernanrobins2021causal}, aligns more naturally with the evaluability conditions \eqref{eq: K jump to zero condition} and \eqref{eq: W likelihood ratio regularity}. This connection is captured by the representation of $W$ found in \eqref{eq: W unstabilized weight} in Proposition \ref{prop: characterizations of W} in Section \ref{subsection: characterizing identifying likelihood ratio}. In that same section, we further characterize $W$ and provide a specific unique relationship between $W$ and our exchangeability and consistency conditions. 

To distinguish between our work and related works it is helpful to introduce a new term for the distribution induced under the conditions in Definition \ref{def:strong prime}. This distribution is integrally linked to our definitions of interventions and our identifying conditions.
\begin{definition}[$g$-formula distribution] \label{defn: g formula distribution}
    We call the probability measure
    \begin{align*}
        dQ := W_T dP 
    \end{align*}
    on $\F_T$ the \emph{$g$-formula distribution} associated with a given intervention $\na{}$ under the conditions in Definition \ref{def:strong prime}.
\end{definition}

\section{Main Identification Result}
\label{section: the main identification result}
Under the conditions in Definition \ref{def:strong prime}, marginal expectations of a bounded \footnote{Boundedness can be substituted with other context-specific integrability assumptions, and further relaxed by imposing more stringent integrability conditions on $W$ or $\mbbMa{}$; see e.g. \cite[Theorem 2.1.42]{LastBrandt1995marked}.} potential outcome process of interest $\tilde Y_t$, as described in Section \ref{subsection: outcomes of interest}, associated with a realization $\cfproc{}$ that emerges under an intervention as in Definition \ref{definition: intervention}, are identified. This is established in the following theorem. 
\begin{theorem}\label{theorem: ipw}
    Suppose that the conditions of Definition \ref{def:strong prime} are satisfied for a bounded outcome functional of interest (Section \ref{subsection: outcomes of interest}). Then, for each $t \in \T$,
    \begin{align}
        E_P[\tilde  Y_t ] = E_P[ W_t Y_t ]. \label{eq: exchangeability ipw formula}
    \end{align}
\end{theorem}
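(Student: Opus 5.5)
The plan is to combine three facts about $W$. First, $W$ vanishes after $\Nastop$. Second, consistency lets us replace $Y_t$ by $\tilde Y_t$ on $\{\Nastop > t\}$. Third, exchangeability makes $W$ a mean-one martingale in the enlarged filtration $\F_\T^{\tilde Y}$, so conditioning on $\sigma(\tilde Y)$ at time $0$ gives the result.

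\emph{Step 1: $W_t = W_t I(\Nastop > t)$.} From \eqref{eq: K semimart} we have $\Delta \KK_{\Nastop} = -(1-\Delta\mbbLa{\Nastop})/(1-\Delta \mbbLa{\Nastop}) = -1$ at the jump of $\mbbNa{}$. Hence $\Delta W_{\Nastop} = W_{\Nastop-}\Delta\KK_{\Nastop} = -W_{\Nastop-}$, so $W_{\Nastop}=0$. After $\Nastop$, $\mbbNa{}$ is constant and $\mbbLa{}$ is constant as well, since the compensator of a single-jump process is stopped at the jump. Thus $\KK$ is constant from $\Nastop$ on, and $W$ stays at $0$. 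Before $\Nastop$, $W$ is the positive finite-variation exponential $\prod(1+\Delta \mbbLa{}/(1-\Delta\mbbLa{}))\exp(\cdots)$. This is the content of Lemma \ref{lemma: W finite variation}, which I will invoke. Consistency \eqref{eq: consistency} then gives
\[
W_tY_t=W_tI(\Nastop>t)Y_t=W_tI(\Nastop>t)\tilde Y_t=W_t\tilde Y_t \quad P\text{-a.s.}
\]

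\emph{Step 2: $W$ is an $\F_\T^{\tilde Y}$-martingale.} By \eqref{eq: exchan disc}, $\mbbLa{}$ is also the $\F_\T^{\tilde Y}$-compensator of $\mbbNa{}$. Therefore $\mbbMa{}$, and hence $\KK$ and $W=\E(\KK)$, are $\F_\T^{\tilde Y}$-local martingales, built by the same pathwise formulas; predictability of the integrand $1/(1-\Delta\mbbLa{})$ persists because $\F_t\subset \F_t\vee\sigma(\tilde Y)$. $W$ is a nonnegative local martingale, hence a supermartingale, in this filtration, and its expectation is the same as computed under $\F_\T$. By \eqref{eq: W likelihood ratio regularity}, $E_P[W_t]=1$ for all $t$. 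A nonnegative supermartingale with constant expectation is a true martingale, so $E_P[W_t\mid \F_0\vee\sigma(\tilde Y)] = W_0 = 1$. This is essentially Lemma \ref{lemma: positivity and likelihood ratio equivalence}, which I would cite directly. \textbf{This step is the main obstacle}: one must make sure the stochastic integral in \eqref{eq: K semimart} defined in the small filtration agrees with the one in the larger filtration. Here this is harmless because $\mbbMa{}$ is of finite variation, so the integral is pathwise Stieltjes and the same in both filtrations.

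\emph{Step 3: conclusion.} Since $\tilde Y_t$ is bounded and $\sigma(\tilde Y)$-measurable, Steps 1 and 2 give
\[
E_P[W_tY_t]=E_P[W_t\tilde Y_t]=E_P\big[\tilde Y_t\,E_P[W_t\mid \F_0\vee\sigma(\tilde Y)]\big]=E_P[\tilde Y_t],
\]
which proves \eqref{eq: exchangeability ipw formula}.
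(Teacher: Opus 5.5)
Your proof is correct and is essentially the paper's argument: consistency combined with $I(W_t>0)=I(\tau^a>t)$ from Lemma \ref{lemma: W finite variation} gives $E_P[W_tY_t]=E_P[W_t\tilde Y_t]$, and exchangeability plus likelihood-ratio regularity make $W$ a mean-one $(P,\F_\T^{\tilde Y})$-martingale (Lemma \ref{lemma: positivity and likelihood ratio equivalence}), so conditioning on $\sigma(\tilde Y)$ at time $0$ gives $E_P[\tilde Y_t]$. Your supermartingale-with-constant-mean argument and your remark that the integral defining $\mathbb K^a$ is a pathwise Stieltjes integral (hence the same in both filtrations) just spell out what those two lemmas supply.
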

\begin{proof}
    Consistency implies via Lemma \ref{lemma: W finite variation} \ref{enum: W jump to zero identity} (Appendix \ref{appendix: supporting calculations}) that 
    \begin{align*}
        E_P[W_t Y_t] = E_P[W_t \tilde Y_t].
    \end{align*}
    Moreover, Lemma \ref{lemma: positivity and likelihood ratio equivalence} (Appendix \ref{appendix: supporting calculations}) gives that $W$ is a uniformly integrable $(P,\F_\T^{\tilde Y})$-martingale under the conditions \eqref{eq: K jump to zero condition}, \eqref{eq: exchan disc}, and  \eqref{eq: W likelihood ratio regularity} --- i.e. under our positivity, exchangeability, and likelihood ratio regularity conditions. Since $\tilde Y_t$ is bounded and evidently $\sigma(\tilde Y)$-measurable, we get $E_P[W_t \tilde Y_t] = E_P[ W_0 \tilde Y_t]$ by the martingale property. The desired result \eqref{eq: exchangeability ipw formula} follows since $W_0=1$.
\end{proof}
The formula \eqref{eq: exchangeability ipw formula} is an example of an IPW formula \cite{ROSENBAUM1983propensity}, and it is an identification formula since $W$ is $\F_\T$-adapted, making the right-hand side a functional of observed quantities. Our work focuses on marginal expectations of bounded potential outcome processes $\tilde Y$, but the techniques in Theorem \ref{theorem: ipw} can readily be used to derive identification results for many other quantities of interest, such as conditional effects given baseline covariates or distributional effects.

While we focus on point processes in this paper, the conditions in Definition \ref{def:strong prime} and the associated proof strategy refer to concepts from the general theory of stochastic processes such as filtrations, optional times, and compensators. The ideas presented here apply more generally to a wide range of stochastic processes, including discrete time processes, and our presentation thus unifies identification results across vastly different data structures. Extending our work to other stochastic process settings would require definitions of potential outcomes for such settings, which is beyond the scope of this paper and will not be discussed further here.

Other works have conjectured or claimed that a sequential exchangeability condition --- a sequence of conditional independence conditions, one condition connected to all or some of the jump times $\{ T_k \}_{k \geq 1}$ of $N$ --- is sufficient to derive formulas analogous to \eqref{eq: exchangeability ipw formula}  \cite{gill2004continuous,Rytgaard2022}. These conjectured conditions, however, are not equivalent. The question of whether such conditions are valid is not just a theoretical puzzle, because investigators interested in testing exchangeability need to know if they are testing a sufficient condition. In Proposition \ref{proposition: sequential exchangeability predictable times} in Appendix \ref{appendix: seq exchangeability proof sketch} we sketch an argument using a sequential condition that ensures \eqref{eq: exchangeability ipw formula} under the additional assumption that each $T_k$ is predictable. The result may have limited practical value, since the assumption that the jump times of $N$ are predictable with respect to the observed filtration is very hard to justify in practical settings. We include the proposition and the connected lemmas because they may help other researchers derive $g$‑formulas, or clarify relationships between different identification conditions in MPP settings. 



\section{A joint construction ensuring identifying conditions are satisfied}
\label{section: construction of potential outcomes}
Theorem \ref{theorem: ipw} shows that if the consistency, exchangeability, and likelihood ratio regularity conditions in Definition \ref{def:strong prime} hold, then marginal means of potential outcomes of interest are identified. A concern when imposing identifying assumptions is whether they place hidden restrictions on the observed data law. This problem was examined in discrete time by GR01. While exchangeability conditions similar to ours have been described in the literature, the issue of whether such conditions impose hidden restrictions has not, to our knowledge, been addressed.

The following proposition shows that, under a certain martingale orthogonality condition, we can construct, on a new probability space if necessary, observed and potential outcomes with the correct laws such that our consistency and exchangeability conditions hold simultaneously. Specifically, if the intervention respects \eqref{eq: compensator explosion regularity}-\eqref{eq: compensator jump regularity}, and the canonical compensator $\alpha = (\alpha^1, \dots, \alpha^d)$ in \eqref{eq: alpha j compensator} satisfies \eqref{eq: mcp can comp orthogonality}, then our identifying conditions impose no hidden restrictions on the observed data law. The result is generalized to multiple interventions in Theorem \ref{theorem: multiple marked interventions} in Section \ref{section: identification with multiple interventions in the general MPP setting}.
\begin{proposition}[Construction of observed and potential outcomes satisfying Definition \ref{def:strong prime}] \label{proposition: mcp construction of observed and potential outcomes}
    Suppose that the canonical compensator $\alpha = (\alpha^1, \dots, \alpha^d)$ of $N$ with respect to $P$ and $\F_\T$ satisfies the orthogonality condition
    \begin{align}
        \sum_{j \in \mathcal I_d \setminus\{a\} }\Delta {\alpha}_t^a(\varphi) \Delta {\alpha}_t^j(\varphi) = 0 \label{eq: mcp can comp orthogonality}
    \end{align}
    for each $t \in \T$ and $\varphi \in \N_T^d$, and that \eqref{eq: compensator explosion regularity}-\eqref{eq: compensator jump regularity} holds. Then, there exists a probability space $(\Omega', \mathcal{F}', P')$ supporting observed and potential outcomes $N'$ and $\tilde N'$, both $d$-dimensional multivariate counting processes, such that
\begin{enumerate}[label=(\roman*)]
    \item  \label{enum: mcp cons} $N'|_t I(\tau'^a > t) = \tilde N'|_t I(\tau'^a > t)$ for each $t \in \T$.
    \item \label{enum: mcp exch} $\mathbb \Lambda^{\prime a}$ is a compensator of $\mathbb N^{\prime a} = I(\tau^{\prime a} \leq \cdot)$ with respect to both $\{\mathcal{F}_t'\}_t$ and $\{\mathcal{F}_t' \vee \sigma(\tilde N')\}_t$ under $P'$, where $\tau^{\prime a} = \inf\{ s > 0 | N_s^{\prime a} \neq \mathfrak{n}_s^a(N') \}$ (analogous to \eqref{eq: tau a}), and $\{\mathcal{F}_t'\}_t$ is the filtration generated by $N'$.
    \item \label{enum: mcp law} $P'(N' \in \cdot) = P(N \in \cdot)$, and $P'(\tilde N' \in \cdot) = P(\tilde N \in \cdot)$.
\end{enumerate}
    In particular, under these regularity conditions, there exists a probability space on which the consistency and exchangeability conditions in Definition \ref{def:strong prime} hold for any optional outcome functional $\dot Y$, with $Y = \dot Y(N')$ and $\tilde Y = \dot Y(\tilde N')$. Thus, these conditions impose no hidden restrictions on the observed data law.
\end{proposition}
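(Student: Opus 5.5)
We build $N'$ and $\tilde N'$ together on a product space, driven by shared randomness, so that they agree until the first time the observed treatment departs from the regime. Concretely, let $(\Omega',\F',P')$ carry a family of independent uniform variables (or, equivalently, a Poisson random measure on $\T\times\I_d\times[0,1]$ of intensity $dt\otimes\text{counting}\otimes du$, together with i.i.d.\ uniforms for the atomic parts). We generate a marked point process sequentially, $(T_k,X_k)$, by the usual compensator/thinning construction of an MPP with prescribed canonical compensator. For $N'$ the intensities are $\alpha^j(\cdot)$ for $j\in\I_d$, evaluated at the path of $N'$ itself. For $\tilde N'$ the intensities are $\alpha^j(\cdot)$ for $j\neq a$ and $\na{}(\cdot)$ for the mark $a$, evaluated at the path of $\tilde N'$. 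By \cite[Section 8]{LastBrandt1995marked} and the regularity conditions \eqref{eq: compensator explosion regularity}--\eqref{eq: compensator jump regularity}, each construction yields a well-defined nonexplosive counting process whose canonical compensator is the prescribed one. Item \ref{enum: mcp law} then follows, since compensators determine laws. For $\tilde N'$, the prescribed compensators are exactly (a)--(b) of Definition \ref{definition: potential outcome process}, so $P'(\tilde N'\in\cdot)=P(\tilde N\in\cdot)$.

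The coupling is arranged so that both processes use the same driving randomness for the non-treatment marks. Up to time $\tau'^a$ the two paths coincide, so their $j\neq a$ compensators coincide and they therefore produce the same jumps. For mark $a$, the absolutely continuous and atomic parts of $\alpha^a(N')$ are realized by the shared noise, while $\tilde N'^a$ jumps deterministically at the jump times of $\na{}(\tilde N')$. The orthogonality condition \eqref{eq: mcp can comp orthogonality} is used to realize atoms of $\alpha^a$ and atoms of the other $\alpha^j$ on disjoint time sets. At an atom time of $\alpha^a$ we draw one uniform deciding whether $N'^a$ jumps, and no other mark competes there. This lets us couple the two paths so that whenever $N'^a$ agrees with $\na{}(N')$, all other marks agree as well. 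An induction over the ordered jump times $T_k$ then gives item \ref{enum: mcp cons}: on $\{\tau'^a>t\}$ we have $N'|_t=\tilde N'|_t$.

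For item \ref{enum: mcp exch}, consider the filtration $\F'_t\vee\sigma(\tilde N')$. The point of the construction is that $\tilde N'$ is a measurable function of the driving noise, and $\mathbb N'^a$ is determined by whether the treatment-specific noise, at times where $N'$ still follows the regime, causes a deviation. Before $\tau'^a$ the path of $\tilde N'$ is already determined by $N'$. After $\tau'^a$, $\mathbb N'^a$ is constant. Hence, on $[0,\tau'^a]$, the conditional intensity of $\mathbb N'^a$ given $\F'_{t-}\vee\sigma(\tilde N')$ should reduce to its $\F'_{t-}$ intensity. Formally, we verify $E'[\mathbb N'^a_\sigma]=E'[\mathbb \Lambda'^a_\sigma]$ for optional times $\sigma$ of the enlarged filtration by conditioning on the $\tilde N'$-path.

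The main obstacle is this exchangeability step. The difficulty is that $\tilde N'$ after $\tau'^a$ uses noise that might also influence the timing of $\tau'^a$. The deviation is triggered either by $N'^a$ jumping when $\na{}$ does not, or by $\na{}$ jumping when $N'^a$ does not, and the same noise could enter both. To avoid this, we use separate independent noise sources for the treatment mark of $N'$ and for the regime-enforced mark of $\tilde N'$; the latter is deterministic given the path, so it needs no noise. We then check that the deviation event at each time depends only on $N'$'s treatment noise, which is independent of the noise driving the non-treatment marks of $\tilde N'$ after the deviation. This is the place where orthogonality is genuinely needed, since simultaneous atoms would entangle the two noise sources. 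The closing statement for a general $\dot Y$ follows because $\sigma(\dot Y(\tilde N'))\subset\sigma(\tilde N')$, so the compensator property passes to the smaller enlargement, and consistency for $Y$ follows from item \ref{enum: mcp cons} and optionality of $\dot Y$.
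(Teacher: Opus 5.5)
Your plan is sound and has the same architecture as the paper's proof, which reduces the statement to Proposition \ref{proposition: mcp construction of observed and potential outcomes marked} with $J=\{a\}$. In both, $N'$ and $\tilde N'$ share the non-treatment randomness, and $N'^a$ gets independent randomness of its own. Orthogonality together with \eqref{eq: compensator jump regularity} keeps the two sources from competing at common atoms. Consistency then follows by induction over jump times, and the laws are identified through canonical compensators.

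The real difference is how $\tilde N'$ is driven after the deviation, and therefore how exchangeability is proved.
\begin{itemize}
\item In Algorithm \ref{alg:scheme multi marked}, $\tilde N'$ switches to fresh randomizers $\tilde\xi_k,\tilde\eta_k$ on $C_k^c$. Since $C_{k+1}$ depends in part on the treatment latent time $T_k^a$, $\tilde N'$ is not a function of the non-treatment randomizers alone. The paper therefore proves exchangeability through the conditional independencies \eqref{eq: Tkj cond indep}--\eqref{eq: Vkj Tkj cond indep} and the hazard-ratio comparison \eqref{eq: split Gk hazard multi marked}. This gives agreement of the two compensators of $N'^a$ only on $\{\tau'^a\geq t\}$, and Lemma \ref{lemma: mathbb representations} then finishes the step.
\item In your coupling, $\tilde N'$ keeps using the non-treatment noise throughout. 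So it is measurable with respect to the $\sigma$-algebra $\mathcal R$ generated by that noise, which (taking all randomizers mutually independent) is independent of the treatment noise.
\end{itemize}

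Use that measurability directly. Your heuristic that the past of $\tilde N'$ is determined by $N'$ before $\tau'^a$ is beside the point, because $\sigma(\tilde N')$ contains the whole future of $\tilde N'$. Here is the argument to write out, in the latent-time form described below:
\begin{itemize}
\item Given $\mathcal R$ and $N'|_{T'_{k-1}}$, the competing non-treatment latent time is known.
\item The treatment latent time still has survival function $U^a(N'|_{T'_{k-1}},T'_{k-1},\cdot)$.
\item Ties have probability zero by orthogonality, so the hazard representation shows that $\alpha^a(N')$ is the compensator of $N'^a$ with respect to $\{\F'_t\vee\mathcal R\}_t$ on all of $\T$.
\item This process is $\F'$-predictable, so it is also the compensator with respect to the intermediate filtration $\{\F'_t\vee\sigma(\tilde N')\}_t$.
\item Item \ref{enum: mcp exch} then follows from Lemma \ref{lemma: mathbb representations}.
\end{itemize}
This is shorter than the paper's route and gives an unstopped form of exchangeability. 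The paper's route, in exchange, produces explicit latent-time independencies, which the main text cites as sufficient sequential exchangeability conditions.

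When you write the construction out, use competing latent times drawn by inverse transform from the product-integral survival functions, as the paper does.
\begin{itemize}
\item A single uniform for the total next-event time, shared by $N'$ and $\tilde N'$, would break consistency, because their total hazards differ.
\item Orthogonality and \eqref{eq: compensator jump regularity} are what give the factorizations $U=U^aU^{\setminus a}$ and $\tilde U=\mathscr U^aU^{\setminus a}$ (cf.\ \eqref{eq: U definition multi marked}--\eqref{eq: tilde U decomp multi marked}) that make the laws come out right.
\item A Lebesgue-intensity Poisson random measure with levels in $[0,1]$ handles neither unbounded intensities nor singular continuous compensators.
\end{itemize}
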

The assumption \eqref{eq: mcp can comp orthogonality} is closely related to the assumption that the observed treatment martingale is orthogonal to the other observed martingale components (see Appendix \ref{appendix: orthogonal counting process martingales}). Analogous assumptions appear implicitly in related work, either through continuity assumptions on the observed data compensator \cite{roysland2011,roysland2012counterfactual,ryalen2018pcancer,roysland2022graphical} or by the way the likelihood is specified \cite{Rytgaard2022}. 
Many existing statistical methods in survival analysis implicitly assume orthogonal martingales \cite{Andersen,aalen2008survivalandevent,martinussen2006dynamic,FlemingHarrington2005,CookLawless2007}, but this literature generally does not discuss how the assumption might be wrong. We show an extreme example of violation of orthogonal martingales in Appendix \ref{appendix: failure of orthogonality assumption}. 

If the consistency condition \eqref{eq: consistency} holds, one might expect the stronger condition
\begin{align}
    Y I(\tau^a \geq \cdot) = \tilde Y I(\tau^a \geq \cdot) \quad P\text{-a.s.}\label{eq: stronger consistency}
\end{align}
to also hold. This strengthened condition, which will be explored further in subsequent work, requires that the observed and potential outcomes of interest agree up to \textit{and including} the regime deviation time.

We show through an example in Appendix \ref{appendix: incompatibility of exchangeability and stronger consistency} that the stronger consistency condition \eqref{eq: stronger consistency} can be incompatible with the exchangeability condition \eqref{eq: exchan disc}. The example violates the orthogonality condition \eqref{eq: mcp can comp orthogonality}.

\section{Deriving the $g$-formula for marked point process data}
\label{section: g-formula for MPP data}

\subsection{A foundation for $g$-formulas in MPP settings}
\label{subsection: derivation via compensator theory}

It is well known that compensators characterize both MPP distributions and likelihood (ratios) \cite{jacobsen2006point,LastBrandt1995marked, jacod1975}. Related work has expressed $g$-formulas where compensators and even MPP likelihoods figure symbolically as 'integrators' \cite{Rytgaard2022,gill2004continuous, ying2024functionaldynamic}, even though MPP distributions and likelihoods are very different objects. In particular, RGvdL present formulas it appears hard to connect with established theory on compensators and integration theory; see \cite[Equation (8) and numerous subsequent expressions]{Rytgaard2022}. These connections are crucial for giving theoretical guarantees, but it is not immediately clear how to make them.

We here give precise meaning to existing formulas by using that the compensator defines a kernel from the underlying measurable space to the product of the positive reals and the mark space. A benefit of making clear sense of the integral involved is the immediate conclusion that the $g$-formula does not depend on choices of the conditional laws involved in it; see Section \ref{subsection: gill and the continuous time g formula} for details, and also GR01 and GR04. To express the $g$-formula we first give required results for its formulation in Lemmas \ref{lemma: P fidi dist} and \ref{lemma: Q compensator}, before we state the formula in Theorem \ref{theorem: representation of functionals}. 

In the "canonical setting" where the underlying measurable space $(\Omega, \F)$ is the canonical space and $N = Id$, we obtain a particularly nice expression (see Definition \ref{definition: canonical setting} in Appendix \ref{appendix: the canonical space of point process realizations}, or \cite[Remark 2.2.5]{LastBrandt1995marked} for details on the canonical setting).   In the case of a general abstract measurable space, the upcoming equation \eqref{eq: P regular conditional distribution new} takes a more complicated form which references the interplay between $N$, its compensator, and the canonical compensator. \footnote{The resulting form is indicated in the proof of the lemma; see equation \eqref{eq: canonical compensator distribution} and the surrounding text.} \footnote{In practical terms, adopting the convention in Lemma \ref{lemma: P fidi dist} amounts to identifying each subject $\omega$ in a study with that individual's point process realization $(T_k(\omega), X_k(\omega))_k$, where $T_k$ is the $k$'th event time and $X_k$ is the $k$'th event type, which is not a restrictive convention.}

\begin{lemma}[MPP distribution (canonical setting)]\label{lemma: P fidi dist}
        Consider, on the measurable space $(\Np{}, \Hp{})$ of $p$-dimensional point process trajectories on $\mathbb R_+$, the point process $N = Id$ (canonical setting). \footnote{See Definition \ref{definition: canonical setting} in Appendix \ref{appendix: the canonical space of point process realizations} or \cite{LastBrandt1995marked} for more details on the canonical setting.} Let $\beta$ be the compensator of $N$ with respect to its natural filtration $\{\Hp{t} \}_{t}$ and $P$, and suppose that $N$ is nonexplosive under $P$. Writing $\{ T_n\}_{n\geq 1}$ for the ordered jump times of $N$, the distribution of $N$ with respect to $P$ when restricted to the optional time $\sigma$-algebra $\Hp{T_n}$ takes the form 
        \begin{align}
            \begin{split}
                dF_n(t_1, x_1, \dots, t_n, x_n) &= I(t_1 < \dots < t_n) \cdot Z^{(0)}(dt_1 \times dx_1) \\
                &\cdot Z^{(1)}((t_k, x_k)_k |_{t_1}, dt_2 \times dx_2 ) \cdot \\
                & \ldots \cdot Z^{(n-1)}((t_k, x_k)_k |_{t_{n-1}},dt_n \times dx_n), \label{eq: P density n new}
    \end{split}
    \end{align} 
where $Z^{(0)}$ is a probability measure on $[0,\infty)\times \I_p$,
and for $i \geq 1$, $Z^{(i)}$ is a kernel from $\mathcal{N}_{t_i}^p$ (the space of trajectories restricted to $[0, t_i]$) to $(t_i, \infty) \times \I_p$, 
where $\I_p = \{1, \dots, p \}$ is the mark space of a $p$-dimensional multivariate counting process. 
The relationship between the terms in \eqref{eq: P density n new} and the compensator is in the canonical setting  
\begin{equation}
\resizebox{\textwidth}{!}{$\displaystyle
\begin{aligned}
Z^{(i)}((t_k, x_k)_k |_{t_i},dt \times dx) &= \prodi_{t_{i} < u < t} \Big( 1 - \beta((t_k, x_k)_k |_{t_i}, du \times \I_p ) \Big) \beta((t_k, x_k)_k |_{t_i}, dt \times dx), \\
Z^{(0)}(dt \times dx) &= \prodi_{0 < u < t} \Big( 1 - \beta((t_k, x_k)_k|_0, du \times \I_p) \Big) \beta((t_k, x_k)_k|_0, dt \times dx),
\end{aligned}
$}
\label{eq: P regular conditional distribution new}
\end{equation}
where $\prodi$ denotes the product-integral. \footnote{For details on product integration see e.g. \cite{Andersen}, or \cite{gill1990survey} for an extensive survey.} Moreover, for every integrable and $\bigvee_{t \geq 0} \H_t^p$-measurable random variable $H$, we have 
\begin{align}
        E_P[H] = \int_{(\Rplus \times \I_p)^{\mathbb N}} H\big( (t_k, x_k)_k \big) dF_\infty(t_1, x_1, \dots), \label{eq: P g-formula new}
\end{align}
    where $\mathbb N = \{1, 2, 3, \dots \}$. Here, $F_\infty$ is the unique probability distribution on $(\Rplus \times \I_p)^{\mathbb N}$ which satisfies $F_n = F_\infty \circ \pi_n^{-1}$ for each $n \in \mathbb N$, where $\pi_n$ is the projection of the first $n$ components of $(\Rplus \times \I_p)^{\mathbb N}$, and $\Rplus$ is the non-negative real numbers.
\end{lemma}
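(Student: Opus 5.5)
The plan is to prove the lemma by the classical Jacod construction, i.e.\ by identifying the conditional law of each successive jump given the pre-jump history with the kernel built from the compensator. The starting point is that in the canonical setting the $\sigma$-algebra $\Hp{T_n}$ is generated by $(T_1,X_1,\dots,T_n,X_n)$; this is the standard structure of stopped $\sigma$-algebras for point processes \cite[Chapter 2]{LastBrandt1995marked}. Hence the restriction of $P$ to $\Hp{T_n}$ is the joint law of these $n$ pairs. By disintegration (the mark space is finite and $\Rplus$ is Polish, so regular conditional distributions exist), this joint law factorizes as
\[
P\big((T_1,X_1)\in\cdot\big)\,P\big((T_2,X_2)\in\cdot\mid \Hp{T_1}\big)\cdots P\big((T_n,X_n)\in\cdot\mid \Hp{T_{n-1}}\big).
\]
Each factor is a kernel from the history restricted to $[0,T_i]$ to $(T_i,\infty]\times\I_p$, with possible mass at $\infty$, which is irrelevant under nonexplosion when $n$ is fixed and we integrate over finite times. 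The ordering indicator $I(t_1<\dots<t_n)$ is automatic because $T_k<T_{k+1}$ on $\{T_k<\infty\}$.

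The core step is to show that $Z^{(i)}$, defined as this conditional law, equals the product-integral expression in \eqref{eq: P regular conditional distribution new}. For this I will use the structure of compensators of MPPs (Jacod's theorem, \cite[Theorem 4.1.11 / Section 4.3]{LastBrandt1995marked}). On $(T_i,T_{i+1}]$ the predictable compensator is a deterministic function of the stopped history, namely
\[
\beta(du\times dx)=\frac{G^{(i)}(du\times dx)}{G^{(i)}([u,\infty]\times\I_p)},
\]
where $G^{(i)}$ is the conditional law of $(T_{i+1},X_{i+1})$ given $\Hp{T_i}$. Summing over marks gives the hazard measure $\beta(du\times\I_p)$ of $T_{i+1}$. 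The survival function $S(t)=G^{(i)}((t,\infty]\times\I_p)$ then satisfies the Volterra equation $S(t)=1-\int_{(t_i,t]}S(u-)\,\beta(du\times\I_p)$. Its unique solution is the product integral $\prodi_{t_i<u\le t}(1-\beta(du\times\I_p))$; this is the Doléans-Dade exponential, unique as in \cite[II Theorem 37]{protter}, and standard product-integral theory \cite{gill1990survey} applies. Consequently
\[
G^{(i)}(dt\times dx)=S(t-)\,\beta(dt\times dx)=\prodi_{t_i<u<t}\big(1-\beta(du\times\I_p)\big)\,\beta(dt\times dx),
\]
which is exactly \eqref{eq: P regular conditional distribution new}. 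The case $i=0$ is identical with the trivial initial history. Uniqueness of compensators up to indistinguishability guarantees that the result does not depend on the version chosen.

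For \eqref{eq: P g-formula new}, the $F_n$ form a consistent (projective) family: marginalizing $F_{n+1}$ over $(t_{n+1},x_{n+1})$ gives $F_n$, since each kernel is a probability measure on $(t_n,\infty]\times\I_p$. By the Ionescu-Tulcea theorem there is a unique $F_\infty$ on the sequence space with $F_n=F_\infty\circ\pi_n^{-1}$. Nonexplosion ensures that $\bigvee_t\H^p_t=\bigvee_n\Hp{T_n}$ up to null sets and that $F_\infty$ charges only sequences with $t_k\to\infty$. The identity \eqref{eq: P g-formula new} therefore holds for indicators of sets in $\bigcup_n\Hp{T_n}$, a $\pi$-system generating the full $\sigma$-algebra. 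A monotone class argument extends it to bounded measurable $H$, and monotone convergence on the positive and negative parts extends it to integrable $H$.

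I expect the main obstacle to be the compensator-to-conditional-law identification when $\beta$ has atoms: the hazard can have jumps $\Delta\beta\le1$, so a plain exponential does not work and the product integral is required. Handling this means checking that $\Delta\beta(\{u\}\times\I_p)=1$ forces $S(u)=0$ and that the Volterra equation still has a unique solution. A secondary technical point is the measurability of the kernels in the stopped history, which I would handle via the representation $\beta(\varphi,\cdot)=\beta(\varphi|_{t_i},\cdot)$ on $(t_i,t_{i+1}]$ in the canonical setting.
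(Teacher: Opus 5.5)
Your proof is correct. Its skeleton matches the paper's: successive conditional kernels are written as product integrals of the compensator, then extended to the whole sequence. Both halves, however, are carried out differently.

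\emph{Finite-dimensional formula.} The paper does not invert the compensator itself. It quotes Last--Brandt, Theorem 8.1.2, for the density $I(t_1<\dots<t_n)\,U(0,t_n-,\varphi)\,\alpha^{(n)}(0,d(t_1,x_1,\dots,t_n,x_n))$. It then factors $U(0,t_n-,\varphi)$ into the pieces $U(t_i,t_{i+1}-,\varphi)=\prodi_{t_i<u<t_{i+1}}(1-\alpha(\varphi,du\times\mathcal I_p))$, and uses that in the canonical setting the canonical compensator equals the compensator. Your route through Jacod's hazard formula and the Volterra equation $S(t)=1-\int_{(t_i,t]}S(u-)\,\beta(du\times\mathcal I_p)$ makes that citation explicit. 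It also shows why the product integral, rather than an exponential, is forced when $\beta$ has atoms.

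\emph{Extension step.} The paper applies Kolmogorov's theorem and L\'evy's upward theorem to $H_n=E_P[H\mid\mathcal H^p_{T_n}]$, and identifies the limit through $\{T_n=\infty\}\uparrow\Omega$. That step needs finitely many points. This holds on $[0,T]$, where the lemma is actually used, but fails for, say, a Poisson process on $\mathbb R_+$. Your observation that $\bigcup_n\mathcal H^p_{T_n}$ is a $\pi$-system generating $\mathcal H^p$ avoids this and works on $\mathbb R_+$. In the canonical setting $\bigvee_n\mathcal H^p_{T_n}=\mathcal H^p$ holds exactly, so neither null sets nor nonexplosion are needed here. Since $N=Id$, $F_\infty$ is simply $P$ regarded as a law on the sequence space, so you could even skip Ionescu-Tulcea.

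\emph{Mass at $\infty$.} Keeping the point $\infty$ in the kernels is more accurate than the statement's $(\mathbb R_+\times\mathcal I_p)^{\mathbb N}$. Your justification that this mass is ``irrelevant under nonexplosion'' is wrong, though. Nonexplosion does not prevent $P(T_{i+1}=\infty\mid\mathcal H^p_{T_i})>0$, for example for a process with a single jump. What handles it is working on the extended space $(t_i,\infty]$ with the irrelevant mark, as you in fact do in the consistency step.
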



\subsection{The compensator under the $g$-formula distribution $Q$}
\label{subsection: the compensator under the g-formula distribution}

To apply Lemma \ref{lemma: P fidi dist} to the $g$-formula distribution $dQ = W_T dP$, we need the $(Q,\F_\T)$-compensator of $N$. This compensator can be obtained via Girsanov's theorem, a well-known result from stochastic analysis. 

In general, the Girsanov transform determined by Definition \ref{defn: g formula distribution} shows a complicated relationship between the intervention and the derived distribution (see Lemma \ref{lemma: Q compensator general} in Appendix \ref{appendix: proofs}). However, under our regularity condition \eqref{eq: compensator jump regularity} on the intervention, and under the sufficient existence condition \eqref{eq: mcp can comp orthogonality} in Proposition \ref{proposition: mcp construction of observed and potential outcomes}, we obtain the much simpler form shown in Lemma \ref{lemma: Q compensator}.
\begin{lemma}[The $(Q, \F_\T)$-compensator]\label{lemma: Q compensator}
        Assume Definition \ref{def:strong prime} \ref{item: likelihood ratio regularity} holds together with conditions \eqref{eq: compensator jump regularity} and \eqref{eq: mcp can comp orthogonality}. Then,
    \begin{enumerate}[label=\textnormal{(\alph*)}]
        \item \label{eq: N a Q compensator} $\mathfrak {n}^a(N)$ is a $(Q, \F_\T)$-compensator of $N^a$,
        \item \label{eq: N j Q compensator} $\Lambda^j$ is a $(Q, \F_\T)$-compensator of $N^j$ for $j \neq a$.
    \end{enumerate}
\end{lemma}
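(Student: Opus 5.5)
The approach is Girsanov's theorem for point processes, applied to the density process $W=\E(\KK)$ with respect to $P$ and $\F_\T$. By likelihood ratio regularity (Definition \ref{def:strong prime} \ref{item: likelihood ratio regularity}) and positivity, $W$ is a uniformly integrable $(P,\F_\T)$-martingale with $W_T$ the density of $Q$. The Girsanov formula for a counting process $N^j$ with $(P,\F_\T)$-compensator $\Lambda^j$ then says that the $(Q,\F_\T)$-compensator of $N^j$ is
\begin{align*}
\Lambda^{j,Q} = \int_0^\cdot \Big(1 + \frac{\Delta W_s}{W_{s-}}\Big)\Big|_{\Delta N^j_s=1}\, d\Lambda^j_s .
\end{align*}
Equivalently, in predictable-projection form, $\Lambda^{j,Q} = \Lambda^j + \int W_{s-}^{-1} d\langle W, M^j\rangle_s$ on $\{W_->0\}$. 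On $\{W_-=0\}$ the compensator is irrelevant under $Q$, since $W$ stays zero after hitting zero. This is presumably the content of Lemma \ref{lemma: Q compensator general}, and I would start from it.

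The next step is to compute $\Delta W/W_- = \Delta\KK = -\Delta\mbbMa{}/(1-\Delta\mbbLa{})$ at the jumps of each $N^j$. Before $\tau^a$ we have $N^a=\na{}(N)$, so the jumps of $\mbbNa{}$ occur exactly when $N^a$ first departs from the regime. This happens in one of two ways: $N^a$ jumps when $\na{}(N)$ does not, or $\na{}(N)$ jumps (predictably) when $N^a$ does not. I would write $\mbbNa{} = \int I(\tau^a\ge s)\{ d N^a_s(1-\Delta\na{s}) + (1-\Delta N^a_s) d\na{s}\}$. Because $\na{}(N)$ is predictable and a counting process, this gives the compensator
\begin{align*}
\mbbLa{} = \int I(\tau^a\ge s)\{(1-\Delta\na{s}) d\Lambda^a_s + (1-\Delta\Lambda^a_s) d\na{s}\}.
\end{align*}
For $j\neq a$, a jump of $N^j$ at $s\le\tau^a$ is not a jump of $\mbbNa{}$. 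The reasons are that MPP jumps do not coincide, that \eqref{eq: compensator jump regularity} forbids $\Delta\na{}$ and $\Delta\alpha^j$ from both being positive, and that \eqref{eq: mcp can comp orthogonality} forbids simultaneous atoms of $\Lambda^a$ and $\Lambda^j$. Hence $\Delta\KK = \Delta\mbbLa{}/(1-\Delta\mbbLa{})$ there. The Girsanov factor must then be computed as a predictable projection: $E[\Delta\mbbNa{}\,|\,\text{jump of }N^j]$ vanishes while $\Delta\mbbLa{}=0$ on atoms of $\Lambda^j$. I would conclude $d\langle \mbbMa{},M^j\rangle = -\Delta\mbbLa{}\,d\Lambda^j = 0$, which gives (b).

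For (a), I would split $N^a$ into jumps at regime-prescribed times ($\Delta\na{}=1$) and unprescribed times. At an unprescribed jump before $\tau^a$, $\mbbNa{}$ jumps, so $W$ jumps to $0$ and the factor is $0$. Thus under $Q$ this part of $\Lambda^a$ is killed. At prescribed times $s\le \tau^a$, adherence means $N^a$ jumps. The $Q$-intensity there is $P(\Delta N^a_s=1\,|\,\F_{s-})\cdot(1+\Delta\KK_s)$ with $\Delta\KK_s = \Delta\mbbLa{s}/(1-\Delta\mbbLa{s})$ and $\Delta\mbbLa{s} = 1-\Delta\Lambda^a_s$. This simplifies to $\Delta\Lambda^a_s\cdot 1/\Delta\Lambda^a_s = 1$, where positivity ensures $\Delta\Lambda^a_s>0$. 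Hence the $Q$-compensator is $\int I(W_{s-}>0)\,d\na{s}(N)$. Finally, $Q(\tau^a\le T)=E_P[W_T I(\tau^a\le T)]=0$, since $W$ vanishes from $\tau^a$ on. So $Q$-a.s. $W_->0$ on $\T$, which gives (a), consistent with $N^a=\na{}(N)$ $Q$-a.s.

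The main obstacle is the careful jump bookkeeping at the predictable atoms. This means handling the cases where $\na{}$ and $\Lambda^a$ have atoms, and making sure the orthogonality condition \eqref{eq: mcp can comp orthogonality} and \eqref{eq: compensator jump regularity} are exactly what rule out cross-terms in $\langle \mbbMa{}, M^j\rangle$. I would first verify the formulas in the purely continuous case and then add the atoms.
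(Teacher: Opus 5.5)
Your part (b) follows the paper. The paper applies Girsanov to $W=\E(\KK)$ in Lemma \ref{lemma: Q compensator general}, which gives the $(Q,\F_\T)$-compensator $\int_0^\cdot (d\Lambda^j_s - d\mathbb{\Lambda}^{j,a}_s)/(1-\Delta\mathbb{\Lambda}^a_s)$ of $N^j$. It then kills $\mathbb{N}^{j,a}$ using \eqref{eq: compensator jump regularity} and the no-common-jumps property. Finally it gets $\Delta\mathbb{\Lambda}^a\Delta\Lambda^j=0$ from the representation of $\mathbb{\Lambda}^a$ together with \eqref{eq: compensator jump regularity} and \eqref{eq: mcp can comp orthogonality}. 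This is exactly your vanishing of $\langle \mathbb{M}^a, M^j\rangle$.

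For (a) the routes differ, and the paper does not use Girsanov at all. By optional sampling, for every optional $\tau$,
\[
E_Q[N^a_\tau]=E_P[W_\tau N^a_\tau]=E_P[W_\tau\,\mathfrak n^a_\tau(N)]=E_Q[\mathfrak n^a_\tau(N)].
\]
The middle equality holds because $I(W>0)=I(\cdot<\tau^a)$, so $W_\tau>0$ forces $N^a_\tau=\mathfrak n^a_\tau(N)$. Predictability of $\mathfrak n^a(N)$ then gives (a).

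Your jump-by-jump Girsanov computation for (a) is correct: off-regime jumps get factor $0$, and prescribed atoms get factor $1/\Delta\Lambda^a_s$, with positivity guaranteeing $\Delta\Lambda^a_s>0$ there. It has the merit of showing concretely how the treatment compensator is reshaped, in the same language as (b).

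It is, however, longer and needs the predictable-projection bookkeeping. It is also redundant given your own closing remark: $Q(\tau^a\le T)=0$ already yields $N^a=\mathfrak n^a(N)$ on $\T$ $Q$-a.s., and a predictable counting process is its own compensator. The paper's direct route also makes clear that (a) needs neither \eqref{eq: compensator jump regularity} nor \eqref{eq: mcp can comp orthogonality}; only (b) uses them.
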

\begin{definition}[Invariance property (IP)]\label{def: invariance property}
Suppose that for a probability measure $Q^*$ on $\F_T$ with $Q^* \ll P$, there exists a non‑decreasing, predictable process $\Lambda^{a,*}$ such that
\begin{align*}
    &\text{(i) } \Lambda^{a,*} \text{ is a } (Q^*,\F_\T)\text{-compensator of } N^a, \\
    &\text{(ii) } \Lambda^j \text{ is a } (Q^*,\F_\T)\text{-compensator of } N^j \text{ for each } j \neq a.
\end{align*}
We then say that $Q^*$ satisfies the \emph{invariance property (IP)} with respect to $\Lambda^{a,*}$.
\end{definition}
Recall that $\Lambda^j$ is the $(P,\F_\T)$-compensator of $N^j$ (see Section \ref{section: set-up and notation}). Lemma \ref{lemma: Q compensator} shows that, under our regularity conditions, $Q$ satisfies the invariance property (IP) with respect to $\mathfrak n^a(N)$; that is, the treatment compensator is modified according to the intervention while the compensators of the non‑treatment processes remain unchanged relative to $P$. The invariance property in Definition \ref{def: invariance property} appears in many existing works on causal inference with counting process data (see, e.g., \cite{roysland2011}\footnote{In fact, the distribution specified by Lemma \ref{lemma: Q compensator} matches 'counterfactual distributions' as derived in \cite[Section 3]{roysland2011}.} for an early example, and many other works discussed in Section \ref{section: previous work on continuous-time causal inference}). In particular, related works term this invariance property "causal validity" \cite{ryalen2018pcancer,ryalen2019additive,roysland2022graphical}. In our work, (IP) is a consequence of 
our definitions and regularity conditions, not an assumption.


Thus, Lemma \ref{lemma: Q compensator} shows that $g$-formula distributions satisfy (IP), but distributions satisfying (IP) are not necessarily $g$-formula distributions: many distributions satisfying (IP) do not correspond to explicit interventions as in Definition \ref{definition: intervention}. We elaborate on this distinction in Remark \ref{remark: identifying vs invariance distributions}.

\begin{remark}[Distinguishing $g$-formula distributions from general distributions satisfying (IP)]
\label{remark: identifying vs invariance distributions}
The family of distributions $Q^*$ satisfying (IP) with respect to some $\Lambda^{a,*}$ is large: it is indexed by a class of non‑decreasing and predictable processes $\Lambda^{a,*}$ that need only satisfy mild regularity conditions \cite[Corollary 10.1.6 and Theorem 10.2.1]{LastBrandt1995marked}.

In Appendix \ref{appendix: the likelihood ratio process which gives invariance}, we characterize this class through their likelihood‑ratio processes $W_t^* = \frac{dQ^*}{dP}|_{\F_t}$. These $W^*$ processes are typically more complex than our likelihood‑ratio process $W$ (compare \eqref{eq: (IP) lik rat proc} in Appendix \ref{appendix: the likelihood ratio process which gives invariance} with \eqref{eq: K semimart}-\eqref{eq: W dolean}). Moreover, they generally fail to satisfy the conditions \ref{enum: W dolean unique}–\ref{enum: W greater than 0} in Proposition \ref{prop: characterizations of W} --- conditions that uniquely characterize our identifying likelihood ratio process.

$g$-formula distributions (Definition \ref{defn: g formula distribution}) are a strict subclass, corresponding to the choice $\Lambda^{a,*} = \mathfrak n^a(N)$ for some intervention $\mathfrak n^a$. We give in Appendix \ref{appendix: non g formula distribution} an example of a distribution $Q^*$ satisfying (IP) that is not a $g$-formula distribution.
\end{remark}


\subsection{The $g$-formula}
\label{subsection: the g formula}
By Lemma \ref{lemma: Q compensator} we can express the $(Q,\F_\T)$-compensating measure of $N$ as
\begin{align}
    \L(dt \times dx ) =  \sum_{j \in \I_d} \delta_j(dx) d\L^j_t = \delta_a(dx)d\mathfrak{n}_t^a(N) + \sum_{j \in \I_d \setminus \{ a \}} \delta_j(dx) d\Lambda^j_t,
    \label{eq: kappa P compensating measure}
\end{align}
where $\delta_j$ is the Dirac measure located at $j$, or equivalently $\L(dt \times \{ j \}) = d\L_t^j$ for each $j \in \I_d$, where we associate a finite variation process with a random measure in the usual way. 
We obtain a (non-extended) $g$-formula, \cite{robins1986parametric_g}, by substituting this $\mathcal{L}$ into the distributional representation of Lemma \ref{lemma: P fidi dist}. 
\begin{theorem}[$g$-formulas]\label{theorem: representation of functionals}
    If the conditions of Lemma \ref{lemma: P fidi dist}-\ref{lemma: Q compensator} and Theorem \ref{theorem: ipw} hold, then
    \begin{align*}
        E_P[\tilde Y_t] = E_P \big[ W_t Y_t \big] = E_Q[ Y_t ],
    \end{align*}
    for each $t \in \T$, where $W$ is given in \eqref{eq: W dolean}. 
    The rightmost term can be expressed via the $g$-formula distribution, and we get in the canonical setting that \footnote{Since $N = Id$ in the canonical setting, we have by the definitions in Section \ref{subsection: outcomes of interest} that $Y(\omega) = \dot Y(N(\omega)) = \dot Y(\omega)$ for each $\omega$ in the canonical space. That is, $\dot Y = Y$, and the notations $\dot Y$ and $Y$ can be used interchangeably.} 
    \begin{align}
        E_{Q}[Y_t] = \int_{(\T \times \I_p)^{\mathbb N}} Y_t\big( (t_k, x_k)_k \big) dF_\infty^Q(t_1, x_1, \dots).
        \label{eq: g-formula new} 
    \end{align}
    Here, $F_\infty^Q$ is the unique distribution that satisfies $F_n^Q = F_\infty^Q \circ \pi_n^{-1}$, where $F_n^Q$ takes the form \eqref{eq: P density n new} but with $\L$ in \eqref{eq: kappa P compensating measure} replacing  $\beta$ in  \eqref{eq: P regular conditional distribution new}.
\end{theorem}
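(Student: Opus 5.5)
The plan is to chain three facts: Theorem \ref{theorem: ipw}, a change of measure, and Lemma \ref{lemma: P fidi dist} applied under $Q$ with the compensator from Lemma \ref{lemma: Q compensator}.

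\textbf{Step 1 (first two equalities).} The first equality $E_P[\tilde Y_t]=E_P[W_tY_t]$ is exactly Theorem \ref{theorem: ipw}, whose hypotheses are assumed. For the second equality, recall that under positivity \eqref{eq: K jump to zero condition} and likelihood ratio regularity \eqref{eq: W likelihood ratio regularity}, $W$ is a uniformly integrable $(P,\F_\T)$-martingale (Lemma \ref{lemma: positivity and likelihood ratio equivalence}). Hence $dQ=W_T\,dP$ is a probability measure with $dQ/dP|_{\F_t}=W_t$. Since $Y_t=\dot Y_t(N)$ is bounded and $\F_t$-measurable, the tower property gives
\[
E_Q[Y_t]=E_P[W_TY_t]=E_P\big[E_P[W_T\mid\F_t]\,Y_t\big]=E_P[W_tY_t].
\]

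\textbf{Step 2 (the $g$-formula representation).} We work in the canonical setting $N=Id$, so $\dot Y=Y$. Lemma \ref{lemma: P fidi dist} holds for an arbitrary probability measure on the canonical space under which $N$ is nonexplosive, with $\beta$ the compensator under that measure. We apply it with $Q$ in place of $P$ and $\beta=\L$. Two points must be checked.

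\emph{Nonexplosiveness under $Q$.} This follows from Assumption \ref{assumption: no explosions} because $Q\ll P$.

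\emph{The compensator under $Q$.} By Lemma \ref{lemma: Q compensator}, whose hypotheses \eqref{eq: compensator jump regularity} and \eqref{eq: mcp can comp orthogonality} are assumed, the $(Q,\F_\T)$-compensating measure of $N$ is $\L$ in \eqref{eq: kappa P compensating measure}. The kernels $Z^{(i)}$ in \eqref{eq: P regular conditional distribution new} require this compensator in canonical form, i.e.\ as a function of the restricted trajectory $(t_k,x_k)_k|_{t_i}$. This is available here: $\mathfrak n^a$ is a predictable functional on the canonical space, and $\Lambda^j=\alpha^j(N)$ with $\alpha$ the canonical compensator. Evaluated at the identity, both are canonical. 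Since compensators are unique up to $Q$-indistinguishability, the resulting $F^Q_n$, and hence $F^Q_\infty$, are uniquely determined.

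Finally, \eqref{eq: P g-formula new} under $Q$, applied to the bounded and therefore integrable $H=Y_t$, gives \eqref{eq: g-formula new}. Because only $[0,T]$ is relevant, the $\Rplus$ in the lemma can be replaced by $\T$: we extend $\L$ by zero after $T$, or equivalently put all remaining mass at $\infty$ with the irrelevant mark $\nabla$.

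\textbf{Main obstacle.} I expect the delicate step to be the transfer of Lemma \ref{lemma: P fidi dist} to $Q$. One must ensure that the compensator identities of Lemma \ref{lemma: Q compensator}, which hold $Q$-almost surely, can be replaced by a canonical version defined for every trajectory. Only then are the kernels $Z^{(i)}$ well defined on all of $\N^p_{t_i}$. Condition \eqref{eq: compensator jump regularity} is also needed here: it guarantees $\Delta\L(\cdot\times\I_p)\le 1$, so the product-integrals $\prodi(1-\L(du\times\I_p))$ are nonnegative survival functions and the $Z^{(i)}$ are genuine sub-probability kernels. Modifying $\L$ on $Q$-null sets, as permitted by \cite[Theorem 4.2.2]{LastBrandt1995marked}, does not change $F^Q_\infty$.
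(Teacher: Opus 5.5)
Your proposal is correct and follows the same route as the paper. $Q$ is a probability measure because $W$ is a uniformly integrable mean-one martingale (Lemma \ref{lemma: positivity and likelihood ratio equivalence}), the first equality is Theorem \ref{theorem: ipw}, and \eqref{eq: g-formula new} comes from applying Lemma \ref{lemma: P fidi dist} under $Q$ with the compensator $\L$ from Lemma \ref{lemma: Q compensator}. The paper leaves implicit several checks that you make explicit: the tower-property step $E_Q[Y_t]=E_P[W_tY_t]$, nonexplosiveness under $Q$ via $Q\ll P$, and $\Delta\L(\cdot\times\I_p)\le 1$ via \eqref{eq: compensator jump regularity}.
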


\begin{example}[continues=exa:determ]
 We present the $g$-formula for the atomic intervention. 
 The $(Q, \F_\T)$-compensator equals
 \begin{align*}
        \L(dt \times dx) &= \delta_a(dx)d\hat \varphi_t + \sum_{j \in \I_d \setminus \{ a \}} \delta_j(dx) d\Lambda^j_t,
    \end{align*} 
 and we can define the finite-dimensional distribution on the canonical space as: 
    \begin{align*}
    \resizebox{\textwidth}{!}{$dF_n^Q(t_1, x_1, \dots, t_n, x_n) =  \prod\limits_{i=1}^n \Big\{ I(t_{i-1}< t_i) 
    \prodi\limits_{t_{i-1}<u < t_i} \big(1 - d\bar \L_{u} \big)\sum_{j \in \I_d} \delta_{j}(dx_i) d\L_{t_i}^{j}  \Big\}$}
    \end{align*}
    where we have defined $t_0 := 0$, and $d\bar \L_t = \L(dt \times \I_d) = \sum_{j \in \I_d} d\L^j_t$. Each $\L^j$ depends predictably on the history, i.e.  $\L_t^j\big( (t_k, x_k)_k \big) = \L_t^j\big( (t_k, x_k)_k|_{t_i} \big)$ on $\{ t_i < t \leq t_{i+1} \}$, but this dependence is not explicitly shown for ease of notation. This leads to the $g$-formula
    \begin{align}
        \begin{split}
            E_P[\tilde Y_t] &= \int_{(\T \times \I_d)^{\mathbb N}}  Y_t\big( (t_k, x_k)_k \big)  dF^Q_\infty(t_1, x_1, \dots )  
        \end{split}
         \label{eq: g-form ex atomic}
    \end{align}
    as in Theorem \ref{theorem: representation of functionals}. We have not seen the formula \eqref{eq: g-form ex atomic} been studied in its general form in the literature. If the outcome functional of interest is survival $\dot Y_t(\varphi) = I(\varphi^d_t = 0)$, then the integrand on the right-hand side of \eqref{eq: g-form ex atomic} is $I\big( \sum_{i \geq 1}I(t_i \leq t, x_i = d) = 0 \big)$. If the $(P,\F_\T)$-compensator of $N^j$ for $j \neq a$
    is absolutely continuous with respect to the Lebesgue measure, i.e. $\Lambda^j = \int_0^\cdot \lambda_s^j ds$, we get the following somewhat simpler finite-dimensional distribution under $Q$ for this intervention:
    \begin{align*}
        dF_n^Q(t_1, x_1, \dots, t_n, x_n) &= e^{- \sum_{j \neq a } \int_0^{t_n}  \lambda_s^j ds} \\
        &\cdot \prod\limits_{i=1}^n  I(t_{i-1}< t_i) \Big( \sum_{j \neq a} \delta_{j}(dx_i) \lambda_{t_i}^{j} dt_i + \delta_{a}(dx_i) d\hat \varphi_{t_i} \Big). 
    \end{align*}
    If additionally $\hat \varphi = 0$ (prevention of treatment), the distribution coincides with the (finite-dimensional) distribution RRND associate with the same intervention. 
\end{example}

\subsection{A characterization of the identifying likelihood ratio process}
\label{subsection: characterizing identifying likelihood ratio}

We give in Proposition \ref{prop: characterizations of W} different characterizations of the process $W$ in \eqref{eq: W dolean}. 
In particular, we show a specific unique connection between $W$ and the consistency and  exchangeability conditions \eqref{eq: consistency}-\eqref{eq: exchan disc}, which we elaborate on immediately following the proposition. 
\begin{proposition} \label{prop: characterizations of W}
    $W$ in \eqref{eq: W dolean} is the unique (up to indistinguishability) process $W^h$ which satisfies conditions \ref{enum: W dolean unique}-\ref{enum: W greater than 0}, where
    \begin{enumerate}[label=(\Alph*)]
        \item \label{enum: W dolean unique} $W^h = \E(\mathbb K^h)$, where $\mathbb K^h_t = \int_0^t h_s d \mbbMa{s}$,
        and $h$ is an $\F_\T$-predictable process such that $\mathbb K^h$ is a local $(P,\F_\T)$-martingale of finite variation and $W^h \geq 0$, and 
        \item \label{enum: W greater than 0} $I(W_\cdot^h > 0) = I(\cdot < \Nastop)$ $P$-a.s. 
\end{enumerate}     
It can furthermore be represented as 
    \begin{align}
        W_t &= \frac{I(\Nastop > t)}{\prodi\limits_{0 < s \leq t} (1 - d\mbbLa{s})} = \frac{\E(-\mbbNa{})_t}{\E(-\mbbLa{})_t} , \label{eq: W unstabilized weight} \\
        W_t &= \frac{I(\Nastop > t)}{e^{- \mathbb \Lambda^{a, c}_t} \prod_{\boldsymbol \sigma_k \leq t}P(\Nastop > \boldsymbol \sigma_k | \F_{\boldsymbol \sigma_k-}) }, \label{eq: W discrete and cont expression} 
    \end{align}  
    where $\{ \boldsymbol \sigma_k \}_k$ are the ordered jump times of $\mbbLa{}$ and $\mathbb \Lambda^{a, c} := \int_0^\cdot I(\Delta \mbbLa{s}=0)d\mbbLa{s}$ is the continuous part of $\mbbLa{}$. 
    
    With the set-up and notation in Section \ref{section: interventions and potential outcomes}, $W$ has the alternative representation 
    \begin{align} 
        W_t &= \frac{I(\Nastop > t)}{\prodi_{0 < s \leq t} (d\Lambda_s^a)^{\Delta N_s^a}(1 - d\Lambda_s^a )^{1 - \Delta N_s^a}}.  \label{eq: Gill likelihood}
    \end{align}
    In this case, $\mathbb K_t^a = -\int_0^{t\wedge \Nastop} \frac{1}{1 - \Delta \na{s}(N) - \Delta \Lambda_s^a} dM_s^a,$
    and $\mbbLa{}$ coincides with $~^{\Nastop}\Lambda^a + ~^{\Nastop}\na{}(N) - 2 \int_0^{\cdot \wedge \Nastop}\Delta \na{s}(N) d\Lambda_s^a$, where we use the notation $~^{\tau}Z_t := Z_{\tau \wedge t}$ for a process $Z$ and random time $\tau$. 
\end{proposition}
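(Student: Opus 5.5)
We will work directly with the stochastic exponential. Since $\mathbb N^a$ is a one-jump counting process and $\mathbb K^a$ is of finite variation, the Doléans-Dade formula gives
\begin{align*}
W_t=\prod_{0<s\le t}\bigl(1+\Delta\mathbb K^a_s\bigr)\,
\exp\Bigl(\mathbb K^{a,c}_t\Bigr).
\end{align*}
Here $\mathbb K^{a,c}$ is the continuous part of $\mathbb K^a$.

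\textbf{Step 1: the representations \eqref{eq: W unstabilized weight}--\eqref{eq: W discrete and cont expression}.} Write $d\mathbb K^a=-(d\mbbNa{}-d\mbbLa{})/(1-\Delta\mbbLa{})$. At a time $s$ with $\Delta\mbbNa{s}=0$, the jump factor is $1+\Delta\mbbLa{s}/(1-\Delta\mbbLa{s})=1/(1-\Delta\mbbLa{s})$. At $s=\Nastop$ we have $\Delta\mbbNa{s}=1$, and the factor is $1-(1-\Delta\mbbLa{s})/(1-\Delta\mbbLa{s})=0$. The continuous part contributes $e^{\Lambda^{a,c}_t}$, because $\mbbNa{}$ has no continuous part.

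Combining these, $W_t=I(\Nastop>t)/\bigl(e^{-\Lambda^{a,c}_t}\prod_{s\le t}(1-\Delta\mbbLa{s})\bigr)$. This product is $\E(-\mbbNa{})_t/\E(-\mbbLa{})_t$, i.e. \eqref{eq: W unstabilized weight}. The positivity condition \eqref{eq: K jump to zero condition} guarantees that the denominator is nonzero on $[0,T]$.

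For \eqref{eq: W discrete and cont expression}, use that $\mbbLa{}$ is predictable with jumps at predictable times $\boldsymbol\sigma_k$. There $\Delta\mbbLa{\boldsymbol\sigma_k}=P(\Nastop=\boldsymbol\sigma_k\mid\F_{\boldsymbol\sigma_k-})$, computed on $\{\Nastop\ge\boldsymbol\sigma_k\}$, which is the set where $\mbbLa{}$ can jump. Hence $1-\Delta\mbbLa{\boldsymbol\sigma_k}=P(\Nastop>\boldsymbol\sigma_k\mid\F_{\boldsymbol\sigma_k-})$ on the relevant set.

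\textbf{Step 2: uniqueness.} First, $W$ itself satisfies (A) and (B). It satisfies (A) with $h=-1/(1-\Delta\mbbLa{})$, which is predictable. It satisfies (B) by Step 1, since the denominator is strictly positive.

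Conversely, let $W^h=\E(\mathbb K^h)$ with $W^h\ge0$ and $I(W^h>0)=I(\cdot<\Nastop)$. Because $\mathbb K^h$ is a finite variation integral against $\mbbMa{}$, which stops at $\Nastop$, $W^h$ jumps at $\Nastop$ by the factor $1+h_{\Nastop}(1-\Delta\mbbLa{\Nastop})$. Condition (B) forces this factor to be $0$. So $h_{\Nastop}=-1/(1-\Delta\mbbLa{\Nastop})$ a.s. on $\{\Nastop\le T\}$.

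The predictable processes $h$ and $h^*=-1/(1-\Delta\mbbLa{})$ therefore agree at the jump time $\Nastop$. By the standard compensator argument, $E\int|h-h^*|\,d\mbbNa{}=E\int|h-h^*|\,d\mbbLa{}$ (localizing as needed), so $h=h^*$ $d\mbbLa{}$-a.e. Hence $\mathbb K^h=\mathbb K^a$ up to indistinguishability, and so $W^h=W$.

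We also need $W^h>0$ before $\Nastop$. This follows from (B), and it rules out $h$ producing a zero at a jump of $\mbbLa{}$ strictly before $\Nastop$. This is consistent, because such jumps occur with $\Delta\mbbNa{}=0$. I expect this identification of $h$ on the support of $d\mbbLa{}$ to be the main technical obstacle. In particular, one must handle it up to localization and on $[0,\Nastop]$ only, since beyond $\Nastop$ the measure $d\mbbLa{}$ vanishes.

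\textbf{Step 3: the representation in terms of $N^a$ and \eqref{eq: Gill likelihood}.} Since $\na{}(N)$ is predictable, $\Nastop$ is the first time $s$ at which either of two things happens. Either $\Delta N^a_s=1$ while $\Delta\na{s}(N)=0$, or $\Delta\na{s}(N)=1$ while $\Delta N^a_s=0$. Hence
\begin{align*}
\mbbNa{}=\int_0^{\cdot\wedge\Nastop}\bigl[(1-\Delta\na{s}(N))\,dN^a_s+\Delta\na{s}(N)\,(1-\Delta N^a_s)\bigr].
\end{align*}
Here the second term lives on the predictable jump set of $\na{}(N)$. Compensating, and using $d\Lambda^a$ for $dN^a$, gives
\begin{align*}
\mbbLa{}={}^{\Nastop}\Lambda^a+{}^{\Nastop}\na{}(N)-2\int_0^{\cdot\wedge\Nastop}\Delta\na{s}(N)\,d\Lambda^a_s.
\end{align*}
The continuous part of $\na{}(N)$ vanishes, since it is a counting process.

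Substituting gives $\mbbMa{}=\pm M^a$ on $[0,\Nastop]$, with sign $+$ off the jumps of $\na{}$ and $-$ on them. It also gives $1-\Delta\mbbLa{}=1-\Delta\na{}(N)-\Delta\Lambda^a$ up to sign on the jump set. This yields the stated formula for $\mathbb K^a$.

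Finally, insert this into \eqref{eq: W unstabilized weight}. On the adherent path, $\Delta N^a=\Delta\na{}(N)$ before $\Nastop$, so $1-d\mbbLa{s}$ equals $d\Lambda^a_s$ when $\Delta N^a_s=1$ and $1-d\Lambda^a_s$ otherwise. Hence the product-integral equals $\prodi(d\Lambda^a_s)^{\Delta N^a_s}(1-d\Lambda^a_s)^{1-\Delta N^a_s}$, which is \eqref{eq: Gill likelihood}.
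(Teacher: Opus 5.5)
Your proposal is correct. Steps 2 and 3 follow the paper closely; Step 1 takes a different and more direct route.

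\textbf{Step 2.} The paper also reads off $h_{\Nastop}=-1/(1-\Delta\mbbLa{\Nastop})$ from the jump of $W^h$ at $\Nastop$. It then finishes by noting that $\int_0^\cdot (h_s+1/(1-\Delta\mbbLa{s}))\,d\mbbLa{s}$ is a predictable finite-variation local martingale, hence zero. Your use of the compensator identity for the nonnegative predictable integrand $|h-h^*|$ is an equivalent finish, and it needs no localization.

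\textbf{Step 3.} Your decomposition of $\mbbNa{}$ is the paper's Lemma \ref{lemma: mathbb representations}. The formula for $\KK$ comes from the same case check on $\Delta\na{s}(N)\in\{0,1\}$.

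\textbf{Step 1, the genuine difference.} The paper never evaluates $\E(\KK)$ directly. It writes $\E(-\mbbLa{})^{-1}=\E(J)$ for an explicit finite-variation process $J$. It then applies Yor's formula $\E(J)\E(-\mbbNa{})=\E(J-\mbbNa{}-[J,\mbbNa{}])$ and simplifies the driver to $\KK$, so the ratio is identified with $W$ by uniqueness of solutions to the SDE. You instead compute $\E(\KK)$ pathwise from the product formula $\E(X)=e^{X^c}\prod(1+\Delta X)$ for finite-variation $X$.
\begin{itemize}
\item Your route is shorter. It also yields $W\ge 0$ and $I(W>0)=I(\cdot<\Nastop)$ at once, which the paper takes separately from Lemma \ref{lemma: W finite variation}.
\item The price is that you need convergence and strict positivity of $\prod_{s\le t}(1-\Delta\mbbLa{s})$. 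Positivity \eqref{eq: K jump to zero condition} supplies both.
\item The paper's route stays at the level of semimartingale identities and never handles infinite products of jump factors.
\end{itemize}

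\textbf{Small slips.}
\begin{itemize}
\item In Step 1 the exponent should be $\mathbb \Lambda^{a,c}_t$, the continuous part of $\mbbLa{}$, not $\Lambda^{a,c}$.
\item The second term in your formula for $\mbbNa{}$ should read $\int_0^{\cdot\wedge\Nastop}(1-\Delta N^a_s)\,d\na{s}(N)$.
\item The paragraph after the compensator argument in Step 2 is superfluous. The identity $E\int|h-h^*|\,d\mbbLa{}=0$ already identifies $h$ on the support of $d\mbbLa{}$.
\item The claim $W^h_{\Nastop-}>0$ needs (B) together with finite variation of $\mathbb K^h$. Finite variation ensures the product of positive jump factors before $\Nastop$ cannot degenerate to zero. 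The paper uses this tacitly too, but it is worth stating.
\end{itemize}
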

Any MPP likelihood ratio is a stochastic exponential of a predictable process integrated against a basic martingale \cite{jacod1975}. Of all exponentials driven by the 'exchangeability martingale' $\mbbMa{}$, $W$ is the only such process that is nonzero exactly as dictated by the consistency condition \eqref{eq: consistency}. Since $W$ is the unique process satisfying \ref{enum: W dolean unique}-\ref{enum: W greater than 0} above, it is also the unique such process that is non-decreasing up to right before $\Nastop$, i.e. progressively upweighting subjects as long as they follow the regime and assigning a weight of zero to the subjects who no longer follow the regime.

The evaluability conditions \eqref{eq: K jump to zero condition} and \eqref{eq: W likelihood ratio regularity} can be understood informally through the expressions \eqref{eq: W unstabilized weight}-\eqref{eq: W discrete and cont expression} as follows. For the subjects who have followed the regime up to any point in time, the conditional probability of following the regime in the next infinitesimal time increment given the strict observed history should not be too small, so that the cumulative product of these probabilities (over infinitesimal time increments) is not too small. \footnote{Note that we in \eqref{eq: W discrete and cont expression} can replace $P(\Nastop > \boldsymbol \sigma_k | \F_{\boldsymbol \sigma_k-})$ with $P(\Nastop > \boldsymbol \sigma_k | \F_{\boldsymbol \sigma_k-}, \Nastop \geq \boldsymbol \sigma_k)$, where the latter can be understood as $P_{ \{ \Nastop \geq \boldsymbol \sigma_k \} }(\Nastop > \boldsymbol \sigma_k | \F_{\boldsymbol \sigma_k-})$, the conditional expectation with respect to the event probability measure $P_{ \{ \Nastop \geq \boldsymbol \sigma_k \} } (\cdot) := P(\cdot \cap \{ \Nastop \geq \boldsymbol \sigma_k \})/P(\Nastop \geq \boldsymbol \sigma_k)$.} In particular, the product in \eqref{eq: W unstabilized weight} factorizes over the increments where there is a strictly positive probability of deviating from the regime, which admits an explicit representation as a product of conditional probabilities, and the increments where there is only an infinitesimal probability of deviating, which is a product-integral over the continuous part of $\mbbLa{}$ and reduces to the exponential of minus the continuous part, which leads to \eqref{eq: W discrete and cont expression}.

\subsection{Gill and Robins, and the validity of $g$-formulas}
\label{subsection: gill and the continuous time g formula}
    GR04 conjectured that a $g$-formula similar to ours (equation \eqref{eq: g-formula new}) could be derived under versions of the consistency, exchangeability, and positivity assumptions \cite[p. 4]{gill2004continuous}. \footnote{In particular, their 'Treatment plans,' \cite[Section 3]{gill2004continuous}, which "prescribes subsequent action timepoints, ... so long as no further longitudinal data timepoint intervenes" can be precisely formulated as predictable interventions as we define them.} They also mention three possible complications associated with formally establishing the validity of the formula, which they refer to as "correctness," "uniqueness," and the "no-explosions condition." \footnote{GR01 tackle comparable uniqueness and correctness issues when extending Robins's theory to settings where the random variables involved are continuous, as opposed to discrete. They address the correctness issue by imposing continuity restrictions on the conditional laws involved in the formula, thus taking a different approach than we follow here. Interestingly, we have not seen the issues raised in \cite{gill2001complex,gill2004continuous,yu2002construction} mentioned in the discrete-time literature beyond these works.} The developments in this paper provide a way of addressing and resolving their concerns, which we elaborate on here. We can resolve these issues because we have identified the compensating measure in \eqref{eq: kappa P compensating measure}, given a rigorous formulation of the integral involved (Lemma \ref{lemma: P fidi dist}), and because we characterize the potential outcomes law (Definition \ref{definition: potential outcome process}). 
     \begin{itemize}
         \item \underline{Correctness:} GR04 raise the question of whether and when the right-hand side of \eqref{eq: g-formula new} does, in fact, determine the law of the potential outcome variable of interest.  In theories involving a finite number of random variables, if treatment variables are continuous, it is known that one can construct potential outcomes satisfying standard consistency and sequential exchangeability assumptions having a distribution that is not given by the $g$-formula; see \cite{gill2001complex,yu2002construction} for examples and discussions. 
         
         In our work, as long as we characterize $\cfproc{}$ as in Definition \ref{definition: potential outcome process} and $Y$ and $ \tilde Y$ as in Section \ref{subsection: outcomes of interest}, Theorem \ref{theorem: representation of functionals} says that the $P$-expectation $\tilde Y_t$ is given by the right-hand side of \eqref{eq: g-formula new} under the conditions of Definition \ref{def:strong prime}. That is, we have made the law of $\cfproc{}$ and thus $\tilde Y$ explicit, and our formulas are true when the conditions of our theorems are true. Since the $g$-formula is equivalent to the IPW formula in Theorem \ref{theorem: ipw}, the IPW formulas presented here are correct in the same sense.  
         \item \underline{Uniqueness:} GR04's second concern is the formula's possible dependence on the selected versions of the conditional probability laws involved. Allowing versions to be chosen haphazardly could conceivably lead to an ill-defined expression.
        
        In this case, uniqueness results for the distribution of MPPs offer a solution. For instance, \cite[Theorem 8.2.3]{LastBrandt1995marked} says that the MPP distribution is uniquely determined by the canonical compensator. 
        \footnote{GR04 anticipate that the conditional distributions might be chosen in such a canonical manner.} Moreover, owing to the uniqueness of compensators, any other choice of compensator will give the same answer. This is because any two compensators agree except on a null set, i.e. a set which does not contribute to the numerical value of the integral. 
        \item \underline{No explosions:} Finally, GR04 recognize the possibility that explosions can be introduced when a given regime is imposed, and the question is whether the $g$-formula has total probability 1. That our $g$-formulas have total probability 1 is immediate under our assumptions \eqref{eq: K jump to zero condition} and \eqref{eq: W likelihood ratio regularity}, when $W$ is a likelihood ratio process. 
     \end{itemize}


\section{Relation to discrete-time theories}
\label{section: relation to discrete-time theories}

We can embed commonly studied discrete-time data structures in the point process data structure by adding specific restrictions on the compensator processes. This allows us to establish connections between our definitions and definitions that are used in existing causal inference theories, and draw parallels between the identification results contained in our paper and well-known existing results from the discrete-time literature \cite{robins1986parametric_g,robins1997complex,richardson_single_2013}. The result is stated as a proposition, followed by some comments. More detailed comparisons with discrete-time approaches will be made in future work.
\begin{proposition}\label{proposition: relation to existing assumptions}
     Suppose that the indices of $N$ are ordered such that $N=(N^1, \allowbreak \dots, \allowbreak N^{d-1}, \allowbreak N^a)$, and
     consider real numbers $\{ \theta_{k}\}_k$ and $\{ \ell_{k}^j\}_{k,j}$ for $k = 1,\dots,K+1$ and $ j=1,\dots, d-1$ such that
    \begin{align}
        0 < \ell_1^1 < \dots < \ell_1^{d-1} < \theta_{1} < \ell_2^1 < \dots < \ell_2^{d-1} < \theta_{2} < \dots < \theta_{K+1} < T. \label{eq: ffr times}
    \end{align}
Suppose that the $N^a$-compensator is constant  except at the times $\{ \theta_{k} \}_{k}$ where it may jump, and that each $N^j$-compensator for $j \in \Idminaset$ is constant except at the times $\{ \ell_k^j \}_{k}$ where it may jump. Consider a static regime $\na{}$ as in Example \ref{example: atomic} such that the possible action times, i.e. jump times of $\na{}$, are contained in $\{ \theta_{k} \}_k$. Put $A_k = \Delta N_{\theta_{k}}^a$, $a_k= \Delta \na{\theta_k}$, $L_k^j = \Delta N_{\ell_k^j}^j,$ $\tilde L_k^j = \Delta \cfproc{j}_{\ell_k^j},$ and $\tilde A_k =  \Delta \cfproc{a}_{\theta_k}$ for $j = 1, \dots, d-1$ and $k=1, \dots, K+1$, where $\cfproc{}$ is a potential outcome process associated with the intervention $\na{}$ as in Definition \ref{definition: potential outcome process}. Suppose that $P(\bar A_{K+1} = \bar a_{K+1}) > 0$, where we write $\bar V_k = (V_1, \dots, V_k)$ for a discrete process $\{ V_k \}_k$. The sets of observed covariates and treatments are then temporally ordered according to \eqref{eq: ffr times}, i.e.
    \begin{align}
       \langle L_1^1,\dots,  L_1^{d-1}, A_1, L_2^1, \dots, L_2^{d-1}, A_2, \dots, A_{K+1} \rangle. \label{eq: RCISTG ordering}
    \end{align}
    Moreover, defining $L_k = \{ L_k^j | j=1,\dots, d-1\}$, and $L_{k,<j}$ as the 'past' of the observed non-treatment variables up to before $L_k^j$;
    \begin{align}
        L_{k,<j} = \big\{ L_i^l | ( 1 \leq i < k \text{ and } 1 \leq l \leq d-1 ) \text{ or } (i=k  \text{ and } 1 \leq l < j ) \big\}, \label{eq: non treatment covariate past}
    \end{align}
    we have that 
    \begin{itemize}
    \item \textbf{Equivalence of exchangeability and consistency conditions.} \label{item: reduction of exchangeability}  
    The exchangeability condition \eqref{eq: exchan disc} with $\tilde Y = (\tilde N^{1}, \dots, \tilde N^{a-1},\tilde N^{a+1},\dots, \tilde N^{d})$ 
    is equivalent to the independencies 
    \begin{align}
        \underline{\tilde L}_{m+1} \indep A_m | \bar L_{m}, \bar A_{m-1} = \bar a_{m-1} \quad \text{ for each  } m = 1,\dots,K, \label{eq: RCISTG indep}
    \end{align}
    where $\tilde L_k = \{ \tilde L_k^j | j=1,\dots, d-1 \}$, and $\underline{V}_k = (V_k, V_{k+1}, \dots)$. That is, setting $\mathbf{ A} = \{ A_k| k = 1, \dots, K+1 \}$, $\mathbf{ L} = \{ L_k| k = 1, \dots, K+1 \}$, and $\mathbf{ V} = \mathbf{ A} \cup \mathbf{L}$, \eqref{eq: exchan disc} is with the given $\tilde Y$ equivalent to the independence conditions in the RCISTG $(\mathbf{A}, \mathbf{ V})$ model for this intervention; see \cite[Definition 62]{richardson_single_2013}. 

    Furthermore, our consistency condition is equivalent to
    \begin{align}
        L_k^j = \tilde L_k^j \quad P\text{-a.s. on } \{ \bar A_{k-1} = \bar a_{k-1} \} \label{eq: consistency discrete}
    \end{align}
    for $j = 1, \dots, d-1$ and $k=1, \dots, K$.
    
    \item \textbf{Equivalence of positivity conditions.} \label{item: reduction of positivity} 
    Our positivity condition \eqref{eq: K jump to zero condition} is equivalent to 
    \begin{align}
        P( A_k = a_k | \bar L_k, \bar A_k) > 0 \quad  P\text{-a.s. on }  \{ \bar A_{k-1} = \bar a_{k-1} \} \text{ for each }k. \label{eq: hr condition}
    \end{align}
    Moreover, under \eqref{eq: K jump to zero condition} and the premises of this proposition, the likelihood ratio regularity condition in Definition \ref{def:strong prime} \ref{item: likelihood ratio regularity} holds. 

    \item \textbf{Reduction of the data-generating law.} \label{item: reduction of data generating}
    Each canonical compensator $\alpha^j$ fixed in Definition \ref{definition: potential outcome process} can be associated with functions $\{ f_k^j\}_k$, where each $f_k^j(L_{k,<j}, \bar A_{k-1})$ is some version of the conditional expectation of $L_k^j$ given $\sigma(L_{k,<j}, \bar A_{k-1})$ under $P$. Specifically, the following holds surely: 
    \begin{align*}
        \Delta \alpha_{\ell_k^j}^j(N) = f_k^j(L_{k,<j}, \bar A_{k-1}) \text{ for each $k$ and $j \neq a$, }
    \end{align*}
    where $\alpha^j$ in $\alpha = (\alpha^1, \dots, \alpha^d)$ is the selected version of the canonical compensator in Definition \ref{definition: potential outcome process}. 
    Moreover, Definition \ref{definition: potential outcome process} is equivalent to: 
    \begin{align}
        \begin{split}
            \tilde A_k &= a_k, \\
            P(\tilde L_k^j = 1 | \tilde L_{k,<j}, \bar {\tilde A}_{k-1} = \bar a_{k-1}) &= f_k^j(\tilde L_{k,<j}, \bar a_{k-1} ),
        \end{split} \label{eq: data-generating static}
    \end{align}
    $P$-a.s. for each $k=1, \dots, K+1$ and $j=1, \dots, d-1$, where $\tilde L_{k,<j}$ is the 'past' of the non-treatment variables of the potential outcome process up to before $\tilde L_k^j$, defined similarly to $L_{k,<j}$ in \eqref{eq: non treatment covariate past}.
\end{itemize}
\end{proposition}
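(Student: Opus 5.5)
The plan is to reduce everything to the discrete skeleton. The key observation is that, under the stated compensator restrictions, every event of $N^j$ ($j\neq a$) occurs only at a time in $\{\ell_k^j\}_k$, and every event of $N^a$ only at a time in $\{\theta_k\}_k$ (a.s.). This holds because a counting process increment on a set where the compensator does not charge has zero expectation, by \eqref{eq: compensator introduction} applied to the stopping times bounding each inter-grid interval. Consequently $N$ is a.s. a measurable bijective image of the binary vector $\mathbf V$ ordered as in \eqref{eq: RCISTG ordering}, and $\F_t$ coincides up to null sets with $\sigma$ of the variables in \eqref{eq: RCISTG ordering} whose time index is $\le t$. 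The same holds for $\tilde N$ relative to $(\tilde{\mathbf L},\tilde{\mathbf A})$, because $\alpha^j(\tilde N)$ and $\na{}(\tilde N)$ charge the same grids.

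\textbf{Data-generating law.} I would do this part first. A predictable compensator that jumps only at the deterministic time $\ell_k^j$ has jump $\Delta\Lambda^j_{\ell_k^j}=E[\Delta N^j_{\ell_k^j}\mid\F_{\ell_k^j-}]$, and $\F_{\ell_k^j-}=\sigma(L_{k,<j},\bar A_{k-1})$. This identifies $\Delta\alpha^j_{\ell_k^j}(\varphi)$, as a function of the restricted trajectory, with a version $f_k^j$, and it holds surely by predictability of $\alpha^j$. The same computation under $(P,\tilde\F_\T)$ turns Definition \ref{definition: potential outcome process} into \eqref{eq: data-generating static}, with \eqref{eq: predictable intervention tilde a compensator} giving $\tilde A_k=a_k$. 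The converse holds because the displayed conditionals determine a compensator that is pure-jump on the grid.

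\textbf{Consistency and positivity.} Here $\Nastop$ takes values in $\{\theta_k\}\cup\{\infty\}$ and equals $\theta_m$ exactly when $\bar A_{m-1}=\bar a_{m-1}$ and $A_m\neq a_m$. So $\{\Nastop>t\}$ for $t\in[\ell_k^j,\theta_k)$ is $\{\bar A_{k-1}=\bar a_{k-1}\}$. Since $Y_t,\tilde Y_t$ are the skeleton components with times $\le t$, \eqref{eq: consistency} becomes \eqref{eq: consistency discrete}, the treatment components agreeing automatically on that event. Next, $\mbbLa{}$ is pure-jump on $\{\theta_k\}$ with $\Delta\mbbLa{\theta_k}=I(\bar A_{k-1}=\bar a_{k-1})P(A_k\neq a_k\mid \bar L_k,\bar A_{k-1})$. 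Hence the integral in \eqref{eq: K jump to zero condition} is a finite sum of terms $\Delta\mbbLa{}/(1-\Delta\mbbLa{})$, finite iff \eqref{eq: hr condition} holds. For likelihood ratio regularity I would use \eqref{eq: W unstabilized weight}: $W_t=I(\Nastop>t)/\prod_{\theta_k\le t}P(A_k=a_k\mid\bar L_k,\bar A_{k-1}=\bar a_{k-1})$. Computing $E[W_t]=1$ by iterated conditioning backwards over the finitely many steps uses $P(\bar A_{K+1}=\bar a_{K+1})>0$ to ensure the conditionals are well defined.

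\textbf{Exchangeability, the main obstacle.} Since $\mbbNa{}$ jumps only at the deterministic times $\theta_m$, the $\F^{\tilde Y}_\T$-compensator also jumps only there, with jump $P(\Nastop=\theta_m\mid \F_{\theta_m-}\vee\sigma(\tilde Y))$. Thus \eqref{eq: exchan disc} is equivalent to
\[
P(A_m\neq a_m\mid \bar L_m,\bar A_{m-1}=\bar a_{m-1},\tilde{\mathbf L})=P(A_m\neq a_m\mid \bar L_m,\bar A_{m-1}=\bar a_{m-1})
\]
on $\{\bar A_{m-1}=\bar a_{m-1}\}$ for each $m$. The delicate step is replacing the full $\tilde{\mathbf L}$ by $\underline{\tilde L}_{m+1}$. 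On $\{\bar A_{m-1}=\bar a_{m-1}\}$, consistency gives $\bar{\tilde L}_m=\bar L_m$, so conditioning on the past counterfactuals is redundant. Because $A_m$ is binary, equality of conditional probabilities is exactly conditional independence, which gives \eqref{eq: RCISTG indep}.

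Consistency is not in the hypotheses of this bullet. If it is needed, I would invoke it explicitly, noting that both conditions are being characterized jointly. Otherwise, I would check that the intended reading of \eqref{eq: RCISTG indep} conditions on $\bar L_m$ with $\tilde L$ past entries absorbed. The remaining task is checking the measure-theoretic identification of the filtrations modulo null sets, which is routine.
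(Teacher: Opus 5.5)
Your proposal is correct and follows the paper's proof essentially step by step. The shared steps are: support of $N$ and $\tilde N$ on the grid; identification of $\F_{\theta_k-}$ and $\F_{\ell_k^j-}$ with the discrete past; jumps of compensators at deterministic times as conditional expectations given the strict past; and consistency to replace $\bar{\tilde L}_{K+1}$ by $\underline{\tilde L}_{m+1}$ on $\{\bar A_{m-1}=\bar a_{m-1}\}$. The paper also uses consistency there without listing it as a hypothesis, so your caveat is well placed.

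The only real variation is the likelihood ratio regularity step. You verify $E_P[W_t]=1$ by backward iterated conditioning. The paper instead argues that $\sup_{s\le t}W_s$ is integrable because there are finitely many atoms, and then uses that a local martingale with integrable running supremum is a true martingale. Both routes work.
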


A larger set of outcomes are identified under the FRCISTG independence assumptions than the RCISTG independence assumptions. The former set of assumptions allows for the identification of expected outcomes under a regime that is a function of the natural value of treatment; see \cite[Appendix C.2]{richardson_single_2013}. We do not consider such regimes, and thus the required exchangeability conditions are therefore not developed here. The consistency condition \eqref{eq: consistency discrete} coincides with \cite[Technical point 19.2]{Hernanrobins2021causal} (see also \cite[Definition 60 (ii)]{richardson_single_2013}). \eqref{eq: hr condition} is an 'almost surely' formulation of the condition in \cite[Technical point 19.2]{Hernanrobins2021causal}. The lower equality in \eqref{eq: data-generating static} is what Richardson and Robins call "modularity" \cite[Definition 15]{richardson_single_2013}. \footnote{To make the connection explicit, note that $P(\tilde L_k^j = 1 | \tilde L_{k,<j}, \bar {\tilde A}_{k-1} = \bar a_{k-1}) $ coincides with $P(\tilde L_k^j = 1 | \tilde L_{k,<j})$ because $\bar {\tilde A}_{k-1} = \bar a_{k-1}$ $P$-a.s., and \eqref{eq: data-generating static} thus states, as \cite[Equation (30)]{richardson_single_2013}, that 
$$ P\big(\tilde L_k^j = 1 \big| \tilde L_{k,<j} = l_{k,<j}\big) = f_k^j(l_{k,<j}, \bar a_{k-1} ) = P\big(L_k^j=1\big|L_{k,<j} = l_{k,<j}, \bar A_{k-1} = \bar a_{k-1}\big), $$
when the conditioning sets involved have positive probability.}

Instead of characterizing $\cfproc{}$ via its law as in Definition \ref{definition: potential outcome process}, one could alternatively consider $\cfproc{}$ as being explicitly realized from its law starting with a collection of specified "error terms." Such explicit constructions could lead to further connections with definitions which recursively define potential outcomes based on earlier variables in the ordering \cite[Definition 1]{richardson_single_2013}. However, we do not follow this path here.

\section{Identification with multiple interventions in the general MPP setting}
\label{section: identification with multiple interventions in the general MPP setting}

This section extends the results presented in the preceding sections to interventions on multiple components in general MPP settings. Assuming familiarity with the earlier sections, we first introduce the necessary notation for multiple interventions, then present the main identification theorem (Theorem \ref{theorem: multiple marked interventions}), before examining some of its consequences through examples.

\subsection{Setup, notation, and a general identification result for multiple interventions}
\label{subsec: setup multiple interventions}

We consider a non-explosive observed data MPP $N$ on the time domain $\T$, with mark space $(X, \mathcal{X})$ assumed to be a Borel space. We assume $N$ consists of components $(N^i)_{i \in \mathcal{I}_d}$, where each $N^i$ is itself an MPP operating on its own mark space $(X^i, \mathcal{X}^i)$, where the $X^i$'s are pairwise disjoint subsets of $X$. In addition, we include here an observed baseline random element $L$ taking values in $(S, \mathcal{S})$.

$N$ takes values in the canonical space $(\N_T^X, \H_T^X)$, and $(L, N)$ takes values in the canonical space $(S \times \N_T^X, \S \otimes \H_T^X)$. \footnote{Appendix \ref{appendix: the canonical space of point process realizations} or \cite{LastBrandt1995marked, jacobsen2006point} provide details on the canonical space of MPPs.} Following the definitions of \cite{LastBrandt1995marked}, which match those used in Section \ref{subsection: interventions}, we say that a process $Z$ on $(S \times \N_T^X, \S \otimes \H_T^X)$ is \textit{optional} if $Z_t(l,\varphi) = Z_t(l,\varphi|_t)$ and \textit{predictable} if $Z_t(l,\varphi) = Z_t(l,\varphi|_{t-})$ for every $(l,\varphi) \in S \times \N_T^X$, and similarly for processes defined on $(\N_T^X, \H_T^X)$. Aligning with \cite{LastBrandt1995marked}, we adopt analogous definitions for kernels and random measures defined on these spaces.


We consider interventions on components of $N = (N^i)_{i \in \mathcal{I}_d}$ indexed by $J \subseteq \mathcal{I}_d$. Analogously to Definition \ref{definition: intervention}, we specify an intervention $\mathfrak{n}^j$ on component $j \in J$ as a predictable counting measure on the canonical space, i.e. satisfying $\mathfrak{n}^j(l, \varphi, dt \times dx) = \mathfrak{n}^j(l, \varphi|_{t-}, dt \times dx)$ for each $(l,\varphi) \in S \times \N_T^X$ and $t \in \T$. 

Each $\mathfrak{n}^j$ is a treatment rule that fixes values of the component $j$. More precisely, the mapping $(l, \varphi) \mapsto \mathfrak n^j(l,\varphi)$ is a mapping from $S \times \N_T^X$ to $\N_T^{X^j}$, the canonical space of $N^j$. Thus, $\mathfrak n^j$ produces a feasible realization of the component $j$, which may depend on the strict past of $l, \varphi$, analogously to Definition \ref{definition: intervention}.

By \cite[Theorem 4.2.2]{LastBrandt1995marked}, there exists a canonical compensator $\alpha$ of $N$ with respect to $P$ and $\F_\T$, the filtration generated by $(L, N)$ on $\T$. This is a uniquely defined predictable kernel which satisfies
\begin{align}
\begin{split}
    \alpha(L, N, dt \times dx) & \text{ defines a compensator of }  N(dt \times dx) \\
    & \text{with respect to } P \text{ and } \F_\T.
\end{split} 
    \label{eq: alpha canonical compensator}
\end{align}
The potential outcome process $\tilde{N}$ can be characterized by this canonical compensator and the interventions $\mathfrak n^j$ under study, analogously to Definition \ref{definition: potential outcome process} in Section \ref{subsection: potential outcome processes}. Appendix \ref{appendix: general MPP potential outcomes} gives details on this characterization in the general MPP setting.

Analogously to Section \ref{subsection: outcomes of interest}, the outcome functional of interest $\dot{Y}$ is a bounded optional stochastic process on $(\N_T^X, \H_T^X)$, i.e. satisfying $\dot Y_t(\varphi) = \dot Y_t(\varphi|_t)$ for each $(t, \varphi) \in \T \times \N_T^X$. We define the observed outcome as $Y := \dot{Y}(N)$ and the potential outcome as $\tilde{Y} := \dot{Y}(\tilde{N})$, where $\tilde{N}$ is the potential outcome process arising from these interventions.

From the interventions and the observed data, we recover the regime‑specific deviation times
\begin{align}
    \tau^j := \inf\big\{t > 0 | N^j((0,t] \times D) \neq \mathfrak{n}^j(L, N, (0,t] \times D) \text{ for some } D \in \mathcal{X}^j\big\}, \label{eq: tau j deviation time}
\end{align}
and the overall deviation time $\tau^J := \wedge_{j \in J} \tau^j$, analogously to \eqref{eq: tau a}. To focus on the important things, we assume in the following that $X^j$ is finite for each $j \in J$; then $\tau^J$ is an optional time with respect to $\F_\T$, and $\mathbb{N}^J_t := I(\tau^J \leq t)$ has a compensator $\mathbb{\Lambda}^J$ with respect to $P$ and $\F_\T$.

Analogous to the construction in Section \ref{subsection: identifying conditions}, we assume 
that
\begin{align}
    P\Big(\int_0^T \frac{d\mathbb{\Lambda}^J_s}{1 - \Delta\mathbb{\Lambda}^J_s} < \infty\Big) = 1, \label{eq: J int positivity}
\end{align}
so that
\begin{align}
    \mathbb{K}^J &:= -\int_0^\cdot \frac{d\mathbb{N}^J_s - d\mathbb{\Lambda}^J_s}{1 - \Delta\mathbb{\Lambda}^J_s}, \label{eq: mbbk J well defined}
\end{align}
is well-defined, and both $\mathbb{K}^J$ and $\E(\mathbb{K}^J)$ are local $(P, \F_\T)$-martingales.
\begin{theorem}[Identification with multiple interventions in the general MPP setting]\label{theorem: multiple marked interventions}
With the setup and notation thus far in Section \ref{section: identification with multiple interventions in the general MPP setting}, 
assume that the following regularity conditions hold on the canonical compensator $\alpha$ in \eqref{eq: alpha canonical compensator}, the interventions $\mathfrak n^j$, and each $j,h \in J$ such that $j \neq h$: 
\begin{align}
        \alpha\big(l,\varphi, \{t\} \times X^j\big)  \alpha\big(l,\varphi, \{t\} \times X^h\big) &= 0, \label{eq: can comp j h} \\
        \alpha\big(l,\varphi, \{t\} \times X^j\big)  \alpha\big(l,\varphi, \{t\} \times X^{\setminus J}\big) &= 0, \label{eq: can comp j setminus J}  \\
        \mathfrak{n}^j(l,\varphi, \{t\} \times X^j)  \alpha\big(l,\varphi, \{t\} \times X^{\setminus J}\big) &= 0, \label{eq: intervention can comp} \\
        \mathfrak{n}^j(l,\varphi, \{t\} \times X^j)  \mathfrak{n}^h(l,\varphi, \{t\} \times X^h) &= 0, \label{eq: interv interv}
\end{align}
for each $l, \varphi$ and $t$, where $X^{\setminus J} := \cup_{j \in \mathcal{I}_d \setminus J} X^j$. Under the identifying conditions
\begin{enumerate}[label=(\roman*)]
    \item \label{enum: mpp consistency} \textbf{Consistency:} $Y_t I(\tau^J > t) = \tilde{Y}_t I(\tau^J > t)$ $P$-a.s. for each $t \in \T$,
    \item \label{enum: mpp exchangeability} \textbf{Exchangeability:} $\mathbb{\Lambda}^J$ defines a compensator of $\mathbb{N}^J$ with respect to both $\F_\T$ and $\F_\T^{\tilde Y}$ under $P$, where $\F_\T^{\tilde Y} = \{\F_t \vee \sigma(\tilde{Y})\}_{t \in \T}$,
    \item \label{enum: mpp likelihood ratio regularity} \textbf{Likelihood-ratio regularity:} $E_P[\mathcal{E}(\mathbb{K}^J)_t] = 1$ for each $t \in \T$,
\end{enumerate}
we have for each $t \in \T$ that
\begin{align}
    E_P[\tilde{Y}_t] = E_P[\mathcal{E}(\mathbb{K}^J)_t Y_t] = E_Q[Y_t], \label{eq: g-formula multiple interventions}
\end{align}
where $dQ = \mathcal{E}(\mathbb{K}^J)_T dP$. Assuming the canonical setting where the underlying measurable space is $(S \times \mathcal N_T^X, \mathcal S \otimes \mathcal H_T^X)$ and $(L, N) = Id$, \footnote{See Appendix \ref{appendix: the canonical space of point process realizations} or \cite{LastBrandt1995marked,jacobsen2006point} for details on the canonical setting.} the $g$-formula can be represented as
    \begin{align}
         E_{P}[\tilde Y_t] = E_{Q}[Y_t] = \int_S \int_{(\T \times X)^{\mathbb N}} Y_t\big( (t_k, x_k)_k \big) dF_\infty^Q(t_1, x_1, \dots | l) dF_L(l), \label{eq: thm 3 g formula}
    \end{align}
    where $dF_L$ is the distribution of $L$ under $P$, and $dF_\infty^Q(t_1, x_1, \dots | l)$ is a regular version of the conditional distribution of $N$ given $L=l$ under $Q$. Similarly to \eqref{eq: P density n new}, its finite-dimensional conditional distribution takes the form 
    \begin{align*}
        dF_n^Q(t_1, x_1, \dots, t_n, x_n|l) &= I(t_1 < \dots < t_n) \cdot Z^{(0)}(l,dt_1 \times dx_1) \\
        &\cdot Z^{(1)}( (t_k, x_k)_k |_{t_1} , l, dt_2 \times dx_2) \cdot \\
        & \cdot \ldots \cdot Z^{(n-1)}((t_k, x_k)_k |_{t_{n-1}} , l, dt_n \times dx_n).
    \end{align*}
    Similarly to \eqref{eq: P regular conditional distribution new}, this finite-dimensional distribution is determined by the kernels
    \begin{equation}
\resizebox{\textwidth}{!}{$\displaystyle
\begin{aligned}
Z^{(i)}((t_k, x_k)_k |_{t_i},l,dt \times dx ) &= \prodi_{t_{i} < u < t} ( 1 -  \mathcal L((t_k, x_k)_k |_{t_i},l, du \times X ) ) \mathcal L((t_k, x_k)_k |_{t_i},l, dt \times dx),  \\
        Z^{(0)}(l,dt \times dx) &= \prodi_{0 < u < t} ( 1 -   \mathcal L((t_k, x_k)_k|_0,l, du \times X) )  \mathcal L((t_k, x_k)_k|_0,l, dt \times dx),
\end{aligned}
$}
\notag
\end{equation}
    where $\mathcal L$ in the previous equation is the $(Q, \F_\T)$-compensating measure of $N$, which under the regularity conditions \eqref{eq: can comp j h}-\eqref{eq: interv interv} is given by
    \begin{align}
        \mathcal L(dt \times dx) = \sum_{j \in J} I(x \in X^j) \mathfrak n^j(L, N, dt \times dx) + \sum_{j \in \mathcal I_d \setminus J} I(x \in X^j) \Lambda^j(dt \times dx), \label{eq: Q compensating measure}
    \end{align}
    where we have in \eqref{eq: Q compensating measure} suppressed the dependence on $(l, \varphi) = \omega \in \Omega = S \times \mathcal N_T^X$. Here, $\Lambda^j$ is the compensator of $N^j$ with respect to $P$ and $\F_\T$.
\end{theorem}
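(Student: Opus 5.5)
The proof follows the single-intervention route of Theorems \ref{theorem: ipw} and \ref{theorem: representation of functionals}, with $\mathbb N^J$, $\mathbb\Lambda^J$, $\mathbb K^J$ replacing $\mbbNa{}$, $\mbbLa{}$, $\KK$, and the multivariate counting process replaced by a general MPP with baseline mark $L$. It has three stages.

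\textbf{Stage 1: the IPW identity.} First, by \eqref{eq: J int positivity} and the argument of Lemma \ref{lemma: W finite variation}, $W^J:=\E(\mathbb K^J)$ is a nonnegative local martingale of finite variation. Since the integrator jumps at $\tau^J$ by $-1$, $W^J$ vanishes on $[\tau^J,T]$, so $W^J_t=W^J_t I(\tau^J>t)$. Consistency \ref{enum: mpp consistency} then gives $E_P[W^J_tY_t]=E_P[W^J_t\tilde Y_t]$. Second, exchangeability \ref{enum: mpp exchangeability} says that $\mathbb N^J-\mathbb\Lambda^J$ is also a local martingale in the initially enlarged filtration $\F_\T^{\tilde Y}$. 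The integrand $1/(1-\Delta\mathbb\Lambda^J)$ is $\F_\T$-predictable, hence $\F_\T^{\tilde Y}$-predictable, so $\mathbb K^J$ and $W^J$ are local $\F_\T^{\tilde Y}$-martingales. Together with \ref{enum: mpp likelihood ratio regularity}, the argument of Lemma \ref{lemma: positivity and likelihood ratio equivalence} upgrades $W^J$ to a true, uniformly integrable $\F_\T^{\tilde Y}$-martingale: a nonnegative local martingale is a supermartingale, and constant mean then forces the martingale property. Since $\tilde Y_t$ is bounded and $\F_0^{\tilde Y}$-measurable, $E_P[W^J_t\tilde Y_t]=E_P[\tilde Y_t]$. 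This gives the first equality of \eqref{eq: g-formula multiple interventions}. The second follows from $dQ=W^J_TdP$ and the $\F_\T$-martingale property, since $Y_t$ is $\F_t$-measurable.

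\textbf{Stage 2: the $Q$-compensating measure.} Next I derive \eqref{eq: Q compensating measure} via Girsanov's theorem for random measures, generalizing Lemma \ref{lemma: Q compensator}. For a predictable set $D\subset X^i$, the $Q$-compensator of $N(\cdot\times D)$ is
\[
\Lambda(\cdot\times D)+\int \frac{1}{W^J_{s-}}\,d\langle W^J,N(\cdot\times D)\rangle^P_s,
\]
where the bracket term is the $P$-compensator of $\int \Delta W^J\,dN(\cdot\times D)$. The plan is to compute $\Delta W^J=W^J_{-}\Delta\mathbb K^J$ on the event of a jump in $D$ at $s\le\tau^J$, and split into cases.

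\emph{Non-intervened $i$.} A jump of $N^i$ triggers a deviation only if it coincides with a jump of some $\mathfrak n^j$ or with an atom of some $\alpha(\cdot,X^j)$. Conditions \eqref{eq: can comp j setminus J} and \eqref{eq: intervention can comp} make this a null event. On the remaining (no-deviation) event the correction terms contributed by $\Delta\mathbb\Lambda^J$ cancel against the compensator of $N(\cdot\times D)$, so $\Lambda^i$ is unchanged under $Q$.

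\emph{Intervened $j$.} Before $\tau^J$, $N^j$ can jump in $D$ without deviating only where $\mathfrak n^j$ jumps in $D$. A direct computation, using \eqref{eq: can comp j h} and \eqref{eq: interv interv} to separate the components, then shows the $Q$-compensator is $\mathfrak n^j(L,N,\cdot\times D)$ on $[0,\tau^J]$. Under $Q$ we have $\tau^J>T$ a.s., because $W^J_T=0$ on $\{\tau^J\le T\}$, so this identification holds on all of $\T$.

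\textbf{Stage 3: the $g$-formula.} Finally, in the canonical setting, disintegrate $Q$ over $L$. Since $W^J_0=1$, the law of $L$ under $Q$ is $F_L$. Conditionally on $L=l$, apply Lemma \ref{lemma: P fidi dist} with mark space $X$ (a Borel space; the lemma extends verbatim) and $\beta=\mathcal L(\cdot,l,\cdot)$. This yields the product-integral kernels $Z^{(i)}$, and integrating $Y_t$ gives \eqref{eq: thm 3 g formula}.

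The main obstacle is the jump bookkeeping in Stage 2. Deviation of an intervened component with a finite mark space can occur in two ways, either by an observed jump or by a missed prescribed jump, and $\Delta\mathbb\Lambda^J$ mixes atoms of $\alpha$ and of $\mathfrak n^j$. Verifying that the orthogonality conditions \eqref{eq: can comp j h}--\eqref{eq: interv interv} kill all cross terms is where the proof must be checked carefully. I would do this markwise, on each atom of the finite set $X^j$, mirroring the decomposition $\mbbLa{}={}^{\Nastop}\Lambda^a+{}^{\Nastop}\na{}(N)-2\int\Delta\na{}\,d\Lambda^a$ from Proposition \ref{prop: characterizations of W}.
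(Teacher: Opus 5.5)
Your three stages follow the paper's proof. First, the IPW identity comes from rerunning Theorem~\ref{theorem: ipw} with $\mathbb K^J,\mathbb N^J,\mathbb{\Lambda}^J,\tau^J$ in place of $\mathbb K^a,\mathbb N^a,\mathbb{\Lambda}^a,\tau^a$. Second, the $(Q,\F_\T)$-compensating measure comes from the predictable-bracket Girsanov formula. Third, the $g$-formula comes from disintegrating over $L$ (using $Q|_{\F_0}=P|_{\F_0}$) and repeating Lemma~\ref{lemma: P fidi dist}.

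Two points in Stage~2 need tightening.

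\textbf{Intervened components $j\in J$.} No bracket computation is needed. You already note that $Q(\tau^J\le T)=0$. This gives $N^j=\mathfrak n^j(L,N)$ $Q$-a.s., and a predictable counting measure is its own compensator (argue as in \eqref{eq: Q a comp}). So \eqref{eq: can comp j h} and \eqref{eq: interv interv} are not used there. In the paper they enter only through the joint construction of Proposition~\ref{proposition: mcp construction of observed and potential outcomes marked}.

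\textbf{Non-intervened components $i\notin J$.} Ruling out deviations at jumps of $N^i$ needs only \eqref{eq: intervention can comp}. Since $N^j$ cannot jump when $N^i$ does, an atom of $\alpha(\cdot,X^j)$ at that time is irrelevant. After that, the Girsanov formula gives $\mathcal L^i(dt\times dx)=\Lambda^i(dt\times dx)/(1-\Delta\mathbb{\Lambda}^J_t)$. The correction term $\int\frac{\Delta\mathbb{\Lambda}^J_s}{1-\Delta\mathbb{\Lambda}^J_s}\,\Lambda^i(ds\times dx)$ is nonnegative, so it does not cancel against anything. It vanishes because $\Delta\mathbb{\Lambda}^J_t\,\Lambda^i(\{t\}\times dx)=0$. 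The paper gets this from three ingredients:
\begin{enumerate}
\item the bound $\mathbb N^J\le\sum_{j,k}\mathbb N^{j,k}$ by the per-mark deviation indicators;
\item the markwise decomposition \eqref{eq: Njk representation};
\item conditions \eqref{eq: can comp j setminus J}--\eqref{eq: intervention can comp}, which rule out shared atoms of $\alpha^j$ and of $\mathfrak n^j$ with $\alpha^{\setminus J}$.
\end{enumerate}
This is the markwise check you defer to your last paragraph, and it is the actual role of \eqref{eq: can comp j setminus J}.
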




The regularity conditions \eqref{eq: can comp j h}-\eqref{eq: interv interv} are sufficient to ensure the joint existence, on the same probability space, of observed and potential outcomes with the respective correct laws such that the consistency and exchangeability conditions in Theorem \ref{theorem: multiple marked interventions} are satisfied. 
See Appendix \ref{appendix: construction of potential outcomes} for details; in particular, compare the conditions \eqref{eq: can comp j h}-\eqref{eq: interv interv} with \eqref{eq: ort obs multi marked}. These regularity conditions mirror the conditions used to establish Proposition \ref{proposition: mcp construction of observed and potential outcomes} in Section \ref{section: construction of potential outcomes}; \eqref{eq: can comp j h}-\eqref{eq: can comp j setminus J} parallel \eqref{eq: mcp can comp orthogonality}, and \eqref{eq: intervention can comp}-\eqref{eq: interv interv} parallel \eqref{eq: compensator jump regularity}. In plain English, under \eqref{eq: can comp j h}-\eqref{eq: interv interv}, the consistency and exchangeability conditions in Theorem \ref{theorem: multiple marked interventions} impose no hidden restrictions on the observed data law.

The exchangeability condition in Theorem \ref{theorem: multiple marked interventions} \ref{enum: mpp exchangeability} is implied by the assumption that the $(P,\F_\T)$-compensator of $~^{\tau^J}N^j$ is also a $(P,\F_\T^{\tilde Y})$-compensator of $~^{\tau^J}N^j$ for each $j \in J$, as we show in Lemma \ref{lemma: MPP exchangeability 2026} (Appendix \ref{appendix: supporting calculations}). This latter condition is in turn implied by certain sequential independence conditions involving latent event times and marks; see \eqref{eq: Tkj cond indep}-\eqref{eq: Vkj Tkj cond indep} in Appendix \ref{appendix: construction of potential outcomes} (stated there with different notation and for the full potential outcome process, rather than the potential outcome of interest).

\subsection{Illustrative examples}
\label{subsec: illustrative examples}
Similar to the examples in Section \ref{subsection: examples of interventions and potential optcomes processes}, the following two examples apply Theorem \ref{theorem: multiple marked interventions} to an extension of the clinical scenario from Example \ref{example: deterministic dynamic predictable refinement}, which resemble settings studied in the literature. Our formalization differs from other approaches in that we use our explicit definitions of interventions, formally derive formulas from identification conditions, and make the assumed decision-making narratives explicit. We provide specific comments following the examples.

The observed data structure consists of a baseline random element $L$ measured at or before $t = 0$, and an MPP $N = (N^a, N^v, N^c, N^d, N^\ell)$, where the components are
\begin{itemize}
    \item $N^a$, an MPP for treatment allocations,
    \item $N^v$, a counting process recording clinic visits (during which treatment assessments occur),
    \item $N^c$, a right-censoring counting process,
    \item $N^d$, a death counting process,
    \item $N^\ell$, an MPP recording other covariates.
\end{itemize}
The observed data filtration $\F_\T$ is generated by $L$ and $N$.

As in Example \ref{example: deterministic dynamic predictable refinement}, we assume there is a delay $\delta > 0$ between assessment and allocation. (We discuss an alternative observed data structure in Section \ref{subsubsection: evaluating identification claims by RGvdL}.) That is, the doctor first reviews the patient's information (assessment) and then, after a short delay, prescribes treatment (allocation). Specifically, we assume that
\begin{align}
    \bar N^a(dt) = N^v(dt - \delta), \label{eq: delay assumption}
\end{align}
where $\bar N^a(dt) := N^a(dt \times X^a)$ counts treatment allocations of all types, and where we use the convention $Z(dt)I(t < 0) = 0$ for a random measure $Z$. The detailed structure of the delay is not essential; we therefore take $\delta$ to be deterministic for simplicity. \footnote{More generally, we can e.g. allow for delay structures on the form $T_k^a = T_k^v + \delta_k$ with $\delta_k \geq \epsilon > 0$ and  $\F_{T_k^v}$-measurable $\delta_k$, where $\{ T_k^a \}_k$ and $\{T_k^v\}_k$ are the jump times of $\bar N^a$ and $N^v$ respectively. This allows for subject-specific delays which may depend on clinic, physician, etc.} 
Formally, the delay assumption implies that the process $\bar N^a$ counting treatment allocations is predictable with respect to the observed filtration. 

Theorem \ref{theorem: multiple marked interventions} is formulated for MPP components with general mark spaces. To connect the examples to the theorem, we embed each one-dimensional counting process component as follows: for $i \in \{v,c,d\}$, we set the corresponding mark set to the singleton $X^i = \{ i \}$ and identify the counting process $N^i(dt)$ with the MPP $N^i(dt \times dx) := \delta_i(dx) N^i(dt)$. If $i$ is also an intervention component, i.e. $i \in J$, we similarly make the identification $\mathfrak n^i(l,\varphi,dt \times dx) = \delta_i(dx) \mathfrak n^i(l,\varphi,dt)$. 
With these identifications, the observed data structure and the interventions $\mathfrak n^j$ match the setup of Section \ref{subsec: setup multiple interventions}.

The outcome of interest is survival at time $t$, $Y_t = I(N_t^d = 0)$, with potential outcome of interest $\tilde Y_t = I(\tilde N_t^d = 0)$, where $\tilde N$ is the potential outcome process arising under a given set of interventions.

\begin{example}[Joint intervention on schedule, assignment, and censoring]\label{example: intervention on treatment and schedule}
Consider a medical study with the observed data structure just described. 
A researcher is interested in the effect of a protocol that specifies both the timing of treatment visits (the schedule) and subsequent treatment assignments, while preventing censoring. This joint intervention modifies visit schedule, treatment assignment, and ensures that everybody remains under follow-up (no censoring). Consequently, the intervention components are $J = \{ v, a, c \}$.

The schedule intervention is
$$
\mathfrak{n}^v(l,\varphi, dt) = \kappa^*(l,\varphi, dt),
$$
where $\kappa^*$ is a predictable kernel specifying the visit schedule based on strictly earlier events; that is, $\kappa^*(l,\varphi, dt) = \kappa^*(l,\varphi|_{t-}, dt)$, and $\kappa^*(l,\varphi,(0,\cdot])$ is a one-dimensional counting process trajectory for each $l$ and $\varphi$. 

The allocation intervention is 
\begin{align}
    \mathfrak{n}^a(l,\varphi, dt \times dx) = \pi^*(l, \varphi, t, dx) \varphi^v(dt - \delta), \label{eq: allocation assignment}
\end{align}
where $\varphi^v(dt - \delta) = \bar\varphi^a(dt)$ by the delay assumption \eqref{eq: delay assumption} (i.e., allocations occur $\delta$ time units after the corresponding assessments). $\pi^*$ is a predictable kernel specifying treatment assignment, i.e. $\pi^*(l,\varphi,t,dx)$ is a rule that fixes treatment allocation at $t$ given the strictly earlier events $l,\varphi|_{t-}$. The censoring prevention intervention is $\mathfrak n^c = 0$.

Define deviation times $\tau^v, \tau^a, \tau^c$ as in \eqref{eq: tau j deviation time}, set $\tau^J = \tau^v \wedge \tau^a \wedge \tau^c$, and suppose the $(P, \F_\T)$-compensator $\mathbb \Lambda^J$ of $I(\tau^J \leq \cdot)$ satisfies the positivity condition \eqref{eq: J int positivity}, so that $\mathbb K^J$ in \eqref{eq: mbbk J well defined} is well-defined. Under the conditions in Theorem \ref{theorem: multiple marked interventions}, we get
$$
    E_P[\tilde{Y}_t] = E_P[\mathcal{E}(\mathbb{K}^J)_t Y_t] = E_Q[Y_t],
$$
where $dQ = \mathcal{E}(\mathbb{K}^J)_T dP$. The compensator \eqref{eq: Q compensating measure} figuring in the $g$-formula \eqref{eq: thm 3 g formula} is
\begin{align*}
    \mathcal{L}(dt \times dx) &= \delta_{v}(dx) \kappa^*(L,N,dt) + I(x \in X^a) \pi^*(L,N,t, dx)\bar N^a(dt) \\
    &\quad + \delta_c(dx) \cdot 0 + I(x \in X^\ell) \Lambda^\ell(dt \times dx) + \delta_d(dx)\Lambda^d(dt),
\end{align*}
where each $\Lambda^i$ is the $(P, \F_\T)$-compensator $N^i$.
\end{example}

\begin{example}[Intervention on treatment allocation while preventing censoring]\label{example: delayed allocation}
Building on the same data structure, 
suppose an investigator wishes to estimate the effect of assigning treatment according to a deterministic rule at allocation times (depending only on the strict history), while preventing censoring. In this case, the intervention index set is $J = \{a, c\}$, where the allocation intervention $\mathfrak n^a$ is as in \eqref{eq: allocation assignment} and $\mathfrak{n}^c = 0$.

Define $\tau^a, \tau^c$ as in \eqref{eq: tau j deviation time}, and set $\tau^J = \tau^a \wedge \tau^c$. We assume $\mathbb \Lambda^J$, the compensator of $I(\tau^J \leq \cdot)$, satisfies the positivity condition \eqref{eq: J int positivity}, so that $\mathbb K^J$ in \eqref{eq: mbbk J well defined} is well-defined. Under the conditions of Theorem \ref{theorem: multiple marked interventions}, we then get
$$
E_P[\tilde{Y}_t] = E_P[\mathcal{E}(\mathbb{K}^J)_t Y_t] = E_Q[Y_t],
$$
where the compensating measure $\mathcal L$ figuring in the $g$-formula \eqref{eq: thm 3 g formula} is
\begin{align*}
    \mathcal{L}(dt \times dx) &= \delta_v(dx)\Lambda^v(dt) + I(x\in X^a) \pi^*(L,N,t, dx)\bar N^a(dt) \\
    &\quad + \delta_c(dx) \cdot 0 + I(x \in X^\ell) \Lambda^\ell(dt \times dx) + \delta_d(dx)\Lambda^d(dt).
\end{align*}
Write $\{ T_k^a \}_{k \geq 1}$ for the ordered jump times of $\bar N^a = N^a((0,\cdot] \times X^a)$, i.e., the observed treatment allocation times. If there is no censoring, the exchangeability condition Theorem \ref{theorem: multiple marked interventions} \ref{enum: mpp exchangeability} is equivalent to \footnote{\eqref{eq: delayed allocation seq ex} is also equivalent to $I(\tau^a > T_k^a) \indep \tilde{Y} | \mathcal{F}_{T_k^a-},$ for each $k \geq 1$ (see Appendix \ref{appendix: example: delayed allocation}).} 
\begin{align}
    I(\tau^a > T_k^a) \indep \tilde{Y} |\mathcal{F}_{T_k^a-}, \tau^a \geq T_k^a \quad \text{for each } k \geq 1,  \label{eq: delayed allocation seq ex}
\end{align}
see Appendix \ref{appendix: example: delayed allocation}. 
\end{example}



\subsubsection*{On preventing censoring}
Examples \ref{example: intervention on treatment and schedule} and \ref{example: delayed allocation} follow central discrete-time causal inference works \cite{robins1986parametric_g, Hernanrobins2021causal}, where right-censoring is addressed through a hypothetical intervention that prevents it. Identification results for this intervention in the MPP setting are covered as a special case of our theorems. The relationship between this identification strategy and commonly studied identification strategies in the classical survival literature will be explored in subsequent work.

\subsection{Comparison with RGvdL's examples and identification claims}
\label{subsubsection: evaluating identification claims by RGvdL}
As noted in Remark \ref{remark: rgvdl} in Section \ref{section: previous work on continuous-time causal inference}, RGvdL develop estimation methods under the assumption that their formulas follow from existing results in GR01. On the other hand, GR01 state that extending their discrete-time theory to counting process data is an open problem. In this section, we assess some of RGvdL's statements in light of Theorem \ref{theorem: multiple marked interventions} and our derived Examples \ref{example: intervention on treatment and schedule}-\ref{example: delayed allocation}. We also draw connections to the broader literature on causal inference with counting process data that studies distributions satisfying the invariance property (IP) (Definition \ref{def: invariance property} in Section \ref{subsection: the compensator under the g-formula distribution}).

RGvdL's observed data structure and the data structure in Examples \ref{example: intervention on treatment and schedule}-\ref{example: delayed allocation} differ in the underlying decision-making narratives. RGvdL perhaps implicitly adopt the perspective that raw time stamps directly represent the true causal ordering. Specifically, they define "Let $T^a_1 < T^a_2 < \dots$ be the random times at which the treatment regime ... may change" \cite[p. 2473]{Rytgaard2022}, where "at visit $T_k^a$, the doctor considers ... to decide to continue or to change the treatment" \cite[p. 2471]{Rytgaard2022}. However, in reality, a doctor cannot simultaneously review information, decide on treatment, and implement an action. As argued in Section \ref{subsection: interventions}, a real-life decision involves a delay between information gathering and action.

A more plausible assumption is that these events are temporally separated: after the patient arrives, the doctor reviews information, \textit{before} deciding on treatment. This motivates the data structure in Examples \ref{example: intervention on treatment and schedule}-\ref{example: delayed allocation} (building on the clinical scenario from Example \ref{example: deterministic dynamic predictable refinement}), which distinguish between treatment visits ($N^v$) and treatment allocations ($N^a$) through the delay assumption \eqref{eq: delay assumption}. These examples align with the broader theme of this article (see also Section \ref{subsection: interventions} and Examples \ref{example: atomic}-\ref{example: deterministic dynamic predictable refinement}) and of seminal work in discrete time \cite{robins1986parametric_g,robins2011alternative,richardson_single_2013}: making decision-making narratives explicit. 

Nonetheless, RGvdL's observed data structure is, technically, a special case of our data structure if we set $\delta = 0$ and redefine our MPP appropriately: specifically, we can align the data structure in Section \ref{subsec: illustrative examples} with theirs by setting $\delta = 0$ and redefining the visit counting process $N^v$ and the allocation MPP $N^a$ to be a single MPP $N^a(dt \times dx)$ that counts treatment visits with allocations occurring at the same time.

\subsubsection*{Comparison to RGvdL's "intervention on treatment and schedule."}
Example \ref{example: intervention on treatment and schedule} is conceptually similar to RGvdL's "intervention on treatment and schedule," \cite[Definition 2]{Rytgaard2022}, with $\delta = 0$, i.e., with instantaneous treatment allocations at treatment visit times. This $\delta = 0$ case is technically covered by Theorem \ref{theorem: multiple marked interventions}, \footnote{Specifically, when $\delta = 0$, using that $N^a(dt \times dx)$ is the MPP counting treatment visits and treatment allocations with no delay, consider $\mathfrak n^a(l, \varphi, dt \times dx) = \pi^*(l,\varphi, t, dx)\kappa^*(l,\varphi, dt)$, with $\pi^*$ and $\kappa^*$ as in Example \ref{example: intervention on treatment and schedule}. The assignment function $\mathfrak n^a$ satisfies $\na{}(l, \varphi, dt \times dx) = \na{}(l, \varphi|_{t-}, dt \times dx)$, and Theorem \ref{theorem: multiple marked interventions} therefore applies.} and we can thus directly compare formulas and assumptions. 

Both our $g$-formula distribution and the distributions studied by RGvdL satisfy the invariance property (IP), which also appears in many related causal inference works. However, there seems to be a distinction between our and RGvdL's targets of inference: we define interventions on schedule through $\kappa^*$, where $\kappa^*(l,\varphi,(0,\cdot])$ is a counting process trajectory that explicitly specifies visit times based on past events, while RGvdL replace the observed treatment compensator with 'some choice' $\Lambda^{a,*}$.

As emphasized in Remark \ref{remark: identifying vs invariance distributions} in Section \ref{subsection: the compensator under the g-formula distribution}, there are many distributions that satisfy (IP) without corresponding to explicit (deterministic) interventions. Thus, without specific details, RGvdL's approach appears similar to RRND and many related works on causal inference with counting processes that study distributions obtained by modifying the treatment compensator; see, e.g., \cite{roysland2011,roysland2012counterfactual,ryalen2018pcancer,ryalen2019additive,roysland2022graphical}. These works study distributions satisfying (IP), but they do not generally specify explicit interventions.

RGvdL differ from these related works in that they claim their formulas follow from existing identification results in GR01; specifically, for identifying their "intervention on treatment and schedule" estimand, RGvdL assert without further justification that a specific sequential independence condition is sufficient to derive their formula \cite[p. 2477, stated in natural language]{Rytgaard2022}.

In Appendix \ref{appendix: construction of potential outcomes}, we develop sequential independence assumptions which verifies the formulas in Example \ref{example: intervention on treatment and schedule} under the regularity conditions \eqref{eq: can comp j h}-\eqref{eq: interv interv} --- see  \eqref{eq: Tkj cond indep}-\eqref{eq: Vkj Tkj cond indep} in Appendix \ref{appendix: construction of potential outcomes}. The argument showing that these conditions are sufficient relies on the presence of latent event times which are not present in RGvdL's asserted sufficient condition. In Proposition \ref{proposition: sequential exchangeability predictable times} in Appendix \ref{appendix: seq exchangeability proof sketch}, we sketch an alternative argument that would verify the formulas in Example \ref{example: intervention on treatment and schedule}, under an alternative sequential independence condition. However, this argument relies on the implausible additional assumption that all observed event times are predictable --- a restriction that is unlikely to hold in real-world applications. 

\subsubsection*{Comparison to RGvdL's "intervention on treatment assigned."}
RGvdL's "intervention on treatment assigned," \cite[Definition 1]{Rytgaard2022}, corresponds to Example \ref{example: delayed allocation} with $\delta = 0$. RGvdL claim that a sequential independence condition resembling \eqref{eq: delayed allocation seq ex} is sufficient to identify their proposed estimand; see \cite[Assumption 2]{Rytgaard2022}.

Our derived exchangeability condition \eqref{eq: delayed allocation seq ex} hold by design if the treatment were sequentially randomized at assignment times. Yet when $\delta = 0$, $\mathfrak n^a$ in \eqref{eq: allocation assignment} is optional, i.e. $\mathfrak{n}^a(l, \varphi, dt \times dx) =\mathfrak{n}^a(l, \varphi|_{t}, dt \times dx)$, but it is not predictable, and Theorem \ref{theorem: multiple marked interventions} does not apply in general. Consequently, it is unclear whether the sequential randomization condition \eqref{eq: delayed allocation seq ex} is sufficient for identifying their estimand. Subsequent work \cite{Johan2026Identification}, building on an earlier draft of this manuscript, will describe independence conditions that validate one of RGvdL's proposed formulas.

\subsection{Interpretation and broader implications}
\label{subsection: interpretation and broader implications}
Because our identifying conditions and the associated likelihood ratio process are derived from explicit intervention definitions, the interpretation of our estimands is transparent. In particular, investigators applying our methods must first specify their particular intervention of interest, for example, as briefly sketched in our Examples \ref{example: atomic}-\ref{example: delayed allocation}. This clarity is valuable in the MPP setting. The path and distribution spaces are comparatively rich and, as noted in Remark \ref{remark: identifying vs invariance distributions}, most MPP distributions do not correspond to explicit interventions.

In Example \ref{example: delayed allocation}, treatment was assigned at allocation times according to a rule that depends on strictly earlier events. The exchangeability condition \eqref{eq: delayed allocation seq ex} matches single-world conditional independencies that would hold by design if treatment were sequentially randomized at each allocation time. Our identifying conditions can, in principle, be tested in a future experiment. (See \cite{robins1997complex} for a similar formulation in the sequential treatment setting.)

\section{Estimation}
\label{section: estimation}
The identifying likelihood ratio process in Theorem \ref{theorem: multiple marked interventions} has a generic form specified by an optional time $\tau^J$ and its compensator $\mathbb \Lambda^J$. This generic form motivates customized estimation techniques. Developing robust and efficient estimators under our conditions is an ongoing research project which will be reported in separate work.

In this section, we give a high-level sketch of an alternative estimation strategy motivated by the IPW formula in Theorem \ref{theorem: multiple marked interventions}: \footnote{This representation of $\E(\mathbb K^J)$ can be derived using the techniques from the proof of Proposition \ref{prop: characterizations of W}.}
\begin{align*}
    \E(\mathbb K^{J}) = \frac{I(\tau^{J} > \cdot)}{\prodi_{0< s \leq \cdot}(1 - d\mathbb \Lambda_s^{J})}.
\end{align*}

For concreteness, we study estimators for $E_P[\tilde Y_T]$ where $\tilde Y_T$ is survival at time $T$ under the interventions and identification assumptions in Example \ref{example: delayed allocation}. Recall from that example that $\tau^J = \tau^a \wedge \tau^c$, where $\tau^a$ is the deviation time for the regime on treatment allocation (defined as in \eqref{eq: tau j deviation time}), and $\tau^c$ is the censoring time. In this case, we have that $\mathbb \Lambda^{J} = ~^{\tau^{J}}\mathbb \Lambda^a + ~^{\tau^{J}}\Lambda^c$, where $\mathbb \Lambda^a$ is the compensator of the treatment-deviation counting process $\mathbb N^a = I(\tau^a \leq \cdot)$ and $\Lambda^c$ is the compensator of the censoring counting process $N^c = I(\tau^c \leq \cdot)$. We assume a model for $\mathbb \Lambda^{J}$, e.g. informed by subject matter knowledge of what makes subjects deviate from the regime. In practice, we would have to work with a specific choice of either 1) $\mathbb \Lambda^{J} = \int_0^{\tau^{J} \wedge \cdot} d\eta_s(L, N|_{s-})$, where $\eta$ is a model for the cumulative hazard of $\tau^{J}$ given covariates $L$ and the strict history of $N$, or 2) $\mathbb \Lambda^{J} = \int_0^{\tau^{J} \wedge \cdot} d\eta_s^a(L, N|_{s-}) + \int_0^{\tau^{J} \wedge \cdot} d\eta_s^c(L, N|_{s-})$ where $\eta^a$ (resp. $\eta^c$) is the cumulative hazard for $\tau^a$ (resp. $\tau^c$) given covariates $L$ and the strict history of $N$. We omit a detailed discussion of such models and associated statistical inference here for brevity, and refer instead to \cite{Kosorok2008} for a thorough treatment which is relevant to our strategy. Having fitted the models in 1) or 2) based on an i.i.d. population of size $n$, we can construct a weight estimator using individual $i$'s covariate history to predict that individual's weight:
\begin{align*}
    \hat V^{i} = \frac{I(\tau^{J,i} > \cdot)}{\prodi_{0 < s \leq \cdot} \big(1 - d \hat {\mathbb \Lambda}_s^{J,i}\big)},
\end{align*}
where $\tau^{J,i}$ is the minimum of $\tau^{a}$ and $\tau^{c}$ for individual $i$, and $\hat{\mathbb \Lambda}^{J,i}$ is an estimate (prediction) of $\mathbb \Lambda^{J}$ under individual $i$'s history. We can then consider the simple plug-in estimator
\begin{align}
    \frac{1}{n} \sum_{i=1}^n \hat V_T^{i} Y_T^i, \label{eq: simple ipw}
\end{align}
which resembles IPW estimators that have been described in the discrete literature \cite{hernan2000marginal}. In addition to \eqref{eq: simple ipw}, one could consider closely related estimators such as Hajek estimators, e.g. with stabilized weights \cite[Technical Point 12.2]{Hernanrobins2021causal}, but we do not explore such alternatives here. 

Assuming standard integrability conditions and that the trajectories of $\mathbb \Lambda^{J}$ and the elements of $\{ \hat {\mathbb \Lambda}^{J,i} \}$ take values in a Donsker class, the estimator is consistent for $E_P[\E(\mathbb K^{J})_T Y_T]$ and asymptotically normal provided that the models are correctly specified. To formally establish the result, one can use empirical process results ensuring weak convergence of root-$n$ residuals of empirical means involving random functions, see e.g. \cite[Lemma 19.24]{vandervaart1998asymptotic}, the Hadamard differentiability of the product-integral, and the functional delta method. 

Another alternative is to construct parametric $g$-formula estimators of \eqref{eq: thm 3 g formula} under assumptions on hazard models for the compensator (e.g., proportional or additive hazards \cite{aalen2008survivalandevent,Andersen}), and then use Monte Carlo simulation based on the fitted hazard models to estimate the integral.

The above simple estimation strategies offer valid inference for all data-generating laws where the model for $\mathbb{\Lambda}^{J}$ is correctly specified and the Donsker and integrability conditions are met.


\section{Discussion} \label{section: discussion}
This work provides several conceptual and methodological contributions. First, we formalize interventions in MPP settings as explicit actions $\mathfrak{n}^j$ that fix treatment trajectories, which is of interest to decision-makers who want to assess the consequences of actions. Second, we characterize potential outcomes arising from such interventions, thus enabling formal reasoning about data-generating mechanisms. We then operationalize MPP analogues of canonical identifying conditions (consistency, exchangeability, and positivity) and formally derive and characterize corresponding identification formulas. The simple identifying likelihood-ratio process $\mathcal{E}(\mathbb{K}^J)$, which we have not seen emphasized in related works, is derived directly from the intervention rules and transparently connects interventions and identifying conditions to $g$‑formulas. 

Next, by operationalizing identifying conditions that reduce to established discrete-time conditions under appropriate restrictions on compensators, we bridge the gap between discrete-time causal inference and the counting process literature for survival and event history analysis.  

While the original aim of the article was to develop nonparametric identification conditions for effects of dynamic regimes for MPP data structures, we discovered conditions that apply much more broadly to data described by stochastic processes. Our martingale-centered approach furthermore circumvents the complications associated with conditioning on events with probability zero, a problem that is commonly encountered in other approaches.

Future work involves developing analogues to results found in related missing data literature, such as the projection formula \cite{robins1992recovery,vaart2004onrobinsformula}, to develop efficient and robust estimators. Additional research directions include formalizing natural value processes and studying interventions that depend on these processes. Another intriguing area is to provide a graphical representation of our exchangeability condition. The irrelevance relation is similar to that of local independence \cite{didelez2008local,roysland2022graphical}, but differs from existing approaches in that the irrelevance concerns a random element recorded at baseline rather than a stochastic process which unfolds over time. Research in this direction likely requires development of new graphical identification criteria.

\section*{Acknowledgements}
We are grateful to Johan Sebastian Ohlendorff for feedback on an earlier draft of this manuscript. His comments revealed an oversight in the scope of our identification results. This motivated our development of the explicit construction and regularity conditions in Appendix \ref{appendix: construction of potential outcomes}, and led us to re-examine and refine our intervention definition. The resulting revisions sharpened the presentation. We also thank Niklas Nyboe Maltzahn for many stimulating discussions during the preparation of this work, and for feedback on earlier drafts of this manuscript.

 \section*{Funding}
P.C.R. and K.R. were supported by the Research Council of Norway grant "315323 STEINF - NFR  Stochastic differential equations for robust evaluation of cancer treatments." M.J.S. was supported by the Swiss National Science Foundation (project funding, grant number: 207436).

\bibliography{References}

\begin{thebibliography}{10}

\bibitem{aalen2008survivalandevent}
Odd Aalen, {\O}rnulf Borgan, and Hakon Gjessing.
\newblock {\em Survival and Event History Analysis: A Process Point of View
  (Statistics for Biology and Health)}.
\newblock Springer, 2008.

\bibitem{Andersen}
Per Andersen, Ørnulf Borgan, Richard Gill, and Niels Keiding.
\newblock {\em Statistical Models Based on Counting Processes}.
\newblock Springer Series in Statistics. Springer-Verlag, New York, 1993.

\bibitem{Andrews2020}
Ryan~M. Andrews and Vanessa Didelez.
\newblock Insights into the cross-world independence assumption of causal
  mediation analysis.
\newblock {\em Epidemiology}, 32(2):209–219, December 2020.

\bibitem{Bremaud1981point}
Pierre Brémaud.
\newblock {\em Point Processes and Queues}.
\newblock Springer Series in Statistics. Springer, New York, NY, 1981 edition,
  September 1981.

\bibitem{cohen2015stochastic}
Samuel~N Cohen and Robert~J Elliott.
\newblock {\em Stochastic Calculus and Applications}.
\newblock Probability and Its Applications. Springer, New York, NY, 2 edition,
  November 2015.

\bibitem{commenges2009dynamical}
Daniel Commenges and Anne Gégout-Petit.
\newblock A general dynamical statistical model with causal interpretation.
\newblock {\em Journal of the Royal Statistical Society: Series B (Statistical
  Methodology)}, 71(3):719--736, 2009.

\bibitem{CookLawless2007}
Richard~J. Cook and Jerald~F. Lawless.
\newblock {\em The Statistical Analysis of Recurrent Events}.
\newblock Springer New York, 2007.

\bibitem{Dawid2021decision}
A.~Philip Dawid.
\newblock Decision-theoretic foundations for statistical causality.
\newblock {\em Journal of Causal Inference}, 9(1):39--77, January 2021.

\bibitem{Dawid2010identifying}
A.~Philip Dawid and Vanessa Didelez.
\newblock Identifying the consequences of dynamic treatment strategies: A
  decision-theoretic overview.
\newblock {\em Statistics Surveys}, 4(none), January 2010.

\bibitem{didelez2008local}
Vanessa Didelez.
\newblock Graphical models for marked point processes based on local
  independence.
\newblock {\em Journal of the Royal Statistical Society: Series B (Statistical
  Methodology)}, 70(2):245--264, February 2008.

\bibitem{FlemingHarrington2005}
Thomas~R. Fleming and David~P. Harrington.
\newblock {\em Counting Processes and Survival Analysis}.
\newblock John Wiley {\&} Sons, Inc., September 2005.

\bibitem{Gill1994lecturenotes}
Richard~D. Gill.
\newblock {\em Lectures on survival analysis}, page 115–241.
\newblock Springer Berlin Heidelberg, 1994.

\bibitem{gill1990survey}
Richard~D. Gill and Soren Johansen.
\newblock A survey of product-integration with a view toward application in
  survival analysis.
\newblock {\em The Annals of Statistics}, 18(4):1129--1170, December 1990.

\bibitem{gill2001complex}
Richard~D. Gill and James~M. Robins.
\newblock Causal inference for complex longitudinal data: The continuous case.
\newblock {\em The Annals of Statistics}, 29(6), December 2001.

\bibitem{gill2004continuous}
Richard~D. Gill and James~M. Robins.
\newblock Causal inference for complex longitudinal data: The continuous time
  g-computation formula.
\newblock {\em arXiv preprint}, 2004.

\bibitem{He1992Semimartingale}
Sheng-Wu He, Jia-Gang Wang, and Jia-An Yan.
\newblock {\em Semimartingale theory and stochastic calculus}.
\newblock CRC Press, Boca Raton, FL, September 1992.

\bibitem{hernan2000marginal}
M.~A. Hernán, B.~Brumback, and J.~M. Robins.
\newblock Marginal structural models to estimate the causal effect of
  zidovudine on the survival of hiv-positive men, 2000.

\bibitem{Hernanrobins2021causal}
Miguel~A. Hernán and James~M. Robins.
\newblock {\em Causal Inference}.
\newblock CRC Press, Boca Raton, FL, February 2021.

\bibitem{Hu2019causal}
L.~Hu and J.~W. Hogan.
\newblock Causal comparative effectiveness analysis of dynamic continuous-time
  treatment initiation rules with sparsely measured outcomes and death.
\newblock {\em Biometrics}, 75(2):695--707, 2019.

\bibitem{jacobsen2006point}
Martin Jacobsen.
\newblock Point process theory and applications marked point and piecewise
  deterministic processes.
\newblock In {\em Probability and Its Applications}, pages 3--7.
  Birkhäuser-Verlag, Boston, 2006.

\bibitem{jacod1975}
Jean Jacod.
\newblock Multivariate point processes: Predictable projection, radon-nikodym
  derivatives, representation of martingales.
\newblock {\em Probability Theory and Related Fields}, 1975.

\bibitem{JacodShiryaev}
Jean Jacod and Albert~N. Shiryaev.
\newblock {\em Limit Theorems for Stochastic Processes}, volume 288 of {\em
  Grundlehren der Mathematischen Wissenschaften [Fundamental Principles of
  Mathematical Sciences]}.
\newblock Springer-Verlag, Berlin, second edition, 2003.

\bibitem{johnson2005semiparametric}
Brent~A. Johnson and Anastasios~A. Tsiatis.
\newblock Semiparametric inference in observational duration-response studies,
  with duration possibly right-censored.
\newblock {\em Biometrika}, 92(3):605--618, 2005.

\bibitem{Kallenberg2021foundations}
Olav Kallenberg.
\newblock {\em Foundations of Modern Probability}.
\newblock Springer International Publishing, 2021.

\bibitem{Kosorok2008}
Michael~R. Kosorok.
\newblock {\em Introduction to Empirical Processes and Semiparametric
  Inference}.
\newblock Springer New York, 2008.

\bibitem{LastBrandt1995marked}
Günter Last and Andreas Brandt.
\newblock {\em Marked point processes on the real line}.
\newblock Probability and Its Applications. Springer, New York, NY, 1995
  edition, August 1995.

\bibitem{Lok2001statistical}
Judith~J. Lok.
\newblock {\em Statistical Modelling of Causal Effects in Time}.
\newblock PhD thesis, Vrije Universiteit Amsterdam, 2001.
\newblock Naam instelling promotie: VU Vrije Universiteit Naam instelling
  onderzoek: VU Vrije Universiteit.

\bibitem{Lok2008statistical}
Judith~J. Lok.
\newblock Statistical modeling of causal effects in continuous time.
\newblock {\em The Annals of Statistics}, 36(3), June 2008.

\bibitem{Lok2004estimating}
Judith~J. Lok, Richard~D. Gill, Aad van~der Vaart, and James~M. Robins.
\newblock Estimating the causal effect of a time-varying treatment on
  time-to-event using structural nested failure time models.
\newblock {\em Statistica Neerlandica}, 58(3):271--295, August 2004.

\bibitem{nagel2017probability}
Werner Nagel and Rolf Steyer.
\newblock {\em Probability and Conditional Expectation: Fundamentals for the
  Empirical Sciences}.
\newblock Wiley, March 2017.

\bibitem{Johan2026Identification}
Johan~Sebastial Ohlendorff, Kjetil Røysland, Anders Munch, and Thomas Gerds.
\newblock Identification and estimation of causal effects under
  treatment-assigned-at-visit interventions in continuous time.
\newblock {\em In preparation}, page 823–843, August 2026.

\bibitem{pearl}
Judea Pearl.
\newblock {\em Causality: Models, Reasoning and Inference}.
\newblock Cambridge University Press, New York, NY, USA, 2nd edition, 2009.

\bibitem{protter}
P.~Protter.
\newblock {\em Stochastic Integration and Differential Equations}.
\newblock Springer, 2005.

\bibitem{richardson_single_2013}
Thomas~S. Richardson and James~M. Robins.
\newblock Single {World} {Intervention} {Graphs} {(SWIGs):} {A} {Unification}
  of the {Counterfactual} and {Graphical} {Approaches} to {Causality}.
\newblock Technical Report 128, Center for Statistics and the Social Sciences,
  University of Washington, 2013.
\newblock Available online at:
  https://csss.uw.edu/research/working-papers/single-world-intervention-graphs-swigs-unification-counterfactual-and.

\bibitem{Robins2000marginalvsstructural}
J.~M. Robins.
\newblock {\em Marginal Structural Models versus Structural Nested Models as
  Tools for Causal Inference}, pages 95--133.
\newblock Springer New York, 2000.

\bibitem{robins1992recovery}
J.~M. Robins and A.~Rotnitzky.
\newblock {\em Recovery of Information and Adjustment for Dependent Censoring
  Using Surrogate Markers}, pages 297--331.
\newblock Birkhäuser Boston, 1992.

\bibitem{robins1986parametric_g}
James Robins.
\newblock A new approach to causal inference in mortality studies with
  sustained exposure periods – application to control of the healthy worker
  survivor effect.
\newblock {\em Mathematical Modeling}, 7(9):1393--1512, 1986.

\bibitem{robins1987addendum}
James~M. Robins.
\newblock Addendum to “a new approach to causal inference in mortality
  studies with a sustained exposure period—application to control of the
  healthy worker survivor effect”.
\newblock {\em Computers \& Mathematics with Applications}, 14(9-12):923--945,
  1987.

\bibitem{robins1989analysis}
James~M. Robins.
\newblock The analysis of randomized and nonrandomized aids treatment trials
  using a new approach to causal inference in longitudinal studies.
\newblock In Lee Sechrest, Howard Freeman, and Albert Mulley, editors, {\em
  Health Service Research Methodology: A Focus on AIDS}, page 113–159. U.S.
  Public Health Service, National Center for Health Services Research,
  Washington, DC, 1989.

\bibitem{robins1997complex}
James~M. Robins.
\newblock Causal inference from complex longitudinal data.
\newblock In Maia Berkane, editor, {\em Latent Variable Modeling and
  Applications to Causality}, pages 69--117, New York, NY, 1997. Springer New
  York.

\bibitem{Robins2004optimal}
James~M. Robins.
\newblock Optimal structural nested models for optimal sequential decisions.
\newblock In {\em Proceedings of the Second Seattle Symposium in
  Biostatistics}, pages 189--326. Springer, 2004.

\bibitem{Robins2000correcting}
James~M. Robins and Dianne~M. Finkelstein.
\newblock Correcting for noncompliance and dependent censoring in an aids
  clinical trial with inverse probability of censoring weighted (ipcw)
  log‐rank tests.
\newblock {\em Biometrics}, 56(3):779–788, September 2000.

\bibitem{robins2004effects}
James~M Robins, Miguel~A Hernán, and Uwe Siebert.
\newblock Effects of multiple interventions.
\newblock {\em Comparative quantification of health risks: global and regional
  burden of disease attributable to selected major risk factors}, 1:2191--2230,
  2004.

\bibitem{robins2011alternative}
James~M. Robins and Thomas~S. Richardson.
\newblock Alternative graphical causal models and the identification of direct
  effects.
\newblock In {\em Causality and Psychopathology: Finding the Determinants of
  Disorders and their Cures}. Oxford University Press, February 2011.

\bibitem{ROSENBAUM1983propensity}
P.~R. Rosenbaum and D.~B. Rubin.
\newblock The central role of the propensity score in observational studies for
  causal effects.
\newblock {\em Biometrika}, 70(1):41--55, 1983.

\bibitem{rubin1974estimating}
D.~B. Rubin.
\newblock Estimating causal effects of treatments in randomized and
  nonrandomized studies.
\newblock {\em Journal of Educational Psychology}, 66(5):688--701, 1974.

\bibitem{ryalen2018pcancer}
Pål Ryalen, Mats Stensrud, Sophie Fosså, and Kjetil Røysland.
\newblock Causal inference in continuous time: An example on prostate cancer
  therapy.
\newblock {\em Biostatistics}, 19(4):600--613, 2018.

\bibitem{ryalen2019additive}
Pål~C. Ryalen, Mats~J. Stensrud, and Kjetil Røysland.
\newblock The additive hazard estimator is consistent for continuous-time
  marginal structural models.
\newblock {\em Lifetime Data Analysis}, 25(4):611--638, February 2019.

\bibitem{Rytgaard2022}
Helene~C. Rytgaard, Thomas~A. Gerds, and Mark~J. van~der Laan.
\newblock Continuous-time targeted minimum loss-based estimation of
  intervention-specific mean outcomes.
\newblock {\em The Annals of Statistics}, 50(5), October 2022.

\bibitem{roysland2011}
Kjetil Røysland.
\newblock A martingale approach to continuous-time marginal structural models.
\newblock {\em Bernoulli}, 2011.

\bibitem{roysland2012counterfactual}
Kjetil Røysland.
\newblock Counterfactual analyses with graphical models based on local
  independence.
\newblock {\em The Annals of Statistics}, 40(4):2162--2194, 2012.

\bibitem{roysland2022graphical}
Kjetil Røysland, Pål Ryalen, Mari Nygård, and Vanessa Didelez.
\newblock Graphical criteria for the identification of marginal causal effects
  in continuous-time survival and event-history analyses, 2022.

\bibitem{sarvet2025natural}
Aaron~L Sarvet and Mats~J Stensrud.
\newblock The natural value of treatment and its importance for causal
  inference.
\newblock {\em Annual Review of Statistics and Its Application}, 13, 2025.

\bibitem{Sattenrobins2001estimating}
Glen~A. Satten, Somnath Datta, and James Robins.
\newblock Estimating the marginal survival function in the presence of time
  dependent covariates.
\newblock {\em Statistics \& Probability Letters}, 54(4):397–403, October
  2001.

\bibitem{martinussen2006dynamic}
Thomas~H. Scheike and Torben Martinussen.
\newblock {\em Dynamic Regression Models for Survival Data}.
\newblock Springer, NY, 2006.

\bibitem{shiryaevkallsen2002cumulant}
Albert~N. Shiryaev and Jan Kallsen.
\newblock {The cumulant process and Esscher's change of measure}.
\newblock {\em Finance and Stochastics}, 6(4):397--428, 2002.

\bibitem{Sokol2015}
Alexander Sokol and Niels~Richard Hansen.
\newblock Exponential martingales and changes of measure for counting
  processes.
\newblock {\em Stochastic Analysis and Applications}, 33(5):823–843, August
  2015.

\bibitem{sun2022causal}
Jinghao Sun and Forrest~W. Crawford.
\newblock Causal identification for continuous-time stochastic processes, 2022.

\bibitem{vandervaart1998asymptotic}
A.~W. van~der Vaart.
\newblock {\em Asymptotic Statistics}.
\newblock Cambridge Series in Statistical and Probabilistic Mathematics.
  Cambridge University Press, 1998.

\bibitem{vaart2004onrobinsformula}
Aad~W. van~der Vaart.
\newblock On robins' formula.
\newblock {\em Statistics \& Decisions. International Mathematical Journal for
  Stochastic Methods and Models}, 22(3):171–200, 2004.

\bibitem{williams1991probability}
David Williams.
\newblock {\em Probability with Martingales}.
\newblock Cambridge University Press, February 1991.

\bibitem{Yang2021}
Shu Yang.
\newblock Semiparametric estimation of structural nested mean models with
  irregularly spaced longitudinal observations.
\newblock {\em Biometrics}, 78(3):937–949, April 2021.

\bibitem{ying2024functionaldynamic}
Andrew Ying.
\newblock Causality for complex continuous-time functional longitudinal studies
  with dynamic treatment regimes, 2024.

\bibitem{ying2024functional}
Andrew Ying.
\newblock Causality for functional longitudinal data.
\newblock In Francesco Locatello and Vanessa Didelez, editors, {\em Proceedings
  of the Third Conference on Causal Learning and Reasoning}, volume 236 of {\em
  Proceedings of Machine Learning Research}, page 665–687. PMLR, April 1–3
  2024.

\bibitem{Young2024story}
Jessica~G. Young.
\newblock Story-led causal inference.
\newblock {\em Epidemiology}, 35(3):289–294, April 2024.

\bibitem{Young2014natural}
Jessica~G Young, Miguel~A Hernán, and James~M Robins.
\newblock Identification, estimation and approximation of risk under
  interventions that depend on the natural value of treatment using
  observational data.
\newblock {\em Epidemiologic Methods}, 3(1):1--19, Jan 2014.

\bibitem{yu2002construction}
Zhengguo Yu and Mark~J. van~der Laan.
\newblock Construction of counterfactuals and the g-computation formula.
  department of biostatistics.
\newblock {\em University of California}, 2002.

\bibitem{Zhang2011}
Mingyuan Zhang, Marshall~M. Joffe, and Dylan~S. Small.
\newblock Causal inference for continuous-time processes when covariates are
  observed only at discrete times.
\newblock {\em The Annals of Statistics}, 39(1), February 2011.

\end{thebibliography}
\bibliographystyle{plain}

\appendix

\section*{Appendix Overview}
Appendix \ref{appendix: Basic definitions} reviews concepts from stochastic process theory and describes the canonical space of point process realizations. Appendix \ref{appendix: construction of potential outcomes} provides a joint construction of observed and potential outcome processes such that identifying conditions are satisfied. Appendix \ref{appendix: general MPP potential outcomes} characterizes potential outcome processes in the general MPP setting. Appendix \ref{appendix: supporting calculations} contains supporting lemmas and technical results. Proofs of results not found in the main text are in Appendix \ref{appendix: proofs}. An overview of notation is provided in Appendix \ref{appendix: notation}.

\section{Basic definitions and notation}
\label{appendix: Basic definitions}

\subsection{Basic definitions}
This definitions in this section can be found in \cite{JacodShiryaev,protter, LastBrandt1995marked}. It is assumed that the reader is familiar with $\sigma$-algebras, probability measures, probability spaces, etc. We refer to \cite{protter} for a treatment of stochastic integration with semimartingales.

A \textit{filtered probability space} is a probability space $(\Omega, \G, Q)$ equipped with a \textit{filtration} $\G_\T = \{\G_t\}_{t \in [0,T]}$ of increasing (in $t$) $\sigma$-algebras which are sub-$\sigma$-algebras of $\G$. The filtration is \textit{right-continuous} if $\G_t = \cap_{s > t} \G_s$, and \textit{complete} if it includes all null sets of $Q$; i.e. all subsets $A$ of $B$ whenever $B \in \G$ and $Q(B) = 0$. The filtration generated by an MPP is right-continuous (see e.g. \cite[Theorem 2.2.4]{LastBrandt1995marked}), but not complete.

For a stochastic process $Z$ and fixed $\omega \in \Omega$, the function $t \mapsto Z_t(\omega)$ is the \textit{sample path} or \textit{trajectory} of $Z$ corresponding to $\omega$. A process is \textit{cadlag} if almost all its paths are right-continuous with left limits. Two processes are \textit{indistinguishable} if almost all of their sample paths agree. The variation of $Z$ over $\T$ is defined to be
$$ V_\T(Z)(\omega) = \sup_{\pi \in \mathcal P_\T} \sum_{t_i \in \pi }| Z_{t_{i+1}}(\omega) - Z_{t_i}(\omega) |, $$
where $\mathcal P_\T$ are all finite partitions of $\T$. $Z$ is of \textit{finite variation} if almost all its paths are of finite variation, i.e. $V_\T(Z)(\omega) < \infty$ for almost all $\omega$. It is of integrable variation if $E_Q[V_\T(Z)] < \infty$.

A $\T \cup \{ \infty \}$-valued measurable map $\tau$ on $\Omega$ is a $\G_\T$-\textit{optional time} if $\{ \tau \leq t \} \in \G_t$ for $t \in [0,T]$. For a process $Z$, a property $p$ holds \textit{locally} if there is a sequence of optional times $\{ \sigma_n \}_n$ increasing up to $\infty$ $Q$-a.s. such that $~^{\sigma_n}Z I(\sigma_n > 0)$ has the property $p$ for each $n$. For an optional time $\tau$, the \textit{stopped $\sigma$-algebra at $\tau$}, $\G_\tau = \{ A \in \G_T | \{ \tau \leq t \} \cap A \in \G_t \forall t \in \T \}$, is a further $\sigma$-algebra on $\Omega$.

A process $Z$ is $\G_\T$-\textit{adapted} if $Z_t$ is $\G_t$-measurable for each $t \in [0,T]$. The smallest filtration which makes $Z$ adapted is called the \textit{natural filtration} of $Z$. A process is $\G_\T$-\textit{predictable} if it, as a map from $\T \times \Omega$ to $ \mathbb{R}$, is measurable with respect to the predictable $\sigma$-algebra $\mathcal P$ on $\T \times \Omega$. This $\sigma$-algebra is generated, for instance, by the $\G_\T$-adapted processes whose paths are continuous (equivalently left-continuous, with the left limit at $t = 0$ defined as the value of the path at $t = 0$) functions of $t$. A process is $\G_\T$-optional if it is measurable with respect to the optional $\sigma$-algebra. This is the $\sigma$-algebra on $\T \times \Omega$ which is generated by the $\G_\T$-adapted and cadlag processes. When working with natural filtrations, which are not completed, predictability and optionality are not  properties that depend on the probability measure.

A $\G_\T$-adapted and cadlag stochastic process $Z$ is a \textit{martingale} (resp. \textit{submartingale}) with respect to $\G_\T$ and $Q$ if $Z_t$ is integrable under $Q$ and $E_Q[Z_t | \G_s] = Z_s$ $Q$-a.s. (resp. $\geq Z_s$ $Q$-a.s.) for each $s\leq t$ and each $t$. It is square integrable if also $ \sup_{t \in T} E_Q[Z_t^2] < \infty.$ 
$Z$ is a \textit{semimartingale} if there exist processes $M$ and $V$ where $M$ is a local martingale and $V$ is a finite variation process such that $Z = Z_0 + M + V$. If the compensator of an adapted counting process has finite variation, then the difference between the compensator and the counting process is a local martingale. It is evidently also a semimartingale. 

For a semimartingale $Z$ with $Z_0=0$ we denote by $\E(Z)$ the stochastic exponential, also called the Doléans-Dade exponential, of $Z$. This is the unique solution to the SDE 
\begin{align*}
    U_t = 1 + \int_0^t U_{s-}dZ_s,
\end{align*}
where we adopt the assumption that $Z_{t}=0$ when $Z$ is a process and $t < 0$. For semimartingales $Z,U$, the \textit{quadratic covariation process} $[Z,U]$ is defined by
\begin{align*}
    [Z,U]_t := Z_t U_t - \int_0^t Z_{s-} dU_s - \int_0^t U_{s-}dZ_s,
\end{align*}
We use the notational convention $[Z,Z]=[Z]$. The predictable variation process $\langle Z,U \rangle^P$ is, when it exists, the compensator of $[Z, U]$, and it is then the unique predictable process which makes $[Z,U] - \langle Z,U \rangle^P$ a local martingale. 

Consider another measurable space $(E,\mathcal E)$ which we assume is not too "large." \footnote{ e.g. a Blackwell space; see  \cite[p. 65]{JacodShiryaev}.} A \textit{random measure} on $[0,T] \times E$ is a family $\mu  = \{ \mu (\omega; dt \times dx): \omega \in \Omega \}$
of non-negative measures on $([0,T]  \times E, \mathcal B_{[0,T]} \otimes \mathcal E)$ satisfying $\mu(\omega, \{0 \} \times E) = 0$ identically. Let $\tilde  {\mathcal P} = \mathcal P \otimes \mathcal E$ and $\tilde \Omega = \Omega \times [0,T] \times E$. A function on $\tilde \Omega$ that is $\tilde {\mathcal P}$-measurable is called \textit{predictable}. A random measure $\mu$ is  \textit{predictable} if 
$$ \int_{[0,\cdot] \times E} H(\omega, s, x)\mu(\omega, ds \times dx) $$
is predictable for every predictable function $H$ for which the integral exists. Optionality is analogously defined.

Let $(S, \mathcal S)$ and $(Y, \mathcal Y)$ be measurable spaces. A \textit{kernel} from $(S, \mathcal S)$ to $(Y, \mathcal Y)$ (abbreviated: from $S$ to $Y$) is a mapping $K$ from $S \times \mathcal{Y}$ to $[0,\infty]$ such that $K(\cdot, A)$ is measurable for each $A \in \mathcal{Y}$ and $K(x,\cdot)$ is a measure for each $x \in S$.
  
Recall  from Section \ref{section: set-up and notation} that $\F_\T$ is the natural filtration of $N$.  Since $\F_\T$ is right-continuous, we get by \cite[I Theorem 3]{protter} that for any $\F_\T$-adapted and cadlag process $Z$ and open $B \subset \mathbb R$, the real-valued function
\begin{align*}
    \omega \mapsto \inf\{ s>0 : Z_s(\omega) \in B \}
\end{align*}
on $\Omega$ is an $\F_\T$-optional time, a result we repeatedly invoke without mention in this text.

\subsection{Conditional independence and expectation with respect to a $\sigma$-algebra mixed with an event}
\label{subsection: conditional independence and expectation}

For a probability space $(\Omega,\F,P)$ and $A \in \F$, the assignment $P_{A}(\cdot) = P(A \cap \cdot)/P(A)$ is a probability measure on $(\Omega, \F)$ provided that $P(A) > 0$.  
In that case we make the identification 
\begin{align}
    Z \indep_P Y | \G,A \quad \Leftrightarrow \quad Z \indep_{P_{A}} Y | \G, \label{eq: conditional independence event probability measure}
\end{align}
for random variables $Z,Y$ and sub-$\sigma$-algebra $\G$ of $\F$, where we often omit the $P$ subscript in the left independence statement. That is, we understand conditional independence with respect to $\G,A$ under $P$ as conditional independence with respect to $\G$ under $P_A$.

If $Z$ is integrable, we define 
\begin{align*}
    E_P[Z | \G, A] := \frac{E_P[Z I_A | \G]}{E_P[I_A|\G]},
\end{align*}
with the usual convention $"\frac{0}{0}=0"$. As it happens, we are in Appendix \ref{appendix: seq exchangeability proof sketch} interested in logical implications of the above type of conditional independence on partial conditional expectations on the form "$E_P[ZY|\G,A]$", where the variable $Y$ has support in $A$. A result of this is that, if $A$ is a $P$-null set, conditional expectations of this type are degenerate. 

In the calculations in Appendix \ref{appendix: seq exchangeability proof sketch} we will study conditional independences of the type \eqref{eq: conditional independence event probability measure} for sets of the form $A = \{ \Nastop > T_n \}$, and following the above we can assume without loss of generality that $P(A) > 0$, and that $P_{A}(\cdot) = P(A \cap \cdot) / P(A)$ is a probability measure on $(\Omega, \F)$.

\subsection{The canonical space of point process realizations} 
\label{appendix: the canonical space of point process realizations}

We closely follow the presentation in \cite{LastBrandt1995marked} here. Readers who are interested in a more detailed exposition can consult Section 2.2. therein. 

We consider a mark space $(X,\X)$ which is a Borel space, and write $\bar X = X \cup \{ \nabla \}$, where $\nabla$ is the 'irrelevant mark.' Denote by $\N^X$ the space of all double sequences $(t_n,x_n)_{n} \in ( \overline{\mathbb R}_+  \times \overline X )^{\mathbb N}$ which satisfies
\begin{align*}
    0 &< t_1 \leq t_2 \leq \dots \\
    t_n &< t_{n+1} \text{ and } x_n \in X \text{ if } t_n < \infty \\
    t_n &= t_{n+1} \text{ and } x_n = x_{n+1} = \nabla \text{ if } t_n = \infty.
\end{align*}
Following foundational literature on MPPs on the real line \cite{jacobsen2006point, LastBrandt1995marked} there is a one-to-one correspondence between each element $\varphi = (t_n,x_n)_{n} \in \N^X$ and a corresponding counting measure
 $$ \varphi = \sum_{n \geq 1} I(t_n < \infty) \delta_{(t_n, x_n)}$$
 on $(\Rplus\times X, \mathcal B(\Rplus) \otimes \X $ ). If the mark space is $(\I_p, 2^{\I_p})$, it is also represented as a $p$-dimensional counting process,
 \begin{align*}
     \varphi = (\varphi^1, \dots, \varphi^p),
 \end{align*}
 where $\varphi^i$ is recovered from the counting measure and double sequence by $\varphi^i = \varphi((0,\cdot] \times \{ i \}) = \sum_k I(t_k \leq \cdot, x_k = i)$.
 
 On $\N^X$, we let $\H^X$ be the smallest $\sigma$-algebra which makes the projection mappings
\begin{align}
    (t_n,x_n)_{n} &\mapsto t_j, \label{eq: proj 1} \\
    (t_n,x_n)_{n} &\mapsto x_j,  \label{eq: proj 2}
\end{align}
 measurable for each $j$.  Denote by $\N^X_T$ the subset of $\N^X$ defined by the restriction of each element of $\N^X$ to $\T \times X$, where the restriction is defined by the map $( t_n,x_n )_{n } \mapsto ( t_n^\star, x_n^\star)_{n}$, where 
\begin{align*}
    (t_n^\star, x_n^\star) = \begin{cases}
   (t_n, x_n) \text{ if }t_n \leq T \\
   (\infty, \nabla) \text{ otherwise.}
\end{cases}
\end{align*}
This restriction is also defined by the map $\varphi \mapsto \varphi|_T$. Denote by $\H_T^X$ the smallest $\sigma$-algebra which makes the restriction of the projection mappings \eqref{eq: proj 1}-\eqref{eq: proj 2} measurable. 

Following \cite[Remark 2.2.5]{LastBrandt1995marked}, we define the canonical space of point processes on the real line and on the bounded interval $\T$. We allow for a random element $L$ taking values in a measurable space $(S,\S)$ to be included, where, in practice, $L$ signifies variables that are measured at or before time $t = 0$. 
 \begin{definition}[Canonical setting]\label{definition: canonical setting}
    We say that we are in the \emph{canonical setting} if
    \begin{enumerate}[label=\textnormal{(\roman*)}]
        \item \label{enum: canonical space} $(\Omega, \F) = (\N^X, \H^X)$, resp. $(\Omega, \F) = (S\times \N^X, \S \otimes \H^X)$.
        \item \label{enum: N is the identity map} $N=Id$, resp. $(L,N) = Id$.
    \end{enumerate}
\end{definition}
 The canonical setting for realizations restricted to $\T$ is similarly defined. Notationally, we add $T$ subscripts in the relevant positions. We write $\Nd{} := \N^{\I_d}$ and $\Hd{} := \H^{\I_d}$ (resp. $\Nd{T}, := \N_T^{\I_d}$ and $\Hd{T} := \H_T^{\I_d}$) for the case of a point process with marks in $\I_d$ (on $\T$), which can be identified with a $d$-dimensional counting process (on $\T$).

\subsubsection{Canonical compensators.}
 \label{appendix: canonical compensators and optional times}
Several objects that are defined on a filtered measurable space $(\Omega, \F_\T^N, \A)$ for an MPP $N$ with mark space $X$ and $\F_t^N = \sigma(N|_t) \vee \sigma(L)$ can be described in terms of objects defined on the canonical space $(S \times \N^X_T, \S \otimes \H^X_T)$, where $L$ takes values in $(S,\S)$.  
In particular, there are one-to-one correspondences between several measure-theoretic concepts defined on $(\Omega, \F_\T^N, \A, P)$ with so-called "canonical" concepts on the canonical space. We highlight one such result which is relevant for the current text (see e.g. 
Theorem 4.2.2 in \cite{LastBrandt1995marked} for details and proof):

\begin{result}[Canonical compensator] \label{result: canonical compensator}
Suppose that $\Gamma(dt \times dx)$ is an $\F_{\T}^N$- compensating measure of $N$ under $P$. There is a unique  random measure $\alpha$, called \emph{the canonical compensator} of $N$ with respect to $P$ and $\F_\T^N$, which satisfies
\begin{align*}
    \Gamma(dt \times dx) = \alpha(L,N, dt \times dx) \quad P\text{-a.s.}
\end{align*}
$\alpha$ is a kernel from $S \times \N_T^X$ to $\T \times X$ which can be chosen such that, for each  $(l,\varphi) \in S \times \N_T^X$ and $t \in \T$,
\begin{enumerate}
\item \label{item: canonical 1} $\alpha(l,\varphi, \{ 0 \}\times X)=0$
\item \label{item: canonical 2} $\alpha(l,\varphi, \{ t \} \times X )\leq 1$
\item \label{item: canonical 3} $\alpha\big(l,\varphi , [\pi'_\infty(\varphi), \infty ) \times X \big)=0$,
\end{enumerate}
where we have defined $\pi'_\infty(\varphi) = \inf \big\{ t > 0 \big| \alpha\big(l, \varphi, (0,t] \times X \big) = \infty \big\}$. $\alpha$ can also be chosen to be predictable on the canonical space.
\end{result}

\section{A joint construction of observed and potential outcomes such that identifying conditions are satisfied}
\label{appendix: construction of potential outcomes}

In this appendix, we construct observed and potential outcomes so that our exchangeability and consistency conditions are satisfied.

We work on the time domain $\mathbb R_+$ instead of the bounded interval $\T$ used in the main text; all results in this appendix transfer to the bounded interval setting by removing from the canonical compensator and the interventions their mass on $(T,\infty)$. Apart from the different time domains, the setup and notation is identical to the one used in Section \ref{section: identification with multiple interventions in the general MPP setting}. Specifically:
\begin{itemize}
    \item $N = (N^i)_{i \in \mathcal{I}_d}$ is an MPP with mark space $(X, \X)$, where $X$ is a Borel space. Each component $N^i$ operates on its own mark subspace $(X^i, \mathcal{X}^i)$, where the $X^i$'s are pairwise disjoint subsets of $X$. 
    \item $L$ is a baseline random element taking values in $(S, \mathcal{S})$, and $(L, N)$ take values in the canonical space $(S \times \N^X, \S \otimes \H^X)$ (see Appendix \ref{appendix: the canonical space of point process realizations}). Each component $N^i$ of $N$ takes values in the canonical space $(\N_T^{X^i}, \H_T^{X^i})$.
    \item We consider interventions on components indexed by $J \subseteq \mathcal{I}_d$. For each $j \in J$, the intervention $\mathfrak{n}^j$ is a predictable counting measure on the canonical space; a kernel from $S \times \N_T^X$ to $\mathbb R_+ \times X^j$ satisfying $\mathfrak n^j(l, \varphi, dt \times dx) = \mathfrak n^j(l, \varphi|_{t-}, dt \times dx)$. $\mathfrak n^j(l, \varphi)$ is for each $(l, \varphi) \in S \times \N_T^X$ a feasible trajectory for the process $N^j$; thus, the mapping $(l, \varphi) \mapsto \mathfrak n^j(l, \varphi)$ is a map from $S \times \N_T^X$ to $\N_T^{X^j}$.
\end{itemize}


To ensure the processes have the correct laws, we work with canonical compensators \cite[Theorem 4.2.2, Theorem 4.3.9]{LastBrandt1995marked}. We fix a version of the canonical compensator $\alpha$ of $N$ with respect to $P$ and the filtration generated by $N$ and $L$. $\alpha$ is specified in \eqref{eq: alpha canonical compensator}, and it depends on both the baseline variable $L$ and the MPP $N$. 

The canonical compensator and interventions $\mathfrak n^j$ characterize $\tilde{\alpha}$, the canonical compensator of the potential outcomes process. The canonical compensators are related via
\begin{align}
    \alpha(l, \varphi, dt \times dx) &= \sum_{j \in J} \alpha^j(l, \varphi, dt \times dx) + \alpha^{\setminus J}(l, \varphi, dt \times dx), \label{eq: obs can comp marked} \\
    \tilde \alpha(l, \varphi, dt \times dx) &= \sum_{j \in J} \mathfrak{n}^j(l, \varphi, dt \times dx \cap (\mathbb R_+ \times X^j) ) + \alpha^{\setminus J}(l, \varphi, dt \times dx), \label{eq: pot can comp marked}
\end{align}
where $\alpha^j(l, \varphi, dt \times dx) = \alpha(l, \varphi, dt \times dx \cap (\mathbb{R}_+ \times X^j))$, $\alpha^{\setminus J}(l, \varphi, dt \times dx) = \alpha(l, \varphi, dt \times dx \cap (\mathbb{R}_+ \times X^{\setminus J}))$, and $X^{\setminus J} = \cup_{i \in \mathcal I_d \setminus J} X^i$.

Our construction relies on certain orthogonality assumptions. Specifically, for each $j, h \in J$ satisfying $j \neq h$, we assume
\begin{align}
    \begin{split}
        \bar{\alpha}^j(l, \varphi, \{ t \}) \bar{\alpha}^h(l, \varphi, \{ t \}) &= 0,  \\
        \bar{\alpha}^j(l, \varphi, \{ t \}) \bar{\alpha}^{\setminus J}(l, \varphi, \{ t \}) &= 0,   \\
        \bar{\mathfrak n}^j (l, \varphi, \{ t \}) \bar{\alpha}^{\setminus J}(l, \varphi, \{ t \}) &= 0, \\
        \bar{\mathfrak n}^j (l, \varphi, \{ t \}) \bar{\mathfrak n}^h (l, \varphi, \{ t \}) &= 0,
    \end{split}
     \label{eq: ort obs multi marked}
\end{align}
for each $l$, $\varphi$, and $t$. Recall that for a random measure $\mu$ operating on a mark set $X^\mu$, we write $\bar\mu(dt) = \mu(dt \times X^\mu)$ for its total measure. In particular, $\bar{\alpha}^j(l, \varphi, dt) = \alpha^j(l, \varphi, dt \times X^j)$, $\bar{\alpha}^{\setminus J}(l, \varphi, dt) = \alpha^{\setminus J}(l, \varphi, dt \times X^{\setminus J})$ and $\bar{\mathfrak n}^j(l, \varphi, dt) = \mathfrak{n}^j(l, \varphi, dt \times X^j)$. 

The conditions in \eqref{eq: ort obs multi marked} correspond to the regularity conditions \eqref{eq: can comp j h}-\eqref{eq: interv interv} in Theorem \ref{theorem: multiple marked interventions}. These conditions imply the regularity condition \eqref{eq:app-jump} in Appendix \ref{appendix: general MPP potential outcomes} (assuming there are no unmeasured components in Appendix \ref{appendix: general MPP potential outcomes}).

We also assume for each $l, \varphi$ that 
\begin{align}
    \tilde \alpha(l, \varphi, dt \times X)I(\tilde \pi_\infty'(l,\varphi) < t) = 0, \label{eq: tilde alpha explosion regularity condition}
\end{align}
where $\tilde \pi_\infty'(l,\varphi) = \inf\big\{ s > 0 | \tilde \alpha(l,\varphi, (0,s] \times X) = \infty \big\}$. Combining this with \eqref{eq: ort obs multi marked}, the regularity conditions \eqref{eq:app-explosion}-\eqref{eq:app-jump} in Appendix \ref{appendix: general MPP potential outcomes} hold. These conditions ensure a well-defined canonical compensator suitable for constructing an MPP, as described in more detail in that appendix.

We will in the construction have use of the component-specific survival functions
\begin{align}
    U^j(l, \varphi|_s, s, t) &:= \prodi_{s < u \leq t} \big( 1 - \bar{\alpha}^j(l, \varphi|_s, du) \big), \quad j \in J, \label{eq: Uj definition} \\
    U^{\setminus J}(l, \varphi|_s, s, t) &:= \prodi_{s < u \leq t} \big( 1 - \bar{\alpha}^{\setminus J}(l, \varphi|_s, du) \big), \label{eq: UsetminusJ definition} \\
    \mathscr{U}^j(l, \varphi|_s, s, t) &:= \prodi_{s < u \leq t} \big( 1 - \bar{\mathfrak{n}}^j(l, \varphi|_s, du) \big), \quad j \in J. \label{eq: mathscrUj definition}
\end{align}
Due to the orthogonality conditions \eqref{eq: ort obs multi marked}, we have the factorizations
\begin{align}
    U(l, \varphi|_s, s, t) &:= \prodi_{s < u \leq t} (1 - \bar{\alpha}(l, \varphi|_s, du)) = \prod_{j \in J} U^j(l, \varphi|_s, s, t) U^{\setminus J}(l, \varphi|_s, s, t), \label{eq: U definition multi marked} \\
    \tilde{U}(l, \varphi|_s, s, t) &:= \prodi_{s < u \leq t} (1 - \bar{\tilde \alpha}(l, \varphi|_s, du)) =  \prod_{j \in J} \mathscr{U}^j(l, \varphi|_s, s, t) U^{\setminus J}(l, \varphi|_s, s, t). \label{eq: tilde U decomp multi marked}
\end{align}
We will in the construction also use the kernels $K, \tilde K$ characterized by the disintegrations
\begin{align}
    \alpha(l,\varphi, dt \times dx )  &=  K(l, \varphi, t, dx) \bar \alpha(l, \varphi, dt), \label{eq: K kernel definition} \\
    \tilde \alpha(l,\varphi, dt \times dx )  &=  \tilde K(l, \varphi, t, dx) \bar {\tilde \alpha}(l, \varphi, dt), \label{eq: tilde K kernel definition} 
\end{align}
and the derived normalized kernels
\begin{align}
    K^j(l,\varphi, t, dx) &:= \frac{K(l,\varphi, t, dx \cap X^j)}{K(l,\varphi, t, X^j)}, \label{eq: Kj kernel definition} \\
    K^{\setminus J}(l,\varphi, t, dx) &:= \frac{K(l,\varphi, t, dx \cap X^{\setminus J})}{K(l,\varphi, t, X^{\setminus J})}. \label{eq: KsetminusJ kernel definition}
\end{align}


\begin{lemma}[No shared mass]\label{lemma: no shared mass II marked}
    Suppose that \eqref{eq: ort obs multi marked} holds. With the notation and variables defined in Proposition \ref{proposition: mcp construction of observed and potential outcomes marked} (see Algorithm \ref{alg:scheme multi marked}), we have for each $j, h \in J$ with $j \neq h$, and for each $k \geq 1$, that
    \begin{align*}
        P'(T_k^j = T_k^h < \infty) &= P'(T_k^j = T_k^{\setminus J} < \infty) \\
        &= P'(\mathscr T_k^j = T_k^{\setminus J} < \infty ) = P'(\mathscr T_k^j = \mathscr T_k^h < \infty) = 0.
    \end{align*}
\end{lemma}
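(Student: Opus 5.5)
The plan is to prove each of the four equalities by conditioning on the state of the construction in Algorithm \ref{alg:scheme multi marked} just before the $k$'th latent times are drawn. I expect the construction to have the following structure; if it differs, this is the first thing to adjust. At stage $k$ the algorithm has produced a baseline value $l$, a restricted history $\varphi|_s$ and a current time $s$ (the previous event time, or $0$ when $k=1$). The latent times $T_k^j$ ($j\in J$) and $T_k^{\setminus J}$ are then drawn from the survival functions $U^j(l,\varphi|_s,s,\cdot)$ and $U^{\setminus J}(l,\varphi|_s,s,\cdot)$ in \eqref{eq: Uj definition}--\eqref{eq: UsetminusJ definition}. The times $\mathscr T_k^j$ are drawn from $\mathscr U^j(l,\varphi|_s,s,\cdot)$ in \eqref{eq: mathscrUj definition}. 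Draws belonging to different component indices use independent randomizers, so they are conditionally independent given the stage-$(k-1)$ $\sigma$-algebra $\mathcal G_{k-1}$. By the tower property it then suffices to show that each conditional probability, e.g. $P'(T_k^j=T_k^h<\infty\mid\mathcal G_{k-1})$, vanishes almost surely.

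The key elementary fact I will use is the following. Let $R_1,R_2$ be independent $(s,\infty]$-valued random times with distribution functions $F_1,F_2$. Then
\[
P(R_1=R_2<\infty)=\sum_{t}\Delta F_1(t)\,\Delta F_2(t),
\]
where the sum runs over the common atoms. This follows from Fubini: integrate $P(R_2=r)$ against the law of $R_1$, and note that $P(R_2=r)=0$ off the countable atom set of $F_2$.

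The atoms of a time with survival function $\prodi_{s<u\le t}(1-\bar\mu(du))$ are exactly at the jumps of $\bar\mu$. Their masses are $\prodi_{s<u<t}(1-\bar\mu(du))\,\Delta\bar\mu(t)$, as in \eqref{eq: P regular conditional distribution new}. Applying the fact to the pair $(T_k^j,T_k^h)$, every summand contains the factor $\bar\alpha^j(l,\varphi|_s,\{t\})\,\bar\alpha^h(l,\varphi|_s,\{t\})$, which is zero by the first line of \eqref{eq: ort obs multi marked}. The same argument applies to the other pairs. For $(T_k^j,T_k^{\setminus J})$ the relevant factor is $\bar\alpha^j\bar\alpha^{\setminus J}(\{t\})$. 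For $(\mathscr T_k^j,T_k^{\setminus J})$ it is $\bar{\mathfrak n}^j\bar\alpha^{\setminus J}(\{t\})$. For $(\mathscr T_k^j,\mathscr T_k^h)$ it is $\bar{\mathfrak n}^j\bar{\mathfrak n}^h(\{t\})$. Each of these vanishes by the remaining lines of \eqref{eq: ort obs multi marked}.

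For the $\mathscr T$'s there is a simplification. Since $\mathfrak n^j$ is a counting measure, $\mathscr U^j$ takes only the values $0$ and $1$. Hence $\mathscr T_k^j$ is $\mathcal G_{k-1}$-measurable: it is the first atom of $\bar{\mathfrak n}^j(l,\varphi|_s,\cdot)$ after $s$. The claim therefore reduces to the second time having no atom at that point, which is again exactly the orthogonality condition. In the pair $(\mathscr T_k^j,\mathscr T_k^h)$ both times are deterministic given $\mathcal G_{k-1}$. They coincide only at a common atom of $\bar{\mathfrak n}^j$ and $\bar{\mathfrak n}^h$, and no such atom exists by \eqref{eq: interv interv}-type orthogonality.

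One technical point is that the survival functions are evaluated at the frozen history $\varphi|_s$, so the compensator seen over $(s,t]$ is the predictable kernel with no further events. This is what the algorithm uses, and it matches the predictability of $\alpha$ and $\mathfrak n^j$; the orthogonality hypotheses are stated for all $\varphi$, so they apply to the frozen history as well. A second point is measurability of the conditioning, which I will handle by disintegrating with respect to the regular conditional laws from the algorithm.

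The main obstacle I expect is verifying that the algorithm's randomization really gives conditional independence across distinct component indices. In particular, $T_k^j$ and $\mathscr T_k^j$ may share a randomizer, but the lemma never compares a component with itself, so this should not matter. If the algorithm instead draws $\mathscr T_k^j$ via the same uniform as $T_k^j$, I would reduce to conditional independence between the $j$- and $h$- (or $\setminus J$-) randomizers, which suffices because each pair in the lemma involves two different indices.
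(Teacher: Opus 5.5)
Your argument is correct for three of the four pairs. $(T_k^j,T_k^h)$ and $(T_k^j,T_k^{\setminus J})$ are drawn from the same observed history, and $(\mathscr T_k^j,\mathscr T_k^h)$ is computed from the same potential history. Conditioning on $\mathcal G_{k-1}:=\sigma\big(L',\{\xi_z^h,\eta_z^h,\xi_z^{\setminus J},\eta_z^{\setminus J},\tilde\xi_z,\tilde\eta_z\}_{z<k,h\in J}\big)$ then gives your atom-product formula, because by \ref{enum: mutual indep multi marked}--\ref{enum: C_k independence multi marked} the step-$k$ uniforms are mutually independent and independent of this $\sigma$-algebra.

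The gap is the mixed pair $(\mathscr T_k^j,T_k^{\setminus J})$. Your guess that all latent times share one frozen state $(l,\varphi|_s,s)$ is not what Algorithm \ref{alg:scheme multi marked} does:
\begin{itemize}
\item In Step \ref{step:set_mathscr_Tkj}, $\mathscr T_k^j$ is the first atom after $\tilde T'_{k-1}$ of $\bar{\mathfrak n}^j(L',\tilde N'|_{\tilde T'_{k-1}},\cdot)$, so it is built from the \emph{potential-outcome} history.
\item In Step \ref{step:draw_TksetminusJ}, $T_k^{\setminus J}$ has survival function $U^{\setminus J}(L',N'|_{T'_{k-1}},T'_{k-1},\cdot)$, so it is built from the \emph{observed} history.
\end{itemize}
The two histories agree only on $C_k$. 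Off $C_k$, your step ``the second time has no atom at $\mathscr T_k^j$, by orthogonality'' fails. The condition \eqref{eq: ort obs multi marked} controls $\bar{\mathfrak n}^j(l,\varphi,\{t\})\bar\alpha^{\setminus J}(l,\varphi,\{t\})$ only when both factors are evaluated at one and the same $\varphi$.

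This cannot be repaired, because the third equality is false as stated. Take the following setup, for which every line of \eqref{eq: ort obs multi marked} holds for every $\varphi$:
\begin{itemize}
\item $J=\{a\}$ and one further component $y$;
\item $\alpha^a(\varphi,dt)=I(t\le 1/2,\ \varphi^a_{t-}=0)\,dt$;
\item $\alpha^y(\varphi,dt)=\tfrac12 I(\varphi^a_{1-}\ge 1)\,\delta_2(dt)$;
\item the predictable intervention $\mathfrak n^a(\varphi,dt)=I(\varphi^a_{1-}=0)(\delta_1+\delta_2)(dt)$.
\end{itemize}
Run the algorithm:
\begin{itemize}
\item At $k=1$: $\mathscr T_1^a=1$ and $T_1^y=\infty$, so $\tilde T_1'=1$. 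Meanwhile $T_1'=T_1^a\in(0,1/2]\cup\{\infty\}$, so $C_2=\emptyset$.
\item At $k=2$: $\tilde N'|_1=\{(1,a)\}$ gives $\mathscr T_2^a=2$.
\item On $\{T_1^a<\infty\}$ the observed history has a treatment before time $1$, so $T_2^y=2$ with conditional probability $1/2$.
\end{itemize}
Hence $P'(\mathscr T_2^a=T_2^y<\infty)=\tfrac12(1-e^{-1/2})>0$.

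The paper's one-line proof (``follows immediately from the definitions'') overlooks the same point. What does hold is
\[
P'(\mathscr T_k^j=T_k^{\setminus J}<\infty,\ C_k)=0 .
\]
This is all that is used later, since the mixed pair is compared only inside the $I_{C_k}$ branches of Steps \ref{step:set_tilde_Tkprime} and \ref{step:set_tilde_Xkprime}. Your argument proves this corrected claim once you add two observations:
\begin{itemize}
\item $C_k$ is $\mathcal G_{k-1}$-measurable.
\item On $C_k$, the definition of $C_k$ directly gives $\tilde N'|_{\tilde T'_{k-1}}=N'|_{T'_{k-1}}$ and $\tilde T'_{k-1}=T'_{k-1}$. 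Both times then refer to the same $(L',\varphi|_s,s)$, and the third line of \eqref{eq: ort obs multi marked} applies.
\end{itemize}
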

\begin{proof}
    This follows immediately from the definitions in Algorithm \ref{alg:scheme multi marked} under the orthogonality assumptions \eqref{eq: ort obs multi marked}.
\end{proof}

\begin{proposition}[Construction of observed and potential outcomes that satisfy identification conditions]\label{proposition: mcp construction of observed and potential outcomes marked}
    Suppose the canonical compensators $\alpha$ and $\tilde{\alpha}$ in \eqref{eq: obs can comp marked}-\eqref{eq: pot can comp marked} satisfy the orthogonality conditions \eqref{eq: ort obs multi marked}. Consider a probability space $(\Omega', \F', P')$ supporting a random element $L'$ at baseline such that $P'(L' \in \cdot) = P(L \in \cdot)$. 
    
    Construct MPPs $N' = (T_k', X_k')_{k \geq 1}$ and $\tilde N' = (\tilde T_k', \tilde X_k')_{k \geq 1}$ on $(\Omega', \F', P')$ via Algorithm \ref{alg:scheme multi marked} below, using the random variables
    \begin{align}
        \{\xi_k^j, \xi_k^{\setminus J}, \tilde \xi_k, \eta_k^j, \eta_k^{\setminus J}, \tilde \eta_k\}_{k \geq 1, j \in J}. \label{eq: uniform random variables marked}
    \end{align}
    Suppose that each random variable in \eqref{eq: uniform random variables marked} is uniformly distributed on $[0,1]$ under $P'$, and that the following independence conditions hold under $P'$: \footnote{The independence conditions \ref{enum: mutual indep multi marked}-\ref{enum: exch indep multi marked} can e.g. be made to hold by choosing the randomizers in \eqref{eq: uniform random variables marked} to be mutually independent and independent of $L'$ under $P'$.}
\begin{enumerate}[label=(I\arabic*)]
    \item \label{enum: mutual indep multi marked}
    For each $k \geq 1$:
    \begin{align*}
        &\text{The variables }\{ \xi_k^{j} , \eta_k^{j}, \xi_k^{\setminus J} , \eta_k^{\setminus J}, \tilde \xi_k, \tilde \eta_k, L'\}_{j \in J} \text{ are mutually independent}, \\
        &\text{and } \{\xi_k^j, \xi_k^{\setminus J}\}_{j\in J} \text{ are mutually independent conditional on } \F'_{T_{k-1}'},
    \end{align*}
    where $\F'_{T_{k-1}'} = \sigma(L', N'|_{T_{k-1}'})$ is the stopped $\sigma$-algebra at time $T_{k-1}'$. 
    
    \item \label{enum: C_k independence multi marked}
    For each $k \geq 1$:
    \begin{align*}
        \{\xi_k^j, \eta_k^j, \xi_k^{\setminus J}, \eta_k^{\setminus J}, \tilde \xi_k, \tilde \eta_k\}_{j \in J} 
        \indep_{P'} 
        \sigma\big(L', \{\xi_z^j, \eta_z^j, \xi_z^{\setminus J}, \eta_z^{\setminus J}, \tilde \xi_z, \tilde \eta_z\}_{z < k, j \in J}\big).
    \end{align*}
    
   \item \label{enum: exch indep multi marked}
    For each $k \geq 1$ and $j \in J$:
      \begin{align*}
        \xi_k^j &\indep_{P'} \{ \xi_k^h \}_{h \neq j} \cup \{\xi_k^{\setminus J}, \eta_k^{\setminus J}\} \cup \{ \xi_m^h, \xi_m^{\setminus J}, \eta_m^h, \eta_m^{\setminus J}, \tilde \xi_m, \tilde \eta_m \}_{h \in J, m > k} |\F'_{T_{k-1}'}, \\
        \eta_k^j &\indep_{P'} \{ \xi_k^h \}_{h \neq j} \cup \{\xi_k^{\setminus J}, \eta_k^{\setminus J}\} \cup \{ \xi_m^h, \xi_m^{\setminus J}, \eta_m^h, \eta_m^{\setminus J}, \tilde \xi_m, \tilde \eta_m \}_{h \in J, m > k} |\F'_{T_{k-1}'} \vee \sigma(T_k^j),
      \end{align*}
      where $T_k^j$ is as in Step \ref{step:draw_Tkj_general} in Algorithm \ref{alg:scheme multi marked}.
\end{enumerate}
    Consider deviation times as defined in \eqref{eq: tau j deviation time}; i.e., define for each $j \in J$
    $$
    \tau'^j := \inf\{s > 0 |  N'^j((0,s] \times D) \neq \mathfrak{n}^j(L',N',((0,s] \times D) \text{ for some } D \in X^j\},
    $$
    and set $\tau'^J := \wedge_{j \in J} \tau'^j$. Then, 
    \begin{enumerate}[label=(P\arabic*)]
        \item \label{enum: prop consistency marked} 
        The observed and potential outcome processes coincide until just before the first regime deviation time, i.e.
        $$
        N'|_t I(\tau'^J > t) = \tilde N'|_t I(\tau'^J > t) \quad \text{for each } t \geq 0;
        $$
        
        \item \label{enum: prop exchangeability marked} 
        $~^{\tau'^J}\alpha^j(L', N', \cdot)$ defines a compensator of $~^{\tau'^J}N'^j$ with respect to both $\{\F_t'\}_t$ and $\{\F_t' \vee \sigma(\tilde N') \}_t$ under $P'$ for each $j \in J$, where $\{\F_t'\}_t$ is the filtration generated by $(L', N')$;
        
        \item \label{enum: prop correct law marked} The canonical compensator of 
        $N'$ with respect to $\{ \F_t' \}_t$ and $P'$ is $\alpha$, and the canonical compensator of 
        $\tilde N'$ with respect to $\{ \tilde \F_t' \}_t$ and $P'$ is $\tilde \alpha$, where $\{ \tilde \F_t' \}_t$ is the filtration generated by $(L', \tilde N')$. In particular, $\text{Law}_{P'}(L', N') = \text{Law}_P(L, N)$  and $\text{Law}_{P'}(L', \tilde N') = \text{Law}_P(L, \tilde N)$.
    \end{enumerate}
\end{proposition}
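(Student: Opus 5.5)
The plan is to verify \ref{enum: prop correct law marked}, \ref{enum: prop consistency marked} and \ref{enum: prop exchangeability marked} by induction over the event index $k$, working on the stochastic intervals $(T_{k-1}', T_k']$. I am assuming Algorithm \ref{alg:scheme multi marked} works as follows; its exact form is not shown above, and my argument depends on this reading. Given $\F'_{T'_{k-1}}$, it draws latent competing times by inverse-transform sampling:
\begin{itemize}
\item $T_k^j$ from $U^j(L',N'|_{T'_{k-1}},T'_{k-1},\cdot)$ using $\xi_k^j$, for each $j \in J$;
\item $T_k^{\setminus J}$ from $U^{\setminus J}$ using $\xi_k^{\setminus J}$;
\item the intervention times $\mathscr T_k^j$, read off deterministically from $\bar{\mathfrak n}^j$, whose survival function is $\mathscr U^j$.
\end{itemize}
The observed next event is $T'_k = \min_j T_k^j \wedge T_k^{\setminus J}$, with its mark drawn from $K^j$ or $K^{\setminus J}$ using $\eta$. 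The potential process shares $T_k^{\setminus J}$ and $\eta_k^{\setminus J}$ with the observed process, and uses $\mathscr T_k^j$ in place of $T_k^j$, for as long as the two histories agree. After the histories separate, it proceeds with its own randomizers $\tilde\xi_k, \tilde\eta_k$.

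\textbf{Step 1 (correct laws).} Conditional on $\F'_{T'_{k-1}}$, assumption \ref{enum: mutual indep multi marked} makes the latent times independent. The survival function of their minimum is therefore the product $\prod_j U^j\,U^{\setminus J}$. By the factorization \eqref{eq: U definition multi marked}, this product equals $U$, the product-integral of $1-\bar\alpha$. Lemma \ref{lemma: no shared mass II marked} rules out ties, so the identity of the component attaining the minimum, together with the mark kernels $K^j, K^{\setminus J}$, reproduces the kernel $\alpha(dt\times dx)$ of \eqref{eq: K kernel definition}. The conditional law of $(T'_k,X'_k)$ is then exactly the kernel $Z^{(k-1)}$ of Lemma \ref{lemma: P fidi dist}, so the canonical compensator of $N'$ is $\alpha$. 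Uniqueness of the MPP law given its canonical compensator \cite[Theorem 8.2.3]{LastBrandt1995marked} then gives $\mathrm{Law}(L',N') = \mathrm{Law}(L,N)$. The same argument, using \eqref{eq: tilde U decomp multi marked} and \eqref{eq: tilde K kernel definition}, gives \ref{enum: prop correct law marked} for $\tilde N'$. The regularity conditions \eqref{eq: tilde alpha explosion regularity condition} and \eqref{eq: ort obs multi marked} keep the construction non-explosive and the jump sizes at most one.

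\textbf{Step 2 (consistency \ref{enum: prop consistency marked}).} I argue by induction on $k$. On $\{\tau'^J > T'_{k-1}\}$, suppose the histories $N'|_{T'_{k-1}}$ and $\tilde N'|_{T'_{k-1}}$ coincide. Then predictability of $\mathfrak n^j$ forces $\mathscr T_k^j$ and the prescribed marks to be the same functions of that common history. Both processes use the same $T_k^{\setminus J}$ and $\eta_k^{\setminus J}$. Consequently, the next events of $N'$ and $\tilde N'$ agree unless the observed $J$-event differs in time or mark from the prescription, and by \eqref{eq: tau j deviation time} such a difference is exactly the occurrence of $\tau'^J$. This gives agreement on $[0,t]$ whenever $\tau'^J > t$.

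\textbf{Step 3 (exchangeability \ref{enum: prop exchangeability marked}), the main obstacle.} Here I need the compensator of $~^{\tau'^J}N'^j$ in the enlarged filtration $\{\F'_t\vee\sigma(\tilde N')\}_t$. By Step 2, up to $\tau'^J$ the process $\tilde N'$ is a measurable function of $L'$, the non-$J$ latent variables $\{\xi^{\setminus J}_m,\eta^{\setminus J}_m\}$, the future randomizers, and $\tilde\xi,\tilde\eta$. None of these involve $\xi_k^j$ or $\eta_k^j$ except through the history $\F'_{T'_{k-1}}$. Assumption \ref{enum: exch indep multi marked} then makes $\xi_k^j$ conditionally independent of $\sigma(\tilde N')$ given $\F'_{T'_{k-1}}$. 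Hence, on $(T'_{k-1},T'_k]$, the conditional hazard of $T_k^j$ given $\F'_{T'_{k-1}}\vee\sigma(\tilde N')$ is still $\bar\alpha^j$. Knowing $\tilde N'$ may reveal the censoring time $T_k^{\setminus J}$, but that time is independent of $T_k^j$, so this knowledge only censors $T_k^j$ and does not alter its hazard. Likewise, by the second line of \ref{enum: exch indep multi marked}, the mark variable $\eta_k^j$ keeps its kernel $K^j$ given $T_k^j$.

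With these facts in hand, I would apply Jacod's piecewise formula for compensators between successive jump times, $\int_{(T'_{k-1},t]} dF/(1-F_{-})$, in both filtrations, and sum over $k$ up to $\tau'^J$. The delicate point is making this precise. The enlarged filtration is an initial enlargement that is not generated by $N'$ alone, so I would first verify that $T'_k$ are still its jump times and that the stopped $\sigma$-algebras satisfy $\F'_{T'_k}\vee\sigma(\tilde N')$ in the form Jacod's lemma requires. If that route fails, the fallback is to check the defining identity \eqref{eq: compensator introduction} directly on predictable rectangles $A\times(s,t]$ with $A\in\F'_s\vee\sigma(\tilde N')$.
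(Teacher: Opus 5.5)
\textbf{Where the argument breaks.} Your architecture matches the paper's: inverse-transform competing latent times, the factorization \eqref{eq: U definition multi marked} for the law of $N'$, a two-case induction for consistency, and, for exchangeability, conditional independence of $(T_k^j,V_k^j)$ on $\{\tau'^J>T'_{k-1}\}$ fed into Jacod's formula in both filtrations so that the competing survival factors cancel. The step that fails as written is your justification of that conditional independence. You describe $\tilde N'$ only up to $\tau'^J$, but $\sigma(\tilde N')$ contains the whole trajectory. The algorithm switches $\tilde N'$ to its private randomizers at a step that depends on $\xi_k^j$ and $\eta_k^j$. On $C_k$, the event $C_{k+1}$ coincides with $\{(T'_k,X'_k)=(\tilde T'_k,\tilde X'_k)\}$, which involves $T_k^j$ and $V_k^j$. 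Moreover, $\tilde T'_{k+1}$ equals $\wedge_h\mathscr T^h_{k+1}\wedge T^{\setminus J}_{k+1}$ on $C_{k+1}$ but $\tilde T^*_{k+1}$ on $C_{k+1}^c$. For instance, with $J=\{a\}$ and $\mathfrak n^a=0$, $\tilde T'_2$ is built from $\xi_2^{\setminus J}$ on $\{T_1^{\setminus J}<T_1^a\}$ and from $\tilde\xi_2$ on its complement. Hence $\tilde N'$ is not a measurable function of variables that (I3) makes conditionally independent of $\xi_k^j$, and ``(I3) plus measurable maps'' does not deliver the claim. The paper's functional statement \eqref{eq: variables functions of} is open to exactly the same objection. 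The same switching is why ``the same argument'' does not immediately give (P3) for $\tilde N'$ in your Step 1: $I_{C_k}$ is not measurable with respect to the potential history, and on $C_k$ the shared time $T_k^{\setminus J}$ was drawn from the observed history.

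\textbf{How to repair it.} Both points follow from one distributional fact. Put $\mathcal R_m:=\sigma(L',\{\xi_z^h,\eta_z^h,\xi_z^{\setminus J},\eta_z^{\setminus J},\tilde\xi_z,\tilde\eta_z\}_{z\le m,\,h\in J})$. Three facts hold: $C_m$, $N'|_{T'_{m-1}}$ and $\tilde N'|_{\tilde T'_{m-1}}$ are $\mathcal R_{m-1}$-measurable; the two histories coincide on $C_m$; and by (I1)--(I2) the step-$m$ randomizers are uniform, mutually independent and independent of $\mathcal R_{m-1}$. Hence the conditional law of $(\tilde T'_m,\tilde X'_m)$ given $\mathcal R_{m-1}$ is the kernel $\tilde U(t-)\,\tilde\alpha(dt\times dx)$ evaluated at $(L',\tilde N'|_{\tilde T'_{m-1}})$, on $C_m$ and on $C_m^c$ alike. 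Conditioning down to $\tilde\F'_{\tilde T'_{m-1}}$ gives (P3) for $\tilde N'$; the paper does this with $\mathcal G=\tilde\F'_{\tilde T'_k}\vee\sigma(I_{C_{k+1}})$, using (I2) rather than (I1). Iterating gives $E_{P'}[h(\tilde N')\mid\mathcal R_k]=\Phi_h(L',\tilde N'|_{\tilde T'_k})$ for bounded measurable $h$. On $\{\tau'^J>T'_{k-1}\}$, $\tilde N'|_{\tilde T'_k}$ is a function of $(L',N'|_{T'_{k-1}},\xi_k^{\setminus J},\eta_k^{\setminus J})$. So conditioning through $\mathcal R_k\supseteq\F'_{T'_{k-1}}$ and then applying (I3) yields the \emph{joint} independence \eqref{eq: Tkj cond indep}, $T_k^j\indep(T_k^h,T_k^{\setminus J},\tilde N')_{h\neq j}$ given $\F'_{T'_{k-1}}$ on that event. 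This joint statement is what the cancellation needs; independence from $\tilde N'$ and from $T_k^{\setminus J}$ separately would not suffice. The mark analogue uses the second line of (I3), which you also need, but do not cite, for the mark factorization of $N'$.

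\textbf{On the point you flag as delicate.} It is not delicate: $\F'_t\vee\sigma(\tilde N')=\sigma(L',\tilde N')\vee\sigma(N'|_t)$ is the internal history of $N'$ with initial $\sigma$-field $\sigma(L',\tilde N')$. So \cite[Theorem 4.1.11]{LastBrandt1995marked} applies with the same jump times $T'_k$, as in the paper, and no fallback is required.
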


\begin{algorithm}
\caption{Scheme for generating observed and potential outcome processes such that identification conditions are satisfied}
\label{alg:scheme multi marked}
\begin{algorithmic}[1]
\State Initialize $T_0' = \tilde T_0' = 0$. 
\For{$k=1, 2, \cdots$}
    \State Define $C_k = \{ \omega \in \Omega' | T_i'(\omega)=\tilde T_i'(\omega) \text{ and } X_i'(\omega) = \tilde X_i'(\omega) \text{ for all } i = 1, \dots,  k-1 \}$ \label{step:set_Ck}
    
    \For{$j \in J$}
        \State Draw $T_k^{j}$ from $U^j(L', N'|_{T_{k-1}'}, T_{k-1}', \cdot)$ using $\xi_k^j$ \label{step:draw_Tkj_general}
        \State Set $\mathscr T_k^{j} = \inf\{ s > \tilde T_{k-1}' : \mathfrak{n}^j(L', \tilde N'|_{\tilde T_{k-1}'}, (\tilde T_{k-1}', s] \times X^j) \neq 0 \}$ \label{step:set_mathscr_Tkj}
    \EndFor
    
    \State Draw $T_k^{\setminus J}$ from $U^{\setminus J}(L', N'|_{T_{k-1}'}, T_{k-1}', \cdot)$ using $\xi_k^{\setminus J}$ \label{step:draw_TksetminusJ}
    \State Draw $\tilde T_k^{*}$ from $\tilde U(L', \tilde N'|_{\tilde T_{k-1}'}, \tilde T_{k-1}', \cdot)$ using $\tilde \xi_k$ \label{step:draw_tilde_Tkstar}
    
    \State Set $T_k' = \wedge_{h \in J} T_k^h \wedge T_k^{\setminus J}$ \label{step:set_Tkprime}
    \State Set $\tilde T_k' = \wedge_{h \in J} \mathscr T_k^h \wedge T_k^{\setminus J}  I_{C_k} + \tilde T_k^{*} I_{C_k^c}$ \label{step:set_tilde_Tkprime}
    
    where $C_k^c$ denotes the complement of $C_k$ in $\Omega'$
    
    \For{$j \in J$}
        \State Set $V_k^{j} = \nabla$ and $\mathscr V_k^{j} = \nabla$ \label{step:initialize_Vkj_mathscrVkj}
        \State On $\{T_k^{j} < \infty\}$: draw $V_k^{j}$ from $K^j(L', N'|_{T_{k-1}'}, T_{k}^j, \cdot)$ using $\eta_k^j$ \label{step:draw_Vkj}
        \State On $\{\mathscr T_k^{j} < \infty\}$: set $\mathscr V_k^{j}$ to be the unique $x \in X^j$ such that $\mathfrak{n}^j(L', \tilde N'|_{\tilde T_{k-1}'}, \{\mathscr T_k^{j}\} \times \{x\}) > 0$ \label{step:set_mathscr_Vkj}
    \EndFor
    
    \State Set $V_k^{\setminus J} = \nabla$ and $\tilde X_k^{*} = \nabla$ \label{step:initialize_VksetminusJ_tilde_Xkstar}
    \State On $\{T_k^{\setminus J} < \infty\}$: draw $V_k^{\setminus J}$ from $K^{\setminus J}(L', N'|_{T_{k-1}'}, T_{k}^{\setminus J}, \cdot)$ using $\eta_k^{\setminus J}$ \label{step:draw_VksetminusJ}
    \State On $\{\tilde T_k^{*} < \infty\}$: draw $\tilde X_k^{*}$ from $\tilde K(L', \tilde N'|_{\tilde T_{k-1}'}, \tilde T_{k}^*, \cdot)$ using $\tilde \eta_k$ \label{step:draw_tilde_Xkstar}
    
    \State Set $X_k' = \sum_{h \in J} V_k^{h} I(T_k^{h} = T_k') + V_k^{\setminus J} I(T_k^{\setminus J} = T_k')$ \label{step:set_Xkprime}
    \State Set $\tilde X_k' = \big[ \sum_{h \in J} \mathscr V_k^{h} I(\mathscr T_k^{h} = \tilde T_k') + V_k^{\setminus J} I(T_k^{\setminus J} = \tilde T_k') \big] I_{C_k} + \tilde X_k^{*} I_{C_k^c}$ \label{step:set_tilde_Xkprime}
\EndFor
\end{algorithmic}
\end{algorithm}

\begin{proof}

\noindent \textbf{Conditional independencies.} 

\ \\
From the independence assumptions in Proposition \ref{proposition: mcp construction of observed and potential outcomes marked} and the definitions in Algorithm \ref{alg:scheme multi marked}, we can deduce the following:
\begin{enumerate}[label=(D\arabic*), leftmargin=*]
    \item \label{enum: D1} For each $k \geq 1$, $\{T_k^h, T_k^{\setminus J}\}_{h \in J}$ are mutually independent conditional on $\F'_{T_{k-1}'}$ under $P'$,
    \item \label{enum: D2} For each $k \geq 1$ and $j \in J$,
    $$
    V_k^j \indep_{P'} \{T_k^h\}_{h \neq j} \cup \{T_k^{\setminus J}\} \big| \F'_{T_{k-1}'} \vee \sigma(T_k^j).
    $$
\end{enumerate}

Property \ref{enum: D1} follows from the mutual independence of $\{\xi_k^h, \xi_k^{\setminus J}\}_{h\in J}$ given $\F'_{T_{k-1}'}$ (as assumed in \ref{enum: mutual indep multi marked}), since for each $j \in J$, $T_k^j$ is a function of $(L', N'|_{T_{k-1}'}, \xi_k^j)$ (Step \ref{step:draw_Tkj_general} of Algorithm \ref{alg:scheme multi marked}), and similarly $T_k^{\setminus J}$ is a function of $(L', N'|_{T_{k-1}'}, \xi_k^{\setminus J})$ (Step \ref{step:draw_TksetminusJ}). Because $\F_{T_{k-1}'}' = \sigma(L', N'|_{T_{k-1}'})$, and measurable transformations preserve conditional independencies, \ref{enum: D1} follows. Formally, this deduction can be made using a monotone class argument.

Property \ref{enum: D2} can be deduced using similar techniques. From \ref{enum: exch indep multi marked}, and the decomposition rule of conditional independence, we have
$$\eta_k^j \indep_{P'} \{ \xi_k^h \}_{h \neq j} \cup \{\xi_k^{\setminus J}\} | \F'_{T_{k-1}'} \vee \sigma(T_k^j).$$
$T_k^h$ is a function of $(L', N'|_{T_{k-1}'}, \xi_k^h)$ by Step \ref{step:draw_Tkj_general} in Algorithm \ref{alg:scheme multi marked} (with $h$ in place of $j$), and $T_k^{\setminus J}$ is a function of $(L', N'|_{T_{k-1}'}, \xi_k^{\setminus J})$ by Step \ref{step:draw_TksetminusJ}. Moreover, $V_k^j$ is a function of $(L', N'|_{T_{k-1}'}, T_k^j, \eta_k^j)$ by Step \ref{step:draw_Vkj}. Since measurable transformations preserve conditional independencies, and $\F_{T_{k-1}'}' = \sigma(L', N'|_{T_{k-1}'})$, \ref{enum: D2} follows, where the result can be shown from first principles via a monotone class argument.

\noindent \textbf{Verifying the consistency condition \ref{enum: prop consistency marked}.}
\ \\
We verify that Algorithm \ref{alg:scheme multi marked} respects the consistency condition \ref{enum: prop consistency marked}. Consider first the initial time step. We show that, if $\tau'^J > T_1'$, then
\begin{align}
    (T_1', X_1') = (\tilde T_1', \tilde X_1'). \label{eq: cons first time step simple}
\end{align}
Suppose then that $\tau'^J > T_1'$, and note that $C_1 = \Omega'$ by Algorithm \ref{alg:scheme multi marked}. There are two cases to check:
\begin{itemize}
    \item \textit{Case I: The first observed event is a treatment event (of some type $j \in J$), and the regime is followed.} By Algorithm \ref{alg:scheme multi marked}, the first observed event is a treatment event if, for some $j \in J$, $T_1^{j} = \wedge_{h \in J} T_1^h \wedge T_1^{\setminus J}$. The regime is followed whenever $\mathscr T_1^{j} = \wedge_{h \in J} \mathscr T_1^h \wedge T_1^{\setminus J}$, $\mathscr T_1^{j} = T_1^{j}$, and the observed mark $V_1^j$ equals the 'intervened' mark $\mathscr V_1^{j}$. By the assignments in Algorithm \ref{alg:scheme multi marked}, we see that 
    \begin{align*}
        T_1' &= T_1^{j} = \mathscr T_1^{j} = \tilde T_1', \\
        X_1' &= V_1^{j} = \mathscr V_1^{j} = \tilde X_1',
    \end{align*}
    which shows \eqref{eq: cons first time step simple}.

    \item \textit{Case II: The first observed event is a non-treatment event, and the regime is followed.}
    The first observed event is a non-treatment event if $T_1^{\setminus J}  = \wedge_{j \in J} T_1^j \wedge T_1^{\setminus J}$. The regime is followed through the first observed event if also $T_1^{\setminus J} = \wedge_{j \in J} \mathscr T_1^j \wedge T_1^{\setminus J}$. By the assignments in Algorithm \ref{alg:scheme multi marked}, we see that
    \begin{align*}
        T_1' &= T_1^{\setminus J} = \tilde T_1', \\
        X_1' &= V_1^{\setminus J} = \tilde X_1',
    \end{align*}
    and \eqref{eq: cons first time step simple} holds in this case also.
\end{itemize}
Now assume inductively that $(T_i', X_i') = (\tilde T_i', \tilde X_i')$ for all $i < k$ whenever $\tau'^J > T_{k-1}'$. Then, we are on the event $C_k$. On $C_k$, the assignments in Algorithm \ref{alg:scheme multi marked} for the $k$th step mirror those for the first step. By arguing as in the two cases above, we conclude that if $\tau'^J > T_k'$, then also $(T_k', X_k') = (\tilde T_k', \tilde X_k')$. 
The condition \ref{enum: prop consistency marked} follows.

\noindent \textbf{Correctness of the laws.}
\ \\
We now verify that $(L',N')$ has the correct law under $P'$. Toward that end we show that $\alpha$ is the canonical compensator of $N'$ under $P'$ with respect to the filtration $\{\F'_t\}_t$. It suffices to show that, for each $k \geq 1$,
\begin{align}
    P'(T_{k+1}' > t | \F'_{T_k'} ) &= U(L', N'|_{T_{k}'}, T_k', t), \label{eq: time ker multi marked} \\
    P'(T_{k+1}' \in dt, X_{k+1}' \in dx | \F'_{T_k'}) &=  U(L', N'|_{T_{k}'}, T_k', t-)\alpha(L', N'|_{T_{k}'}, dt \times dx), \label{eq: mark ker multi marked}
\end{align}
with $U$ as in \eqref{eq: U definition multi marked}. This is sufficient because, by \cite[Theorem 4.1.11, 4.2.2, 4.3.7, 4.3.8]{LastBrandt1995marked}, \eqref{eq: time ker multi marked}-\eqref{eq: mark ker multi marked} determine $\alpha$ as the canonical compensator.

Using \ref{enum: D1} and the definition $T_{k+1}' = \wedge_{j \in J} T_{k+1}^j \wedge T_{k+1}^{\setminus J}$ (in Algorithm \ref{alg:scheme multi marked}), we have
\begin{align}
    \begin{split}
        P'(T_{k+1}' > t |\F'_{T_k'} ) 
        &= P'\Big( \wedge_{j \in J} T_{k+1}^j \wedge T_{k+1}^{\setminus J} > t | \F'_{T_k'} \Big) \\
        &= \prod_{j \in J} P'(T_{k+1}^j > t |\F'_{T_k'} ) \cdot P'(T_{k+1}^{\setminus J} > t |\F'_{T_k'}) \\
        &= \prod_{j \in J} U^j(L', N'|_{T_k'},T_k', t)  U^{\setminus J}(L', N'|_{T_k'},T_k', t) \\
        &= U(L', N'|_{T_k'},T_k', t),
    \end{split}\label{eq: next event time indep multi marked}
\end{align}
where we used the assignments in Algorithm \ref{alg:scheme multi marked}, 
and \eqref{eq: U definition multi marked}. This proves \eqref{eq: time ker multi marked}.

We now establish \eqref{eq: mark ker multi marked}. Consider a measurable set $B \subseteq X^j$ for some $j \in J$. From the assignments in Algorithm \ref{alg:scheme multi marked}, we have
\begin{align}
\begin{split}
    &P'(T_{k+1}' \in dt, X_{k+1}' \in B | \F'_{T_k'}) \\
    &= P'\Big( T_{k+1}^j \in dt, \wedge_{h \in J \setminus \{j\}} T_{k+1}^h \geq t, T_{k+1}^{\setminus J} \geq t, V_{k+1}^j \in B | \F'_{T_k'} \Big).
\end{split}\label{eq: T kplus1 event j marked}
\end{align}
Note that, due to Lemma \ref{lemma: no shared mass II marked}, the measures $I(T_{k+1}^j \in dt,\wedge_{h \in J \setminus \{j\}} T_{k+1}^h \geq t, T_{k+1}^{\setminus J} \geq t, V_{k+1}^j \in B)$ and $I(T_{k+1}^j \in dt, \wedge_{h \in J \setminus \{j\}} T_{k+1}^h > t, T_{k+1}^{\setminus J} > t, V_{k+1}^j \in B)$ agree except on an exceptional set, and we can consequently move freely between them in our calculations.

By \ref{enum: D1} and  \ref{enum: D2}, the probability in \eqref{eq: T kplus1 event j marked} factorizes as
\begin{align*}
    &P'(T_{k+1}^j \in dt, \wedge_{h \neq j} T_{k+1}^h \geq t, T_{k+1}^{\setminus J} \geq t, V_{k+1}^j \in B | \F'_{T_k'}) \\
    &= P'(V_{k+1}^j \in B | \F'_{T_k'}, T_{k+1}^j = t) P'(T_{k+1}^j \in dt | \F'_{T_k'}) \\
    &\quad \times P'(\wedge_{h \neq j} T_{k+1}^h \geq t | \F'_{T_k'}) P'(T_{k+1}^{\setminus J} \geq t | \F'_{T_k'}) \\
    &= K^j(L', N'|_{T_k'}, t, B) \bar{\alpha}^j(L', N'|_{T_k'}, dt) \\
    &\quad \times  \prod_{h \in J} U^h(L', N'|_{T_k'}, T_k', t-)  U^{\setminus J}(L', N'|_{T_k'}, T_k', t-) \\
    &= \alpha(L', N'|_{T_k'}, dt \times B) U(L', N'|_{T_k'}, T_k', t-),
\end{align*}
where the last equality follows from 1): the identity $\alpha(l,\varphi, dt \times B) = K^j(l,\varphi, t, B) \allowbreak \times \bar\alpha^j(l,\varphi, dt)$ for $B \subseteq X^j$ (which combines \eqref{eq: K kernel definition} and \eqref{eq: Kj kernel definition}), and 2): the factorization \eqref{eq: U definition multi marked}. An analogous argument for a set $B \subseteq X^{\setminus J}$ gives
\begin{align*}
    P'(T_{k+1}' \in dt, X_{k+1}' \in B | \F'_{T_k'}) 
    &= K^{\setminus J}(L', N'|_{T_k'}, t, B) \bar{\alpha}^{\setminus J}(L', N'|_{T_k'}, dt) \\
    &\quad \times  \prod_{h \in J} U^h(L', N'|_{T_k'}, T_k', t-)  U^{\setminus J}(L', N'|_{T_k'}, T_k', t-) \\
    &= \alpha(L', N'|_{T_k'}, dt \times B) U(L', N'|_{T_k'}, T_k', t-),
\end{align*}
where we used that $\alpha(l,\varphi, dt \times B) = K^{\setminus J}(l,\varphi, t, B) \bar\alpha^{\setminus J}(l,\varphi, dt)$ for $B \subseteq X^{\setminus J}$ (which combines \eqref{eq: K kernel definition} and \eqref{eq: KsetminusJ kernel definition}). The result for general $B \in \mathcal{X}$ follows: By decomposing $B$ into $B^j = B \cap X^j$ and $B^{\setminus J} = B \cap X^{\setminus J}$, we get pairwise disjoint subsets, and the result is obtained by combining the previous results with the additivity of the measures involved. This verifies \eqref{eq: mark ker multi marked}. We conclude that $\alpha$ is the canonical compensator of $N'$.

By \cite[Theorem 8.2.1]{LastBrandt1995marked}, for each $l \in S$, there exists a probability measure $P_{\alpha(l)}$ on the canonical space where $\alpha(l) = \alpha(l,\cdot)$, and by \cite[Theorem 8.2.2]{LastBrandt1995marked}, $P'(N' \in \cdot|L'=l) = P_{\alpha(l)}$ for $P' \circ L^{\prime -1}$-a.e. $l$. The same theorems give $P(N \in \cdot|L=l) = P_{\alpha(l)}$ for $P \circ L^{-1}$-a.e. $l$. Since by assumption $P'(L'\in \cdot) = P(L\in \cdot)$, it follows from the disintegration theorem that $P'((N', L') \in  \cdot ) = P((N, L) \in  \cdot )$.

We now show that $\tilde{\alpha}$ is the canonical compensator of $\tilde{N}'$ under $P'$ with respect to the filtration $\{ \tilde \F'_t \}_t$ generated by $L'$ and $\tilde N'$. As with the observed process, it suffices to prove
\begin{align}
    P'(\tilde T_{k+1}' > t |\tilde \F'_{\tilde T_k'}) &= \tilde U(L', \tilde N'|_{\tilde T_k'}, \tilde T_k', t) \label{eq: tilde time ker multi marked} \\
    P'(\tilde T_{k+1}' \in dt, \tilde X_{k+1}' \in dx |\tilde \F'_{\tilde T_k'}) &= \tilde U(L', \tilde N'|_{\tilde T_k'}, \tilde T_k', t-) \tilde{\alpha}(L', \tilde N'|_{\tilde T_k'}, dt \times dx). \label{eq: tilde mark ker multi marked}
\end{align}
First, Algorithm \ref{alg:scheme multi marked} defines each $\mathscr T_{k+1}^j$ deterministically from the intervention $\mathfrak{n}^j$ and the history $L',\tilde N'|_{\tilde T_k'}$. 
Because $\tilde \F'_{\tilde T_k'} = \sigma(L', \tilde N'|_{T_k'})$, e.g. \cite[Theorem 2.2.14]{LastBrandt1995marked}, $\mathscr T_{k+1}^j$ is $\tilde \F'_{\tilde T_k'}$-measurable, and
\begin{align}
\mathscr{U}^j(L', \tilde N'|_{\tilde T_k'}, \tilde T_k', t) = \prodi_{\tilde T_k' < u \leq t}\big( 1 - \bar{\mathfrak{n}}^j(L', \tilde N'|_{\tilde T_k'}, du) \big) = I(\mathscr T_{k+1}^j > t). \label{eq: intervention prodi}
\end{align}
To establish \eqref{eq: tilde time ker multi marked}, set $\mathcal{G} = \sigma(\tilde \F'_{\tilde T_k'}, I_{C_{k+1}})$. By the assignments in Algorithm \ref{alg:scheme multi marked} we get
\begin{align}
    P'(\tilde T_{k+1}' > t |\mathcal{G}) = P'\Big( \wedge_{j \in J} \mathscr T_{k+1}^j \wedge T_{k+1}^{\setminus J} > t \Big| \mathcal{G} \Big) I_{C_{k+1}} 
    + P'(\tilde T_{k+1}^{*} > t |\mathcal{G}) I_{C_{k+1}^c}. \label{eq: G sum multi marked}
\end{align}
Consider now the first term on the right-hand side of \eqref{eq: G sum multi marked}. Because each $\mathscr T_{k+1}^j$ is $\mathcal{G}$-measurable we get
\begin{align}
\begin{split}
    P'\Big( \wedge_{j \in J} \mathscr T_{k+1}^j \wedge T_{k+1}^{\setminus J} > t \Big| \mathcal{G} \Big)
&=  \prod_{j \in J} I(\mathscr T_{k+1}^j > t)  
   P'(T_{k+1}^{\setminus J} > t |\mathcal{G}) \\
&=  \prod_{j \in J} \mathscr{U}^j(L', \tilde N'|_{\tilde T_k'}, \tilde T_k', t) P'(T_{k+1}^{\setminus J} > t |\mathcal{G}),
\end{split}\label{eq: hee fact}
\end{align}
where we used \eqref{eq: intervention prodi}. On $C_{k+1}$, we have $\tilde N'|_{\tilde T_k'} = N'|_{ T_k'}$ by the established consistency condition. Therefore, on $C_{k+1}$, we have by Algorithm \ref{alg:scheme multi marked} that $\{T_{k+1}^{\setminus J} > t\} = \{U_t^{\setminus J} > \xi_{k+1}^{\setminus J}\}$, with $U_t^{\setminus J} \allowbreak := U^{\setminus J}(L', \tilde N'|_{\tilde T_k'}, \tilde T_k', t)$. $U_t^{\setminus J}$ is $\tilde \F_{\tilde T_k'}$-measurable, and therefore $\G$-measurable.

By Assumption \ref{enum: C_k independence multi marked}, $\xi_{k+1}^{\setminus J}$ is independent of 
$\sigma\big(L', \{\xi_z^j, \eta_z^j, \xi_z^{\setminus J}, \eta_z^{\setminus J}, \tilde \xi_z, \allowbreak \tilde \eta_z\}_{z \leq k, j \in J}\big)$. 
Since $\mathcal{G} = \tilde \F'_{\tilde T_k'} \vee \sigma(I_{C_{k+1}})$ and both $\tilde \F'_{\tilde T_k'}$ and $I_{C_{k+1}}$ are 
$\sigma\big(L', \{\xi_z^j, \eta_z^j, \xi_z^{\setminus J}, \eta_z^{\setminus J}, \tilde \xi_z, \allowbreak \tilde \eta_z\}_{z \leq k, j \in J}\big)$-measurable, we get 
$\xi_{k+1}^{\setminus J} \indep_{P'} \mathcal{G}$ by the decomposition property of (conditional) independence.

Applying Lemma \ref{lemma: williams independence} with $X = U_t^{\setminus J}$, $Z = \xi_{k+1}^{\setminus J}$, and $f(x,z) = I(x > z)$, we obtain
$$
P'(T_{k+1}^{\setminus J} > t |\mathcal{G}) = g(U_t^{\setminus J}),
$$
where $g(x) = E_{P'}[I(x > \xi_{k+1}^{\setminus J})] = x$, because $\xi_{k+1}^{\setminus J}$ is uniform on $[0,1]$ under $P'$. Thus, $P'(T_{k+1}^{\setminus J} > t |\mathcal{G}) = U_t^{\setminus J}$. Substituting this into \eqref{eq: hee fact} gives
\begin{align*}
    P'\Big( \wedge_{j \in J} \mathscr T_{k+1}^j \wedge T_{k+1}^{\setminus J} > t \Big| \mathcal{G} \Big)
&= \Big( \prod_{j \in J} \mathscr{U}^j(L', \tilde N'|_{\tilde T_k'}, \tilde T_k', t) \Big) 
   U^{\setminus J}(L', \tilde N'|_{\tilde T_k'}, \tilde T_k', t)
\\
&= \tilde U(L', \tilde N'|_{\tilde T_k'}, \tilde T_k', t). 
\end{align*}

For the second term in \eqref{eq: G sum multi marked} (on $C_{k+1}^c$), we similarly have $\{\tilde T_{k+1}^{*} > t\} = \{\tilde U_t > \tilde \xi_{k+1}\}$ with $\tilde U_t := \tilde U(L', \tilde N'|_{\tilde T_k'}, \tilde T_k', t)$, which is $\mathcal{G}$-measurable. Again by Assumption \ref{enum: C_k independence multi marked}, $\tilde \xi_{k+1}$ is independent of $\mathcal{G}$ under $P'$. Applying Lemma \ref{lemma: williams independence} as above gives $P'(\tilde T_{k+1}^{*} > t |\mathcal{G}) = \tilde U_t$. Inserting the preceding results into \eqref{eq: G sum multi marked} gives
$$
P'(\tilde T_{k+1}' > t |\mathcal{G}) 
= \tilde U_t I_{C_{k+1}} + \tilde U_t I_{C_{k+1}^c} = \tilde U_t.
$$
Since $\tilde U_t$ is $\tilde \F'_{\tilde T_k'}$-measurable, we obtain \eqref{eq: tilde time ker multi marked} by taking conditional expectations of the preceding equation with respect to $\tilde \F'_{\tilde T_k'}$.

The equality \eqref{eq: tilde mark ker multi marked} follows by analogous arguments, using that $\tilde \eta_{k+1}$ is independent of $\mathcal{G}$ by assumption \ref{enum: C_k independence multi marked}, and Lemma \ref{lemma: williams independence}. We omit the detailed computation.

Once \eqref{eq: tilde mark ker multi marked} is shown, it follows that $\tilde{\alpha}$ is the canonical compensator of $\tilde{N}'$ under $P'$. The same reasoning as for the observed data MPP gives that $\text{Law}_{P'}(L', \tilde N') = \text{Law}_P(L, \tilde N).$

\noindent \textbf{Verifying the exchangeability condition \ref{enum: prop exchangeability marked}.}
\ \\
We now verify that the claimed exchangeability condition \ref{enum: prop exchangeability marked} holds. We only establish the condition for one fixed component $j \in J$, as the condition for the remaining components is shown with identical reasoning.

We first verify that the following conditional independencies hold for each $k \geq 1$:
\begin{align}
    T_k^{j} &\indep_{P'} (T_k^{h}, T_k^{\setminus J}, \tilde N')_{h \in J \setminus \{j\}} \bigm| \F'_{T_{k-1}'}, \tau'^J > T_{k-1}', \label{eq: Tkj cond indep} \\
    V_k^{j} &\indep_{P'} (T_k^{h}, T_k^{\setminus J}, \tilde N')_{h \in J \setminus \{j\}} \bigm| \F'_{T_{k-1}'} \vee \sigma(T_k^{j}), \tau'^J > T_{k-1}'. \label{eq: Vkj Tkj cond indep}
\end{align}
\eqref{eq: Tkj cond indep}-\eqref{eq: Vkj Tkj cond indep} are independencies conditioning on a $\sigma$-algebra $\A$ mixed with an event $A \in \A$. We defined such mixed conditional independence in Appendix \ref{subsection: conditional independence and expectation} under the assumption that the event $A$ has positive probability. For verifying the exchangeability condition \ref{enum: prop exchangeability marked}, it suffices to establish \eqref{eq: Tkj cond indep}-\eqref{eq: Vkj Tkj cond indep} under the assumption $P'(\tau'^J > T_{k-1}') > 0$: if $\{ \tau'^J > T_{k-1}' \}$ is a $P'$-null set, nothing needs to be shown, as can be seen from \eqref{eq: split Gk hazard multi marked}. We therefore assume $P'(\tau'^J > T_{k-1}') > 0$ in the following derivations and understand mixed conditional independence as in \eqref{eq: conditional independence event probability measure} in Appendix \ref{subsection: conditional independence and expectation}. 

To show \eqref{eq: Tkj cond indep}, note that on $\{ \tau'^J > T_{k-1}' \}$, we have $~^{T_{k-1}'}\tilde N' = ~^{T_{k-1}'}N'$ by the previously established consistency condition. Because $~^{T_{k-1}'}N'$ is $\F'_{T_{k-1}'}$-measurable, \cite[Theorem 2.2.14]{LastBrandt1995marked}, \eqref{eq: Tkj cond indep} holds if we can establish the conditional independence
\begin{align}
    T_k^{j} \indep_{P'} (T_k^{h}, T_k^{\setminus J}, ~_{T_{k-1}'}\tilde N')_{h \in J \setminus \{j\}} |\F'_{T_{k-1}'}, \tau'^J > T_{k-1}', \label{eq: Tkj tilde N temp independence}
\end{align}
where we use the notation $~_tZ(ds \times dx) := I(s > t)Z(ds \times dx)$ for a random measure $Z$ 'started' at $t$.

To proceed we recall from \ref{enum: exch indep multi marked} that
\begin{align*}
    \xi_k^j \indep_{P'} \{ \xi_k^h \}_{h \neq j} \cup \{\xi_k^{\setminus J}, \eta_k^{\setminus J}\} \cup \{ \xi_m^h, \xi_m^{\setminus J}, \eta_m^h, \eta_m^{\setminus J}, \tilde \xi_m, \tilde \eta_m \}_{h \in J, m > k} \big| \F'_{T_{k-1}'}. 
\end{align*}
We now use that $\{\tau'^J > T_{k-1}'\} \in \F'_{T_{k-1}'}$ (apply \cite[Theorem 2.1.16 (i)]{LastBrandt1995marked} to $I(\tau'^J > \cdot)$). Since a conditional independence with respect to a $\sigma$-algebra $\A$ implies a corresponding conditional independence with respect to $\A,A$, the $\sigma$-algebra $\A$ mixed with an event $A \in \A$, the preceding independence implies, since $\F'_{T_{k-1}'} = \sigma(L', N'|_{T_{k-1}'})$,
\begin{equation}
\resizebox{0.95\textwidth}{!}{$\displaystyle
\begin{aligned}
 \xi_k^j \indep_{P'} \{ \xi_k^h \}_{h \neq j} \cup \{\xi_k^{\setminus J}, \eta_k^{\setminus J}\} \cup \{ \xi_m^h, \xi_m^{\setminus J}, \eta_m^h, \eta_m^{\setminus J}, \tilde \xi_m, \tilde \eta_m \}_{h \in J, m > k} \Big| \sigma(L', N'|_{T_{k-1}'}), \tau'^J > T_{k-1}'.
\end{aligned}
$}
\label{eq: xi indep k}
\end{equation}
The result \eqref{eq: Tkj tilde N temp independence} now follows from \eqref{eq: xi indep k}. This is because conditional independence statements are preserved under measurable transformations: The random variables in \eqref{eq: Tkj tilde N temp independence} are transformations of random variables in \eqref{eq: xi indep k}, as can be seen by inspecting Algorithm \ref{alg:scheme multi marked}. In particular, $T_k^j$ is a function of $(L', N'|_{T_{k-1}'}, \xi_k^j)$ while $(T_k^{h}, T_k^{\setminus J}, ~_{T_{k-1}'}\tilde N')_{h \neq j}$ are functions as described in \eqref{eq: variables functions of}:
\begin{align}
    \begin{split}
        &\text{On } \{\tau'^J > T_{k-1}'\}, \text{ and thus on }C_k  \text{ (by the established consistency }  \\
        &\text{condition } \ref{enum: prop consistency marked}), \text{ the variables } (T_k^{h}, T_k^{\setminus J}, ~_{T_{k-1}'}\tilde N')_{h \neq j} \text{ are functions of } \\
    &(L', N'|_{T_{k-1}'}, \{ \xi_k^h \}_{h \neq j} \cup \{\xi_k^{\setminus J}, \eta_k^{\setminus J}\} \cup 
\{ \xi_m^h, \xi_m^{\setminus J}, \eta_m^h, \eta_m^{\setminus J}, \tilde \xi_m, \tilde \eta_m \}_{h \in J, m > k}).
    \end{split}\label{eq: variables functions of}
\end{align}
Using these facts, we can deduce \eqref{eq: Tkj tilde N temp independence} from \eqref{eq: xi indep k} using a monotone class argument.

To show \eqref{eq: Vkj Tkj cond indep}, it suffices (analogously to \eqref{eq: Tkj tilde N temp independence}) to verify the independence
\begin{align}
    V_k^{j} \indep_{P'} (T_k^{h}, T_k^{\setminus J}, ~_{T_{k-1}'}\tilde N')_{h \neq j} |\F'_{T_{k-1}'} \vee \sigma(T_k^{j}), \tau'^J > T_{k-1}'. \label{eq: Vkj tilde N temp independence}
\end{align}
Because $\{\tau'^J > T_{k-1}'\} \in \F'_{T_{k-1}'}$, conditioning on this event in \ref{enum: exch indep multi marked} preserves the conditional independence. We get, by the same arguments as those used to deduce \eqref{eq: xi indep k} from \ref{enum: exch indep multi marked}, the independence
\begin{equation}
\resizebox{\textwidth}{!}{$\displaystyle
\begin{aligned}
\eta_k^j \indep_{P'} \{ \xi_k^h \}_{h \neq j} \cup \{\xi_k^{\setminus J}, \eta_k^{\setminus J}\} \cup \{ \xi_m^h, \xi_m^{\setminus J}, \eta_m^h, \eta_m^{\setminus J}, \tilde \xi_m, \tilde \eta_m \}_{h \in J, m > k} \big| \sigma(L', N'|_{T_{k-1}'}, T_k^j), \tau'^J > T_{k-1}'.
\end{aligned}
$}
\label{eq: temp indep}
\end{equation}
The result \eqref{eq: Vkj tilde N temp independence} follows from \eqref{eq: temp indep}, because conditional independence is preserved by measurable transformations. In particular, inspecting Algorithm \ref{alg:scheme multi marked}, we see that $V_k^j$ is a function of $(L', N'|_{T_{k-1}'}, T_k^j, \eta_k^j)$, and $(T_k^{h}, T_k^{\setminus J}, ~_{T_{k-1}'}\tilde N')_{h \neq j}$ are as described in \eqref{eq: variables functions of}. 
Using that $\F'_{T_{k-1}'} = \sigma(L', N'|_{T_{k-1}'})$, we can conclude that \eqref{eq: Vkj tilde N temp independence} holds, where this deduction can formally be made using a monotone class argument. We conclude that \eqref{eq: Vkj Tkj cond indep} holds.




We now show that the condition \ref{enum: prop exchangeability marked} for the $j$th component is ensured by \eqref{eq: Tkj cond indep}-\eqref{eq: Vkj Tkj cond indep}. Toward this end we write $\mu$ and $\nu$ for the $(P', \F_t')$ and $(P', \F_t' \vee \sigma(\tilde N'))$-compensators of $N'^j$, respectively. To establish the condition, it suffices to show that
$$
\mu(dt \times dx) I(\tau'^J \geq t) = \nu(dt \times dx) I(\tau'^J \geq t) \quad P'\text{-a.s.}
$$
The preceding equality implies that $~^{\tau'^J}\mu$ is a compensator of $~^{\tau'^J}N'^j$ with respect to both $\{\F_t'\}_t$ and $\{\F_t' \vee \sigma(\tilde N') \}_t$ under $P'$. Since $\alpha^j$ is the canonical compensator of $N'^j$ under $P'$, we have $\mu(dt \times dx) = \alpha^j(L', N', dt \times dx)$ $P'$-a.s., and the preceding equality implies that \ref{enum: prop exchangeability marked} holds for the $j$th component.

By \cite[Theorem 4.1.11]{LastBrandt1995marked} we have for each $k \geq 1$ that, $P'$-a.s. on $\{ T_{k-1}' < t \leq T_k' \}$, 
\begin{align}
    \mu(dt \times dx)  &= \frac{P'(T_k' \in dt, X_k' \in dx \cap X^j |  \F'_{T_{k-1}'})}{P'(T_k' \geq t |  \F'_{T_{k-1}'})}, \label{eq:LambdaRepTwo multi marked} \\
    \nu(dt \times dx) &= \frac{P'(T_k' \in dt, X_k' \in dx \cap X^j |  \F'_{T_{k-1}'} \vee \sigma(\tilde N'))}{P'(T_k' \geq t |  \F'_{T_{k-1}'} \vee \sigma(\tilde N'))}. \label{eq:GammaRepTwo multi marked}
\end{align}
Using that $I(T_k' \in dt, X_k' \in dx \cap X^j) = I(T_k^{j} \in dt, \wedge_{h \in J \setminus \{j\}} T_k^{h} \geq t, T_k^{\setminus J} \geq t, V_k^{j} \in dx \cap X^j)$ $P'$-a.s. (by Algorithm \ref{alg:scheme multi marked} and Lemma \ref{lemma: no shared mass II marked}), we have
\begin{align*}
P'(T_k' \in dt, & X_k' \in dx \cap X^j |  \F'_{T_{k-1}'}) \\
&= P'(T_k^{j} \in dt, \wedge_{h \in J \setminus \{j\}} T_k^{h} \geq t, T_k^{\setminus J} \geq t, V_k^{j} \in dx \cap X^j |  \F'_{T_{k-1}'}) \\
&= P'(V_k^{j} \in dx \cap X^j |  \F'_{T_{k-1}'}, T_k^{j} = t) P'(T_k^{j} \in dt |  \F'_{T_{k-1}'}) \\
&\quad \times P'(\wedge_{h \in J \setminus \{j\}} T_k^{h} \geq t |  \F'_{T_{k-1}'}) P'(T_k^{\setminus J} \geq t |  \F'_{T_{k-1}'}),
\end{align*}
where the last equality follows from \ref{enum: D1} and \ref{enum: D2}.

From \eqref{eq: Tkj cond indep} and the decomposition and weak union properties of conditional independence, we deduce the conditional independencies
\begin{align}
    T_k^{j} &\indep_{P'} (T_k^{h}, T_k^{\setminus J})_{h \in J \setminus \{j\}} |\F'_{T_{k-1}'} \vee \sigma(\tilde N'), \tau'^J > T_{k-1}', \label{eq: Tkj cond indep2} \\
    T_k^{j} &\indep_{P'} \tilde N' |\F'_{T_{k-1}'}, \tau'^J > T_{k-1}'. \label{eq: Tkj cond indep tilde N}
\end{align}
Using these independencies we get, $P'$-a.s. on $\{\tau'^J > T_{k-1}'\}$,
\begin{align*}
&P'(T_k' \in dt, X_k' \in dx \cap X^j |  \F'_{T_{k-1}'} \vee \sigma(\tilde N'))  \\
&= P'(T_k^{j} \in dt, \wedge_{h \in J \setminus \{j\}} T_k^{h} \geq t, T_k^{\setminus J} \geq t, V_k^{j} \in dx \cap X^j |  \F'_{T_{k-1}'} \vee \sigma(\tilde N'))  \\
&= P'(V_k^{j} \in dx \cap X^j |  \F'_{T_{k-1}'} \vee \sigma(\tilde N'), T_k^{j} = t, \wedge_{h \in J \setminus \{j\}} T_k^{h} \geq t, T_k^{\setminus J} \geq t)  \\
&\quad \times P'(T_k^{j} \in dt ,\wedge_{h \in J \setminus \{j\}} T_k^{h} \geq t, T_k^{\setminus J} \geq t |  \F'_{T_{k-1}'} \vee \sigma(\tilde N'))  \\
&= P'(V_k^{j} \in dx \cap X^j |  \F'_{T_{k-1}'}, T_k^{j} = t)  \\
&\quad \times P'(T_k^{j} \in dt |  \F'_{T_{k-1}'}) P'(\wedge_{h \in J \setminus \{j\}} T_k^{h} \geq t, T_k^{\setminus J} \geq t |  \F'_{T_{k-1}'} \vee \sigma(\tilde N')),
\end{align*}
where we used \eqref{eq: Vkj Tkj cond indep}, \eqref{eq: Tkj cond indep2}, and \eqref{eq: Tkj cond indep tilde N} in the last line. 

For the denominators on the right-hand side of \eqref{eq:LambdaRepTwo multi marked}-\eqref{eq:GammaRepTwo multi marked} we similarly get, by \ref{enum: D1} and \eqref{eq: Tkj cond indep2}-\eqref{eq: Tkj cond indep tilde N}, $P'$-a.s.,
\begin{align*}
    &P'(T_k' \geq t |  \F'_{T_{k-1}'}) = P'(T_k^{j} \geq t |  \F'_{T_{k-1}'}) P'(\wedge_{h \in J \setminus \{j\}} T_k^{h} \geq t |  \F'_{T_{k-1}'}) P'(T_k^{\setminus J} \geq t |  \F'_{T_{k-1}'}) \\ 
    &P'(T_k' \geq t |  \F'_{T_{k-1}'} \vee \sigma(\tilde N')) I(\tau'^J > T_{k-1}') \\
    &= P'(T_k^{j} \geq t |  \F'_{T_{k-1}'})  P'(\wedge_{h \in J \setminus \{j\}} T_k^{h} \geq t, T_k^{\setminus J} \geq t |  \F'_{T_{k-1}'} \vee \sigma(\tilde N') )I(\tau'^J > T_{k-1}').
\end{align*}
Substituting the preceding expressions into \eqref{eq:LambdaRepTwo multi marked} and \eqref{eq:GammaRepTwo multi marked}, we obtain for each $k \geq 1$ that 
\begin{align}
\begin{split}
    &\mu(dt \times dx) I(T_{k-1}' < t \leq T_k', \tau'^J > T_{k-1}') \\
&= \frac{P'(V_k^{j} \in dx \cap X^j |  \F'_{T_{k-1}'}, T_k^{j} = t) P'(T_k^{j} \in dt |  \F'_{T_{k-1}'})}{P'(T_k^{j} \geq t |  \F'_{T_{k-1}'})} I(T_{k-1}' < t \leq T_k', \tau'^J > T_{k-1}') \\
&= \nu(dt \times dx) I(T_{k-1}' < t \leq T_k', \tau'^J > T_{k-1}'),
\end{split}
\label{eq: split Gk hazard multi marked}
\end{align}
since the terms $P'(\wedge_{h \in J \setminus \{j\}} T_k^{h} \geq t |  \F'_{T_{k-1}'}) P'(T_k^{\setminus J} \geq t |  \F'_{T_{k-1}'})$ and $P'(\wedge_{h \in J \setminus \{j\}} T_k^{h} \geq t, T_k^{\setminus J} \geq t |  \F'_{T_{k-1}'} \vee \sigma(\tilde N') )$ cancel out from the respective numerator and denominator. We thus get from \eqref{eq: split Gk hazard multi marked} by summing over $k$ that, $P'$-a.s.,
\begin{align*}
    \mu(dt \times dx) I(\tau'^J \geq t)  &=  \sum_{k \geq 1}\mu(dt \times dx) I(T_{k-1}' < t \leq T_k') I(\tau'^J \geq t) \\
    &= \sum_{k \geq 1}\nu(dt \times dx) I(T_{k-1}' < t \leq T_k') I(\tau'^J \geq t) \\
    &= \nu(dt \times dx) I(\tau'^J \geq t).
\end{align*}
\end{proof}

\section{Characterizing potential outcomes in the general MPP setting, allowing for unmeasured components}
\label{appendix: general MPP potential outcomes}

This appendix characterizes potential outcome processes under multiple interventions in the general MPP setting. The construction tackles the situation with both observed and unmeasured components. We also allow here for interventions that are optional, but not predictable. 

We let $B$ be a set indexing all MPP components (both observed and unmeasured), and write $N = (N^j)_{j \in B}$ for the full MPP. Write $\mathscr{X}$ for the mark set of $N$. In addition to $N$ we also consider baseline random elements $L, U$. The observed components consist of a subset of these elements: the baseline variable $L$, and the processes $N^j$ for $j \in \mathcal{I}_d \subset B$. The unmeasured, or 'hidden', components are $U$ and components of $N$ are indexed by $B \setminus \mathcal{I}_d$. This set-up generalizes the one in Section \ref{subsec: setup multiple interventions}, which is recovered when $B = \mathcal{I}_d$ and $U$ is omitted (i.e., when there are no unmeasured components).

We consider interventions on a subset $J \subseteq \I_d$ of the observed components. Each intervention $\mathfrak{n}^j$ is a counting measure on the canonical space $(S \times \mathcal{U} \times \N_T^{\mathscr{X}}, \S \otimes \mathscr{U} \otimes \H_T^{\mathscr{X}})$ satisfying
$$
\mathfrak{n}^j(l, u, \varphi, dt \times dx) = \mathfrak{n}^j(l, u, \varphi|_{t}, dt \times dx)
$$
for all $(l, u, \varphi) \in S \times \mathcal{U} \times \N_T^{\mathscr{X}}$ and $t \in \T$. Following \cite{LastBrandt1995marked}, we call a kernel with this property \textit{optional}. We assume that $(\N_T^{\mathscr{X}}, \H_T^{\mathscr{X}})$ is Borel to ensure the existence of certain disintegrations. To reflect that an intervention can only depend on observed components, we take $\mathfrak{n}^j$ to be constant in $u$ and in any unobserved components of $\varphi$. We use the notation $\mathscr{N}^j := \mathfrak{n}^j(L, U, N, \cdot)$ for the intervention $\mathfrak n^j$ evaluated in the observed (and unmeasured) trajectories.

Let $\alpha = (\alpha^j)_{j \in B}$ denote the canonical compensator of $N = (N^j)_{j \in B}$, and $\dot{\mathscr{L}}^j$ the canonical compensator of $\mathscr{N}^j$ for $j \in J$ with respect to $P$ and the filtration generated by $(L, U, N)$. $\dot{\mathscr{L}}^j$ is a dual predictable projection on the canonical space, and exists by analogous arguments as those shown in the proof of \cite[Theorem 4.2.2]{LastBrandt1995marked}. These canonical compensators are functionals on the path space of $(L,U,N)$. In particular,  $\mathscr{L}^j := \dot{\mathscr{L}}^j(L,U,N)$ defines a compensator of $\mathscr{N}^j$ with respect to the filtration generated by $(L,U, N)$.

We characterize the potential outcome process via its canonical compensator. For the characterization to be meaningful, we adopt regularity conditions analogous to those in Section \ref{subsection: potential outcome processes} to ensure this canonical compensator is well-defined (see \cite[Theorem 4.2.2 and Section 4.3]{LastBrandt1995marked}):
\begin{align}
    & \Big\{ \sum_{j \in J} d\bar{\dot{\mathscr{L}}}^j_t(l,u,\varphi) + \sum_{j \in B \setminus J} d\bar{\alpha}^j_t(l,u,\varphi) \Big\} I(\pi_\infty'(l,u,\varphi) < t) = 0, \label{eq:app-explosion} \\
    & \sum_{j \in J} \Delta\bar{\dot{\mathscr{L}}}^j_t(l,u,\varphi) + \sum_{j \in B \setminus J} \Delta\bar{\alpha}^j_t(l,u,\varphi) \leq 1, \label{eq:app-jump}
\end{align}
where $\pi_\infty'(l,u,\varphi) = \inf\big\{ s > 0 | \sum_{j \in J} \bar{\dot{\mathscr{L}}}^j_s(l,u,\varphi) + \sum_{j \in B \setminus J} \bar{\alpha}^j_s(l,u,\varphi) = \infty \big\}$, and we recall the convention that for a random measure $\mu$ operating on a mark set $X^\mu$, $\bar{\mu}_t = \mu((0,t] \times X^\mu)$.

Write $\tilde{\F}_\T$ for the filtration generated by $(L,U,\tilde{N})$. The potential outcome process $\tilde{N}$ under the joint intervention on components indexed by $J \subseteq \I_d$ is characterized through its (canonical) compensator as follows:
\begin{align}
\begin{split}
    &\dot{\mathscr{L}}^j(L,U, \tilde{N}) \text{ defines a } (P, \tilde{\F}_\T)\text{-compensator of } \tilde{N}^j \text{ for } j \in J, \\
    &\alpha^j(L,U, \tilde{N}) \text{ defines a } (P, \tilde{\F}_T)\text{-compensator of } \tilde{N}^j \text{ for } j \in B \setminus J.
\end{split}
\label{eq: unmeasured pot outcomes compensator}
\end{align}
If $\mathfrak n^j$ is predictable, then, analogously to Definition \ref{definition: potential outcome process}, we have $\dot{\mathscr{L}}^j(L,U, \tilde{N}) = \mathfrak n^j(L,U, \tilde N)$ $P$-a.s. Furthermore, by similar arguments to those shown in Appendix \ref{subsection: potential outcome process for predictable regime}, it follows from \eqref{eq: unmeasured pot outcomes compensator} that $\tilde{N}^j = \mathfrak n^j(L,U, \tilde N)$ $P$-a.s., analogously to \eqref{eq: predictable intervention tilde a compensator}.


\section{Supporting lemmas and technical results}
\label{appendix: supporting calculations}


\begin{lemma}[Conditional expectation under independence]\label{lemma: williams independence}
(\cite[Theorem 9.10]{williams1991probability}) Let $\mathcal{G}$ be a $\sigma$-algebra, $X$ a $\mathcal{G}$-measurable random variable, and $Z$ independent of $\mathcal{G}$. Then, for any bounded measurable function $f$
$$
E\big[f(X,Z) |  \mathcal{G}\big] = g(X) \quad \text{a.s.},
$$
where $g(x) = E[f(x,Z)]$.
\end{lemma}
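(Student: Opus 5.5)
This is the standard "taking out what is known, integrating over what is independent" result, and I would prove it by a monotone class argument starting from product functions. Fix $\mathcal G$, a $\mathcal G$-measurable $X$ taking values in some measurable space $(E_1,\mathcal E_1)$, and $Z$ independent of $\mathcal G$ with values in $(E_2,\mathcal E_2)$. Let $\mathcal H$ be the class of bounded $\mathcal E_1\otimes\mathcal E_2$-measurable $f$ such that $g_f(x):=E[f(x,Z)]$ is $\mathcal E_1$-measurable and $E[f(X,Z)\mid\mathcal G]=g_f(X)$ almost surely.

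First I will check that indicators of measurable rectangles lie in $\mathcal H$. Take $f(x,z)=I_A(x)I_B(z)$. Then $g_f(x)=I_A(x)P(Z\in B)$, which is clearly measurable. Since $I_A(X)$ is bounded and $\mathcal G$-measurable, and $I_B(Z)$ is independent of $\mathcal G$, we get
\[
E[I_A(X)I_B(Z)\mid\mathcal G]=I_A(X)\,E[I_B(Z)]=g_f(X)
\]
almost surely, by pulling out the known factor and using independence.

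Next I will verify that $\mathcal H$ is a monotone vector space.
\begin{itemize}
\item Linearity of conditional expectation and of $f\mapsto g_f$ shows that $\mathcal H$ is a vector space containing the constants.
\item If $0\le f_n\uparrow f$ with $f$ bounded and each $f_n\in\mathcal H$, then $g_{f_n}\uparrow g_f$ pointwise by monotone convergence. So $g_f$ is measurable as a pointwise limit.
\item In the same situation, the conditional monotone convergence theorem gives $E[f_n(X,Z)\mid\mathcal G]\uparrow E[f(X,Z)\mid\mathcal G]$ almost surely. Hence $E[f(X,Z)\mid\mathcal G]=\lim g_{f_n}(X)=g_f(X)$ almost surely, and $f\in\mathcal H$.
\end{itemize}

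Rectangles form a $\pi$-system generating $\mathcal E_1\otimes\mathcal E_2$. The functional monotone class theorem therefore gives that $\mathcal H$ contains every bounded product-measurable $f$, which is the claim.

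The only delicate point is measurability of $g_f$, and the monotone class argument handles it. If one wants to avoid that bookkeeping, an alternative is to take the regular conditional law of $(X,Z)$ given $\mathcal G$, namely $\delta_X\otimes P_Z$. Since the paper cites the result from \cite[Theorem 9.10]{williams1991probability}, it suffices to note that this classical argument applies.
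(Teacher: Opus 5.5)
Your proof is correct. The paper gives no proof of its own and only cites Williams' Theorem 9.10, and your argument is the standard monotone class proof of that classical result. You check rectangle indicators by pulling out the $\mathcal{G}$-measurable factor and using independence, then close the class under linear combinations and bounded monotone limits, and you do this directly for a general $\mathcal{G}$ with $\sigma(X)\subseteq\mathcal{G}$, which is the form the lemma is stated and used in.
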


\begin{lemma}\label{lemma: mathbb representations}
    Recall the processes $\mathbb{N}^a$, $\mathbb{\Lambda}^a$, and $\mathbb{M}^a$ introduced in Section \ref{subsection: identifying conditions}. 
     We have 
        \begin{align}
            \mbbNa{} &= ~^{\Nastop}N^a + ~^{\Nastop} \na{}(N) - 2~^{\Nastop}[N^a, \na{}(N)], \label{eq: mathbb N a explicitly} \\
            \mathbb{\Lambda}^a &=  
            ~^{\tau^a}\Lambda^a  + ~^{\tau^a} \mathfrak {n}^a(N) - 2 \int_0^{\tau^a \wedge \cdot} \Delta \mathfrak {n}^a_s(N)  d\Lambda_s^a \quad P\text{-a.s.} \label{eq: mathbb Lambda a representation} \\
            \mathbb{M}^a &= \int_0^{\tau^a \wedge \cdot} \big(1 - 2 \Delta \mathfrak {n}^a_s(N) \big) dM_s^a  \quad P \text{-a.s.} \label{eq: mathbb M a representation}
        \end{align}
        Moreover, the exchangeability condition \eqref{eq: exchan disc} is equivalent to
        \begin{align}
            \begin{split}
            ~^{\tau^a}\Lambda^a  \text{ defines a compensator of }& ~^{\tau^a}N^a \\
            \text{ with respect to both } \F_\T \text{ and } \F_\T^{\tilde Y} & \text{ under } P.
            \end{split}\label{eq: stopped compensator exchangeability}
        \end{align}
\end{lemma}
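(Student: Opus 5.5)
The plan is to work pathwise with the two counting processes $N^a$ and $\na{}(N)$, stopped at $\Nastop$. First observe that both are counting processes with unit jumps, so $\mbbNa{}$ jumps exactly once, at the first time where exactly one of them jumps. Before $\Nastop$ they agree; hence at $\Nastop$ the jump comes either from $N^a$ alone or from $\na{}(N)$ alone. Simultaneous jumps before $\Nastop$ cancel, and they cannot occur at $\Nastop$ itself by the definition of the infimum. The quantity $\Delta N^a+\Delta\na{}(N)-2\Delta N^a\Delta\na{}(N)$ is the indicator that exactly one of them jumps. Summing this over $s\le t\wedge\Nastop$ gives $I(\Nastop\le t)$, since it vanishes strictly before $\Nastop$ and equals $1$ at $\Nastop$. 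Because $[N^a,\na{}(N)]=\sum\Delta N^a\Delta\na{}(N)$ for finite-variation pure-jump processes, this yields \eqref{eq: mathbb N a explicitly}. One caveat must be checked here: the formula uses the stopped processes, and agreement before $\Nastop$ means the stopped sum telescopes correctly.

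For \eqref{eq: mathbb Lambda a representation}, I compensate each term of \eqref{eq: mathbb N a explicitly} under $(P,\F_\T)$.
\begin{itemize}
\item $~^{\Nastop}N^a$ has compensator $~^{\Nastop}\Lambda^a$, by optional stopping of the compensator.
\item $~^{\Nastop}\na{}(N)$ is predictable, because $\na{}$ is predictable on canonical space and $\Nastop$ is optional, and it has finite variation. It is therefore its own compensator.
\item $~^{\Nastop}[N^a,\na{}(N)]=\int_0^{\Nastop\wedge\cdot}\Delta\na{s}(N)\,dN^a_s$ is the integral of the predictable bounded integrand $\Delta\na{}(N)$ against $N^a$. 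Its compensator is therefore $\int_0^{\Nastop\wedge\cdot}\Delta\na{s}(N)\,d\Lambda^a_s$.
\end{itemize}
Combining the three terms and using uniqueness of compensators gives the formula. Subtracting the result from \eqref{eq: mathbb N a explicitly}, the $\na{}$ terms cancel, and \eqref{eq: mathbb M a representation} follows as $\int_0^{\Nastop\wedge\cdot}(1-2\Delta\na{s}(N))\,dM^a_s$.

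For the equivalence with \eqref{eq: stopped compensator exchangeability}, I use that $\na{}(N)$ and $\Delta\na{}(N)$ are $\F_\T$-predictable, hence also $\F^{\tilde Y}_\T$-predictable, since $\F_t\subseteq\F_t\vee\sigma(\tilde Y)$. Repeat the decomposition above in the larger filtration, writing $\Gamma$ for the $\F^{\tilde Y}_\T$-compensator of $~^{\Nastop}N^a$. The $\F^{\tilde Y}_\T$-compensator of $\mbbNa{}$ is then $\Gamma+{}^{\Nastop}\na{}(N)-2\int_0^{\Nastop\wedge\cdot}\Delta\na{s}(N)\,d\Gamma_s=\int_0^{\Nastop\wedge\cdot}(1-2\Delta\na{s}(N))\,d\Gamma_s+{}^{\Nastop}\na{}(N)$.

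\begin{itemize}
\item If $\Gamma={}^{\Nastop}\Lambda^a$, this equals $\mbbLa{}$, which gives one direction.
\item Conversely, if this equals $\mbbLa{}$, then $\int(1-2\Delta\na{}(N))\,d(\Gamma-{}^{\Nastop}\Lambda^a)=0$. Since $1-2\Delta\na{}(N)\in\{1,-1\}$ is invertible, we may integrate against its reciprocal, which is itself, to conclude $\Gamma={}^{\Nastop}\Lambda^a$.
\end{itemize}

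The main obstacle is the bookkeeping at $\Nastop$ and the possibility of common jumps. It must be confirmed that $~^{\Nastop}[N^a,\na{}(N)]$ only picks up coinciding jumps strictly before $\Nastop$, where the indicator expression is zero. It must also be confirmed that no explosion or integrability issue prevents $\int\Delta\na{}(N)\,dN^a$ from being locally integrable. The second point is handled by localizing with the jump times of $N$, using Assumption \ref{assumption: no explosions}. Since all terms are bounded by counts, this localization is routine.
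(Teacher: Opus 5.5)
Your proposal is correct and follows essentially the same route as the paper. You use the pathwise identity with ${}^{\tau^a}[N^a,\mathfrak n^a(N)]=\int_0^{\tau^a\wedge\cdot}\Delta\mathfrak n^a_s(N)\,dN^a_s$, compensate term by term using predictability of $\mathfrak n^a(N)$, and obtain the equivalence by exploiting that $1-2\Delta\mathfrak n^a(N)$ is a predictable $\pm1$ (self-inverse) integrand. The only difference is cosmetic: you compare compensators pathwise through $\Gamma$, whereas the paper compares the martingales $\mathbb M^a$ and ${}^{\tau^a}M^a$ by stochastic integration against that same integrand.
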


\begin{proof}
We first highlight that $N^a - [N^a, \na{}(N)]$ counts the jumps of $N^a$ which are not shared with jumps of $\na{}(N)$, and similarly that $\na{}(N) - [N^a, \na{}(N)]$ counts the jumps of $\na{}(N)$ which are not shared with jumps of $N^a$. The identity \eqref{eq: mathbb N a explicitly} reflects that $\mbbNa{t}$ counts whether there are jumps of $N^a$ which are not shared with jumps of $\na{}(N)$, or jumps of $\na{}(N)$ which are not shared with jumps of $N^a$, up to time $t$. We also have
    \begin{align*}
        ~^{\Nastop}[N^a, \na{}(N)] = \int_0^{\Nastop \wedge \cdot} \Delta \na{s}(N) dN_s^a.
    \end{align*}
    Since $\na{}(N)$ is $\F_\T$-predictable and $\Lambda^a$ defines an $\F_\T$-compensator of $N^a$, it follows that $\int_0^{\Nastop \wedge \cdot} \allowbreak \Delta \na{s}(N) d\Lambda_s^a$ defines a compensator of $~^{\Nastop}[N^a, \na{}(N)]$ with respect to $P$ and $\F_\T$ (see, e.g., \cite[Theorem 8.2.9]{cohen2015stochastic}). Next, as $\na{}(N)$ is an $\F_\T$-predictable counting process, $~^{\Nastop}\na{}(N)$ defines a compensator of itself with respect to $P$ and $\F_\T$. By combining these facts with \eqref{eq: mathbb N a explicitly} we get the results \eqref{eq: mathbb Lambda a representation} and \eqref{eq: mathbb M a representation} by the uniqueness of compensators. 
 
Assume now the exchangeability condition \eqref{eq: exchan disc} holds. The right-hand side of \eqref{eq: mathbb M a representation} is a martingale with respect to the two filtrations $\F_\T$ and $\F_\T^{\tilde Y}$, because $\mathbb M^a$ is a martingale with respect to both these filtrations and $\mathfrak n^a(N)$ is $\F_\T$-predictable. By integrating with respect to bounded and predictable functions, it follows that $~^{\tau^a}M^a$ is a martingale with respect to both filtrations. Since $~^{\tau^a}\Lambda^a$ is predictable with respect to both filtrations, \eqref{eq: stopped compensator exchangeability} follows.

Conversely, if $~^{\tau^a}\Lambda^a$ is the compensator of $~^{\tau^a}N^a$ with respect to both $\F_\T$ and $\F_\T^{\tilde Y}$, then $~^{\tau^a}M^a$ is a local martingale with respect to both filtrations. Using the representation $\mathbb{M}^a = \int_0^{\tau^a \wedge \cdot} (1 - 2\Delta \mathfrak {n}^a_s(N)) dM_s^a$ from \eqref{eq: mathbb M a representation}, we get by integrating with respect to bounded and predictable functions that $\mathbb{M}^a$ is a martingale with respect to both filtrations. Since $\mathbb{\Lambda}^a$ is $\F_\T$-predictable, the condition \eqref{eq: exchan disc} follows.
\end{proof}

\begin{lemma}\label{lemma: tau a is infty under Q}
    If $W$ in \eqref{eq: W dolean} is a likelihood ratio process, then $\Nastop = \infty$ $Q$-a.s., where $Q$ is the induced measure $dQ = W_T dP$ on $\F_T$.
\end{lemma}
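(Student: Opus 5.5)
The plan is to show $Q(\Nastop \leq T) = 0$ by computing it as a $P$-expectation against the density $W_T$. Since $dQ = W_T\,dP$ on $\F_T$ and $\{\Nastop \leq T\} \in \F_T$ (because $\Nastop$ is an $\F_\T$-optional time),
\begin{align*}
    Q(\Nastop \leq T) = E_P\big[W_T I(\Nastop \leq T)\big].
\end{align*}
It therefore suffices to show that $W_T I(\Nastop \leq T) = 0$ $P$-a.s. On the path space $\Nastop$ takes values in $\T \cup \{\infty\}$, so $\{\Nastop > T\} = \{\Nastop = \infty\}$. This gives $Q(\Nastop = \infty) = 1$.

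The key step is to show that $W$ jumps to zero at $\Nastop$ and stays there. I would argue directly from the SDE \eqref{eq: W dolean}. By \eqref{eq: K semimart}, at a time $s$ where $\mbbNa{}$ jumps we have
\begin{align*}
    \Delta \KK_s = -\frac{1 - \Delta \mbbLa{s}}{1 - \Delta \mbbLa{s}} = -1.
\end{align*}
This is well defined because positivity \eqref{eq: K jump to zero condition} forces $\Delta\mbbLa{s} < 1$ wherever the integrand is charged. The stochastic exponential satisfies $W_s = W_{s-}(1 + \Delta \KK_s)$, so $W_{\Nastop} = 0$ on $\{\Nastop \leq T\}$.

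After $\Nastop$, both $\mbbNa{}$ and its compensator $\mbbLa{}$ are constant, since $\mbbNa{}$ is stopped at $\Nastop$ and hence so is its compensator. Thus $d\KK = 0$ on $(\Nastop, T]$, and $W_t = W_{\Nastop} = 0$ for all $t \geq \Nastop$. Equivalently, one can invoke Lemma \ref{lemma: W finite variation} (the "jump to zero identity" used in the proof of Theorem \ref{theorem: ipw}), or the representation $W_t = I(\Nastop > t)/\prodi_{0<s\le t}(1 - d\mbbLa{s})$ from Proposition \ref{prop: characterizations of W}. Either way, $W_T I(\Nastop \le T) = 0$ $P$-a.s., which completes the argument.

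The main obstacle is technical: justifying that $\mbbLa{}$ does not charge $(\Nastop, T]$ up to a $P$-null set, and that $1 - \Delta\mbbLa{\Nastop} > 0$ a.s. on $\{\Nastop \le T\}$. For the first point, I would note that $\mbbLa{}$ may be taken as $~^{\Nastop}\mbbLa{}$ by uniqueness of compensators of the stopped process, or read it off from the explicit form in Lemma \ref{lemma: mathbb representations}. For the second point, I would use the standard fact that $\{\Delta \mbbLa{} = 1\}$ is a.s. contained in the jump set of $\mbbNa{}$ and has zero $d\mbbLa{}$-mass under positivity. The likelihood ratio hypothesis is used only to ensure that $Q$ is a probability measure.
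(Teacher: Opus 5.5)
Your proof is correct, but it handles the key step differently from the paper. Both arguments start from $Q(\tau^a\le T)=E_P[W_T I(\tau^a\le T)]$.

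\textbf{The paper's route.} It uses that $W$ is a uniformly integrable $(P,\F_\T)$-martingale and that $\{\tau^a<\infty\}\in\F_{\tau^a}$. Optional sampling then replaces $W_T$ by $W_{\tau^a}$, and the conclusion follows from $W_{\tau^a}=0$ on $\{\tau^a<\infty\}$.

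\textbf{Your route.} You avoid optional sampling by showing that $W$ is absorbed at zero from $\tau^a$ on. This gives $W_T I(\tau^a\le T)=0$ $P$-a.s.\ pathwise. The martingale property of $W$ then plays no role, and, as you say, the likelihood-ratio hypothesis only ensures that $Q$ is a probability measure.

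\textbf{Comparison.} The paper's version uses only the value of $W$ at $\tau^a$, at the cost of invoking the martingale property. Yours uses the path of $W$ after $\tau^a$, and that information comes for free from the stochastic exponential.

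\textbf{Your obstacles are lighter than you suggest.}
\begin{itemize}
\item Once $W_{\tau^a}=0$, the product formula for $\E(\KK)$ contains the factor $1+\Delta\KK_{\tau^a}=0$ at every later time. Equivalently, you can cite Lemma~\ref{lemma: W finite variation}~\ref{enum: W jump to zero identity}. Either way, $W$ stays at zero regardless of whether $\mbbLa{}$ charges $(\tau^a,T]$, so you need not argue that $\mbbLa{}$ is stopped.
\item An atom $\Delta\mbbLa{s}=1$ would make the integral in \eqref{eq: K jump to zero condition} infinite. Positivity by itself therefore gives $\Delta\mbbLa{s}<1$ for all $s\in\T$, $P$-a.s. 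There is no need to relate the set $\{\Delta\mbbLa{}=1\}$ to the jumps of $\mbbNa{}$.
\end{itemize}
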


\begin{proof}
 The event $\{ \Nastop < \infty \}$ is contained in $\F_\Nastop$. Since $W$ is a uniformly integrable martingale we get
    \begin{align*}
        Q(\Nastop < \infty) = E_P[ W_T I(\Nastop < \infty) ] =  E_P[ W_\Nastop I(\Nastop < \infty) ] = 0,
    \end{align*}
    since $W_\Nastop = 0$ $P$-a.s. on $\{ \Nastop < \infty \}$ by Lemma \ref{lemma: W finite variation}.
\end{proof}


\begin{lemma}\label{lemma: W finite variation}
    Condition \eqref{eq: K jump to zero condition} (Positivity) implies the following:
\begin{enumerate}[label=\roman*)]
    \item $\{ \frac{1}{1 - \Delta \mbbLa{s}} \}_s$ is locally bounded. \label{item: mbbLa recipocal is locally bounded}
    \item $\KK$ in \eqref{eq: K semimart} is a local  square integrable martingale of finite variation with respect to $P$ and $\F_\T$. \label{enum: K local martingale}
    \item $W = \E(\KK)$ in \eqref{eq: W dolean} is of finite variation, and is a nonnegative local martingale with respect to $P$ and $\F_\T$. \label{enum: W1}
    \item $\Nastop$ in \eqref{eq: tau a} coincides with the optional times $\tau^{a\prime} = \inf\{ s > 0 | \Delta \mathbb K^a_s = -1 \}$ and $\tau^{a\prime\prime} = \inf \{ s > 0 | W_s = 0 \}$. \label{enum: W2}
    \item $I(W_\cdot > 0) = I(\cdot < \Nastop)$ $P$-a.s.  \label{enum: W jump to zero identity}
    \item $ \langle M^j , W \rangle^P$ and $ \langle M^j , \KK \rangle^P$ exist with respect to $\F_\T$ for each $j \in \I_d$. 
    Furthermore $\langle M^j , \KK \rangle^P \allowbreak = - \int_0^\cdot \allowbreak \frac{1}{1 - \Delta \mbbLa{s}} 
 \allowbreak d\langle M^j, \mbbMa{} \rangle_s^P$ and $\langle M^j , W \rangle^P = \allowbreak\int_0^\cdot W_{s-} \allowbreak d\langle M^j, \KK \rangle^P_s$.
    \label{enum: W3}
    \medskip
    \item[] If the exchangeability condition in Definition \ref{def:strong prime} \ref{item: strong cf exch} also holds, then:
    \item \ref{enum: K local martingale} and \ref{enum: W1} also holds with respect to $P$ and $\F_\T^{\tilde Y}$. \label{enum: K exch martingale}
\end{enumerate}
\end{lemma}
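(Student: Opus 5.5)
The plan is to prove the items in the order listed, since each one feeds the next. The basic structure used throughout is that $\mbbNa{}$ is a single-jump counting process and $\mbbLa{}$ is its compensator. Hence $\mbbLa{}$ is constant after $\Nastop$, $\Delta\mbbLa{}\le 1$, and $\mbbLa{}$ is nondecreasing and predictable.

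For \ref{item: mbbLa recipocal is locally bounded}, set
\[
\sigma_n:=\inf\Big\{t:\int_0^t \frac{d\mbbLa{s}}{1-\Delta\mbbLa{s}}\ge n\Big\}.
\]
By positivity, $\sigma_n$ increases to $\infty$ almost surely, and it is predictable (indeed announceable) because the integrand process is predictable and nondecreasing. Each single jump satisfies
\[
\frac{\Delta\mbbLa{s}}{1-\Delta\mbbLa{s}}\le \int_0^s \frac{d\mbbLa{u}}{1-\Delta\mbbLa{u}}<\infty ,
\]
so $1/(1-\Delta\mbbLa{})\le 1+ \sum_{s}\Delta\mbbLa{s}/(1-\Delta\mbbLa{s})$, which is bounded by $1+n$ before $\sigma_n$. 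Passing to an announcing sequence gives local boundedness.

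For \ref{enum: K local martingale}, $\KK$ is then a predictable, locally bounded integrand against the finite-variation local martingale $\mbbMa{}$. It is therefore a finite-variation local martingale with jumps bounded locally, and hence locally square integrable. Its variation is dominated by $\mbbNa{T}/(1-\Delta\mbbLa{\Nastop})$ plus the positivity integral, so it is finite.

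For \ref{enum: W1} and \ref{enum: W2}, the stochastic exponential of a finite-variation process is
\[
W_t=\prod_{s\le t}(1+\Delta\KK_s)\,e^{\KK^c_t},
\]
which is of finite variation. Away from $\Nastop$ one has $\Delta\KK_s=\Delta\mbbLa{s}/(1-\Delta\mbbLa{s})\ge 0$, while at $\Nastop$
\[
\Delta\KK_{\Nastop}=-\frac{1-\Delta\mbbLa{\Nastop}}{1-\Delta\mbbLa{\Nastop}}=-1 .
\]
So $1+\Delta\KK\ge 0$, and it equals $0$ exactly at $\Nastop$. This gives nonnegativity, and $W$ is a local martingale as the exponential of a local martingale. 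It also gives $\tau^{a\prime}=\Nastop$. Because $W_t=0$ if and only if some jump equals $-1$ (the exponential factor and the other jump factors are strictly positive, and the product is finite by positivity), we get $\tau^{a\prime\prime}=\tau^{a\prime}$. The main care here is to check that $\Delta\mbbLa{\Nastop}<1$ almost surely, so that the ratio makes sense. This follows because $\int I(\Delta\mbbLa{s}=1)\,d\mbbNa{s}$ has compensator $\int I(\Delta\mbbLa{s}=1)\,d\mbbLa{s}$, which is $0$ almost surely by positivity.

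Item \ref{enum: W jump to zero identity} then follows from \ref{enum: W2}: after $\Nastop$, $\KK$ is constant, so $W$ stays at $0$, while before $\Nastop$ it is strictly positive.

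For \ref{enum: W3}, $M^j$ is a locally bounded-jump counting martingale and $\KK$ is locally square integrable, so $[M^j,\KK]$ is locally integrable and its compensator $\langle M^j,\KK\rangle^P$ exists. Since
\[
[M^j,\KK]=-\int \frac{1}{1-\Delta\mbbLa{}}\,d[M^j,\mbbMa{}]
\]
with a predictable locally bounded integrand, taking compensators gives the formula. Similarly, $W=1+\int W_-\,d\KK$, $[M^j,W]=\int W_-\,d[M^j,\KK]$, and $W_-$ is locally bounded, which yields the second formula.

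For \ref{enum: K exch martingale}, exchangeability says $\mbbMa{}$ is also an $\F_\T^{\tilde Y}$-local martingale. All the integrands used above are $\F_\T$-predictable, hence $\F_\T^{\tilde Y}$-predictable since $\F_t\subseteq\F_t\vee\sigma(\tilde Y)$, and the localizing times $\sigma_n$ remain stopping times. So the arguments for \ref{enum: K local martingale} and \ref{enum: W1} repeat verbatim in the larger filtration.

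The step I expect to be most delicate is the localization in \ref{item: mbbLa recipocal is locally bounded}. It requires predictable stopping times so that the jump at the stopping time itself is controlled. The fallback, if announceability is awkward, is to use that a predictable process with finite left limits and no explosion is locally bounded (for example, via the result in Protter's book).
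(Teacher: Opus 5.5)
Your proposal is correct and follows the paper's proof in substance. Item (i) rests on local boundedness of the predictable cadlag process $\int_0^\cdot d\mbbLa{s}/(1-\Delta\mbbLa{s})$ and of its jumps; items (iii)--(v) rest on the jump structure $\Delta\KK=-1$ at $\Nastop$ and $\Delta\KK\ge 0$ elsewhere; and items (vi)--(vii) rest on the same bracket identities and on the fact that $\F_\T$-predictable integrands and stopping times remain so in $\F_\T^{\tilde Y}$. Where you deviate it is only by local shortcuts, and both are valid: you get local square integrability from locally bounded jumps rather than the paper's It\^o expansion of $(\KK)^2$, and you get finite variation and nonnegativity from the explicit product formula $W_t=\prod_{s\le t}(1+\Delta\KK_s)\,e^{\mathbb K^{a,c}_t}$ rather than Gr\"onwall and Cohen--Elliott's Lemma 15.3.2.
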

\begin{proof}

    \ref{item: mbbLa recipocal is locally bounded}:
    
    The process $\xi := \int_0^\cdot \frac{d\mbbLa{s}}{1 - \Delta \mbbLa{s}}$ is cadlag and predictable. This can be seen by decomposing the integral as
    $$ \xi = \mathbb \Lambda^{a, c} + \sum_{0 < s \leq \cdot} \frac{\Delta \mbbLa{s}}{1 - \Delta \mbbLa{s}}. $$
    where $\mathbb \Lambda^{a,c}(dt) = I(\Delta \mbbLa{s} = 0) \mbbLa{}(dt)$ is the continuous part of $\mbbLa{}$. The process on the right-hand side of the previous line is cadlag, and predictable as it is a sum of predictable terms. 
    
    Under \eqref{eq: K jump to zero condition}, $\xi$ takes values in $\mathbb{R}$ on $\T$, i.e. $\xi$ does not explode to  $\infty$ on $\T$. If the condition were not satisfied, $\xi$ would evidently fail to be locally bounded. However, since $\xi$ is cadlag, predictable, and real-valued, it is locally bounded by \cite[Lemma 7.3.20]{cohen2015stochastic}.
    
    We can thus pick a localizing sequence $\{ \sigma_n \}_n$, i.e. $\sigma_n$ are optional times increasing up to $\infty$ so that $I(\sigma_n > 0)\xi_{\sigma_n \wedge \cdot} \leq c_n$ for some constant $c_n$ for each $n$. It certainly follows that
    $$ I(\sigma_n > 0)\Delta \xi_{\sigma_n \wedge \cdot} = I(\sigma_n > 0) \frac{\Delta \mbbLa{\sigma_n \wedge \cdot}}{1 - \Delta \mbbLa{\sigma_n \wedge \cdot}} \leq c_n. $$
  for each $n$. We can conclude from this that 
\begin{align*}
	I(\sigma_n > 0) \frac{1}{1 - \Delta \mbbLa{\sigma_n \wedge \cdot}} \leq c_n + 1,
\end{align*}
for each $n \geq 1$. That is, $\{ \frac{1}{1 - \Delta \mbbLa{s}} \}_s$ is locally bounded.

    \ref{enum: K local martingale}:

Since $\mbbMa{}$ is a martingale, we have by \ref{item: mbbLa recipocal is locally bounded} and \cite[IV 2 Theorem 29]{protter} that $\KK = - \int_0^\cdot \frac{d\mbbMa{s}}{1 - \Delta \mbbLa{s}}$ is a local martingale. It is of finite variation since the integrator is of finite variation and the integrand is locally bounded.

We show that it is also a local square integrable martingale. $\mbbMa{}$ is the difference between a one-jump counting process $\mbbNa{}$ and its compensator $\mbbLa{}$, a cadlag and predictable process. $\mbbNa{}$ is automatically locally bounded, while $\mbbLa{}$ is locally bounded by \cite[Lemma 7.3.20]{cohen2015stochastic}, making $\mbbMa{}$ a locally bounded martingale. It is therefore in particular a local square integrable martingale. Moreover, by \cite[II 6 Corollary 2]{protter}, there is a local martingale $\tilde M$ such that $(\mbbMa{})^2 = \tilde M + [\mbbMa{}]$. Since $\mbbMa{}$ is a local square integrable martingale, it follows that $[\mbbMa{}]$ is locally integrable.

By definition of the quadratic variation (see Appendix \ref{appendix: Basic definitions}) and \cite[IV 2 Theorem 22]{protter}, we get 
\begin{align}
    \KK^2 = 2 \int_0^\cdot \mathbb K^a_{s-} d\mathbb K^a_s + \int_0^\cdot \frac{1}{(1 - \Delta \mbbLa{s})^2} d[\mbbMa{}]_s. \label{eq: KK local sq int}
\end{align}
The integrand in the first term of the right-hand side of \eqref{eq: KK local sq int} is the left-limit of a cadlag and adapted process. Thus, it is predictable, and $S_n = \inf\{ s > 0 | |\mathbb K^a_s| \geq n \}$ defines a sequence of optional times increasing up to $\infty$ making $\mathbb K^a_-$ locally bounded. Moreover, the integrator is a local martingale. Combining localizing sequences, the first term on the right-hand side of \eqref{eq: KK local sq int} is a local martingale by \cite[IV 2 Theorem 29]{protter}. The integrand in the second term on the right-hand side of \eqref{eq: KK local sq int} is locally bounded by \ref{item: mbbLa recipocal is locally bounded}, and the integrator is locally integrable. Combining localizing sequences, stopping, taking expectations, and suprema, we conclude that $\KK$ is a local square integrable martingale. 
    
\ref{enum: W1}:

     Equation \eqref{eq: W dolean} implies that
    \begin{align*}
        V_{[0,t]}(W) \leq 1 + \int_0^t V_{[0,s)}(W) d V_{[0,s]}(\KK),
    \end{align*}
    where $V_{[0,t]}(Z)$ is the variation of a process $Z$ up to $t$. Since $\KK$ is of finite variation, it follows from Grönwall's inequality, e.g. \cite[Lemma 15.1.6]{cohen2015stochastic}, that $W$ is of finite variation.

    Next, since $0 \leq \Delta \mbbLa{} \leq 1$, $\Delta \KK = -\frac{\Delta \mbbNa{} - \Delta \mbbLa{}}{1 - \Delta \mbbLa{}} \geq -1$ is immediate, where the only ambiguity is the case "$\frac{0}{0}$". The proof of \ref{enum: W2} shows that this ambiguity occurs on an exceptional set; thus $\Delta \KK \geq -1$ $P$-a.s. Since $\KK$ is a local martingale and $W = \E(\KK)$, we get from \cite[Lemma 15.3.2.]{cohen2015stochastic} that $W$ is a nonnegative local martingale with respect to $P$ and $\F_\T$.

\ref{enum: W2}:

    The equality $\Delta \KK = -\frac{\Delta \mbbNa{} - \Delta \mbbLa{}}{1 - \Delta \mbbLa{}} = -1$ holds only when $\Delta \mathbb N^a = 1$, provided that $\Delta \mbbLa{} \neq 1$. Under condition \eqref{eq: K jump to zero condition}, we have  in particular have that $\Delta \mbbLa{\Nastop} \neq 1$ $P$-a.s. on $\{ \Nastop < \infty \}$. It follows that $\Delta \mathbb K^a_{\Nastop} = -1$ $P$-a.s. on $ \{ \Nastop < \infty \}$, and therefore that $\Nastop = \inf\{ s > 0 | \Delta \mathbb K^a_s = -1 \}$ $P$-a.s. The other identity can be seen from the solution of the stochastic exponential; see \cite[II 8 Theorem 37]{protter}.

\ref{enum: W jump to zero identity}:

Inspecting the solution of the stochastic exponential, \cite[II 8 Theorem 37]{protter}, this follows from \ref{enum: W2}.

    \ref{enum: W3}:

    $\KK$ is a local square integrable martingale by \ref{enum: K local martingale}. Moreover, under Assumption \ref{assumption: no explosions}, $M^j$, is a finite variation local square integrable martingale; see \cite[Chapter II]{Andersen}. Thus, $[\KK, M^j]$ is via the Kunita-Watanabe inequality, \cite[II 6 Theorem 25]{protter}, locally integrable, and its compensator with respect to $P$ and $\F_\T$, $\langle \KK, M^j \rangle^P$, exists. Moreover, by \cite[IV 2 Theorem 22]{protter} we have $[\KK, M^j] = -\int_0^\cdot \frac{d[\mbbMa{},M^j]_s}{1 - \Delta \mbbLa{s}}$. Since the integrand of this equation is locally bounded (by \ref{item: mbbLa recipocal is locally bounded}) and predictable, and the compensator of $[\mbbMa{}, M^j]$ is $\langle \mbbMa{}, M^j \rangle^P$, it follows that $\langle \KK, M^j \rangle^P = -\int_0^\cdot \frac{d\langle \mbbMa{},M^j \rangle_s^P}{1 - \Delta \mbbLa{s}}$ by e.g. \cite[Theorem 8.2.9]{cohen2015stochastic}. Next,
    \begin{align}
        [W, M^j] = \int_0^\cdot W_{s-} d[\KK, M^j]_s \label{eq: W Mj bracket}
    \end{align}
    by \eqref{eq: W dolean} and \cite[IV 2 Theorem 22]{protter}. Since $W$ is a cadlag and adapted process of finite variation, $W_-$ is predictable, and locally bounded (see the proof of \ref{enum: K local martingale}). Since $\KK$ and $M^j$ are local square integrable martingales and $W_-$ is locally bounded, it follows from \eqref{eq: W Mj bracket} that $[W, M^j]$ is of locally integrable variation by combining localizing sequences. This means that $\langle W, M^j \rangle^P$ exists. Since $W_-$ is predictable and locally bounded, it follows from \eqref{eq: W Mj bracket} that $\langle W, M^j \rangle^P = \int_0^\cdot W_{s-} d\langle \KK, M^j \rangle_s^P$.

    \ref{enum: K exch martingale}:

    Since $\mbbLa{}$ also defines a $(P, \F_\T^{\tilde Y})$-compensator of $\mbbNa{}$ under exchangeability, the result follows by applying the arguments shown in the proofs of \ref{enum: K local martingale} and \ref{enum: W1} to $\F_\T^{\tilde Y}$ instead of $\F_\T$.
\end{proof}

By Lemma \ref{lemma: W finite variation}, $W$ in \eqref{eq: W dolean} is a nonnegative local $P$-martingale with respect to $\F_\T$, and if exchangeability holds, this is also true with respect to the filtration $\F_\T^{\tilde Y}$. In the following lemma we use the general notation $\Z_\T$ for a filtration on $\T$, where we take  $\Z_\T = \F_\T^{\tilde Y}$ if exchangeability holds and $\Z_\T = \F_\T$ otherwise.

\begin{lemma}\label{lemma: positivity and likelihood ratio equivalence}
    Under \eqref{eq: K jump to zero condition}, $W$ in \eqref{eq: W dolean} is a nonnegative mean one uniformly integrable martingale with respect to $P$ and $\Z_\T$ on $\T$ if and only if Definition \ref{def:strong prime} \ref{item: likelihood ratio regularity} holds. 
\end{lemma}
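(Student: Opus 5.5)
The plan is to prove the two implications separately, using the structure of $W$ from Lemma \ref{lemma: W finite variation}. That lemma gives, under \eqref{eq: K jump to zero condition}, that $W$ is a nonnegative local martingale with respect to $P$ and $\F_\T$. If exchangeability holds, the same is true with respect to $\F_\T^{\tilde Y}$, by item \ref{enum: K exch martingale} of that lemma. So in either case $W$ is a nonnegative local $(P,\Z_\T)$-martingale with $W_0=1$.

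\emph{Necessity.} If $W$ is a mean one martingale on $\T$ with respect to $\Z_\T$, then $E_P[W_t]=E_P[W_0]=1$ for every $t$. This is exactly Definition \ref{def:strong prime} \ref{item: likelihood ratio regularity}, so this direction is immediate.

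\emph{Sufficiency.} Assume $E_P[W_t]=1$ for each $t\in\T$.
\begin{enumerate}
\item A nonnegative local martingale is a supermartingale by Fatou's lemma. Take a localizing sequence $\sigma_n$ of $\Z_\T$-optional times. For $s\le t$,
\[
E_P[W_t\mid\Z_s]\le\liminf_n E_P[W_{t\wedge\sigma_n}\mid\Z_s]=\liminf_n W_{s\wedge\sigma_n}=W_s .
\]
\item A supermartingale with constant expectation is a martingale. The difference $D:=W_s-E_P[W_t\mid\Z_s]$ is nonnegative and satisfies $E_P[D]=E_P[W_s]-E_P[W_t]=1-1=0$. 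Hence $D=0$ a.s., so $W$ is a $(P,\Z_\T)$-martingale on $\T$.
\item Uniform integrability follows because $\T$ is compact and $W_T$ is integrable. On $\T$ we have $W_t=E_P[W_T\mid\Z_t]$, and the family $\{E_P[W_T\mid\mathcal G]\}$, with $\mathcal G$ ranging over sub-$\sigma$-algebras, is uniformly integrable (e.g. \cite{williams1991probability}).
\item Nonnegativity was already established in step 0, and the mean is $E_P[W_0]=1$.
\end{enumerate}

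The point needing most care is that hypothesis \ref{item: likelihood ratio regularity} is stated with $E_P$ alone, with no reference to a filtration. This is what lets the same argument run for $\Z_\T=\F_\T^{\tilde Y}$: the expectation $E_P[W_t]$ is filtration-free, so step 2 goes through for either filtration, and the local martingale property in $\F_\T^{\tilde Y}$ is supplied by Lemma \ref{lemma: W finite variation} \ref{enum: K exch martingale}. I therefore do not expect a genuine obstacle, beyond checking that the localizing sequence is taken in the right filtration, namely the one given by that lemma for $\Z_\T$.
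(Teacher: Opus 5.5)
Your proposal is correct and follows the paper's proof. Both arguments obtain the nonnegative local martingale property in $\Z_\T$ from Lemma \ref{lemma: W finite variation}, use the constant-mean condition to upgrade it to a true martingale, and get uniform integrability from $W_t = E_P[W_T \mid \Z_t]$ on the compact interval $\T$. The only difference is that you prove the ``nonnegative local martingale with constant mean is a martingale'' step directly, via Fatou and the supermartingale argument, whereas the paper cites \cite[Lemma 15.3.2]{cohen2015stochastic} for it.
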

\begin{proof}
    $W$ is a nonnegative local martingale under $P$ and $\Z_\T$ by Lemma \ref{lemma: W finite variation}, where we can take $\Z_\T = \F_\T^{\tilde Y}$ under exchangeability and $\Z_\T = \F_\T$ in general. \cite[Lemma 15.3.2.]{cohen2015stochastic} says that $W = \E(\KK)$ is a martingale on $\T$ with respect to $P$ and $\Z_\T$ if and only if \eqref{eq: W likelihood ratio regularity} holds. Thus, under \eqref{eq: W likelihood ratio regularity}, $W$ is a nonnegative mean one martingale. It is uniformly integrable if and only if 
    $$ \lim_{b \rightarrow \infty} \sup_{t \in \T} E_P[I(W_t > b)W_t] = 0.$$
    A uniformly integrable martingale is certainly a martingale. Moreover, a martingale on the compact time domain $\T$ is uniformly integrable. This can be seen, because $W_T$ is (by \eqref{eq: W likelihood ratio regularity}) integrable, and $W_t = E_P[W_T | \F_t]$ by the martingale property. The collection $\{ E_P[Z | \G_t] \}_t$ is always uniformly integrable for an integrable random variable $Z$ and a filtration $\{\G_t\}_t$; see e.g. \cite[Chapter 13.4]{williams1991probability}.
\end{proof}

\begin{lemma}\label{lemma: MPP exchangeability 2026}
With the set-up and notation in Section \ref{section: identification with multiple interventions in the general MPP setting}, suppose that for each $j \in J$,
\begin{align}
    \text{the } (P,\F_\T) \text{-compensator of } ~^{\tau^J}N^j \text{ is also a } (P,\F_\T^{\tilde Y})\text{-compensator of } ~^{\tau^J}N^j. \label{eq: j comp measure}
\end{align}
Then the $(P,\F_\T)$-compensator of $\mathbb{N}^J$ is also a $(P,\F_\T^{\tilde Y})$-compensator of $\mathbb{N}^J$; i.e., the exchangeability condition Theorem \ref{theorem: multiple marked interventions} \ref{enum: mpp exchangeability} holds.
\end{lemma}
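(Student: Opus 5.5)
The strategy follows the single-intervention case in Lemma \ref{lemma: mathbb representations}: I write $\mathbb{N}^J$ as a predictable integral against the stopped counting measures $~^{\tau^J}N^j$, $j \in J$. From that representation the compensator of $\mathbb{N}^J$ can be read off, with the same formula in both filtrations. Fix $j \in J$. Since $X^j$ is finite, $\tau^j$ is the first time $t$ at which either $N^j$ has an atom $(t,x)$ that is not an atom of $\mathfrak n^j(L,N)$, or $\mathfrak n^j(L,N)$ has an atom $(t,x)$ that is not an atom of $N^j$. Put $\mathscr N^j := \mathfrak n^j(L,N,\cdot)$. Then
\begin{align*}
    I(\tau^j \le t,\ \tau^j \le \tau^J) = \int_{(0,t\wedge\tau^J]\times X^j} \Big(\big(1-\mathscr N^j(\{s\}\times\{x\})\big)N^j(ds\times dx) + \big(1 - N^j(\{s\}\times \{x\})\big)\mathscr N^j(ds\times dx)\Big),
\end{align*}
and summing over $j\in J$ gives $\mathbb{N}^J$, up to double counting when two components deviate at the same instant.

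\emph{Step 1: handle simultaneous deviations.} Simultaneous deviations of two components $j,h \in J$ at a common time must be excluded or controlled. Conditions \eqref{eq: can comp j h} and \eqref{eq: interv interv} give that $N^j$ and $N^h$ a.s. do not jump together (their compensators have no common atoms, so the bracket of the martingales vanishes), and that $\mathscr N^j$ and $\mathscr N^h$ never jump together. Simultaneous deviations of mixed type, an observed jump of $N^j$ together with an assigned jump of $\mathscr N^h$, cannot be ruled out by these conditions. I will therefore write $\mathbb N^J_t = I(\tau^J\le t)$ directly as $\int_0^{t\wedge\tau^J}$ of the indicator that at least one $j$ deviates at $s$. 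This indicator is a polynomial in $\Delta N^j_s(\{x\})$ and $\Delta\mathscr N^j_s(\{x\})$, and it is affine in each of the (a.s. mutually exclusive) $N^j$ jumps, with predictable coefficients built from the $\Delta\mathscr N^h$. So $\mathbb N^J = \sum_{j\in J}\int_{(0,\cdot\wedge\tau^J]\times X^j} H^j\, dN^j + \int_0^{\cdot\wedge\tau^J} G\,ds$-type terms, where $H^j(s,x)$ and the purely-$\mathscr N$ part $G$ are $\F_\T$-predictable. This holds because the $\mathscr N^h$ are predictable and $\mathfrak n^j$ depends on $\varphi|_{t-}$.

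\emph{Step 2: compensate in each filtration.} Let $\Gamma^j$ denote the $(P,\F_\T)$-compensator of $~^{\tau^J}N^j$. By \eqref{eq: j comp measure}, $\Gamma^j$ is also the compensator in $\F_\T^{\tilde Y}$. Every $\F_\T$-predictable integrand is $\F_\T^{\tilde Y}$-predictable, and the purely predictable part is its own compensator in both filtrations. Integrating the bounded predictable $H^j$ against $~^{\tau^J}N^j - \Gamma^j$ yields a local martingale in both filtrations, by the argument of \cite[Theorem 8.2.9]{cohen2015stochastic} used in Lemma \ref{lemma: mathbb representations}. Hence $\sum_j \int H^j d\Gamma^j + (\text{predictable part})$ is predictable, and it compensates $\mathbb N^J$ in both filtrations. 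By uniqueness of the $\F_\T$-compensator it equals $\mathbb\Lambda^J$, which proves the claim.

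The main obstacle is Step 1: writing the indicator of ``some component deviates at $s$'' as a predictable-coefficient linear functional of the counting measures $N^j$. This needs both the a.s. absence of common jumps among the $N^j$, which comes from \eqref{eq: can comp j h}, and careful bookkeeping of mark-level coincidences between $N^j$ and $\mathscr N^j$, as in \eqref{eq: mathbb N a explicitly}. I would first do this for $|J|=1$, generalizing \eqref{eq: mathbb N a explicitly} to marks $x\in X^j$, and then combine the components via inclusion–exclusion, which is valid because the cross products of $N$-jumps vanish a.s.
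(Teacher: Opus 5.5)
Your proposal is correct and is essentially the paper's argument. You write $\mathbb N^J={}^{\tau^J}\mathbb N^J$ as integrals of bounded $\F_\T$-predictable processes against the stopped mark-components of the $N^j$, plus a stopped predictable counting process built from the $\mathscr N^j$, and then transfer the compensator to $\F_\T^{\tilde Y}$ using \eqref{eq: j comp measure}. The paper obtains this representation by induction over the marks via $\tau^{(n+1)}=\tau^{(n)}\wedge\tau^{j,n+1}$, written out only for $|J|=1$. Your expansion of $1-\prod_{j,x}\bigl(1-|\Delta N^j_s(\{x\})-\Delta\mathscr N^j_s(\{x\})|\bigr)$ does the same bookkeeping in one step and for general $J$.

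Three minor corrections:
\begin{itemize}
\item Your first display double-counts when $N^j$ and $\mathscr N^j$ jump at the same time with different marks, which is the typical deviation at an allocation time; your Step~1 indicator repairs this.
\item Distinct $N^j,N^h$ never jump together simply because they are components of one MPP, not because of \eqref{eq: can comp j h}, which you do not need.
\item The purely-$\mathscr N$ part is a sum over atoms of the $\mathscr N^h$, i.e.\ an integral against a predictable counting process, not a $ds$-integral.
\end{itemize}
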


\begin{proof}
We prove the result assuming one intervention, i.e., $J = \{ j \}$. The case when $|J| > 1$ is shown with analogous arguments at the expense of a higher notational burden. 

Recall that we in Section \ref{section: identification with multiple interventions in the general MPP setting} assume that the mark set $X^j$ is finite for each $j \in J$. Let $\{x_k^j\}_k$ be an enumeration of $X^j$ which is also a partition of $X^j$ (i.e. $x_k^j = x_\ell^j \implies k = \ell$). Define, for each $k \geq 1$,
\begin{align*}
    \tau^{j,k} &:= \inf\{s > 0 | N^{j,k}_s \neq \mathscr{N}^{j,k}_s\}, \\
    \mathbb{N}^{j,k}_t &:= I(\tau^{j,k} \leq t),
\end{align*}
where $N^{j,k}_t := N^j((0,t]\times\{x_k^j\})$ and $\mathscr{N}^{j,k}_t := \mathfrak n^j(L, N, (0,t]\times\{x_k^j\})$. Each process $\mathscr{N}^{j,k}$ is $\F_\T$-predictable because the intervention $\mathfrak n^j$ is predictable. 

Define $\tau^{(1)} := \tau^{j,1}$, $\mathbb{N}^{(1)}_t := \mathbb{N}^{j,1}_t$, and for $n \geq 1$
$$
    \tau^{(n+1)} := \tau^{(n)} \wedge \tau^{j,n+1}, \qquad \mathbb{N}^{(n+1)}_t := I(\tau^{(n+1)} \leq t).
$$
We make use of the basic identity
\begin{align}
\begin{split}
    \mathbb N_t^{(n+1)} &= I(\tau^{(n+1)} \leq t) = I(\tau^{(n)} \wedge \tau^{j,n+1} \leq t) \\
    &= ~^{\tau^{(n+1)}} \Big( \mathbb N_t^{(n)} + \mathbb N_t^{j,n+1} - [\mathbb N^{(n)}, \mathbb N^{j,n+1}]_t \Big), 
\end{split}\label{eq: basic obs}
\end{align}
and the representation
 \begin{align}
     \mathbb{N}^{j,k} &= ~^{\tau^{j,k}} N^{j,k} + ~^{\tau^{j,k}}\mathscr{N}^{j,k} - 2\int_0^{\tau^{j,k} \wedge \cdot} \Delta \mathscr{N}^{j,k}_s dN^{j,k}_s. \label{eq: Njk representation}
 \end{align}
\eqref{eq: Njk representation} can be verified using arguments similar to those used in Lemma \ref{lemma: mathbb representations} (compare \eqref{eq: Njk representation} with \eqref{eq: mathbb N a explicitly}). Combining \eqref{eq: basic obs}-\eqref{eq: Njk representation}, and using that $[\mathbb N^{j,n+1}, \mathbb N^{(n)}] = \int_0^{t }\Delta \mathbb N_s^{j, n+1} d \mathbb N_s^{(n)}$ and $\tau^{j,n+1} \geq \tau^{(n+1)}$, we obtain the recursive system
\begin{align}
    \begin{split}
        \mathbb N_t^{(n+1)} &= ~^{\tau^{(n+1)}} \mathbb N_t^{j,n+1} \\
        &\quad + \int_0^{t \wedge \tau^{(n+1)} } \bigl( 1 - \Delta \mathscr N_s^{j,n+1}  - \Delta N_s^{j,n+1} + 2 \Delta \mathscr N_s^{j,n+1}\Delta  N_s^{j,n+1} \bigr) d \mathbb N_s^{(n)},
    \end{split} \label{eq: recursive main} \\[4pt]
    \mathbb N^{(1)}_t &= \int_0^{t \wedge \tau^{j,1}} ( 1 - 2 \Delta \mathscr N_s^{j,1} ) d N^{j,1}_s + ~^{\tau^{j,1}} \mathscr N_t^{j,1}. \label{eq: recursive base}
\end{align}
Using \eqref{eq: recursive main} and \eqref{eq: recursive base}, we will show that, for each $n \geq 1$, there are bounded and $\F_\T$-predictable processes $\{ A^{n,k} \}_{k \leq n}$ such that
\begin{align}
    ~^{\tau^J}\mathbb{N}^{(n)}_t = \sum_{k=1}^n \Big( \int_0^{t \wedge \tau^J} A_s^{n,k} dN_s^{j,k} + ~^{\tau^J}\mathscr{N}_t^{j,k} \Big). \label{eq: induction sum}    
\end{align}
\eqref{eq: induction sum} expresses $~^{\tau^J}\mathbb{N}^{(n)}$ as a sum of terms that are either $\F_\T$-predictable, or integrals of bounded and $\F_\T$-predictable processes against $~^{\tau^J}N^{j,k}$. By Assumption \eqref{eq: j comp measure}, each $~^{\tau^J}N^{j,k}$ has the same compensator with respect to both $\F_\T$ and $\F_\T^{\tilde Y}$ under $P$. A standard result on stochastic integrals (see, e.g., \cite[Theorem 8.2.9]{cohen2015stochastic}) then implies that, if a process $Z$ is the integral of a bounded and $\F_\T$-predictable process against $~^{\tau^J}N^{j,k}$, then $Z$ has the same compensator with respect to the two filtrations under $P$. Moreover, each $\mathscr{N}^{j,k}$ is $\F_\T$-predictable, and therefore its own compensator with respect to both filtrations. It follows that, if \eqref{eq: induction sum} holds, then $~^{\tau^J}\mathbb{N}^{(n)}$ has the same compensator with respect to both filtrations for each $n$. Since $\tau^{(n)} \downarrow \tau^J$ and $I(\tau^J \leq t) = \mathbb{N}^J_t$, it follows that $~^{\tau^J}\mathbb{N}^J = \mathbb{N}^J$ has the same compensator with respect to $P$ and both filtrations. Thus, proving the lemma reduces to establishing \eqref{eq: induction sum}.

We proceed by induction. First, we see that the base case $n=1$ in \eqref{eq: recursive base} stopped at $\tau^J$ has the desired representation with $A^{1,1}_s = 1 - 2\Delta\mathscr{N}_s^{j,1}$. That is, \eqref{eq: induction sum} holds for $n = 1$.

Assume now that \eqref{eq: induction sum} holds for some $n > 1$, and consider \eqref{eq: recursive main}. First, by stopping \eqref{eq: Njk representation} at $\tau^J$ we get
\begin{align*}
    ~^{\tau^J}\mathbb{N}_t^{j,n+1} = \int_0^{t \wedge \tau^J} (1 - 2\Delta\mathscr{N}_s^{j,n+1}) dN_s^{j,n+1} + ~^{\tau^J} \mathscr{N}_t^{j,n+1}, 
\end{align*}
which equals the first term on the right-hand side of \eqref{eq: recursive main} stopped at $\tau^J$. Next, inserting the induction hypothesis into the second term on the right-hand side of \eqref{eq: recursive main}, and stopping at $\tau^J$, we get
\begin{align*}
    &\int_0^{t \wedge \tau^J}( 1 - \Delta \mathscr N_s^{j,n+1} - \Delta N_s^{j,n+1} + 2 \Delta \mathscr N_s^{j,n+1}\Delta  N_s^{j,n+1} ) d \mathbb N_s^{(n)} \\
    &= \sum_{k=1}^n  \int_0^{t \wedge \tau^J} ( 1 - \Delta \mathscr N_s^{j,n+1} - \Delta N_s^{j,n+1} + 2 \Delta \mathscr N_s^{j,n+1}\Delta  N_s^{j,n+1} ) A_s^{n,k} dN_s^{j,k} \\
    &+ \sum_{k=1}^n \int_0^{t \wedge \tau^J} ( 1 - \Delta \mathscr N_s^{j,n+1} - \Delta N_s^{j,n+1} + 2 \Delta \mathscr N_s^{j,n+1}\Delta  N_s^{j,n+1} ) d\mathscr{N}_s^{j,k} \\
    &\stackrel{(\star)}{=} \sum_{k=1}^n  \int_0^{t \wedge \tau^J} ( 1 - \Delta \mathscr N_s^{j,n+1} ) A_s^{n,k} dN_s^{j,k} \\
    &+ \sum_{k=1}^n \int_0^{t \wedge \tau^J} ( 1 - \Delta N_s^{j,n+1} )  d\mathscr{N}_s^{j,k} \\
    &= \sum_{k=1}^n  \int_0^{t \wedge \tau^J} ( 1 - \Delta \mathscr N_s^{j,n+1} ) A_s^{n,k} dN_s^{j,k} \\
    &+ \sum_{k=1}^n  ~^{\tau^J} \mathscr{N}_t^{j,k} - \int_0^{t \wedge \tau^J} \sum_{k=1}^n   \Delta \mathscr{N}_s^{j,k} dN_s^{j,n+1}.
\end{align*}
In $(\star)$ we used that $\Delta N_s^{j,k} \Delta N_s^{j,\ell} = \Delta \mathscr N_s^{j,k} \Delta \mathscr N_s^{j,\ell} = 0$ whenever $k \neq \ell$. This holds because $N^{j,k}$ and $N^{j,\ell}$ (resp. $\mathscr N^{j,k}$ and $\mathscr N^{j,\ell}$) count jumps of distinct marks of the same MPP $N^j$ (resp. $\mathscr N^j := \mathfrak n^j(L, N)$) and they therefore have no common jumps.

Combining the previous two equations, we obtain
\begin{align*}
    ~^{\tau^J}\mathbb{N}^{(n+1)}_t &= \int_0^{t \wedge \tau^J} (1 - 2\Delta\mathscr{N}_s^{j,n+1})  dN_s^{j,n+1} + ~^{\tau^J} \mathscr{N}_t^{j,n+1} \\
    &+ \sum_{k=1}^n \int_0^{t \wedge \tau^J} (1 - \Delta \mathscr N_s^{j,n+1}) A_s^{n,k}  dN_s^{j,k} \\
    &+ \sum_{k=1}^n  ~^{\tau^J} \mathscr{N}_t^{j,k}  - \int_0^{t \wedge \tau^J} \sum_{k=1}^n \Delta \mathscr{N}_s^{j,k}  dN_s^{j, n+1} \\
    &= \sum_{k=1}^{n+1} \Big( \int_0^{t \wedge \tau^J} A_s^{n+1,k}  dN_s^{j,k} + ~^{\tau^J}  \mathscr{N}_t^{j,k} \Big),
\end{align*}
where we in the last line defined
\begin{align}
    \begin{split}
        A_s^{n+1,k} &:= (1 - \Delta \mathscr N_s^{j,n+1}) A_s^{n,k} \quad \text{for } k \leq n, \\
        A_s^{n+1,n+1} &:= 1 - 2\Delta\mathscr{N}_s^{j,n+1} - \sum_{k=1}^n \Delta \mathscr N_s^{j,k},
    \end{split} \label{eq: A functions}
\end{align}
with the base case $A_s^{1,1} = 1 - 2\Delta\mathscr{N}_s^{j,1}$. Each $A^{n+1,k}$ in \eqref{eq: A functions} has an explicit solution involving products and sums of processes in $\{ \Delta \mathscr N^{j,\ell} \}_{1 \leq \ell \leq n+1}$; consequently, each $A^{n+1,k}$ is $\F_\T$-predictable. Moreover, since at most one of the processes in $\{ \mathscr N^{j,\ell} \}_{1 \leq \ell \leq n+1}$ can jump at any given time, each $A^{n+1,k}$ is also bounded. We conclude that \eqref{eq: induction sum} holds for $n+1$.
\end{proof}

\subsection{The invariance property (IP)}
\label{appendix: the likelihood ratio process which gives invariance}

Recall that we denote the $(P, \F_\T)$- compensator of $N^j$ by $\Lambda^j$. Consider another probability measure $Q^* \ll P$ on $\F_T$ that satisfies (IP) with respect to some $\Lambda^{a,*}$. That is:
\begin{itemize}
    \item For $j \neq a$, the $(Q^*, \F_\T)$-compensator of $N^j$ is $\Lambda^j$ (i.e., is unchanged from $P$);
    \item The $(Q^*, \F_\T)$-compensator of $N^a$ is $\Lambda^{a,*}$ (i.e. is modified from $P$).
\end{itemize}
We here characterize the associated likelihood ratio process $W^*_t := \frac{dQ^*}{dP}\big|_{\F_t}$.

By \cite[Theorem 10.2.2]{LastBrandt1995marked}, $W^*$ solves the SDE
\begin{align*}
    W^*_t &= W^*_0 + \sum_{i=1}^d\int_0^t W^*_{s-} V_s^i dM_s^i,  
\end{align*}
where $V_t^i = \xi(t,i) - 1 - \frac{ \Delta \Lambda_t^a - \Delta \Lambda^{a,*}_t }{1 - \Delta \bar {\Lambda}_t}$ and $W^*_0 = 1$. With the given distributions $P$ and $Q^*$ we have $\xi(t,a) = \frac{d \Lambda^{a,*}_t}{ d\Lambda_t^a }$ while $\xi(t,i)=1$ for $i \neq a$, so
\begin{align*}
    V_t^i &= \Big(\frac{d \Lambda^{a,*}_t}{d\Lambda^a_t} - 1\Big)I(i=a) - \frac{ \Delta  \Lambda_t^a - \Delta  \Lambda^{a,*}_t }{1 - \Delta \bar {\Lambda}_t}.
\end{align*}
Since $W^*_0=1$, the likelihood-ratio process solves
\begin{align}
\begin{split}
    W^*_t &= 1 - \sum_{i \neq a}\int_0^t W^*_{s-} \frac{\Delta \Lambda^a_s - \Delta \Lambda^{a,*}_s}{1 - \Delta \bar \Lambda_s} dM_s^i \\
    &+ \int_0^t W^*_{s-} \Big( \frac{d\Lambda^{a,*}_s}{d\Lambda_s^a} - 1 -  \frac{\Delta \Lambda^a_s - \Delta \Lambda^{a,*}_s}{1 - \Delta \bar \Lambda_s} \Big) dM_s^a.
\end{split}\label{eq: (IP) lik rat proc}
\end{align}

\subsubsection{An example of a distribution satisfying (IP) which is not a $g$-formula distribution}
\label{appendix: non g formula distribution}

Suppose the $(P, \F_\T)$-compensator of $N^j_t$ is $\Lambda_t^j = t$ for each $j$. Consider the process in \eqref{eq: (IP) lik rat proc} with $\Lambda^{*,a}_t = c t$ for some deterministic $c > 0$. In this case, \eqref{eq: (IP) lik rat proc} reduces to
$$ W_t^* = 1 - \int_0^t(c - 1) W_{s-}^* dM_s^a. $$
The preceding SDE has the explicit solution $W_t^* = c^{N_t^a}e^{-(c-1)t}$. $N_t^a$ is a Poisson process with rate 1 under $P$, and we therefore get by explicit expressions of moment generating functions that $ E_P[c^{N_t^a}] = E_P[e^{N_t^a ln(c)}] = e^{t(e^{ln(c)} - 1)} = e^{t(c - 1)}$. Thus $E_P[W_t^*] = 1$ for each $t \in \T$, and $W^*$ is a mean one nonnegative uniformly integrable martingale with respect to $P$ and $\F_\T$. We can therefore define the measure $dQ^* := W_T^* dP$ on $\F_T$, which is then going to be a probability measure. $Q^*$ satisfies (IP) with respect to $\Lambda^{a,*}_t = c t$. However, there is no (deterministic) intervention $\mathfrak n^a$ as in Definition \ref{definition: intervention} such that $\mathfrak n^a_t(N) = c t$.

\subsection{Definition \ref{definition: potential outcome process} \ref{item: invariance a} for a predictable regime}
\label{subsection: potential outcome process for predictable regime}
We show first that if \eqref{eq: predictable intervention tilde a compensator} holds, then Definition \ref{definition: potential outcome process} \ref{item: invariance a} holds. Because $\mathfrak n^a$ is predictable we have that $\mathfrak{n}^a(\tilde N)$ is $\tilde \F_\T$-predictable by \cite[Corollary 2.2.8]{LastBrandt1995marked}, and by \eqref{eq: predictable intervention tilde a compensator} we get in particular that
\begin{align*}
    E_P[\tilde N^a_{\tilde \tau}] = E_P[\mathfrak n^a_{\tilde \tau}(\tilde N)]  \quad \text{ for each } \tilde \F_\T\text{-optional time }  \tilde \tau.
\end{align*}
 The preceding equality and the predictability of $\mathfrak{n}^a(\tilde N)$ implies, by the definition of compensators, that Definition \ref{definition: potential outcome process} \ref{item: invariance a} holds.

Assume now that Definition \ref{definition: potential outcome process} \ref{item: invariance a} holds, i.e., the $(P,\tilde \F_\T)$-compensator of $\cfproc{a}$ is $\na{}(\cfproc{})$. Because $\mathfrak n^a$ is predictable, $\na{}(\cfproc{})$ is a counting process that is predictable with respect to $\tilde \F_\T$. Defining $\tilde T_0^a := 0$ and
\begin{align}
    \tilde T_n^a := \inf \{ s > 0 | \na{s}(\cfproc{}) \geq n \}, \label{eq: tilde T n a definition}
\end{align}
it follows by \cite[Theorem 2.1.30]{LastBrandt1995marked} that each $\tilde T_n^a$ is a predictable time with respect to $\tilde \F_\T$. Consequently, $\tilde h_t^n := I(\tilde T_{n}^a < t < \tilde T_{n+1}^a)$ defines an $\tilde \F_\T$-predictable process. This can be seen by noting that $I(\tilde T_{n}^a < t < \tilde T_{n+1}^a) = I(\tilde T_{n}^a < t ) - I(\tilde T_{n+1}^a \leq t)$, where $I(\tilde T_{n}^a < t )$ is left-continuous and adapted (and thus predictable), and $I(\tilde T_{n+1}^a \leq t)$ is predictable by \cite[Theorem 2.1.35]{LastBrandt1995marked} because $\tilde T_{n+1}^a$ is a predictable time. 
By the definition of compensators, it follows that
$$ E_P\Big[\int_0^T \tilde h_t^n d\tilde N^a_t\Big] = E_P\Big[\int_0^T \tilde h_t^n d \mathfrak n^a_t(\tilde N)\Big] = 0, $$
where we used that $\mathfrak n^a(\tilde N)$ is piecewise constant with jumps at $\{ \tilde T_n^a \}_n$. Since the terms in the preceding expectations are non-negative, it follows that $\int_0^T \tilde h_t^n d\tilde N^a_t = 0$ $P$-a.s. for each $n$. Because $\tilde h^n \geq 0$ and $\tilde N^a$ is non-decreasing, we obtain $\tilde h^n_t \tilde N^a(dt) = 0$ $P$-a.s. for each $n$ ---  $\tilde N^a_t$ is constant on the set $\{ \tilde T_n^a < t < \tilde T_{n+1}^a \}$ for each $n$. That is, $\tilde N^a$ can only jump at the times $\{ \tilde T_n^a \}_n$.

Since $\tilde T_n^a$ is a $\tilde \F_\T$-predictable time and $\mathfrak n^a(\tilde N)$ is a compensator of $\tilde N^a$ with respect to $\tilde \F_{\T}$, we have by e.g. \cite[Equation (4.1.16)]{LastBrandt1995marked} that
\begin{align*}
    E_P[ \Delta \tilde N_{\tilde T_1^a}^a | \tilde \F_{\tilde T_1^a-} ]I(\tilde T_1^a < \infty) &=  \Delta \na{\tilde T_1^a}(\tilde N)I(\tilde T_1^a < \infty) = I(\tilde T_1^a < \infty) \quad P\text{-a.s.},
\end{align*}
where the last equality follows from \eqref{eq: tilde T n a definition}. Using that $\tilde T_1^a \in \tilde \F_{\tilde T_1^a-}$ (by \cite[Theorem 2.2.15]{LastBrandt1995marked}) and the tower property, we get from the preceding equality that $E_P[ \Delta \tilde N_{\tilde T_1^a}^a I(\tilde T_1^a < \infty) ] = E_P[ I(\tilde T_1^a < \infty) ]$.  Because $\Delta \tilde N^a \leq 1$ (since $\tilde N^a$ is a counting process), it follows that $\Delta \tilde N_{\tilde T_1^a} = 1$ $P$-a.s. on $\{ \tilde T_1^a < \infty \}$. Combining this with the observation that $\tilde N_t^a = 0$ on $\{  t < \tilde T_1^a \}$, we conclude that $\tilde N^a_{\tilde T_1^a \wedge \cdot} = \na{\tilde T_1^a \wedge \cdot}(\tilde N)$ $P$-a.s. The identity $\tilde N^a_{\tilde T_n^a \wedge \cdot} = \na{\tilde T_n^a \wedge \cdot}(\tilde N)$ $P$-a.s. for each $n$ follows by induction. 

\subsection{The equivalence of exchangeability conditions in Example \ref{example: delayed allocation}}
\label{appendix: example: delayed allocation}

We show the equivalence of the exchangeability condition \ref{enum: mpp exchangeability} of Theorem \ref{theorem: multiple marked interventions} and \eqref{eq: delayed allocation seq ex} when there is no censoring. We use that $\bar N^a$ is $\F_\T$-predictable, which holds due to the delay assumption \eqref{eq: delay assumption}. It follows from \cite[Theorem 2.1.30]{LastBrandt1995marked} that, for each $k \geq 1$, $T_k^a$ is an $\F_\T$-predictable time, where $\{ T_k^a \}_k$ are the ordered jump times of $\bar N^a = N^a((0,\cdot] \times X^a)$. Because these times are predictable, 
$$ h^k_s := I(T_k^a < s < T_{k+1}^a)  $$
is an $\F_\T$-predictable process for each $k$. This can be seen, as $I(T_k^a < s < T_{k+1}^a) = I(T_k^a < s) - I(T_{k+1}^a \leq s)$, and $I(T_k^a < s)$ is left-continuous and adapted (because $T_k^a$ is an optional time) and thus predictable, and $I(T_{k+1}^a \leq t)$ is predictable by \cite[Theorem 2.1.35]{LastBrandt1995marked}.

For the intervention \eqref{eq: allocation assignment} under study, (assuming no right-censoring), regime deviation can only occur at the treatment allocation times $\{ T_k^a \}_k$, i.e. $\tau^a$ takes values in the range of $\{ T_k^a \}_k$; consequently, $\int_0^T h_s^k d\mathbb N^a_s = 0$ $P$-a.s. for each $k$, where $\mathbb N^a = I(\tau^a \leq \cdot)$. Since $\mathbb \Lambda^a$ is a compensator of $\mathbb N^a$, it thus follows from the definition of compensators and the predictability of each $h^k$ that
\begin{align*}
    0 = E_P\Big[ \int_0^T h_s^k d\mbbNa{s} \Big] = E_P\Big[ \int_0^T h_s^k d\mathbb \Lambda_s^a \Big].
\end{align*}
Since $\mathbb \Lambda^a$ is nondecreasing and $h^k \geq 0$, it follows from the preceding line that $h_t^k d\mathbb \Lambda^a_t = 0$ $P$-a.s. for each $k\geq 1$. Hence, $\mathbb \Lambda^a$ is constant on each open interval $(T_k^a, T_{k+1}^a)$. That is, $\mathbb \Lambda^a$ can only change at the times $\{ T_k^a \}_k$, $P$-a.s. Since these times are predictable, it follows from \cite[Equation (4.1.16)]{LastBrandt1995marked} that the exchangeability condition \ref{enum: mpp exchangeability} in Theorem \ref{theorem: multiple marked interventions} holds if and only if
    \begin{align}
    \begin{split}
        E_P\big[\Delta \mathbb N_{T_k^a}^a | \F_{T_k^a-}\big] I(T_k^a < \infty) &= \Delta \mathbb \Lambda_{T_k^a}^a I(T_k^a < \infty) \\
        &= E_P\big[\Delta \mathbb N_{T_k^a}^a | \F_{T_k^a-} \vee \sigma(\tilde Y)\big] I(T_k^a < \infty)
    \end{split}\label{eq: ex jump equivalence}
    \end{align}
$P$-a.s. for each $k \geq 1$. We now use the basic identity $\Delta \mathbb N_{T_k^a}^a = \big (1 - I(\tau^a > T_k^a) \big) I(\tau^a \geq T_k^a)$. Inserting this identity in \eqref{eq: ex jump equivalence} we get, after simplifying, that this equality holds if and only if
\begin{align*}
    &E_P\big[I(\tau^a > T_k^a) | \F_{T_k^a-}\big] I(\tau^a \geq T_k^a, T_k^a < \infty)\\
    &= E_P\big[I(\tau^a > T_k^a) | \F_{T_k^a-} \vee \sigma(\tilde Y)\big] I(\tau^a \geq T_k^a, T_k^a < \infty)
\end{align*}
$P$-a.s. for each $k\geq 1$. Here we used that $I( \tau^a \geq T_k^a ) \in \F_{T_k^a-}$ (apply \cite[Theorem 2.1.16 (ii)]{LastBrandt1995marked} to $I(\tau^a \geq \cdot)$) to move the indicator outside the conditional expectations. Because $T_k^a$ is $\F_{T_k^a-}$-measurable (see, e.g., \cite[Theorem 3.4 1)]{He1992Semimartingale}), and $I(\tau^a > T_k^a) = 0$ on $\{ T_k^a = \infty \}$, it follows that the last equality is equivalent to
\begin{align*}
    E_P\big[I(\tau^a > T_k^a) | \F_{T_k^a-}\big] I(\tau^a \geq T_k^a) = E_P\big[I(\tau^a > T_k^a) | \F_{T_k^a-} \vee \sigma(\tilde Y)\big] I(\tau^a \geq T_k^a).
\end{align*}
Because $I(\tau^a > T_k^a) = I(\tau^a > T_k^a)  I(\tau^a \geq T_k^a)$ and $I( \tau^a \geq T_k^a ) \in \F_{T_k^a-}$, we can move the latter indicator inside the conditional expectation again and absorb it into $I(\tau^a > T_k^a)$. It follows that the exchangeability condition Theorem \ref{theorem: multiple marked interventions} \ref{enum: mpp exchangeability} in this example is equivalent to any of the two following conditional independence statements:
\begin{align*}
    I(\tau^a > T_k^a) &\indep \tilde Y | \F_{T_k^a-}, \tau^a \geq T_k^a \text{ for each } k \geq 1, \\
    I(\tau^a > T_k^a) &\indep \tilde Y | \F_{T_k^a-} \text{ for each } k \geq 1.
\end{align*}



\subsection{Incompatibility of the exchangeability condition \eqref{eq: exchan disc} with the strong consistency condition \eqref{eq: stronger consistency} when \eqref{eq: mcp can comp orthogonality} is violated}
\label{appendix: incompatibility of exchangeability and stronger consistency}
Consider a bivariate observed data counting process $N = (N^a, N^y)$ on $\T$ with $T > 1$. Assume the compensator of $N$ under $P$ is
\begin{align}
\begin{split}
    \Lambda^a(dt) &= \lambda^a \delta_1(dt), \\
    \Lambda^y(dt) &= \lambda^y \delta_1(dt),
\end{split}\label{eq: toy obs data compensator}
\end{align}
with real numbers $\lambda^a, \lambda^y$ satisfying $0 < \lambda^a, \lambda^y < 1$ and $\lambda^a + \lambda^y \leq 1$. The compensator in \eqref{eq: toy obs data compensator} are constant as a functional of $N = (N^a, N^y)$. Consequently, we  may take the compensator to be equal to the canonical compensator; i.e. $\alpha^j = \Lambda^j$ for $j \in \{ a,y \}$, where $\alpha^j$ is as in \eqref{eq: alpha j compensator}. In particular, from \eqref{eq: toy obs data compensator} it follows that
\begin{align*}
    \Delta  {\alpha}_1^a \Delta \alpha_1^y = \lambda^a \lambda^y > 0,
\end{align*}
which shows that the orthogonality assumption \eqref{eq: mcp can comp orthogonality} is violated.

Now, consider the intervention $\na{} = 0$ which prevents treatment. By Definition \ref{definition: intervention}, the respective compensators of the potential outcome processes $\tilde N^a$ and $\tilde N^y$ arising under this intervention are
\begin{align}
    \begin{split}
        \mathfrak n^a(\tilde N)(dt) &= 0, \\
        \alpha^y(\tilde N)(dt) &= \lambda^y \delta_1(dt).
    \end{split}
    \label{eq: toy pot outcomes compensator}
\end{align}

Suppose the potential outcome of interest is $\tilde Y = \tilde N^y$, with the associated observed outcome process $Y = N^y$.

For the intervention of interest ($\mathfrak{n}^a = 0$), the regime deviation time is $\tau^a = \inf\{ s > 0 | N_s^a \neq \mathfrak n_s^a(N) \} = \inf\{ s > 0 | N_s^a \neq 0 \}$; i.e. $\tau^a$ and coincides with the first jump time of $N^a$. 

Now observe from \eqref{eq: toy obs data compensator} that $\tau^a \geq 1$. Combining this observation with the stronger consistency assumption \eqref{eq: stronger consistency}, it follows that
\begin{align*}
\Delta N_1^y = \Delta \tilde{N}_1^y \quad P\text{-a.s.}
\end{align*}
Moreover, the exchangeability condition \eqref{eq: exchan disc} reduces to
\begin{align*}
    E_P[\Delta \tilde{N}_1^a] = E_P[\Delta \tilde{N}_1^a | \tilde{N}^y |_T] \quad P \text{-a.s.}
\end{align*}
Combining the last two equalities, and using that $\tilde N^y|_T = \Delta \tilde N^y_1$ (which follows from \eqref{eq: toy pot outcomes compensator}), we get, $P$-a.s.,
\begin{align}
\begin{split}
    E_P[\Delta N_1^a] &= E_P[\Delta N_1^a | \tilde N^y |_T]
    = E_P[\Delta N_1^a | \Delta \tilde N_1^y] \\
    &= E_P[\Delta N_1^a | \Delta N_1^y] = E_P[\Delta N_1^a | \Delta N_1^y = 0](1 - \Delta N_1^y),
\end{split} \label{eq: exchangeability requirement single}
\end{align}
where the last equality follows due to the no-common-jumps property $\Delta N_1^a\Delta N_1^y = 0$ of multivariate counting processes. Using that the compensator in \eqref{eq: toy obs data compensator} specifies the joint distribution
\begin{align*}
P(\Delta N_1^a = 1, \Delta N_1^y = 1) &= 0, \\
P(\Delta N_1^a = 1, \Delta N_1^y = 0) &= \lambda^a, \\
P(\Delta N_1^a = 0, \Delta N_1^y = 1) &= \lambda^y, \\
P(\Delta N_1^a = 0, \Delta N_1^y = 0) &= 1 - \lambda^a - \lambda^y,
\end{align*}
we get
\begin{align*}
E_P[\Delta N_1^a | \Delta N_1^y = 0] = \frac{P(\Delta N_1^a = 1, \Delta N_1^y = 0)}{P(\Delta N_1^y = 0)} = \frac{\lambda^a}{1 - \lambda^y}.
\end{align*}
Substituting the preceding equality into \eqref{eq: exchangeability requirement single}, our exchangeability condition requires
\begin{align}
\lambda^a = \frac{\lambda^a}{1 - \lambda^y}(1 - \Delta N_1^y). \label{eq: la ort violation}
\end{align}
However, the equality \eqref{eq: la ort violation} can not hold in general. For example, with $\lambda^a = 0.3$ and $\lambda^y = 0.4$, the preceding statement reduces to
\begin{align*}
0.3 = \frac{0.3}{0.6}(1 - \Delta N_1^y) = 0.5(1 - \Delta N_1^y),
\end{align*}
which can not hold, because $\Delta N^y_1 \in \{ 0, 1 \}$.

\subsection{Testability of exchangeability assumptions}
\label{subsection: testability of exchangeability}
The exchangeability conditions presented here can be seen to be single world irrelevance conditions. Using that compensators of $\mbbNa{}$ are constant on $\{ \cdot > \Nastop \}$, the statement \eqref{eq: exchan disc} is under the consistency condition $N|_\cdot I(\cdot  < \Nastop) = \cfproc{}|_\cdot I(\cdot  < \Nastop)$ $P$-a.s. equivalent to \footnote{Here, $\Upsilon_{\cdot } = \dot \Upsilon_{\cdot }(N|_{\cdot -}, \tilde Y )$ is an $\F_\T^{\tilde Y}$-compensator of $\mbbNa{}$.} 
 \begin{align*}
     \dot {\mathbb \Lambda}^a_{\cdot }(\cfproc{}|_{\cdot -}) = \dot \Upsilon_{\cdot }(\cfproc{}|_{\cdot -},\tilde Y|_T ) \quad P\text{-a.s.}, 
 \end{align*}
a statement involving functionals of realizations coming from the same intervention. As such, our conditions differ from cross-world assumptions such as those present in the NPSEM-IE model (aka structural causal model with independent errors) \cite{pearl},
which contain independence statements about a given unit between different counterfactual worlds. Such cross-world independence assumptions, generally, cannot be experimentally tested. For further discussion and some implications we refer to \cite{richardson_single_2013,robins2011alternative,Andrews2020}. 

  The exchangeability conditions we formulate here can in principle always be assessed indirectly using the following recipe: first, assume that exchangeability holds in a given study, and calculate effects of interest according to \eqref{eq: exchangeability ipw formula} (this means at least also assuming that positivity and consistency holds). Then, compare the estimated effects with estimates coming from an ideal experiment with perfect compliance conducted in the same population, where the intervention  is implemented. If the difference between these estimates are statistically significant, and all other assumptions hold (which they likely do not), this would indicate that exchangeability is violated. 
       
  Following  \cite{Young2014natural,richardson_single_2013}, tests of exchangeability assumptions like ours can be envisaged if we have access to the process recording the natural value of treatment. A version of our condition involving the natural value process can be directly tested on the basis of observations coming from an ideal experiment where the regime is implemented and the natural value process is recorded.  \footnote{For instance, we can then introduce the optional time "the first time the regime deviates from the natural value process." An exchangeability condition for the identification of regimes as presented here can be phrased in terms of the compensator of this optional time. This optional time, and potential outcomes of interest, are observable in experiments where the intervention is implemented with perfect compliance. When these criteria are met, one can check whether the exchangeability assumption holds by e.g. fitting regression models to estimate its compensator with respect to different filtrations, and inspecting the resulting martingale residuals. } 
To formally employ such a condition in this work we would need to alter our definitions so as to include the natural value process. Such an altering would lead to changes throughout and is a separate work in its own right.

\subsection{Orthogonality of counting process martingales}
\label{appendix: orthogonal counting process martingales}
    If mild integrability conditions are satisfied, a counting process minus its compensator becomes a locally square integrable martingale. This means that predictable variation processes exist. Consider such a multivariate counting process $N$ with compensator $\Lambda$, so that $M = N - \Lambda$ is a locally square integrable martingale. Two martingale components $M^i, M^j$, $i \neq j$, are \emph{orthogonal} if their predictable variation process is zero, i.e. if $\langle M^i, M^j \rangle = 0$ a.s. Elementary calculations with counting process martingales then gives that  \footnote{see e.g. \cite[p. 75]{Andersen}.}
      $$ \langle M^i, M^j \rangle = \int_0^\cdot \Delta \Lambda_s^i d\Lambda_s^j = \int_0^\cdot \Delta \Lambda_s^j d\Lambda_s^i = 0 \quad P\text{-a.s.}  $$
      In particular, orthogonality of the martingales holds if almost every path of each compensator process is continuous. Of particular relevance for the statistical literature on counting processes is the situation when intensities exist, i.e. when there exists a non-negative and adapted process $\lambda=(\lambda^1, \dots, \lambda^d)$ such that $\Lambda = \int_0^\cdot \lambda_s ds$. A significant proportion of the existing statistical methods based on multivariate counting processes assumes that counting process intensities exist \cite{Andersen,aalen2008survivalandevent,FlemingHarrington2005,martinussen2006dynamic,CookLawless2007}. 

\subsubsection{Practical violation of orthogonal martingales}
\label{appendix: failure of orthogonality assumption}
Consider a situation where the survival status of possibly right-censored subjects only are monitored at a collection of predictable times $\{ \sigma_k \}_k$. This could correspond to a follow-up schedule which is decided in advance, or adaptively determined based on earlier recorded events, where information on the subjects is only collected at the scheduled times. The observed censoring and survival status is unknown except at the follow-up times, but each of the events generally has a non-zero chance at happening (i.e. being observed) at follow-up. Formally, there are non-negative constants $\{ \theta_k, \beta_k \}_k$ such that the compensators of the observed censoring and death processes are given by
    \begin{align*}
        \Lambda^c(ds) &= \sum_{k} J_{\sigma_k} \theta_k \delta_{\sigma_k}(ds), \\
        \Lambda^d(ds) &= \sum_{k} J_{\sigma_k} \beta_k \delta_{\sigma_k}(ds),
    \end{align*}
where $J_t = I(N_{t-}^c = N_{t-}^d = 0)$ is the at-risk process. We have that $\Delta \Lambda_{\sigma_k}^c \Delta \Lambda^d_{\sigma_k} = J_{\sigma_k} \theta_k \beta_k$, which is positive whenever $\theta_k\beta_k > 0$ and $J_{\sigma_k} = 1$ (i.e. when individuals are at risk), and the 'observed' martingales are not orthogonal.

\subsection{A proof sketch of \eqref{eq: exchangeability ipw formula} under a particular sequential exchangeability condition}
\label{appendix: seq exchangeability proof sketch}

We here outline an argument which gives the IPW formula \eqref{eq: exchangeability ipw formula} when the exchangeability condition \eqref{eq: exchan disc} is swapped with the sequential condition \eqref{eq: exch strong predict}, given in Proposition \ref{proposition: sequential exchangeability predictable times}. We adopt in Proposition \ref{proposition: sequential exchangeability predictable times} the assumption that the event times $\{ T_k \}_k$ are predictable. To carry out the argument we rely on alternative representations of the likelihood ratio $W$, expressed in terms of sums over the event times, which is established in the following lemmas. 

\begin{lemma}\label{lemma: dolean dade identity}
     $W = \E(\KK)$ satisfies, $P$-a.s.
    \begin{align}
        W_t &= 1 + \sum_{m=0}^\infty \int W_{s-} I(T_m < s \leq t \wedge T_{m+1} ) d\mathbb K^a_s \label{eq: dolean primitive sum} \\
        W_t &= 1 - \sum_{m=0}^\infty W_{T_m} I(t > T_m) \Big(1 - \frac{I(\Nastop > t \wedge T_{m+1} > T_m  )}{\E\left( - \int I(T_m < s ) d \mbbLa{s} \right)_{t \wedge T_{m+1}}}\Big)
        \label{eq: dolean sum}
    \end{align}
\end{lemma}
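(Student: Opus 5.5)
The first identity, \eqref{eq: dolean primitive sum}, follows by partitioning the integration domain of the SDE \eqref{eq: W dolean} along the jump times of $N$. By Assumption \ref{assumption: no explosions}, $T_m \uparrow \infty$ $P$-a.s. (with $T_0 := 0$ and $T_m = \infty$ once events stop). Hence the intervals $(T_m, T_{m+1}]$, $m \geq 0$, partition $(0,\infty)$ up to a null set, and
\[
I(s \leq t) = \sum_{m\ge 0} I(T_m < s \leq t \wedge T_{m+1}) \quad \text{for } s>0 .
\]
I will insert this into $\int_0^t W_{s-}\,d\KK_s$ and interchange sum and integral. This is justified because $\KK$ and $W$ have finite variation (Lemma \ref{lemma: W finite variation} \ref{enum: K local martingale}, \ref{enum: W1}), so the stochastic integral is a pathwise Lebesgue--Stieltjes integral, and for each $\omega$ only finitely many terms are nonzero on $[0,t]$.

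For \eqref{eq: dolean sum}, I will show that on each interval $(T_m, T_{m+1}]$ the process $W$ restarted at $T_m$ is an explicit stochastic exponential. On $\{t>T_m\}$, the SDE gives for $u \in (T_m, t\wedge T_{m+1}]$
\[
W_u = W_{T_m} + \int_{(T_m,u]} W_{s-}\,d\KK_s ,
\]
so $W_u = W_{T_m}\,\E\big(\int I(T_m<s)\,d\KK_s\big)_u$ by uniqueness of solutions of the linear SDE \cite[II Theorem 37]{protter}. I then compute this exponential explicitly, writing $\KK = -\int (1-\Delta\mbbLa{})^{-1}(d\mbbNa{} - d\mbbLa{})$. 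On $\{\Nastop \le T_m\}$ we have $W_{T_m}=0$ by Lemma \ref{lemma: W finite variation} \ref{enum: W jump to zero identity}, so the term equals $-W_{T_m}I(t>T_m)\cdot(\dots)=0$ and any convention for the bracket works; I will note that the indicator $I(\Nastop > t\wedge T_{m+1} > T_m)$ handles this consistently. On $\{\Nastop > T_m\}$, the counting process $\mbbNa{}$ has at most one jump after $T_m$, at $\Nastop$. Mimicking the argument behind \eqref{eq: W unstabilized weight}, I use the product formula for exponentials of finite-variation processes. Before $\Nastop$, $\Delta\KK = \Delta\mbbLa{}/(1-\Delta\mbbLa{})$, so $1+\Delta\KK = 1/(1-\Delta\mbbLa{})$. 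The continuous part contributes $e^{\mathbb\Lambda^{a,c}}$. Combining,
\[
\E\Big(\int I(T_m<s)\,d\KK_s\Big)_u = \frac{I(\Nastop > u)}{\E\big(-\int I(T_m<s)\,d\mbbLa{s}\big)_u}.
\]
At $\Nastop$ the factor $1+\Delta\KK$ equals $0$ by Lemma \ref{lemma: W finite variation} \ref{enum: W2}, which produces the indicator. Positivity \eqref{eq: K jump to zero condition} ensures the denominator is strictly positive, so the ratio is well defined.

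Finally, I will telescope. Write
\[
W_t - 1 = \sum_{m\ge0} \big(W_{t\wedge T_{m+1}} - W_{t\wedge T_m}\big),
\]
a finite sum pathwise. Each summand is nonzero only on $\{t>T_m\}$, where it equals $W_{T_m}\big(\E(\cdot)_{t\wedge T_{m+1}}-1\big)$, which gives exactly \eqref{eq: dolean sum}. The main obstacle is the explicit evaluation of the restarted exponential. This requires care with the jump at $\Nastop$ when $\Nastop$ coincides with a jump of $\mbbLa{}$ (the $\tfrac00$ issue), with the fact that $\Nastop$ may equal $T_{m+1}$ (so the indicator must use $t\wedge T_{m+1}$ with strict inequality), and with null sets. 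I would handle all of these by working on the full-measure set where $\Delta\mbbLa{\Nastop}\neq 1$, as in the proof of Lemma \ref{lemma: W finite variation} \ref{enum: W2}.
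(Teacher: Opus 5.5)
Your proposal is correct and follows essentially the paper's route: split the integral along the intervals $(T_m, T_{m+1}]$, write $W$ on each interval as $W_{T_m}$ times the stochastic exponential of $\mathbb K^a$ restarted at $T_m$, and identify that exponential with $I(\tau^a > \cdot)/\mathcal E\big(-\int I(T_m<s)\,d\mbbLa{s}\big)$. The differences are cosmetic. You telescope $W_{t\wedge T_{m+1}} - W_{t\wedge T_m}$ and compute the restarted exponential from the explicit product formula for finite-variation exponentials, whereas the paper evaluates $W_{T_m}\int \mathcal E(\mathbb H^m)_{s-}\,d\mathbb H^m_s$ and identifies the ratio by checking that it solves the same SDE.
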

\begin{proof}
    Using that $W = 1 + \int_0^\cdot W_{s-}d\mathbb K^a_s$ we get
    \begin{align*}
        W_t = 1 + \int_0^t W_{s-} d\mathbb K^a_s
        = 1 + \sum_{m=0}^\infty \int W_{s-} I(T_m <  s \leq t \wedge T_{m+1} ) d\mathbb K^a_s
    \end{align*}
    by the linearity of the integral. This shows \eqref{eq: dolean primitive sum}. Next, we show that the sum in the previous equation coincides with the sum in \eqref{eq: dolean sum}.  We get that
    \begin{align*}
        \int_0^t W_{s-} I(T_m < s \leq T_{m+1} ) d\mathbb K^a_s & = W_{T_m}\int_0^t \E\Big( \int I(T_m < u \leq T_{m+1}) dK_u \Big)_{s-} \\
        &I(T_m < s \leq  T_{m+1} ) d\mathbb K^a_s \\ 
        &= W_{T_m}\int_0^t \E(  \mathbb H^m )_{s-} d \mathbb H_s^m \\
        &= W_{T_m} \big( 1 - \E(  \mathbb H^m )_{t} \big)
    \end{align*}
    where $\mathbb H_t^m = \int_0^t I(T_m < s \leq T_{m+1} ) d\mathbb K^a_s$. Here we used that  $$\E(\mathbb H^m) = \allowbreak 1 + \allowbreak \int \E(\mathbb H^m)_{s-}d\mathbb H^m_s,$$ and that 
    \begin{align*}
        W_{s-}I(T_m < s \leq T_{m+1} ) &= W_{T_m} \frac{ \prodi_{T_m < u < s}(1 - d \mathbb N_u^a) }{\prodi_{T_m < u < s}(1 - d \mbbLa{u}) }  I(T_m < s \leq T_{m+1} ) \\
        &=   W_{T_m} \E(H^m)_{s-} I(T_m < s \leq T_{m+1} ),
    \end{align*} 
    which can be seen by inspecting the calculations in the proof of Proposition \ref{prop: characterizations of W}. We can conclude that 
    \begin{align*}
        W_t = 1 - \sum_{m=0}^\infty W_{T_m} \big(1 - \E( \mathbb H^m)_t \big).
    \end{align*}
    Next, we note that since $\mathbb H_t^m  = 0$ on $ \{ t \leq T_m\}$, we have 
    $ W_{T_m} \big( 1 - \E\left(  \mathbb H^m \right)_{t} \big) = 0 $ on this set and we may multiply with  $I(t > T_m)$. Since we are also multiplying with $W_{T_m}$ we are on the set $\{ \Nastop > T_m \}$, and we can consequently assume that $t \wedge \Nastop > T_m$.  This gives that, arguing as in the proof of Proposition \ref{prop: characterizations of W}, 
    \begin{align}
        \begin{split}
        &\frac{I(\Nastop > t \wedge T_{m+1} > T_m )}{\E\left( - \int I(T_m < s \leq T_{m+1}) d \mbbLa{s} \right)_{t}} =  \frac{\prodi_{T_m < s \leq T_{m+1} \wedge t } (1 - d \mbbNa{s}) }{\prodi_{T_m < s \leq T_{m+1} \wedge t } (1 - d \mbbLa{s})} \\
        &= 1 - \int I(T_m < s \leq T_{m+1} \wedge t) \frac{\prodi_{T_m < u < s } (1 - d \mbbNa{u}) }{\prodi_{T_m < u < s } (1 - d \mbbLa{u})} \frac{ d\mbbNa{s} - d \mbbLa{s} }{1 - \Delta \mbbLa{s}}.
        \end{split} \label{eq: prop ratio martingale}
    \end{align}
    The given ratio on the left-hand side of \eqref{eq: prop ratio martingale}  coincides with $\E(\mathbb H^m)_t$, as it solves the same SDE. This shows \eqref{eq: dolean sum}.
\end{proof}

\begin{lemma}\label{lemma: ipw equivalence}
    Let bounded variables $H, \tilde H$ be such that $E_P[ W_t \tilde H ] = E_P[ W_t H ]$. Then, the following are equivalent:  
    \begin{enumerate}[label=(\roman*)]
        \item $E_P[\tilde  H] = E_P[ W_t H]$
        \item $\sum_{m=0}^\infty E_P \Big[ W_{T_m} I(t > T_m )\Big(1 - \frac{I(\Nastop > t \wedge T_{m+1} > T_m )}{\E\left( - \int I(T_m < s ) d \mbbLa{s} \right)_{t \wedge T_{m+1}}}\Big) \tilde  H \Big] = 0$
        \item $\sum_{m=0}^\infty E_P \Big[ \int W_{s-} I(T_m < s \leq t \wedge T_{m+1} ) d\mathbb K^a_s \tilde  H \Big] = 0.$
\end{enumerate}

\end{lemma}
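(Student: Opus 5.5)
The plan is to show (i)$\Leftrightarrow$(iii) directly, and then (ii)$\Leftrightarrow$(iii) by matching the summands in the two representations of Lemma \ref{lemma: dolean dade identity}. Both equivalences rest on the hypothesis $E_P[W_t\tilde H]=E_P[W_tH]$, which reduces everything to statements about $E_P[(W_t-1)\tilde H]$.

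\textbf{Step 1: (i)$\Leftrightarrow E_P[(W_t-1)\tilde H]=0$.} Since $E_P[W_tH]=E_P[W_t\tilde H]$, condition (i) says exactly that $E_P[\tilde H]=E_P[W_t\tilde H]$. This is equivalent to $E_P[(W_t-1)\tilde H]=0$. The product $W_t\tilde H$ is integrable because $\tilde H$ is bounded and $W_t\ge 0$ has finite mean. The mean is finite since $W$ is a nonnegative local martingale, hence a supermartingale, by Lemma \ref{lemma: W finite variation}, so $E_P[W_t]\le 1$.

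\textbf{Step 2: link to (iii).} By \eqref{eq: dolean primitive sum}, $W_t-1=\sum_{m\ge0}\int W_{s-}I(T_m<s\le t\wedge T_{m+1})\,d\mathbb K^a_s$. Multiply by $\tilde H$ and take expectations. To move the expectation inside the sum, note that the partial sums up to $M$ equal $W_{t\wedge T_{M+1}}-1$. This is the telescoping identity from the proof of Lemma \ref{lemma: dolean dade identity}, using the fact that $W$ is constant in the interval pieces. By nonexplosion (Assumption \ref{assumption: no explosions}), $T_{M+1}>t$ eventually, so the partial sums converge a.s. to $W_t-1$.

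To pass the limit through the expectation, bound the partial sums. I expect this domination to be the main obstacle. My first attempt is to use $|W_{t\wedge T_{M+1}}-1|\le 1+W_{t\wedge T_{M+1}}$ and show that $\{W_{t\wedge T_{M+1}}\}_M$ is uniformly integrable. If $W$ is a true martingale (the regularity \eqref{eq: W likelihood ratio regularity}), then optional stopping gives $W_{t\wedge T_{M+1}}=E_P[W_t\mid\F_{t\wedge T_{M+1}}]$, which is uniformly integrable, and Vitali's theorem applies. If only the local martingale property is available, I would instead write the sum as a series and interpret the identity "$\sum_m E_P[\cdots]$" as the limit of partial sums. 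With that reading, (iii) holds iff $\lim_M E_P[(W_{t\wedge T_{M+1}}-1)\tilde H]=E_P[(W_t-1)\tilde H]=0$, which still requires the uniform integrability above. So I would state the argument under the standing assumption that $W$ is a likelihood ratio process.

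\textbf{Step 3: (ii)$\Leftrightarrow$(iii).} The proof of Lemma \ref{lemma: dolean dade identity} shows that, termwise,
\[
\int W_{s-}I(T_m<s\le t\wedge T_{m+1})\,d\mathbb K^a_s=-W_{T_m}I(t>T_m)\Big(1-\frac{I(\Nastop>t\wedge T_{m+1}>T_m)}{\E\left(-\int I(T_m<s)\,d\mbbLa{s}\right)_{t\wedge T_{m+1}}}\Big)\quad P\text{-a.s.}
\]
Hence the $m$th summands of (ii) and (iii) agree up to sign. Their partial sums are therefore negatives of each other, and (ii)$\Leftrightarrow$(iii) follows with the same convergence interpretation as in Step 2. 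Combining Steps 1–3 gives the claimed equivalence.
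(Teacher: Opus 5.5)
Your argument is essentially the paper's. Like the paper, you rewrite $W_t-1$ through the two representations in Lemma~\ref{lemma: dolean dade identity} and then use $E_P[W_t\tilde H]=E_P[W_tH]$. Linking (ii) to (iii) termwise, instead of linking each to (i), is a cosmetic difference.

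The substantive difference is that the paper moves $E_P$ through the infinite sum without comment. You instead justify this step by optional stopping and Vitali under \eqref{eq: W likelihood ratio regularity}. That extra hypothesis is not a weakness of your proof; the lemma as stated needs it. Here is a counterexample without it:
\begin{itemize}
\item Work on $\T$ with $T>1$ and take $N=N^a$. Let $N^a$ jump at $t_k=1-2^{-k}$ with conditional probability $1/2$, given that it jumped at $t_1,\dots,t_{k-1}$, and never again after a miss.
\item Let $\na{}$ prescribe each of these jumps.
\item Then positivity holds, since $\int_0^T d\mbbLa{s}/(1-\Delta\mbbLa{s})$ equals the number of prescribed times up to and including the first miss, which is finite a.s.
\item Each $W_{\cdot\wedge T_m}$ is bounded by $2^m$ and is therefore a martingale. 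So every summand in (ii) and (iii) vanishes for $H=\tilde H=1$.
\item Yet $W_t=0$ a.s.\ for $t\ge 1$, so (i) fails.
\end{itemize}
This is precisely the explosion scenario: under the regime, the prescribed jumps accumulate at time $1$.

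One small slip: $W$ is not constant between the $T_m$, because it moves with the continuous part of $\mbbLa{}$. Your partial-sum identity
\[
\sum_{m\le M}\int W_{s-}I(T_m<s\le t\wedge T_{m+1})\,d\mathbb K^a_s=W_{t\wedge T_{M+1}}-1
\]
follows from additivity of the integral and \eqref{eq: W dolean} alone.
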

\begin{proof}
   Using the representation in \eqref{eq: dolean sum} we get
     \begin{align*}
         E_P[W_t\tilde  H] &= E_P[\tilde  H] \\
         &- \sum_{m=0}^\infty E_P\left[   W_{T_m} I(T_m < t)\left(1 - \frac{I(\Nastop > t \wedge T_{m+1} > T_m )}{\E\left( - \int I(T_m < s ) d \mbbLa{s} \right)_{t \wedge T_{m+1}}}\right) \tilde   H\right].
     \end{align*}
    By the fact that $E_P[W_t H] = E_P[W_t \tilde  H]$ 
    we get the first equivalence. A similar argument using the representation in \eqref{eq: dolean primitive sum} gives the other equivalence.
\end{proof}

\begin{proposition}
\label{proposition: sequential exchangeability predictable times}
    Suppose positivity and consistency as in Definition \ref{def:strong prime} holds 
    and that 
    \begin{align}
        \tilde Y_t \indep \frac{ I(\Nastop > T_{n+1} \wedge t > T_n) }{ \E( - \int I(T_n < s) d\mbbLa{s} )_{T_{n+1} \wedge t} }  \Big| \F_{T_n}, \Nastop > T_n \label{eq: exch strong}
    \end{align}
    for each $n \geq 1, t \in \T$. 
    Then 
    \begin{align}
        E_P[\tilde Y_t] = E_P[ W_t Y_t]. \label{eq: cfopipw}
    \end{align}
    If the random times $\{ T_k \}_{k=1}^\infty$ are predictable, then \eqref{eq: cfopipw} is true under the independence condition
    \begin{align}
        \tilde Y_t \indep I(\Nastop > T_{n+1} \wedge t > T_n) | \F_{T_n}, \Nastop > T_n \label{eq: exch strong predict}
    \end{align}
    for each $n \geq 1, t \in \T$.
\end{proposition}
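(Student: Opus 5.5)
The plan is to reduce \eqref{eq: cfopipw} to condition (ii) of Lemma \ref{lemma: ipw equivalence} with $H = Y_t$ and $\tilde H = \tilde Y_t$. First I will check the hypothesis of that lemma, $E_P[W_t \tilde Y_t] = E_P[W_t Y_t]$. By Lemma \ref{lemma: W finite variation} \ref{enum: W jump to zero identity}, $W_t = W_t I(\Nastop > t)$, and consistency \eqref{eq: consistency} gives $Y_t I(\Nastop>t) = \tilde Y_t I(\Nastop>t)$. Boundedness of $Y_t,\tilde Y_t$ makes all expectations finite. It then suffices to show that each summand in (ii) vanishes:
\begin{align*}
E_P\Big[W_{T_m} I(t>T_m)\big(1 - R_m\big)\tilde Y_t\Big] = 0, \qquad R_m := \frac{I(\Nastop > t\wedge T_{m+1} > T_m)}{\E\left(-\int I(T_m<s)\,d\mbbLa{s}\right)_{t\wedge T_{m+1}}},
\end{align*}
together with a justification for exchanging sum and expectation. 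For the latter I would use that $t\wedge T_{m+1}=t$ eventually (no explosions) and that the partial sums are controlled by $W_t$ and $1$ through \eqref{eq: dolean sum}. I expect this to need a truncation argument with integrability under likelihood ratio regularity, or else dominated convergence via $\sup_{s\le t} W_s$ on a localizing sequence.

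For the summand, note that $W_{T_m} I(t>T_m)$ is $\F_{T_m}$-measurable and vanishes off $\{\Nastop > T_m\}$. I will therefore condition on $\F_{T_m}$ and the event $\{\Nastop>T_m\}$, in the mixed sense of Appendix \ref{subsection: conditional independence and expectation}. By \eqref{eq: exch strong}, $E[R_m \tilde Y_t \mid \F_{T_m},\Nastop>T_m] = E[R_m\mid\cdot\,]\,E[\tilde Y_t\mid\cdot\,]$. The summand then vanishes once I show $E[R_m \mid \F_{T_m}, \Nastop>T_m]=1$ on $\{t>T_m\}$.

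This key step is a martingale identity. On $\{\Nastop>T_m\}$, the computation \eqref{eq: prop ratio martingale} writes $R_m = \E(\mathbb H^m)_t$ with $\mathbb H^m = \int I(T_m<s\le T_{m+1})\,d\KK_s$. This is a stochastic exponential of a local martingale started at $T_m$, stopped at $T_{m+1}$, and with value $1$ at $T_m$. I need it to be a true martingale on $[T_m, T_{m+1}\wedge t]$ so that its conditional mean given $\F_{T_m}$ equals one. I would argue this using $W_{s} = W_{T_m}\E(\mathbb H^m)_s$ on $(T_m,T_{m+1}]$ together with the martingale property of $W$ from likelihood ratio regularity (Lemma \ref{lemma: positivity and likelihood ratio equivalence}). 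Alternatively, on a single inter-jump interval $\mbbLa{}$ is deterministic given $\F_{T_m}$, so $E[R_m\mid\F_{T_m}]$ is an explicit ratio of conditional survival to its product-integral, which equals one. I expect this to be the main obstacle: the indicator in $R_m$ and the $0/0$ convention must be handled, and the conditioning must use $\F_{T_m}$ rather than $\F_{T_m-}$.

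For the predictable case, the claim is that \eqref{eq: exch strong predict} implies \eqref{eq: exch strong}. When $T_{n+1}$ is predictable, $\mbbLa{}$ on $(T_n, T_{n+1}]$ is a functional of $\F_{T_n}$ and $T_{n+1}$. The plan is to show that the denominator of $R_n$ is $\F_{T_n}$-measurable, or at least measurable with respect to $\F_{T_{n+1}-}$ in a way compatible with the independence, so that conditional on $\F_{T_n}$ the ratio is a measurable function of the indicator alone. Conditional independence is preserved under such transformations, which gives \eqref{eq: exch strong}, and the first part then applies. I would verify this measurability using \cite[Theorem 2.1.30]{LastBrandt1995marked} and the fact that between events the canonical compensator is a deterministic function of $\F_{T_n}$.
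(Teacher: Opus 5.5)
Your proposal follows the paper's proof essentially step for step: you reduce to Lemma \ref{lemma: ipw equivalence}(ii) via consistency, condition each summand on $\F_{T_m}$ on $\{\Nastop > T_m\}$, and factorize with the conditional independence. You then use the martingale representation \eqref{eq: prop ratio martingale} to get conditional mean one (the paper simply invokes optional sampling here), and in the predictable case you deduce \eqref{eq: exch strong} from \eqref{eq: exch strong predict} through $\F_{T_m}$-measurability of the product-integral denominator. The integrability points you flag are glossed over in the paper's sketch, and the sum--expectation interchange is a real issue: the partial sums telescope to $1 - W_{T_n \wedge t}$, so the limit needs uniform integrability of $\{W_{T_n\wedge t}\}_n$, i.e.\ something like likelihood ratio regularity, which the proposition does not list among its hypotheses.
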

\begin{proof}
    We show the result under \eqref{eq: exch strong predict} when each random time $T_k$ is predictable. By consistency we have that $E_P[W_t Y_t] = E_P[W_t \tilde  Y_t]$. Lemma \ref{lemma: ipw equivalence} then says that $$ E_P[\tilde Y_t] = E_P[W_t Y_t] $$
    if and only if
    \begin{align}
        \sum_{m=0}^\infty E_P \Big[ W_{T_m} I(t > T_m )\left(1 - Z_{t}^m\right) \tilde Y_t \Big] = 0, \label{eq: ipw summand}
    \end{align}
    where we have defined
    \begin{align*}
        Z_t^m = \frac{I(\Nastop > t \wedge T_{m+1} > T_m )}{\E\left( - \int I(T_m < s ) d \mbbLa{s} \right)_{t \wedge T_{m+1}}}.
    \end{align*}
    Since $\E\big( - \int I(T_m < s ) d \mbbLa{s} \big)$ is $\F_\T$-predictable we have that $\E\big( - \int I(T_m < s ) d \mbbLa{s} \big)_{T_{m+1}}$ is measurable with respect to $\F_{T_{m+1}-}$. Since each $T_m$ is predictable it follows that $\E\big( - \int I(T_m < s ) d \mbbLa{s} \big)_{T_{m+1}}$ is $\F_{T_m}$-measurable.   We can thus conclude that $\E\big( - \int I(T_m < s ) d \mbbLa{s} \big)_{T_{m+1}\wedge t}$ is $\F_{T_m}$-measurable. A monotone class argument then gives the implication 
      \begin{align}
          \tilde Y_t \indep_{P_{B_m}} I(\Nastop > T_{m+1} \wedge t > T_m) | \F_{T_m} \quad \implies \quad \tilde Y_t \indep_{P_{B_m}} Z_t^m | \F_{T_m}, \label{eq: id under predictable times and sequential exchangeability}
      \end{align}
      using the notation in Appendix \ref{subsection: conditional independence and expectation}. Fix $m$ and consider
    \begin{align}
        E_P\bigg[ W_{T_m} I(t > T_m ) \big( 1 - Z_t^m \big) \tilde Y_t \Big| \F_{T_m} \bigg] = W_{T_m} I(t > T_m ) E_P \left[  \left(1 - Z_t^m \right) \tilde Y_t \Big| \F_{T_m} \right]. \label{eq: rhs mean zero}
    \end{align}
    We show that the right-hand side of \eqref{eq: rhs mean zero} is zero almost surely. If this is true, then \eqref{eq: ipw summand} holds. By definition of $W$, the right hand size of \eqref{eq: rhs mean zero} is zero except on $\{ \Nastop > T_m \}$. We thus only need to show that $E_P[ Z_t^m \tilde Y_t | \F_{T_m}] = E_P[ \tilde Y_t | \F_{T_m}]$ almost surely on the set $\{\Nastop > T_m\}$. We have
    \begin{align*}
        I(\Nastop \wedge t > T_m ) E_P \big[  Z_t^m \tilde Y_t \big| \F_{T_m}\big] &= I(\Nastop \wedge t > T_m ) E_P \big[  Z_t^m \big| \F_{T_m} \big] E_P \big[   \tilde Y_t \big| \F_{T_m} \big].
    \end{align*}
    This equality can be seen to follow from \eqref{eq: id under predictable times and sequential exchangeability}. Using the martingale representation of $Z_t^m$ in \eqref{eq: prop ratio martingale},
    $$ Z_t^m = 1 + \int I(T_m < s \leq T_{m+1} \wedge t) Z_{s-}^m d\mathbb K^a_s, $$
    the optional sampling theorem, \cite[I 2 Theorem 16]{protter}, gives that $E_P[Z_t^m | \F_{T_m}] = 1$ $P$-a.s. for each $t$. 
\end{proof}

\section{Proofs of main results}
\label{appendix: proofs}

\subsection{Proof of Proposition \ref{proposition: mcp construction of observed and potential outcomes}}
\label{appendix: proof of proposition 1}

\ \\
The result follows by applying Proposition \ref{proposition: mcp construction of observed and potential outcomes marked} to the setting of a $d$-dimensional counting process. To formally do so, we must align the notation of Sections \ref{section: set-up and notation}–\ref{section: relation to discrete-time theories} with that of Proposition \ref{proposition: mcp construction of observed and potential outcomes marked}. Specifically, we take $X = \mathcal I_d$, $J = \{a\}$, and for each $j \in \mathcal I_d$ set the mark set to the singleton $X^j = \{j\}$. We then identify the counting process component $N^j$ with the MPP $\delta_j N^j$, and the canonical compensator $\alpha^j$ in \eqref{eq: alpha j compensator} with  $\delta_j \alpha^j$. We similarly identify the intervention $\mathfrak n^a$ with $\delta_a \mathfrak n^a$, and take $L' = L = 1$.

The condition \eqref{eq: compensator explosion regularity} is equivalent to the condition \eqref{eq: tilde alpha explosion regularity condition}. The condition \eqref{eq: compensator jump regularity} combined with the fact that $\Delta \mathfrak n_t^a(\varphi) \leq 1$ implies
\begin{align}
    \Delta \mathfrak n_t^a(\varphi) \sum_{j \neq a} \Delta \alpha_t^j(\varphi) = 0. \label{eq: p1 alpha cond}
\end{align}
Together with the premise \eqref{eq: mcp can comp orthogonality}, this gives the orthogonality conditions \eqref{eq: ort obs multi marked} assumed in Proposition \ref{proposition: mcp construction of observed and potential outcomes marked}. The independence assumptions \ref{enum: mutual indep multi marked}-\ref{enum: exch indep multi marked} in that proposition can be satisfied, for instance, by taking all randomizers in \eqref{eq: uniform random variables marked} to be mutually independent. Hence, all conditions of Proposition \ref{proposition: mcp construction of observed and potential outcomes marked} are met, and \ref{enum: mcp cons} and \ref{enum: mcp law} in Proposition \ref{proposition: mcp construction of observed and potential outcomes} follow directly from \ref{enum: prop consistency marked} and \ref{enum: prop correct law marked} in Proposition \ref{proposition: mcp construction of observed and potential outcomes marked}. Finally, \ref{enum: prop exchangeability marked} in Proposition \ref{proposition: mcp construction of observed and potential outcomes marked} says that $~^{\tau'^a}\alpha^a(N')$ is the compensator of $N'^a$ with respect to both $\{\F'_t \}_t$ and $\{\F_t' \vee \sigma(\tilde N')\}_t$ under $P'$. By the identity \eqref{eq: stopped compensator exchangeability} in Lemma \ref{lemma: mathbb representations}, this is equivalent to \ref{enum: mcp exch} in Proposition \ref{proposition: mcp construction of observed and potential outcomes}.

\subsection{Proof of Lemma \ref{lemma: P fidi dist}}

\ \\
    Let $X_k$ denote the accompanying mark of $T_k$, and consider a generic MPP trajectory $\varphi = (t_k, x_k)_{k \geq 1}$. By \cite[Theorem 8.1.2]{LastBrandt1995marked} there is a kernel $\alpha^{(n)}$ from $\N^{p}$ to $(\Rplus \times \I_p)^n$,
    and a function $U$ from $\Rplus \times \Rplus \times \N^{p}$ to $\Rplus$ such that the density of $T_1, X_1, \dots, T_n, X_n$ at $t_1, x_1,  \dots, t_n, x_n$ under $P$ is given by
    \begin{align}
         I(t_1<\dots<t_n) \cdot U(0,t_n-,\varphi)\cdot \alpha^{(n)}(0,d(t_1, x_1, \dots, t_n, x_n)). \label{eq: density solution}
    \end{align}
    This expression can be formulated in terms of the canonical compensator of $N$ (see Appendix \ref{appendix: the canonical space of point process realizations}), $\alpha$, a 
    kernel from $\N^{p}$ to $\Rplus \times \I_p$, as follows:
    \begin{align}
        U(0,t_n-,\varphi) &= 
        \prod_{ \stackrel{0 < u \leq t}{\bar \varphi_{u-} = \bar \varphi_u} } \big(1 - \alpha(\varphi,\{ u\}\times \I_p) \big) e^{- \alpha^c(\varphi, (0,t] \times \I_p)} \label{eq: U equation} \\
        \alpha^{(n)}(0,d(t_1, x_1, \dots, t_n, x_n)) &= \alpha\big( 0, dt_1 \times  dx_1\big) \cdot \alpha\big( \varphi|_{t_1}, dt_2 \times dx_2\big) \\
        &\cdots \alpha\big( \varphi|_{t_{n-1}}, dt_n \times dx_n\big) \label{eq:alpha n kernel}  
    \end{align}
    where $\bar \varphi_u := \varphi((0,u]\times \I_p)$, and
    $$ \alpha^c(\varphi, dt \times \I_p) = I\big( \alpha(\varphi, \{ t \} \times \I_p ) = 0\big) \alpha(\varphi, dt \times \I_p) $$
     is the continuous part of $\alpha$.

     Next, we note from \eqref{eq: U equation} that
     \begin{align*}
         U(0,t_n-,\varphi) = U(0,t_1-,\varphi) \cdot U(t_1,t_2-,\varphi)\cdots U(t_{n-1},t_n-,\varphi), 
     \end{align*}
     and that each $U(t_i, t_{i+1}-,\varphi)$ has the representation
     \begin{align*}
     \begin{split}
         U(t_i, t_{i+1}-,\varphi) &= \prod_{t_i < u < t_{i+1}} \big(1 - \alpha (\varphi, \{ u \} \times \I_p) \big) e^{-\alpha^c( \varphi, (t_i, t_{i+1}) \times \I_p )} \\
         &= \prodi_{t_i < u < t_{i+1}} \big(1 - \alpha (\varphi, du \times \I_p) \big)
     \end{split}
     \end{align*}
by definition of the product-integral \cite{gill1990survey}. 
    
    Next, by virtue of $N$ being the identity (we are in the canonical setting) and $\alpha$ is its canonical compensator (actually, it is then just its compensator), we have that
    \begin{align}
        \alpha(\varphi, dt \times dx) = \alpha(N(\varphi), dt \times dx) = \Gamma(\varphi, dt \times dx), \label{eq: canonical compensator distribution}
    \end{align}
    and in particular $\alpha(\varphi, dt \times A) = \alpha(N(\varphi), dt \times A) = \Gamma(\varphi,dt \times A) = \sum_{j\in A} d\Gamma_t^j(\varphi)$ for $P$-a.e. $\varphi$.  
    Combining the equations \eqref{eq:alpha n kernel}-\eqref{eq: canonical compensator distribution} with \eqref{eq: density solution} we 
    get the result \eqref{eq: P regular conditional distribution new} and hence the finite-dimensional distribution shown in \eqref{eq: P density n new}. 

    For integrable $H \in \H^p := \bigvee_{t \geq 0} \H_t^p$, we define $H_n = E_P[H | \H_{T_n}^p]$, a uniformly integrable martingale with respect to $P$ and $\{ \H_{T_n}^p \}_{n\geq 1}$. By Levy's upward theorem, \cite[Theorem 14.2]{williams1991probability}, there is an integrable random variable $H_\infty$ such that $H_n \rightarrow H_\infty$ $P$-a.s. and in $L^1(P)$, and $H_\infty = E_P[H | \H^p_\infty]$ where $\mathcal{H}^p_\infty = \bigvee_{n \geq 1} \mathcal{H}_{T_n}^p$. Since each $\H_{T_n}^p$ is contained in $\H^p$, it follows that $H_\infty$ is $\H^p$-measurable.
    
Next, by the no explosion assumption we have that $\{T_n = \infty \} \uparrow \Omega$, and it follows that
\begin{align}
    H_n I(T_n = \infty) \rightarrow H_\infty \quad P\text{-a.s.} \text{ and in }L^1(P) \label{eq: Hn equality}
\end{align}
    
By Theorem 2.2.14 in \cite{LastBrandt1995marked} and its proof, we have the equality $\mathcal{H}_{T_n}^p \cap \{T_n = \infty\} = \sigma(N|_{T_n}) \cap \{T_n = \infty\} = \mathcal{H}^p \cap \{T_n = \infty\}.$ Thus, by \cite[Lemma 8.3]{Kallenberg2021foundations}, it follows that $E_P[H | \H_{T_n}^p]$ and $E_P[H | \H^p]$ coincide on $\{T_n = \infty\}$. Since $H$ is $\H^p$-measurable, we get, $P$-a.s.,
\begin{align*}
    &H_n I(T_n  = \infty) = E_P[H | \H_{T_n}^p]I(T_n  = \infty) \\
    &= E_P[H | \H^p]I(T_n = \infty) = H I(T_n = \infty).
\end{align*}
It follows from the preceding equality and \eqref{eq: Hn equality} that $H_\infty = H$ $P$-a.s. and 
\begin{align}
    H_n \rightarrow H \quad P\text{-a.s. and in }L^1(P). \label{eq: H conv}
\end{align}

The validity of the identity $F_n = F_\infty \circ \pi_n^{-1}$ is a consequence of Kolmogorov's extension theorem. Next, since $H_n$ is $\H_{T_n}^p$-measurable we have that $H_n \circ \pi_n = H_n$, and we may thus regard $H_n$ as a function on either 
$(\mathbb{R}_+ \times \I_p)^n$ or $(\mathbb{R}_+ \times \I_p)^{\mathbb{N}}$. 
With this identification, change of variables gives
    \begin{align*}
        E_P[ H_n ] &= \int_{(\mathbb R \times \I_p)^n} H_n dF_n \\
                    &= \int_{(\mathbb R \times \I_p)^{n}} H_n d(F_\infty \circ \pi^{-1}_n) \\
                    &= \int_{(\mathbb R \times \I_p)^{\mathbb N}} (H_n \circ \pi_n) dF_\infty  \\
                    &= \int_{(\mathbb R \times \I_p)^{\mathbb N}} H_n dF_\infty. 
    \end{align*}
    The desired result then follows from \eqref{eq: H conv}.
    $\hfill\square$

\subsection{Proof of Lemma \ref{lemma: Q compensator}}
\label{appendix: proof of lemma: Q compensator}
\ \\
Before proving Lemma \ref{lemma: Q compensator}, we express first in Lemma \ref{lemma: Q compensator general} the $(Q, \F_\T)$-compensator of $N$ under the general conditions in Definition \ref{def:strong prime}. After having established the general result, we show how the compensators shown in Lemma \ref{lemma: Q compensator} follow under the regularity conditions \eqref{eq: compensator jump regularity} and \eqref{eq: mcp can comp orthogonality}.

To express the compensator in the general case it is convenient to introduce the notation
\begin{itemize}
    \item $\mathbb{N}^{j,a} := I(\tau^a \leq \cdot , \Delta N_{\tau^a}^j = 1)$,
\end{itemize}
and we let $\mathbb{\Lambda}^{j,a}$ 
denote the ($P,\F_\T$)-compensator of $\mathbb{N}^{j,a}$ 
respectively.
\begin{lemma}[The $(Q, \F_\T)$-compensator of $N$] \label{lemma: Q compensator general}
    If Definition \ref{def:strong prime} \ref{item: likelihood ratio regularity} holds, then 
    \begin{enumerate}[label=\textnormal{(\alph*)}]
        \item \label{item: Q a compensator} 
        $\mathfrak n^a(N)$ defines a $(Q, \F_\T)$-compensator of $N^a$, 
        \item \label{item: Q j compensator}  $\mathcal L^j := \int_0^\cdot \frac{d\Lambda_s^j - d\mathbb \Lambda^{j, a}_s}{1 - \Delta \mathbb \Lambda^a_s} $ defines a $(Q,\F_\T)$-compensator of $N^j$ for $j \neq a$.
    \end{enumerate} 
\end{lemma}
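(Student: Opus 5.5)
Part \ref{item: Q a compensator} is essentially free, so I would dispatch it first. Under Definition \ref{def:strong prime} \ref{item: likelihood ratio regularity}, and with positivity \eqref{eq: K jump to zero condition} assumed throughout, Lemma \ref{lemma: positivity and likelihood ratio equivalence} shows that $W$ is a likelihood ratio process. Lemma \ref{lemma: tau a is infty under Q} then gives $\Nastop=\infty$ $Q$-a.s., so by the definition \eqref{eq: tau a} we have $N^a=\na{}(N)$ on $\T$, $Q$-a.s. Since $\na{}$ is predictable on the canonical space, $\na{}(N)$ is an $\F_\T$-predictable counting process (\cite[Corollary 2.2.8]{LastBrandt1995marked}). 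It is therefore its own $(Q,\F_\T)$-compensator, which gives \ref{item: Q a compensator}.

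For part \ref{item: Q j compensator} I would use the Girsanov--Lenglart theorem for the absolutely continuous change of measure $dQ=W_T\,dP$. The $P$-local martingale $M^j=N^j-\Lambda^j$ then has $(Q,\F_\T)$-compensator correction $\int_0^\cdot W_{s-}^{-1}\,d\langle M^j,W\rangle^P_s$. Equivalently, $N^j$ has $Q$-compensator
\begin{align*}
\Lambda^j+\int_0^\cdot \frac{1}{W_{s-}}\,d\langle M^j,W\rangle^P_s .
\end{align*}
The existence of $\langle M^j,W\rangle^P$, together with the identities $\langle M^j,W\rangle^P=\int W_{s-}\,d\langle M^j,\KK\rangle^P_s$ and $\langle M^j,\KK\rangle^P=-\int (1-\Delta\mbbLa{s})^{-1}\,d\langle M^j,\mbbMa{}\rangle^P_s$, is supplied by Lemma \ref{lemma: W finite variation} \ref{enum: W3}. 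The factors $W_{s-}$ cancel on $\{s\le\Nastop\}$, where $W_{s-}>0$ by Lemma \ref{lemma: W finite variation} \ref{enum: W jump to zero identity}. The complement $\{s>\Nastop\}$ is $Q$-null by Lemma \ref{lemma: tau a is infty under Q}, so it is irrelevant for $Q$-compensators. Hence the $Q$-compensator is
\begin{align*}
\Lambda^j-\int_0^\cdot \frac{d\langle M^j,\mbbMa{}\rangle^P_s}{1-\Delta\mbbLa{s}} .
\end{align*}

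The remaining computation is $\langle M^j,\mbbMa{}\rangle^P$. Both processes have finite variation, so $[M^j,\mbbMa{}]=\sum\Delta M^j\Delta\mbbMa{}$. Expanding the jumps gives four terms:
\begin{align*}
[M^j,\mbbMa{}]=\sum\Delta N^j\Delta\mbbNa{}-\sum\Delta N^j\Delta\mbbLa{}-\sum\Delta\Lambda^j\Delta\mbbNa{}+\sum\Delta\Lambda^j\Delta\mbbLa{}.
\end{align*}
The first term is $\mathbb N^{j,a}$, with compensator $\mathbb\Lambda^{j,a}$. The second is $\int\Delta\mbbLa{}\,dN^j$, with compensator $\int\Delta\mbbLa{}\,d\Lambda^j$, because its integrand is predictable. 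The third is $\int\Delta\Lambda^j\,d\mbbNa{}$, with compensator $\int\Delta\Lambda^j\,d\mbbLa{}$. The fourth is predictable and so compensates itself; it cancels the compensator of the third. Therefore
\begin{align*}
\langle M^j,\mbbMa{}\rangle^P=\mathbb\Lambda^{j,a}-\int\Delta\mbbLa{s}\,d\Lambda^j_s .
\end{align*}
Substituting this into the display above gives
\begin{align*}
\Lambda^j-\int\frac{d\mathbb\Lambda^{j,a}-\Delta\mbbLa{}\,d\Lambda^j}{1-\Delta\mbbLa{}}=\int\frac{d\Lambda^j-d\mathbb\Lambda^{j,a}}{1-\Delta\mbbLa{}}=\mathcal L^j .
\end{align*}

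The step I expect to be the main obstacle is justifying Girsanov when $W$ may hit zero and when only local integrability is available. I would first localize with the sequences from Lemma \ref{lemma: W finite variation} \ref{item: mbbLa recipocal is locally bounded}--\ref{enum: K local martingale}. I would then verify directly that $\int H\,d(N^j-\mathcal L^j)$ has $Q$-mean zero for bounded predictable $H$ stopped at these times. The verification would write $E_Q[\cdot]=E_P[W_T\,\cdot\,]$ and apply the product rule to $W\cdot(N^j-\mathcal L^j)$, using that $W\,d\mathcal L^j$ and $d\langle W,M^j\rangle^P$ combine to a $P$-local martingale; jumps at or after $\Nastop$ contribute nothing because $W_{\Nastop}=0$. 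Finally, $\mathcal L^j$ is predictable and nondecreasing, since $\mathbb\Lambda^{j,a}\le\Lambda^j$ as measures because $\mathbb N^{j,a}\le N^j$, and it is finite by \eqref{eq: K jump to zero condition}. These properties complete the proof of \ref{item: Q j compensator}.
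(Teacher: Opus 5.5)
Your proposal is correct and follows the paper's proof essentially step for step. For part (a), both arguments rest on $W$ vanishing from $\Nastop$ onward: the paper writes $E_Q[N^a_\tau]=E_P[W_\tau N^a_\tau]=E_P[W_\tau \na{\tau}(N)]=E_Q[\na{\tau}(N)]$ directly, while you route it through Lemma \ref{lemma: tau a is infty under Q}. For part (b), you use the same Lenglart--Girsanov formula and exactly the same computation $\langle M^j,\mbbMa{}\rangle^P=\mathbb\Lambda^{j,a}-\int_0^\cdot\Delta\mbbLa{s}\,d\Lambda^j_s$.

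The obstacle you anticipate is already covered by the Lenglart form of Girsanov's theorem, which allows $Q\ll P$ with a density that may hit zero (\cite[III 8 Theorem 41]{protter}). The paper simply cites it, so your direct product-rule verification is valid but an unnecessary re-derivation.
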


\begin{proof}
Because the intervention $\mathfrak n^a$ is predictable we have that $\mathfrak {n}^a(N)$ is an $\F_\T$-predictable counting process. Thus, the following direct argument shows Lemma \ref{lemma: Q compensator general} \ref{item: Q a compensator}: By definition of $Q$, $W$, and $\tau^a$, and Lemma \ref{lemma: W finite variation} \ref{enum: W jump to zero identity}, we get
\begin{align}
    E_Q[N^a_\tau] = E_P[W_\tau N^a_\tau] = E_P[W_\tau \mathfrak {n}^a_\tau(N)] = E_Q[ \mathfrak {n}^a_\tau(N)], \label{eq: Q a comp}
\end{align}
for each $\F_\T$-optional time $\tau$. Since $\mathfrak {n}^a(N)$ is $\F_\T$-predictable, the preceding equality shows that $\mathfrak {n}^a(N)$ defines a $(Q, \F_\T)$-compensator of $N^a$.

We show in the following that the $(Q, \F_\T)$-compensator of $N^j$ for $j \neq a$ is
    \begin{align}
       \langle N^j \rangle^{Q} = \langle N^j \rangle^{P} - \int_0^{\cdot} \frac{d\mathbb \Lambda^{j,a}_s - \Delta \langle N^j \rangle^{P}_s d\mbbLa{s} }{1 - \Delta \mbbLa{s}}.  \label{eq: Q comp of Nj}
    \end{align}
Here, we use the notation $\langle \cdot \rangle^{P'}$ for the predictable variation with respect to a probability measure $P'$. The filtration it is defined with respect to is fixed in any given derivation and should be clear from the context.


By Lemma \ref{lemma: W finite variation}, the predictable variation process $\langle W, M^j \rangle^P$ exists under $P$ and $\F_\T$, and coincides with $\int_0^\cdot W_{s-} d\langle W, M^j\rangle^P$ for each $j \in \I_d$. 

From \cite[III 8 Theorem 41]{protter} it follows that $Z^j = \int_0^\cdot \frac{1}{W_{s-}}d \langle W, M^j \rangle_s^P$ exists under $Q$, and the 
compensator of $N^j$  under $Q$ and $\F_\T$ is given by $\langle N^j \rangle^P + Z^{j}$. We get, using that $W = \E(\KK)$, $\KK = \int_0^{\cdot} G_s d\mathbb M_s^{a}$, and $G = -\frac{1}{1 - \Delta \mbbLa{}}$,
\begin{align*}
    \int_0^\cdot\frac{1}{W_{s-}}d \langle W, M^j \rangle_s^P = \int_0^{\cdot}\frac{W_{s-}}{W_{s-}}d\langle \KK, M^j \rangle_s^P = \langle \KK, M^j \rangle^P = \int_0^\cdot G_s \langle \mbbMa{}, M^j \rangle_s^P.
\end{align*}
The integrator in the second equality is zero whenever the integrand equals $"\frac{0}{0}"$, which justifies the middle equality. We conclude that
\begin{align}
	\langle N^j \rangle^Q = \langle N^j \rangle^P  + \int_0^\cdot G_s d \langle \mbbMa{}, M^{j} \rangle_s^P.  \label{eq: Na girsanov}
\end{align}
To express the integral on the right-hand side of \eqref{eq: Na girsanov} it is useful to find $[ M^{j}, \mbbMa{} ]$:
\begin{align}
\begin{split}
     [ M^{j}, \mbbMa{} ] &= [ N^j - \langle N^j \rangle^P , \mbbNa{} - \mbbLa{} ] \\
				&= [ N^j , \mbbNa{} ] - [ \langle N^j \rangle^P , \mbbNa{} ]  - [ N^j ,  \mbbLa{} ] + [  \langle N^j \rangle^P , \mbbLa{} ] \\
				&= \mathbb N^{j, a} - \int_0^\cdot \Delta \langle N^j \rangle^P_s d \mbbNa{s} 
				- \int_0^\cdot \Delta \mbbLa{s}  d N^j + \int_0^\cdot \Delta \langle N^j \rangle^P_s d\mbbLa{s}.
\end{split} \label{eq: optional bracket}
\end{align}
This gives, since the predictable bracket defines a compensator of the optional bracket,
\begin{align*}
	 \langle M^{j}, \mbbMa{} \rangle^P &=  \mathbb \Lambda^{j, a} - \int_0^\cdot \Delta \langle N^j \rangle^P_s d\mbbLa{s}. 
\end{align*}
Since $G = - \frac{1}{1 - \Delta \mbbLa{}}$ we get the result
\begin{align*}
	\langle N^j \rangle^Q &= \langle N^j \rangle^P  + \int_0^\cdot G_s d \langle M^{j}, \mbbMa{} \rangle_s^P \\
					&= \langle N^j \rangle^P - \int_0^\cdot \frac{1}{1 - \Delta \mbbLa{s}} (d\mathbb \Lambda^{j, a}_s - \Delta \langle N^j \rangle^P_s d\mbbLa{s}  ).
\end{align*}
Upon arranging terms, this gives \ref{item: Q j compensator}. 
\end{proof}

\subsubsection*{Proof of Lemma \ref{lemma: Q compensator}.}

We verify the simplifications for the non-treatment component, 
i.e., that Lemma \ref{lemma: Q compensator general} \ref{item: Q j compensator} reduces. First, we show that $\mathbb{N}^{j,a} = \mathbb \Lambda^{j,a} = 0$. Condition \eqref{eq: compensator jump regularity} implies that $\mathfrak {n}^a(N)$ shares no jumps with $N^j$ for $j \neq a$; that is, we have $[\mathfrak {n}^a(N), N^j] = 0$ $P$-a.s. for $j \neq a$. By assumption, $N = (N^1, \dots, N^d)$ is a multivariate counting process, so distinct components do not jump simultaneously, i.e. $[N^a, N^j] = 0$ $P$-a.s. for each $j \neq a$. Since $\tau^a$ is an optional time supported on jumps of $N^a$ and $\mathfrak {n}^a(N)$ (by \eqref{eq: tau a}), we obtain $\Delta N^j_{\tau^a} I(\tau^a < \infty) \leq \sum_{i \neq a} [N^a, N^i]_T + [\mathfrak {n}^a(N), N^i]_T = 0$ $P$-a.s., and thus $\mathbb{N}^{j,a} = 0$ $P$-a.s. for each $j \neq a$. It follows that $\mathbb \Lambda^{j,a}=0$ $P$-a.s.

From Lemma \ref{lemma: Q compensator general} we therefore have
$$
\mathcal L^j = \int_0^\cdot \frac{d\Lambda_s^j}{1 - \Delta \mathbb \Lambda^a_s}.
$$
The desired result $\mathcal L^j = \Lambda^j$ follows if we can show that 
\begin{align}
    \Delta \mathbb \Lambda^a_s \Delta \Lambda_s^j = 0 \quad P\text{-a.s.} \label{eq: mbbla lj ort}
\end{align}
for each $s$ and $j \neq a$.

Recall the representation \eqref{eq: mathbb Lambda a representation}:
$$
\mathbb \Lambda^a = ~^{\tau^a}\Lambda^a + ~^{\tau^a} \mathfrak {n}^a(N) - 2 \int_0^{\tau^a \wedge \cdot} \Delta \mathfrak {n}^a_s(N) d\Lambda_s^a.
$$
Since $\mathfrak n^a(N)$ is a predictable counting process satisfying \eqref{eq: compensator jump regularity}, we have $\Delta \mathfrak {n}^a(N) \Delta \Lambda^j = 0$, and by assumption \eqref{eq: mcp can comp orthogonality} we have $\Delta \Lambda^a \Delta \Lambda^j = 0$ for $j \neq a$. Combining this with \eqref{eq: mathbb Lambda a representation}, it follows that \eqref{eq: mbbla lj ort} holds. Therefore,
$$
\int_0^\cdot \frac{d\Lambda_s^j}{1 - \Delta \mathbb \Lambda^a_s} = \Lambda^j.
$$
$\hfill\square$

\subsection{Proof of Theorem \ref{theorem: representation of functionals}}
\ \\ 
    Because $W$ is a mean one nonnegative uniformly integrable $(P, \F_\T)$-martingale (shown in Lemma \ref{lemma: positivity and likelihood ratio equivalence}), it follows that $dQ = W_T dP$ defines a probability measure on $\F_T$. Lemma \ref{lemma: P fidi dist} states that the distribution under $Q$ is determined by the compensator in the canonical setting. This compensator was found in Lemma \ref{lemma: Q compensator} and is given in \eqref{eq: kappa P compensating measure}. The representation \eqref{eq: g-formula new} then follows by combining  Theorem \ref{theorem: ipw},  Lemma \ref{lemma: P fidi dist} and Lemma \ref{lemma: Q compensator}. 
$\hfill\square$

\subsection{Proof of Proposition \ref{prop: characterizations of W}.}

\ \\
\textbf{Proof of the characterization of $W$ though \ref{enum: W dolean unique} and \ref{enum: W greater than 0}.}

\noindent We establish the alternative characterization of $W$ involving the items \ref{enum: W dolean unique}-\ref{enum: W greater than 0}.  We denote by $W^h$ the process which solves
 \begin{align*}
     W^h_t &= 1 + \int_0^t W_{s-}^h d\mathbb K^h_s \\
     \mathbb K^h_t &= \int_0^t h_s d\mbbMa{s}
 \end{align*}
 subject to \ref{enum: W dolean unique} and \ref{enum: W greater than 0}. As $W^h$ is non-negative, the item \ref{enum: W greater than 0} gives that $W^h_{\Nastop} = 0$ and $W^h_{\Nastop-} > 0$ on $\{\Nastop < \infty \}$. Using that $W^h$ solves the given SDE we get that
    \begin{align*}
        W^h_\Nastop &= W^h_{\Nastop-} + W^h_{\Nastop-} \Delta \mathbb K^h_\Nastop \\
        &= W^h_{\Nastop-} + W^h_{\Nastop-} h_\Nastop \Delta \mbbMa{\Nastop} \\
        &= W^h_{\Nastop-} + W^h_{\Nastop-} h_\Nastop ( \Delta \mbbNa{\Nastop} - \Delta \mbbLa{\Nastop} ) \\
        &= W^h_{\Nastop-} + W^h_{\Nastop-} h_\Nastop ( 1 - \Delta \mbbLa{\Nastop} )
    \end{align*}
    on $\{ \Nastop < \infty \}$, where we used the definition of $\Nastop$ and $\mbbNa{}$ in the last line. By rearranging terms we conclude that $W^h$ satisfies \ref{enum: W greater than 0}  only if 
    \begin{align*}
        h_\Nastop = -\frac{1}{1 - \Delta \mathbb \Lambda_\Nastop^a} \quad \text{ on } \{ \Nastop < \infty \}.
    \end{align*}
  Using this fact and the definition of $\mbbNa{}$ we get that
    $$ \int_0^\cdot h_s d\mbbNa{s} = h_{\Nastop} \mbbNa{} =   -\frac{1}{1 - \Delta \mathbb \Lambda_\Nastop^a} \mbbNa{} =  - \int_0^\cdot \frac{1}{1 - \Delta \mbbLa{s}}  d\mbbNa{s}. $$ 
    This gives
    \begin{align*}
        \int_0^\cdot h_s d\mbbMa{s} &= \int_0^\cdot h_s d\mbbNa{s} - \int_0^\cdot h_s d\mbbLa{s} \\
        &= - \int_0^\cdot \frac{1}{1 - \Delta \mbbLa{s}} d\mbbNa{s} - \int_0^\cdot h_s d\mbbLa{s} \\
        &= - \int_0^\cdot \frac{1}{1 - \Delta \mbbLa{s}} d\mbbMa{s} - 
\underbrace{\int_0^\cdot \Big(h_s + \frac{1}{1 - \Delta \mbbLa{s}} \Big) d\mbbLa{s}}_{=:\Upsilon}
    \end{align*}
    where we obtained the last line by adding and subtracting terms. Since $\int_0^\cdot h_s d\mbbMa{s}$ (by assumption) and $\int_0^\cdot \frac{1}{1 - \Delta \mbbLa{s}} d\mbbMa{s}$ (by Lemma \ref{lemma: W finite variation}
    ) are local martingales of finite variation, it follows from the last equality that $\Upsilon$ is also a finite variation local martingale. It is furthermore a predictable process. 
    Since $\Upsilon_0 = 0$, and $\Upsilon$ is a predictable finite variation local martingale, it follows from \cite[III Theorem 15]{protter} that $\Upsilon = 0$ $P$-a.s., and thus $ \int_0^\cdot h_s d\mbbMa{s} = - \int_0^\cdot \frac{1}{1 - \Delta \mbbLa{s}} d\mbbMa{s} = \KK$ $P$-a.s. By uniqueness of solutions of the stochastic exponential, e.g. \cite[II Theorem 37]{protter}, it follows that $W^h = \E(\int_0^\cdot h_s d\mbbMa{s}) = \E(\KK) = W$ $P$-a.s., where $W$ is the process given in \eqref{eq: W unstabilized weight}. 

\ \\
\textbf{Proof that \eqref{eq: W unstabilized weight} solves \eqref{eq: W dolean}.}

In our case $\E(-\mbbLa{})$ has the unique solution $\prodi\limits_{0 < s \leq \cdot} \big(1 - d \mbbLa{s} \big) $; see \cite{Gill1994lecturenotes}. A calculation with indicator processes shows that $I(\Nastop > \cdot) = \E(-\mbbNa{})$, which validates the rightmost equality in \eqref{eq: W unstabilized weight}. We use properties of stochastic exponentials to show that $\E(-\mbbNa{})/\E(-\mbbLa{})$ coincides with the stochastic exponential of $\KK$ in \eqref{eq: K semimart}. 



Since $0 \leq \Delta \mbbLa{} \leq 1$, we have by \eqref{eq: K jump to zero condition} that the process $\sum_{0 < s \leq \cdot} \allowbreak \frac{(\Delta \mbbLa{s})^2}{1 - \Delta \mbbLa{s}} = \allowbreak\int_0^\cdot \allowbreak \frac{\Delta \mbbLa{s}}{1 - \Delta \mbbLa{s}} \allowbreak d\mbbLa{s}$ is of finite variation. Thus, the process 
\footnote{Here, $[Z]^c$ is the path-by-path continuous part of $[Z]$ for a semimartingale $Z$, which is defined by $[Z]^c = [Z] - \sum_{0 < s \leq \cdot} (\Delta Z_s)^2$.}
        \begin{align}
            J = \mbbLa{} + [\mbbLa{}]^c + \sum_{0 < s \leq \cdot} \frac{(\Delta \mbbLa{s})^2}{1 - \Delta \mbbLa{s}}. \label{eq: J process}
            \end{align}
is a well-defined semimartingale, and \cite[Section V, Theorem 63, p 342]{protter} then says that $\E(-\mbbLa{})^{-1} = \E(J)$. By \cite[Section II, Theorem 38, p 86]{protter}, we get 
\begin{align*}
    W &= \E(J) \E( -\mbbNa{} ) = \E(J -\mbbNa{} - [J, \mbbNa{}] ).
\end{align*}
 In our case we have that $[\mbbLa{}]^c = 0$, since $\mbbLa{}$ is a quadratic pure jump semimartingale, e.g. by \cite[II 6 Theorem 26]{protter}. Thus, by the bilinearity of the quadratic variation process we get that
\begin{align*}
    [J,\mbbNa{}] &= \Big[ \mbbLa{} + [\mbbLa{}]^c + \sum_{0 < s\leq \cdot} \frac{(\Delta \mbbLa{s})^2}{1 - \Delta \mbbLa{s}} , \mbbNa{}\Big] \\
    &= \big[ \mbbLa{}, \mbbNa{}\big]  + \big[[\mbbLa{}]^c, \mbbNa{}\big]  +  \Big[\sum_{0 < s\leq \cdot} \frac{(\Delta \mbbLa{s})^2}{1 - \Delta \mbbLa{s}} , \mbbNa{}\Big] \\
    &= \big[ \mbbLa{}, \mbbNa{}\big]  +  \Big[\sum_{0 < s\leq \cdot} \frac{(\Delta \mbbLa{s})^2}{1 - \Delta \mbbLa{s}} , \mbbNa{}\Big].
\end{align*}
This gives
\begin{align*}
    J -\mbbNa{} - [J, \mbbNa{}] &= - \mbbMa{} + [\mbbLa{}]^c 
    + \sum_{0 < s\leq \cdot} \frac{(\Delta \mbbLa{s})^2}{1 - \Delta \mbbLa{s}}  - \big[ \mbbLa{}, \mbbNa{}\big] \\&  -  \Big[\sum_{0 < s\leq \cdot} \frac{(\Delta \mbbLa{s})^2}{1 - \Delta \mbbLa{s}} , \mbbNa{}\Big] \\
    &= - \mbbMa{} + [\mbbLa{}] - \sum_{0 < s \leq \cdot}(\Delta \mbbLa{s})^2 + \sum_{0 < s \leq \cdot} \frac{(\Delta \mbbLa{s})^2}{1 - \Delta \mbbLa{s}}  - \big[ \mbbLa{}, \mbbNa{}\big] \\
    &-  \Big[\sum_{0 < s\leq \cdot} \frac{(\Delta \mbbLa{s})^2}{1 - \Delta \mbbLa{s}} , \mbbNa{}\Big]. 
\end{align*}
The identities
\begin{align*}
    - \sum_{0 < s \leq \cdot}(\Delta \mbbLa{s})^2 + \sum_{0 < s \leq \cdot} \frac{(\Delta \mbbLa{s})^2}{1 - \Delta \mbbLa{s}} &= \sum_{0 < s \leq \cdot} \frac{(\Delta \mbbLa{s})^3}{1 - \Delta \mbbLa{s}}, \\
    \Big[\sum_{0 < s\leq \cdot} \frac{(\Delta \mbbLa{s})^2}{1 - \Delta \mbbLa{s}} , \mbbNa{}\Big] &= \sum_{0< u \leq \cdot} \Delta \sum_{0 < s\leq u} \frac{(\Delta \mbbLa{s})^2}{1 - \Delta \mbbLa{s}}  \Delta \mbbNa{u} \\
    &=  \sum_{0< u \leq \cdot}  \frac{(\Delta \mbbLa{u})^2}{1 - \Delta \mbbLa{u}}  \Delta \mbbNa{u},
\end{align*}
together give that
\begin{align*}
    J -\mbbNa{} - [J, \mbbNa{}] &= - \mbbMa{} + [\mbbLa{}] + \sum_{0 < s \leq \cdot} \frac{(\Delta \mbbLa{s})^3}{1 - \Delta \mbbLa{s}}  - \big[ \mbbLa{}, \mbbNa{}\big]  -  \Big[\sum_{0 < s\leq \cdot} \frac{(\Delta \mbbLa{s})^2}{1 - \Delta \mbbLa{s}} , \mbbNa{}\Big] \\
    &=  - \mbbMa{} + [\mbbLa{}] - \sum_{0 < s \leq \cdot} \frac{(\Delta \mbbLa{s})^2}{1 - \Delta \mbbLa{s}} \Delta \mbbMa{s}  - \big[ \mbbLa{}, \mbbNa{}\big] \\
    &=  - \mbbMa{} - [\mbbLa{}, \mbbMa{}] - \sum_{0 < s \leq \cdot} \frac{(\Delta \mbbLa{s})^2}{1 - \Delta \mbbLa{s}} \Delta \mbbMa{s}.
\end{align*}
Next, using that $\mbbLa{}$ is a quadratic pure jump semimartingale, we get
\begin{align*}
   [\mbbLa{}, \mbbMa{}] + \sum_{0 < s \leq \cdot} \frac{(\Delta \mbbLa{s})^2}{1 - \Delta \mbbLa{s}} \Delta \mbbMa{s} &= \sum_{0 < s \leq \cdot}\Delta \mbbLa{s} \Delta \mbbMa{s} + \sum_{0 < s \leq \cdot} \frac{(\Delta \mbbLa{s})^2}{1 - \Delta \mbbLa{s}} \Delta \mbbMa{s} \\
   &= \sum_{0 < s \leq \cdot} \frac{\Delta \mbbLa{s}}{1 - \Delta \mbbLa{s}} \Delta \mbbMa{s} = \int_0^\cdot \frac{\Delta \mbbLa{s}}{1 - \Delta \mbbLa{s}} d\mbbMa{s}.
\end{align*}
Finally, since 
$$\mbbMa{} + \int_0^\cdot \frac{\Delta \mbbLa{s}}{1 - \Delta \mbbLa{s}} d\mbbMa{s} = \int_0^\cdot \Big( 1 + \frac{\Delta \mbbLa{s}}{1 - \Delta \mbbLa{s}} \Big) d\mbbMa{s} = \int_0^\cdot \frac{1}{1 - \Delta \mbbLa{s}} d\mbbMa{s}, $$
we get the advertised result.



\ \\
\textbf{Derivation of \eqref{eq: W discrete and cont expression}.}

    We establish the expression \eqref{eq: W discrete and cont expression}. Since $\mbbLa{}$ is cadlag and predictable it follows that the jump times $\{ \boldsymbol \sigma_k\}_k$ of $\mbbLa{}$ are predictable as well. \footnote{This follows by application of \cite[Theorem 2.1.30]{LastBrandt1995marked}. This theorem says that if $Y$ is a nonnegative, nondecreasing, right-continuous, predictable process, then for any $c \in \mathbb R$, $\inf\{t>0 | Y_t \geq c\}$ is a predictable time. Applying this result to the discontinuous part $Z = \mbbLa{} - \mathbb \Lambda^{a,c}$ where $\mathbb \Lambda^{a,c}$ is the continuous part of $\mathbb \Lambda^a$ and a sequence of constants $\{ c_n \}_n$, $c_n > 0$ with $c_n \downarrow 0$, we get a sequence of predictable  times
    \begin{align*}
        \tau_n^1 := \inf\{ t>0 | Z_t \geq c_n \}.
    \end{align*}
    By construction, $\wedge_n \tau_n^1 = \boldsymbol\sigma_1$, where $\boldsymbol\sigma_1$ is the first jump time of $Z$ (and hence of $\mathbb \Lambda^a$). Since each $\tau_n^1$ is predictable, $\wedge_n \tau_n^1 = \boldsymbol\sigma_1$ is also a predictable time; see, e.g., \cite[I 2 Proposition 2.9]{JacodShiryaev}. Because $\boldsymbol\sigma_1$ is a predictable time, it follows that the process started at $\boldsymbol\sigma_1$, $~_{\boldsymbol\sigma_1}Z_t := I(t > \boldsymbol\sigma_1) (Z_t - Z_{\boldsymbol\sigma_1})$, satisfies the conditions of \cite[Theorem 2.1.30]{LastBrandt1995marked}. We can thus apply the previous argument to $~_{\boldsymbol\sigma_1}Z$ to show that $\boldsymbol\sigma_2$ is a predictable time, and by induction that $\boldsymbol\sigma_k$ is a predictable time for each $k$.} 
    We thus get, by the definition of the product-integral,
    \begin{align*}
        \prodi_{s \leq t}(1 - d\mbbLa{s}) &= e^{- \mathbb \Lambda^{a, c}_t}\prod_{s \leq t}(1 - \Delta \mbbLa{s}) = e^{- \mathbb \Lambda^{a, c}_t}\prod_{\boldsymbol \sigma_k \leq t}(1 - \Delta \mbbLa{\boldsymbol \sigma_k}) \\
        &= e^{- \mathbb \Lambda^{a, c}_t}\prod_{\boldsymbol \sigma_k \leq t}P(\mbbNa{\boldsymbol \sigma_k}=0 | \F_{\boldsymbol \sigma_k-}),
    \end{align*}
    where we used that each $\boldsymbol \sigma_k$ is predictable and \cite[Equation (4.1.16)]{LastBrandt1995marked}  in the last equality, which gives the result.

\ \\
\textbf{Proof of \eqref{eq: Gill likelihood} and the SDE for predictable interventions.}

The representation of $\mbbLa{}$ in terms of $\Lambda^a$ and $\na{}(N)$ was shown in Lemma \ref{lemma: mathbb representations}. Inserting this representation into \eqref{eq: W unstabilized weight} leads to \eqref{eq: Gill likelihood}. 

We establish the alternative representation $\KK = -\int_0^{\Nastop \wedge \cdot} \frac{1}{1- \Delta \na{s}(N) - \Delta \Lambda_s^a} dM_s^a$. By combining Lemma \ref{lemma: mathbb representations} with the expression for $\KK$ we get that $\KK \allowbreak = \allowbreak - \int_0^{\Nastop \wedge \cdot} \allowbreak \frac{1 - 2 \Delta \na{s}(N)}{ 1 - \Delta \Lambda_s^a  - \Delta \na{s}(N) + 2  \Delta \na{s}(N)  \Delta\Lambda_s^a}  \allowbreak dM_s^a$. Algebraic manipulation of the integrand in this equation gives
\begin{align}
    \begin{split}
        \frac{1 - 2 \Delta \na{s}(N)}{ 1 - \Delta \na{s}(N) - (1 - 2  \Delta \na{s}(N) ) \Delta\Lambda_s^a} &= \frac{1}{ \frac{1 - \Delta \na{s}(N)}{1 - 2 \Delta \na{s}(N)}  - \Delta\Lambda_s^a} \\
        &=  \frac{1}{ 1 - \Delta \na{s}(N)  - \Delta\Lambda_s^a},
    \end{split} \label{eq: split algebraic compensator}
\end{align}
 where the final equality can be seen by checking the two cases $\Delta \na{s}(N) = 1$ and $\Delta \na{s}(N) = 0$.  $\hfill\square$

\subsection{Proof of Proposition \ref{proposition: relation to existing assumptions}}
\ \\
\textbf{Equivalence of exchangeability conditions.}

    Since the compensators are piecewise constant, the only possible jump times of the counting processes are at the jump times of their compensators. Since the regime $\na{}$ is predictable, we can use the representation in Lemma \ref{lemma: mathbb representations} and the a.s. representation
    $$ \mbbMa{} = \int_0^{\Nastop \wedge \cdot} (1 - 2 \Delta \na{s}(N)) dM_s^a. $$
    It follows from this that $\mbbMa{}$ is a martingale with respect to a superfiltration of $\F_\T$ if and only if $~^{\Nastop}M^a$ is a martingale with respect to that filtration. 
    A key fact in what follows is that jumps of compensators at predictable times (and thus in particular deterministic times) admit explicit representations, see e.g. \cite[Theorem 4.1.7]{LastBrandt1995marked};  we have that $~^{\Nastop}\Lambda^a$ is a $(P,\F_\T)$-compensator of $~^{\Nastop}N^a$ if and only if 
    \begin{align}
        \Delta ~^{\Nastop}\Lambda^a_{\theta_{k}} = E_P[\Delta ~^{\Nastop}N^a_{\theta_{k}}  | \F_{\theta_{k}-} ] \quad P\text{-a.s.} \label{eq: piecewise continuous compensator jumps}
    \end{align}
    for each $k$. We similarly have that the exchangeability condition \eqref{eq: exchan disc} in this case is equivalent to $\Delta ~^{\Nastop}\Lambda^a_{\theta_{k}} = E_P[\Delta ~^{\Nastop}N^a_{\theta_{k}} | \F_{\theta_{k}-} \vee \tilde \F^{\Idmina}_T ] $ $P\text{-a.s.}$ for each $k$, where $\tilde \F_T^{\Idmina} = \sigma(\tilde N^1|_T, \dots, \tilde N^{a-1}|_T, \tilde N^{a+1}|_T, \dots, \tilde N^d|_T)$. It follows that \eqref{eq: exchan disc} is equivalent to 
    \begin{align}
        E_P[\Delta ~^{\Nastop}N^a_{\theta_{k}}  | \F_{\theta_{k}-} ] = E_P[\Delta ~^{\Nastop}N^a_{\theta_{k}}  | \F_{\theta_{k}-} \vee \tilde \F_T^{\Idmina} ] \quad P\text{-a.s. for } k=1,\dots,K. \label{eq: piecewise continuous compensator jumps hard}
    \end{align} 
    
    Define $C_k = \{ \bar A_k = \bar a_k \}$ for $k \geq 1$ and $C_0 = \Omega$. We show that the identity \eqref{eq: piecewise continuous compensator jumps hard} is identical to 
    \begin{align}
        P_{C_{k-1}}( A_k = 1 | \bar L_{k}) &= P_{C_{k-1}}( A_k = 1 | \bar L_k,  \underline{\tilde L}_{k+1} ) \quad P_{C_{k-1}} \text{-a.s.}  
        \label{eq: PAC equality}
    \end{align}
    for each $k$, where $P_{C_k}(\cdot) = P(C_{k} \cap \cdot)/P(C_{k})$ is the $C_k$-conditional-probability measure, which is defined since $P(C_k)>0$.     It is sufficient to show this equivalence because, since $A_k$ is binary, the statement \eqref{eq: PAC equality} is equivalent to
    $$ A_k \indep_{P_{C_{k-1}}} \underline{\tilde L}_{k+1} | \bar L_k, $$
    which is equivalent to \eqref{eq: RCISTG indep}.

    By well known identities of point process filtrations,
    \cite[Theorem 2.2.15]{LastBrandt1995marked}, we have that 
    \begin{align*}
        \F_{\theta_{k}-} &= \sigma( N |_{\theta_k-}) \vee \sigma ( \theta_k) \\
        &=  \sigma( \Delta N_{\ell_1^1}^1, \dots, \Delta N_{\theta_{1}}^a, \Delta N_{\ell_2^1}^1, \dots, \Delta N_{\theta_2}^a, \Delta N_{\ell_3^1}^1, \dots, \Delta N_{\ell_k^{d-1}}^{k} ). 
    \end{align*}
    The last equality follows because the trajectories of $N|_{\theta_k-}$ are fully determined by the variables $\Delta N_{\ell_1^1}^1, \dots, \Delta N_{\theta_{1}}^a, \dots, \Delta N_{\ell_k^{d-1}}^{k}$, as the times $\{ \theta_k \}_k$, $\{\ell_k^j \}_{k,j}$ are deterministic. 
    Recalling the definitions of the variables $\{A_k\}$, $\{L_k^j\}$, and $\{L_k\}$, we thus get 
    \begin{align}
        \begin{split}
            \F_{\theta_{k}-} &= \sigma( \Delta N_{\ell_1^1}^1, \dots, \Delta N_{\theta_{1}}^a, \Delta N_{\ell_2^1}^1, \dots, \Delta N_{\theta_2}^a,\Delta N_{\ell_3^1}^1, \dots, \Delta N_{\ell_k^{d-1}}^{k} ) \\
        &= \sigma( L_1, A_1, \dots,L_{k-1},A_{k-1}, L_{k} ) = \sigma( \bar L_k, \bar A_{k-1} ).
        \end{split}
    \label{eq: F alpha minus filtration representation}
    \end{align}
    Next, since the jump times of $\na{}(N)$ are contained in $\{ \theta_k \}_k$, we have that $\{ \Nastop \geq \theta_k \} = \{ \bar A_{k-1} = \bar a_{k-1} \} = C_{k-1}$. (In words, these equalities say that if the observed regime is followed through treatment visit $k-1$, it is at least followed until the next treatment decision is being recorded).  This leads to the identities
    \begin{align}
    \begin{split}
        I_{C_{k-1}} \Delta ~^{\Nastop}N^a_{\theta_{k}} &= I_{C_{k-1}}\Delta N^a_{\theta_{k}} = I_{C_{k-1}}I(A_k=1), \\
        I_{C_{k-1}^c} \Delta ~^{\Nastop}N^a_{\theta_{k}} &= 0, \\
        \bar L_k &= \bar{\tilde L}_k \quad P_{C_{k-1}} \text{-a.s.,}
    \end{split} \label{eq: Na jump identities}
    \end{align}
    where the last line holds due to consistency. We get, using the identities in \eqref{eq: Na jump identities} and that $C_{k-1}$ is measurable with respect to $\sigma(\bar A_{k-1})$,  the $P$-a.s. equalities
\begin{align}
\begin{split}
    E_P\big[ \Delta ~^{\Nastop}N^a_{\theta_{k}}  | \F_{\theta_{k}-} \big] &= E_P\big[ \Delta ~^{\Nastop}N^a_{\theta_{k}}  | \bar L_k, \bar A_{k-1} \big] \\
    &= E_P\big[ \Delta ~^{\Nastop}N^a_{\theta_{k}}  | \bar L_k, \bar A_{k-1} \big] \big( I_{C_{k-1}} + I_{C_{k-1}^c} \big) \\
    &= E_P\big[ \Delta ~^{\Nastop}N^a_{\theta_{k}}  | \bar L_k, \bar A_{k-1} \big]I_{C_{k-1}} \\
    &= P( A_k=1  | \bar L_k, \bar A_{k-1} )I_{C_{k-1}} \\
    &= P( A_k=1  | \bar L_k, \bar A_{k-1} = \bar a_{k-1} )I_{C_{k-1}} \\
    &= P_{C_{k-1}}( A_k=1  | \bar L_k )I_{C_{k-1}}.
\end{split} \label{eq: disc comp simple}
\end{align} 
    In the two last lines we used well-known identities of partial conditional expectation and conditional expectation with respect to event-conditional-probability measures such as $P_{C_{k-1}}$; see e.g. \cite[Remark 14.31 and Theorem 14.33]{nagel2017probability}. 
    By similar arguments as those leading to \eqref{eq: F alpha minus filtration representation} we can identify $\tilde \F^{\Idmina}_T$ with the $\sigma$-algebra generated by the variables $\{\tilde L_k\}_k$;
    \begin{align*}
        \tilde \F^{\Idmina}_T &= \tilde \F_{\ell_{{K+1},{d-1}}}^{\Idmina} \\
        &= \sigma( \tilde L_1^1, \dots, \tilde L_1^{d-1} ,  \tilde L_2^1, \dots, \tilde L_2^{d-1} ,\dots , \tilde L_{K+1}^{d-1} ) \\
        &= \sigma(\tilde L_1, \dots, \tilde L_{K+1}) = \sigma(\bar {\tilde L}_{K+1})
    \end{align*}
     Using similar arguments as in \eqref{eq: disc comp simple} we therefore get
     \begin{align}
     \begin{split}
         E_P\big[\Delta ~^{\Nastop}N^a_{\theta_{k}} &| \F_{\theta_{k}-} \vee \tilde \F^{\Idmina}_T \big] \\
        &=  E_P\big[ \Delta ~^{\Nastop}N^a_{\theta_{k}}  | \bar L_k, \bar A_{k-1}, \bar{\tilde { L}}_{K+1} \big] \big( I_{C_{k-1}} + I_{C_{k-1}^c} \big)\\
        &= E_P\big[ \Delta ~^{\Nastop}N^a_{\theta_{k}}  | \bar L_k, \bar A_{k-1}, \bar{\tilde { L}}_{K+1} \big]I_{C_{k-1}} \\
        &= P_{C_{k-1}}( A_k=1 | \bar L_k, \bar{\tilde { L}}_{K+1}) I_{C_{k-1}}.
     \end{split} \label{eq: partial condexp hard}
    \end{align}
    Since $\bar{ \tilde L}_k = \bar L_k$ $P_{C_{k-1}}$-a.s. (see \eqref{eq: Na jump identities}), we have in particular that
    \begin{align*}
        \begin{pmatrix}
            \bar L_k \\ \bar {\tilde L}_k \\ \underline{ \tilde L}_{k+1}
        \end{pmatrix} = \begin{pmatrix}
            \bar L_k \\ \bar {L}_k \\ \underline{ \tilde L}_{k+1}
        \end{pmatrix} \quad P_{C_{k-1}} \text{-a.s.}
    \end{align*}
    From simple checks of the definition of conditional expectation we can deduce from this that
    \begin{align*}
        P_{C_{k-1}}( A_k=1 | \bar L_k, \bar{\tilde { L}}_{K+1} ) = P_{C_{k-1}}( A_k=1 | \bar L_k, \underline{\tilde { L}}_{k+1} ) \quad P_{C_{k-1}} \text{-a.s.}
    \end{align*}
    Combining this result with \eqref{eq: partial condexp hard}, using     \eqref{eq: disc comp simple}, we see that \eqref{eq: piecewise continuous compensator jumps hard} holds if and only if 
    \begin{align*}
    \begin{split}
    P_{C_{k-1}}( A_k=1  | \bar L_k )I_{C_{k-1}} 
        &= P_{C_{k-1}}(  A_k=1 | \bar L_k, \underline{\tilde { L}}_{k+1}) I_{C_{k-1}}, \quad P\text{-a.s.}
    \end{split}
    \end{align*}
    which is equivalent to the desired equality \eqref{eq: PAC equality}. 
  
\
\\ 
\textbf{Equivalence of positivity conditions.}

     Under the premises of the proposition, the paths of $\mbbLa{}$ are piecewise constant with jumps in $\{ \theta_k\}_k$. Thus, $\int_0^\cdot \frac{d\mbbLa{s}}{1 - \Delta \mbbLa{s}} = \sum_{k: \theta_k \leq \cdot} \frac{\Delta \mbbLa{\theta_k}}{1 - \Delta \mbbLa{\theta_k}}$. Since the number of terms in the sum is finite, \eqref{eq: K jump to zero condition} holds if and only if
    \begin{align}
        P( \Delta \mbbLa{ \theta_k } < 1) = 1 \text{ for } k=1, \dots, K+1. \label{eq: pos2 equivalence}
    \end{align}
    From the observation that $\{\Delta \mbbNa{\theta_k} = 1 \} = \{ \Nastop = \theta_k \}  = \{ A_k \neq a_k, \bar A_{k-1} = \bar a_{k-1} \}$, using that $\Delta \mbbLa{\theta_k} = E_P[ \Delta \mbbNa{\theta_k} | \F_{\theta_k-} ]$ (using again that the $\theta_k$'s are deterministic, and thus predictable, and e.g. \cite[Theorem 4.1.7]{LastBrandt1995marked}), we get that
    \begin{align}
        \Delta \mbbLa{\theta_k} = P( \Delta\mbbNa{\theta_k} = 1 | \F_{\theta_k-} ) = P( A_k \neq a_k | \bar L_k, \bar A_{k-1} )I(\bar A_{k-1} = \bar a_{k-1}), \label{eq: pos discrete}
    \end{align}
    where we used \eqref{eq: F alpha minus filtration representation}. It follows that \eqref{eq: pos2 equivalence} holds if and only if  
    \begin{align*}
        P(A_k = a_k | \bar L_k, A_{k-1}) > 0 \quad P\text{-a.s. on } \{ \bar A_{k-1} = \bar a_{k-1} \} 
    \end{align*}
    for $k = 1, \dots, K+1$, which is the condition \eqref{eq: hr condition}.

    We show that Definition \ref{def:strong prime} \ref{item: likelihood ratio regularity} holds. If \eqref{eq: hr condition} holds (i.e. that \eqref{eq: K jump to zero condition} holds, under the assumptions of this proposition), then $\KK$ in \eqref{eq: W dolean} is a local martingale by Lemma \ref{lemma: W finite variation}. In particular, it is a well-defined semimartingale, and \eqref{eq: W dolean} is well-defined, with the solution given in Proposition \ref{prop: characterizations of W}, (e.g. by \cite[II 8 Theorem 37]{protter}),
    $$ W_t = \frac{I(\Nastop > t)}{ \prodi_{0< s \leq t} (1 - \mbbLa{s}) }. $$
    This is also a local martingale, as highlighted in Lemma \ref{lemma: W finite variation}. Since the only contribution to the product are at the times $\{ \theta_{k} \}$, which are discontinuity points of $\Lambda^a$, we get by \eqref{eq: pos discrete}
    \begin{align}
        W_t = \frac{I(\bar A_{k_t^*} = \bar a_{k_t^*})}{ \prod_{j=1}^{k_t^*} P(A_j = a_j | \bar L_j, \bar A_{j-1}) },  \label{eq: W star}
    \end{align}
    where $k_t^* = \sup \{ k | \theta_k \leq t \}$, and where we used \eqref{eq: F alpha minus filtration representation} once more. $W_t$ is almost surely finite under \eqref{eq: hr condition}. Writing out $E_P[W_t]$ in terms of the law of the observed variables, this is a finite sum of terms which are finite (otherwise $W_t$ would not have been almost surely finite), and thus $W_t$ is integrable for each $t$. In particular, $W_{\theta_k}$ is integrable for each $k$, which implies that 
    $\sup_{s \leq t}W_s = \max\{W_{\theta_k} | \theta_k \leq t \}$ is integrable for each $t$. Since $W$ is a local martingale, it follows from this observation and \cite[I 6 Theorem 51]{protter} that it is a true martingale. Thus $E_P[W_t] = E_P[W_0] = 1$ for each $t \in \T$, and \eqref{eq: W likelihood ratio regularity} holds.

\ \\
    \textbf{Reduction of the data-generating law.}

    Since $\na{}$ is predictable, we have from \eqref{eq: predictable intervention tilde a compensator} that $\cfproc{a} = \na{}(\cfproc{})$ $P$-a.s. With the notation and regime stated in the proposition, we get that $\tilde A_k = a_k$ $P$-a.s. for each $k$.
    
    Next, we use that $\tilde  \Lambda^j$ defines a compensator of $\cfproc{j}$ with respect to $\tilde \F_\T$, and that $\Lambda^j$ defines a compensator of $N^j$ with respect to $\F_\T$. Since each $\ell_k^j$ is deterministic (and thus predictable), we have that, $P$-a.s.,
    \begin{align}
    \begin{split}
        \Delta \Lambda_{\ell_k^j}^j &= E_P[ \Delta N_{\ell_k^j}^j | \F_{\ell_k^j-} ] = P( L_k=1 | \F_{\ell_k^j-} ), \\
        \Delta \tilde  \Lambda_{\ell_k^j}^j &= E_P[ \Delta \cfproc{j}_{\ell_k^j} | \tilde{ \F}_{\ell_k^j-} ] = P( \tilde L_k=1 | \tilde{ \F}_{\ell_k^j-} ),
    \end{split}\label{eq: ell k j compensator}
    \end{align}
    where the rightmost equalities follow since we have defined $L_k^j = \Delta N_{\ell_k^j}^j$ and $\tilde L_k^j = \Delta \tilde N_{\ell_k^j}^j$. 
    Arguing as in \eqref{eq: F alpha minus filtration representation} we find that $\F_{\ell_k^j-} = \sigma( L_{k,<j}, \bar A_{k-1})$, and similarly $\tilde \F_{\ell_k^j-} = \sigma( \tilde L_{k,<j}, \bar {\tilde A}_{k-1})$. It follows from \eqref{eq: ell k j compensator} and the definition of $f_k^j$ that 
    \begin{align*}
        P(\tilde L_k^j = 1 | \tilde L_{k,<j}, \bar {\tilde A}_{k-1}) &= P(\tilde L_k^j = 1 | \tilde \F_{\ell_k^j-}) =  \Delta \tilde  \Lambda_{\ell_k^j}^j \\
        &= f_k^j( x,y )|_{x=\tilde L_{k,<j}, y=\bar {\tilde A}_{k-1}} = f_k^j( \tilde L_{k,<j},\bar {\tilde A}_{k-1} ),
    \end{align*}
    where the equalities in the first line hold almost surely and the equalities in the last line hold by definition. 
$\hfill\square$

\subsection{Proof of Theorem \ref{theorem: multiple marked interventions}}
\ \\
 
By the regularity conditions \eqref{eq: can comp j h}-\eqref{eq: interv interv}, Proposition \ref{proposition: mcp construction of observed and potential outcomes marked} in Appendix \ref{appendix: construction of potential outcomes} guarantees that there is a probability space supporting observed and potential outcome processes with the respective correct laws such that the consistency condition \ref{enum: mpp consistency} in Theorem \ref{theorem: multiple marked interventions} holds, and for each $j\in J$, the condition \eqref{eq: j comp measure} of Lemma \ref{lemma: MPP exchangeability 2026} is satisfied. \footnote{Choose, for instance the randomizers in \eqref{eq: uniform random variables marked} in Proposition \ref{proposition: mcp construction of observed and potential outcomes marked} to be mutually independent, so that the independence assumptions \ref{enum: mutual indep multi marked} - \ref{enum: exch indep multi marked} in Proposition \ref{proposition: mcp construction of observed and potential outcomes marked} hold.} Next, since \eqref{eq: j comp measure} holds for each $j \in J$, the exchangeability condition \ref{enum: mpp exchangeability} of Theorem \ref{theorem: multiple marked interventions} also holds, as shown in Lemma \ref{lemma: MPP exchangeability 2026}. Thus, under the theorem's regularity conditions (assumptions \eqref{eq: can comp j h}-\eqref{eq: interv interv}), 
the consistency and exchangeability conditions in Theorem \ref{theorem: multiple marked interventions} \ref{enum: mpp consistency}-\ref{enum: mpp exchangeability} can be made to hold simultaneously; that is, these conditions do not impose hidden restrictions on the observed data law.

The formula \eqref{eq: g-formula multiple interventions}, i.e.
\begin{align*}
    E_P[\tilde{Y}_t] = E_P[\mathcal{E}(\mathbb{K}^J)_t Y_t] = E_Q[Y_t],
\end{align*}
with $dQ = \mathcal{E}(\mathbb{K}^J) dP$, is derived using the same arguments as used in Theorem \ref{theorem: ipw}. Specifically, under the conditions \eqref{eq: J int positivity}, and \ref{enum: mpp consistency}, \ref{enum: mpp exchangeability} and \ref{enum: mpp likelihood ratio regularity} of Theorem \ref{theorem: multiple marked interventions}, the process $\mathcal{E}(\mathbb{K}^J)$ satisfies properties parallel to those established for $W = \mathcal{E}(\mathbb{K})$ in Lemma \ref{lemma: W finite variation} and Lemma \ref{lemma: positivity and likelihood ratio equivalence}, with $\mathbb{K}^J$, $\mathbb{N}^J$, $\mathbb{\Lambda}^J$, and $\tau^J$ playing the roles of $\KK$, $\mathbb{N}^a$, $\mathbb{\Lambda}^a$, and $\tau^a$, respectively. Using these properties, the proof of \eqref{eq: g-formula multiple interventions} follow the lines of the proof of Theorem \ref{theorem: ipw}. 

As in Lemma \ref{lemma: P fidi dist}, to express the distribution under $Q$, we first need to find the $(Q, \F_\T)$-compensator of $N$. We make use of similar techniques as in Lemma \ref{lemma: Q compensator} to derive the compensator.

First, for $j \in J$, 
we have by definition of $dQ = \E(\mathbb K^J) dP, \mathbb K^J$, and $\tau^J$, that $\mathfrak n^j(L, N) = N^j$ $Q$-a.s. This can be shown by analogous techniques as those in Lemma \ref{lemma: tau a is infty under Q}. It follows from predictability that $\mathfrak n^j(L, N)$ defines a $(Q, \F_\T)$-compensating measure of $N^j$. This can formally be established by arguing as in \eqref{eq: Q a comp}, with $dQ = \E(\mathbb K^J)_T dP$, and $N^j((0,\cdot] \times D)$ in place of $N^a$ and $\mathfrak n^j(L, N,(0,\cdot] \times D)$ in place of $\mathfrak n^a(N)$ in \eqref{eq: Q a comp}, for each $D \in \X^j$.

It remains to find the $(Q, \F_\T)$-compensating measure of $N^i$ for $i \not\in J$. Analogous to the notation used to state Lemma \ref{lemma: Q compensator general}, we define
\begin{align*}
    \mathbb{N}^{i,J}((0,t] \times \cdot) = I(\tau^J \leq t , N^i( \{ \tau^J \} \times \cdot) = 1),
\end{align*}
and denote by $\mathbb{\Lambda}^{i,J}$ the compensating measure of $\mathbb{N}^{i,J}$ with respect to $P$ and $\F_\T$.

For $i \not\in J$ and $D \in \mathcal{X}^i$, we define $N^{i,D}_t := N^i((0,t]\times D)$. Repeating the Girsanov-type calculations that led to \eqref{eq: Na girsanov} (see Lemma \ref{lemma: Q compensator general} for details), we obtain
\begin{align*}
    \langle N^{i,D} \rangle^Q &= \langle N^{i,D} \rangle^P + \int_0^\cdot G_s d \langle M^{i,D} , \mathbb{M}^J \rangle_s^P,
\end{align*}
where $M^{i,D} := N^{i,D} - \langle N^{i,D} \rangle^P$, $G = -\frac{1}{1-\Delta \mathbb{\Lambda}^J}$, $\mathbb{M}^J := \mathbb{N}^J - \mathbb{\Lambda}^J$, and the predictable brackets are defined with respect to the filtration $\F_\T$. We furthermore get that 
\begin{align*}
    \langle M^{i,D} , \mathbb{M}^J \rangle^P &= \langle \mathbb{N}^{i,J}((0,\cdot] \times D) \rangle^P  - \int_0^\cdot  \Delta \mathbb{\Lambda}_s^J d\langle N^{i,D} \rangle^P_s \\
    &= \mathbb{\Lambda}^{i,J}((0,\cdot] \times D)  - \int_0^\cdot  \Delta \mathbb{\Lambda}_s^J d \Lambda^i((0,s] \times D).
\end{align*}
Using the preceding line and the definition of $G$, we conclude that
\begin{align*}
    \langle N^{i,D} \rangle^Q &= \langle N^{i,D} \rangle^P - \int_0^\cdot \frac{  d\mathbb{\Lambda}^{i,J}((0,s] \times D)  -   \Delta \mathbb{\Lambda}_s^J d \Lambda^i((0,s] \times D) }{1 - \Delta \mathbb{\Lambda}_s^J},
\end{align*}
which determines the $(Q,\F_\T)$-compensating measure of $N^i$. As $\L^i( (0,\cdot ] \times D ) = \langle N^{i,D} \rangle^P = \Lambda^i((0,\cdot] \times D)$ we get by algebraic manipulation that
\begin{align*}
    \L^i(dt \times dx) = \frac{\Lambda^i(dt \times dx) - \mathbb{\Lambda}^{i,J}(dt \times dx)}{1 - \Delta \mathbb{\Lambda}_t^J},
\end{align*}
where $\L^i$ is the $(Q, \F_\T)$-compensator of $N^i$. 

We show that $\mathbb{\Lambda}^{i,J}=0$ under \eqref{eq: intervention can comp}. For $j \in J$, define $A^j_t := \mathfrak n^j(L, N, (0,t] \times X^j)$. Condition \eqref{eq: intervention can comp} implies that $A^j$ shares no jumps with $\bar N^i$ for $i \not\in J$, i.e. $[A^j, \bar N^i] = 0$ $P$-a.s. for each $j \in J$. Since $N$ is an MPP, $(\bar N^1, \dots,\bar N^d)$ is a multivariate counting process, so distinct components do not jump simultaneously, i.e. $[\bar N^j, \bar N^i] = 0$ $P$-a.s. for $j \neq i$. Because $\tau^J$ is an optional time supported on the jump times of $\bar N^j$ and $A^j$ for $j \in J$, we obtain the bound $\Delta \bar N^i_{\tau^J} I(\tau^J < \infty) \leq \sum_{j \in J} \sum_{z \not\in J} [\bar N^j, \bar N^z]_T + [A^j, \bar N^z]_T = 0$ $P$-a.s., and thus $\mathbb{N}^{i,J} = 0$ $P$-a.s. for each $i \not\in J$. It follows that $\mathbb \Lambda^{i,J}=0$ $P$-a.s., where this equality also holds $Q$-a.s. because $Q \ll P$. We are left with
\begin{align}
    \L^i(dt \times dx) &= \frac{\Lambda^i(dt \times dx) }{1 - \Delta \mathbb{\Lambda}_t^J}. \label{eq: N Q i comp}
\end{align}
Finally, we use that $\mathbb N^J$ satisfies the bound $\mathbb N^J \leq \sum_{j \in J} \sum_{k} \mathbb N^{j,k}$ where $\mathbb N^{j,k}$ are as in \eqref{eq: Njk representation}. The compensator $\mathbb{\Lambda}^J$ thus satisfies
$$
\Delta\mathbb{\Lambda}^J_t \leq \sum_{j \in J}  \sum_{k} \Delta\mathbb \Lambda^{j,k}_t \quad P\text{-a.s.},
$$
where, using the decomposition in \eqref{eq: Njk representation}, $\Delta \mathbb \Lambda^{j,k}_t = ~^{\tau^{j,k}}\Lambda^{j}(\{ t \} \times \{ x_k^j \}) + ~^{\tau^{j,k}}\mathfrak{n}^{j}(L, N,\{ t \} \times \{ x_k^j \}) - 2\int_0^{\tau^{j,k} \wedge t} \mathfrak{n}^{j}(L, N,\{ s \} \times \{ x_k^j \}) \Lambda^{j}( ds \times \{ x_k^j \})$ for some enumeration $\{ x_k^j \}_k$ of $X^j$. Since $\Lambda^{j} = \alpha^j(L, N)$, the regularity conditions \eqref{eq: can comp j setminus J}-\eqref{eq: intervention can comp} imply that $\mathbb \Lambda^{j,k}$ does not share jumps with $\Lambda^i$ for $i \not \in J$. Consequently,
$$
\Lambda^i(\{t\} \times dx) \Delta\mathbb{\Lambda}^J_t = 0 \quad P\text{-a.s.}
$$
which combined with \eqref{eq: N Q i comp} gives the desired result
\begin{align*}
    \L^i(dt \times dx) &= \Lambda^i(dt \times dx).
\end{align*}
We now make sense of the integral in \eqref{eq: thm 3 g formula}. Since $X$ is Borel it follows that $\mathcal{N}_T^X$ is Borel, and there consequently is a kernel $K^Q$ from $S$ to $\mathcal{N}_T^X$ so that $Q\big( (L,N) \in dl \times d\varphi \big) = K^Q(l, d\varphi) F_L(dl)$. $K^Q(l, d\varphi)$ is a regular version of the conditional distribution of $N$ given $L=l$ under $Q$, and $F_L(dl) = P(L \in dl) = Q(L \in dl)$ because $Q|_{\F_0} = P|_{\F_0}$.

Write $\alpha^Q$ is the canonical compensator of $N$ with respect to $Q$ and $\F_\T$. The conditional distribution of $N$ given $L=l$ under $Q$, $K^Q(l, d\varphi)$, can be derived by similar arguments to those shown in the proof of Lemma \ref{lemma: P fidi dist}, with the kernel (for fixed $l$) $\alpha^Q(l,\cdot)$ in place of $\alpha$ in that proof. 
Since $(\Omega, \F) = (S \times \N_T^X, \S \otimes \H_T^X)$ and $(L,N) = Id$ (we assume that we are in the canonical setting), we have, by the relationship between the canonical compensator and the compensator, that
\begin{align*}
    \alpha^Q(l,\varphi, dt \times dx) = \alpha^Q\big( (L,N)(l,\varphi), dt \times dx\big) = \L(l,\varphi,dt \times dx)
\end{align*}
for $P$-a.e. $(l,\varphi)$, where $\L$ is the $(Q, \F_\T)$-compensator of $N$, displayed in \eqref{eq: Q compensating measure}. This represents the distribution in terms of the compensator, analogously to Lemma \ref{lemma: P fidi dist}. The derivation of the distributional representation on the right-hand side of \eqref{eq: thm 3 g formula} follows step for step the proof of Lemma \ref{lemma: P fidi dist} and is therefore omitted.  

$\hfill\square$

\section{Notation}
\label{appendix: notation}

\subsection*{General mathematical notation}
\setlength{\tabcolsep}{0pt}
\begin{longtable}{@{}p{0.25\linewidth} p{.75\linewidth}}
    $(\Omega, \F, P)$ & The underlying abstract probability space where $P$ is a fixed, but arbitrary, probability measure \\
    $x \wedge y$, $x \vee y$ & The minimum of $x$ and $y$, and the maximum of $x$ and $y$ \\
    $\A \vee \B$ & The join of two $\sigma$-algebras $\A$ and $\B$; the smallest $\sigma$-algebra which contains both $\A$ and $\B$ \\
    $\bar{ x }_i, \underline{ x }_i$ & The first $i$ components $(x_1, \dots, x_i)$ of a vector $x=(x_1, \dots)$, and the vector $(x_i, \dots)$ subsequent to the $i$'th component of a vector $x=(x_1, \dots)$ \\
    $\mathbb{R}, \Rplus, \mathbb{N}$ & The set of real numbers, non-negative real numbers, and natural numbers excluding zero
\end{longtable}

\subsection*{Point processes}
\begin{longtable}{@{}p{0.25\linewidth} p{.75\linewidth}}
    $\mathcal{I}_d$ & The index set $\{ 1, \dots, d \}$ \\
    $N$, $N^j$ & $N = (N^j)_{j \in \I_d}$ is a multivariate counting process or an MPP; $N^j$ denotes its $j$th component \\
    $T_n, X_n$ & The $n$'th jump time and mark of $N$ \\
    $\F_T,\F_\T$ & The observed $\sigma$-algebra and filtration on $\T$; generated by $N$ before Section \ref{section: identification with multiple interventions in the general MPP setting}, and generated by $L,N$ starting from Section \ref{section: identification with multiple interventions in the general MPP setting} \\
    $\varphi$ & A generic point process trajectory. It may be represented via a double sequence of times and marks and as a counting measure. If the mark set is $\I_d$ it may be represented as a multivariate counting process trajectory (see Appendix \ref{appendix: the canonical space of point process realizations})  \\
    $\varphi|_t, \varphi|_{t-}$ & The restriction of a point process trajectory $\varphi$ to the interval $[0,t]$, resp. $[0,t)$ (see Appendix \ref{appendix: the canonical space of point process realizations}) \\
    $(\N_T^X,\H_T^X)$  & The canonical space of realizations of an MPP on $\T$ with mark set $X$ (see Appendix \ref{appendix: the canonical space of point process realizations}). We use the shorthand notation $\N_T^d := \N_T^{\I_d}$ and $\H_T^d := \H_T^{\I_d}$ \\
    $(S \times \N_T^X, \S \otimes \H_T^X)$  & The canonical space of realizations of $(L,N)$, where $L$ takes values in $(S, \S)$ and $N$ is an MPP on $\T$ with mark set $X$ (see Appendix \ref{appendix: the canonical space of point process realizations}) \\
    $\alpha$, $\alpha^j$ & The canonical compensator of $N$ ($N^j$) with respect to $P$ and $\F_\T$ \\
    $\Lambda$, $\Lambda^j$ & The $(P,\F_\T)$-compensator of $N$ ($N^j$) \\
    $\tilde{\alpha}$ & The canonical compensator of the potential outcome process $\tilde{N}$ (see Section \ref{subsec: setup multiple interventions} and Appendix \ref{appendix: construction of potential outcomes}) \\
    $M$ & The local $(P,\F_\T)$-martingale $N - \Lambda$  \\
\end{longtable}

\subsection*{General process definitions}
\begin{longtable}{@{}p{0.25\linewidth} p{.75\linewidth}}
    $Z_{-}$ & The process defined by $Z_{t-} = \lim_{\epsilon \rightarrow 0^+} Z_{t - \epsilon}$ (assuming left-limits exist) \\
    $\Delta Z$ & The process $Z - Z_{-}$, when $Z$ is a process having left limits \\
    $~^{\tau}Z$ & The process $Z$ stopped at a random time $\tau$, defined by $\{ Z_{\tau \wedge t} \}_{t}$ \\
    $~_{\tau}Z$ & The process $Z$ started at a random time $\tau$, defined by $\{ I(t > \tau)(Z_t -  Z_{\tau}) \}_{t}$\\
    $\bar{ Z }$ & The total process $\bar Z = \sum_{i=1}^k Z^i$ for a process $Z=(Z^1, \dots, Z^k)$, or $\bar Z_t = Z((0,t] \times S)$ for a random measure $Z$ operating on $\T \times S$ \\
    $\E(\cdot)$ & The stochastic exponential; $\E(Z)$ is the unique solution of the SDE $U = 1 + \int_0^{\cdot} U_{s-} dZ_s$ for a semimartingale $Z$ \\
    $[ \cdot, \cdot ]$ & The optional variation process (see Appendix \ref{appendix: Basic definitions}). $[Z] := [Z, Z]$ \\
    $\langle \cdot, \cdot \rangle^{P}$ & The predictable variation under the probability measure $P$ and filtration under study (see Appendix \ref{appendix: Basic definitions}). $\langle Z \rangle^P := \langle Z,  Z \rangle^P$ 
\end{longtable}

\subsection*{Interventions and identifying likelihood ratio processes}
\begin{longtable}{@{}p{0.25\linewidth} p{.75\linewidth}}
    $J$ & Subset of $\I_d$ indexing intervention components (used in Section \ref{section: identification with multiple interventions in the general MPP setting} and later) \\
    $\mathfrak{n}^a$, $\mathfrak{n}^j$ & An intervention on component $a$ (single intervention case) or component $j$ (multiple interventions, $j \in J$) \\
    $\mathfrak{n}^a(N)$, $\mathfrak{n}^j(L,N)$ & The intervention evaluated at the observed trajectories; $\mathfrak{n}^a(N)$ in the single intervention case before Section \ref{section: identification with multiple interventions in the general MPP setting}, $\mathfrak{n}^j(L,N)$ for multiple interventions (starting from Section \ref{section: identification with multiple interventions in the general MPP setting}) \\
    $\tau^a$, $\tau^j$, $\tau^J$ & $\tau^a$ is the first time the observed treatment process deviates from the intervention evaluated in the observed trajectories (see \eqref{eq: tau a} for the definition before Section \ref{section: identification with multiple interventions in the general MPP setting}), and \eqref{eq: tau j deviation time} for the definition starting from Section \ref{section: identification with multiple interventions in the general MPP setting}); $\tau^J := \wedge_{j \in J} \tau^j$ \\
    $\mathbb N^a,\mathbb \Lambda^a, \mathbb K^a$ & $\mathbb N^a = I(\tau^a \leq \cdot)$, $\mathbb \Lambda^a$ is the compensator of $\mathbb N^a$ with respect to $P$ and $\F_\T$, and $\mathbb K^a = -\int_0^\cdot \frac{d\mathbb{N}^a_s - d\mathbb{\Lambda}^a_s}{1 - \Delta\mathbb{\Lambda}^a_s}$ \\
    $\mathbb N^J,\mathbb \Lambda^J, \mathbb K^J$ & For multiple interventions: $\mathbb N^J = I(\tau^J \leq \cdot)$, $\mathbb \Lambda^J$ is the compensator of $\mathbb N^J$ with respect to $P$ and $\F_\T$, and $\mathbb K^J = -\int_0^\cdot \frac{d\mathbb{N}^J_s - d\mathbb{\Lambda}^J_s}{1 - \Delta\mathbb{\Lambda}^J_s}$ \\
    $\E(\mathbb K^a),\E(\mathbb K^J)$ & The identifying likelihood ratio process; $\E(\mathbb K^a) (=W)$ in the single intervention case before Section \ref{section: identification with multiple interventions in the general MPP setting}, and $\E(\mathbb K^J)$ for multiple interventions, starting from Section \ref{section: identification with multiple interventions in the general MPP setting} \\
\end{longtable}

\subsection*{Potential outcomes and outcomes of interest}
\begin{longtable}{@{}p{0.25\linewidth} p{.75\linewidth}}
    $\cfproc{}$ & The potential outcome process (characterized in Appendix \ref{appendix: general MPP potential outcomes}; see also Definition \ref{definition: potential outcome process}) \\ 
    $\tilde \F_T, \tilde \F_\T$ & The $\sigma$-algebra and filtration generated by $\tilde N$ on $\T$ before Section \ref{section: identification with multiple interventions in the general MPP setting}; generated by $(L,\tilde N)$ on $\T$ starting from Section \ref{section: identification with multiple interventions in the general MPP setting}, and generated by $(U,L,\tilde N)$ on $\T$ in Appendix \ref{appendix: general MPP potential outcomes} \\
    $\dot{Y}$ & An optional outcome functional on a given canonical space $(\N_T^X, \H_T^X)$ under study; a stochastic process on $(\N_T^X, \H_T^X)$ satisfying $\dot Y_t(\varphi) = \dot Y_t(\varphi|_t)$ for each $t \in \T$ and $\varphi \in \mathcal{N}_T^X$ (see Section \ref{subsection: outcomes of interest} and \ref{subsec: setup multiple interventions}) \\
    $Y$ & The observed outcome of interest; $Y := \dot{Y}(N)$ \\
    $\tilde{Y}$ & The potential outcome of interest; $\tilde{Y} := \dot{Y}(\tilde{N})$ \\
    $\F_\T^{\tilde{Y}}$ & The filtration $\{\F_t \vee \sigma(\tilde{Y})\}_{t \in \T}$
\end{longtable}

\end{document}